\newif\ifAnonimous\Anonimoustrue
\renewcommand{\geq}{\geqslant}
\renewcommand{\leq}{\leqslant}
\renewcommand{\ge}{\geq}
\renewcommand{\le}{\leq}
\newcommand{\ie}{i.e.,~}
\newcommand{\eg}{e.g.~}
\newcommand{\sse}{\subseteq}
\newcommand{\abs}[1] {\ensuremath\left|#1\right|}
\newcommand{\ceil}[1] {\ensuremath\left\lceil#1\right\rceil}
\newcommand{\bigO}{\mathcal{O}}
\newcommand{\gen}[1]{\langle #1 \rangle}
\newcommand{\dom}{\operatorname{dom}}
\newcommand{\ran}{\operatorname{ran}}
\newcommand{\Ptime}{\ensuremath{\mathsf{P}}\xspace}
\newcommand{\NPOLYLOGTIME}{\ensuremath{\mathsf{NPOLYLOGTIME}}\xspace}
\newcommand{\ACz}{\ensuremath{\mathsf{AC}^0}\xspace}
\newcommand{\qACz}{\ensuremath{\mathsf{qAC}^0}\xspace}
\newcommand{\NC}{\ensuremath{\mathsf{NC}}\xspace}
\newcommand{\LOGSPACE}{\ensuremath{\mathsf{L}}\xspace}
\newcommand{\NL}{\ensuremath{\mathsf{NL}}\xspace}
\newcommand{\SL}{\ensuremath{\mathsf{SL}}\xspace}
\newcommand{\PTIME}{\ensuremath{\mathsf{P}}\xspace}
\newcommand{\NP}{\ensuremath{\mathsf{NP}}\xspace}
\newcommand{\GI}{\ensuremath{\mathsf{GI}}}
\newcommand{\PSPACE}{\ensuremath{\mathsf{PSPACE}}\xspace}
\renewcommand{\L}{\LOGSPACE}
\newcommand{\vU}{\ensuremath{\mathbf{U}}}
\newcommand{\vV}{\ensuremath{\mathbf{V}}}
\newcommand{\vT}{\ensuremath{\mathbf{T}}}
\newcommand{\vH}{\ensuremath{\mathbf{H}}}
\newcommand{\vAb}{\ensuremath{\mathbf{Ab}}}
\newcommand{\vGSol}{\ensuremath{\mathbf{G_{\text{sol}}}}}
\newcommand{\vG}{\ensuremath{\mathbf{G}}}
\newcommand{\vSl}{\ensuremath{\mathbf{Sl}}}
\newcommand{\vCl}{\ensuremath{\mathbf{Cl}}}
\newcommand{\vSI}{\ensuremath{\mathbf{SIS}}}
\newcommand{\vCom}{\ensuremath{\mathbf{Com}}}
\newcommand{\vBS}{\ensuremath{\mathbf{BS}}}  % Brandt semigroup
\newcommand{\vBM}{\ensuremath{\mathbf{BM}}} % Brandt monoid
\newcommand{\vId}[1]{\ensuremath{\llbracket #1 \rrbracket}}
\newcommand{\ISym}{\cI}
\newcommand{\Sym}{\mathrm{Perm}}
\newcommand*{\gH}[1][]{\mathrel{\mathcal{H}_{#1}}}
\newcommand*{\gL}[1][]{\mathrel{\mathcal{L}_{#1}}}
\newcommand*{\gR}[1][]{\mathrel{\mathcal{R}_{#1}}}
\newcommand*{\gD}[1][]{\mathrel{\mathcal{D}_{#1}}}
\newcommand*{\gJ}[1][]{\mathrel{\mathcal{J}_{#1}}}
\newcommand*{\gX}[1][]{\mathrel{\mathcal{X}_{#1}}}
\newcommand*{\gLle}[1][]{\leq_{\mathcal{L}_{#1}}}
\newcommand*{\gLge}[1][]{\geq_{\mathcal{L}_{#1}}}
\newcommand*{\gRle}[1][]{\leq_{\mathcal{R}_{#1}}}
\newcommand*{\gRlt}[1][]{<_{\mathcal{R}_{#1}}}
\newcommand*{\gRge}[1][]{\geq_{\mathcal{R}_{#1}}}
\newcommand*{\gRgt}[1][]{>_{\mathcal{R}_{#1}}}
\newcommand*{\gJle}[1][]{\leq_{\mathcal{J}_{#1}}}
\newcommand*{\gJge}[1][]{\geq_{\mathcal{J}_{#1}}}
\newcommand*{\gXle}[1][]{\leq_{\mathcal{X}_{#1}}}
\newcommand*{\gXge}[1][]{\geq_{\mathcal{X}_{#1}}}
\newcommand\nindent{.5pt}
\newcommand\noverline[1]{%
	\kern\nindent\overline{\kern-\nindent#1\kern-\nindent}\kern\nindent}
\newcommand{\ov}[1]{\noverline{#1}}
\newcommand{\cA}{\mathcal{A}}
\newcommand{\cL}{\mathcal{L}}
\newcommand{\cC}{\mathcal{C}}
\newcommand{\cD}{\mathcal{D}}
\newcommand{\cI}{\mathcal{I}}
\newcommand{\cX}{\mathcal{X}}
\theoremstyle{plain}
\newtheorem{question}[theorem]{Question}
\newtheorem{problem}[theorem]{Open Problem}
\def\th@remark{%
  \thm@headfont{%
    \textcolor{lipicsGray}{$\blacktriangleright$}\nobreakspace\sffamily\bfseries}%
  \normalfont % body font
  \thm@preskip\topsep \divide\thm@preskip\tw@
  \thm@postskip\thm@preskip
}
\theoremstyle{plain}
\newtheorem{maintheorem}{Theorem}
\newtheorem{maincorollary}[maintheorem]{Corollary}
\providecommand\iitem{}
\providecommand\qitem{}
\newcommand\decproblem@iitem@label{\rlap{Input.}\phantom{Question.}}
\newcommand\decproblem@qitem@label{Question.}
\newenvironment{decproblem}{%
  \begin{description}\begin{samepage}%
  \renewcommand{\iitem}{\item[\decproblem@iitem@label]}%
  \renewcommand{\qitem}{\item[\decproblem@qitem@label]}%
}{%
  \end{samepage}\end{description}%
}
\newcommand{\dMemb}[2][]{\textup{\textsc{memb${}_{\mathbf{#1}}\expandafter\ifx\expandafter\relax\detokenize{#2}\relax\else(#2)\fi$}}}
\newcommand{\dConj}[2][]{\textup{\textsc{conj${}_{\mathbf{#1}}\expandafter\ifx\expandafter\relax\detokenize{#2}\relax\else(#2)\fi$}}}
\newcommand{\dMembS}[2][]{\textup{\textsc{memb${}^{\sharp}_{\mathbf{#1}}\expandafter\ifx\expandafter\relax\detokenize{#2}\relax\else(#2)\fi$}}}
\newcommand{\dConjS}[2][]{\textup{\textsc{conj${}^{\sharp}_{\mathbf{#1}}\expandafter\ifx\expandafter\relax\detokenize{#2}\relax\else(#2)\fi$}}}
\newcommand{\dMGS}[2][]{\textup{\textsc{mgs${}_{\mathbf{#1}}\expandafter\ifx\expandafter\relax\detokenize{#2}\relax\else(#2)\fi$}}}
\newcommand{\dRequiv}[2][]{\textup{\textsc{\ensuremath{\gR}-equiv${}_{\mathbf{#1}}\expandafter\ifx\expandafter\relax\detokenize{#2}\relax\else(#2)\fi$}}}
\newcommand{\dEqn}[2][]{\textup{\textsc{eqn${}_{\mathbf{#1}}\expandafter\ifx\expandafter\relax\detokenize{#2}\relax\else(#2)\fi$}}}
\newcommand{\dEqnSys}[2][]{\textup{\textsc{eqn${}^\ast_{\mathbf{#1}}\expandafter\ifx\expandafter\relax\detokenize{#2}\relax\else(#2)\fi$}}}
\newcommand{\prob}[1]{\textup{\textsc{#1}}\xspace}
\newcommand{\dUGAP}{\prob{ugap}}
\newcommand{\dNCL}{\prob{ncl}}
\newcommand{\dIEmpty}[1]{\prob{{#1}-int-empty}}
\newcommand{\dEMemb}[2][]{\ensuremath{E}\textnormal{-}\dMemb[#1]{#2}}
\newcommand{\dEConj}[2][]{\ensuremath{E}\textnormal{-}\dConj[#1]{#2}}
\newcommand{\dEMembS}[2][]{\ensuremath{E}\textnormal{-}\dMembS[#1]{#2}}
\newcommand{\dEConjS}[2][]{\ensuremath{E}\textnormal{-}\dConjS[#1]{#2}}
\newcommand{\mysubparagraph}[1]{\vspace*{-2mm}\subparagraph*{#1}}
\title{Membership and Conjugacy in Inverse Semigroups}
\titlerunning{Membership and Conjugacy in Inverse Semigroups}
\author{Lukas Fleischer}{FMI, University of Stuttgart \\ Universitätsstraße 38, 70569 Stuttgart, Germany}{lfleischer@lfos.de}{https://orcid.org/0000-0001-5234-4348}{}
\author{Florian Stober}{FMI, University of Stuttgart \\ Universitätsstraße 38, 70569 Stuttgart, Germany}{florian.stober@fmi.uni-stuttgart.de}{https://orcid.org/0000-0002-5516-6660}{}
\author{Alexander Thumm}{University of Siegen \\ Hölderlinstraße 3, 57076 Siegen, Germany}{alexander.thumm@uni-siegen.de}{https://orcid.org/0009-0005-4240-2045}{}
\author{Armin Weiß}{FMI, University of Stuttgart \\ Universitätsstraße 38, 70569 Stuttgart, Germany}{armin.weiss@fmi.uni-stuttgart.de}{https://orcid.org/0000-0002-7645-5867}{Supported by the German Research Foundation (DFG) grant WE 6835/1-2.}
\authorrunning{L.~Fleischer, F.~Stober, A.~Thumm, A.~Weiß}
\keywords{inverse semigroups, membership, conjugacy, finite automata}
\begin{document}

\maketitle

\begin{abstract}
  The membership problem for an algebraic structure asks whether a given element is contained in some substructure, which is usually given by generators.
  In this work we study the membership problem, as well as the conjugacy problem, for finite inverse semigroups. 
The closely related membership problem for finite semigroups has been shown to be \PSPACE-complete in the transformation model by Kozen (1977) and \NL-complete in the Cayley table model by Jones, Lien, and Laaser (1976). 
In the partial bijection model, the membership and the conjugacy problem for finite inverse semigroups were shown to be \PSPACE-complete by Birget and Margolis (2008) and by Jack (2023).

Here we present a more detailed analysis of the complexity of the membership and conjugacy problems parametrized by varieties of finite inverse semigroups.
We establish dichotomy theorems for the partial bijection model and for the Cayley table model.
In the partial bijection model these problems are in \NC (resp.\ \NP for conjugacy) for strict inverse semigroups and \PSPACE-complete otherwise.
In the Cayley table model we obtain general \LOGSPACE-algorithms as well as \NPOLYLOGTIME upper bounds for Clifford semigroups and \LOGSPACE-completeness otherwise.

Furthermore, by applying our findings, we show the following:
the intersection non-emptiness problem for inverse automata is $\PSPACE$-complete even for automata with only two states;
the subpower membership problem is in \NC for every strict inverse semi\-group and \PSPACE-complete otherwise;
the minimum generating set and the equation satisfiability problems are in \NP for varieties of finite strict inverse semigroups and \PSPACE-complete otherwise.
\end{abstract}

\newpage
\thispagestyle{empty}
\tableofcontents

\newpage
\setcounter{page}{1}
\section{Introduction}

In this work, we study the membership problem  for inverse semigroups and some related problems such as the conjugacy problem.
The \emph{membership problem} for semigroups in the \emph{transformation model} has first been studied by Kozen \cite{koz77} in 1977.
 It receives as input a list $(u_1, \dots, u_k)$ of functions $u_i \colon \Omega \to \Omega$  for some finite set $\Omega$ and a target function $t \colon \Omega \to \Omega$; the question is whether $t$ can be written as composition of the $u_i$ or, with other words, whether $t $ is contained in the subsemigroup generated by $\{u_1, \dots, u_k\}$.
It is closely related to the DFA intersection non-emptiness problem, which receives as input a list of deterministic finite automata (DFAs) and asks whether there is a word accepted by \emph{all} of the automata.
Indeed, Kozen \cite{koz77} showed that both problems are \PSPACE-complete.

\medbreak

Inverse semigroups have been first studied by Wagner~\cite{Wagner52} and Preston~\cite{Preston54} to describe partial symmetries.
They constitute the arguably most natural class of algebraic structures containing groups and being contained in the semigroups.
An \emph{inverse semigroup} is a semigroup equipped with an additional unary operation $x \mapsto \ov x$ such that $x \ov x x = x$ and $ \ov x x \ov x = \ov x$ for all $x$ and $\ov x$ is unique with that property.
This clearly generalizes the inverse operation in groups.
Similar to groups being an algebraic abstraction of symmetries and semigroups being an algebraic abstraction of computation, inverse semigroups abstract \emph{symmetric computation}.
This notion of computation, where every computational step is invertible, was introduced by Lewis and Papadimitriou~\cite{LewisPapadimitriou82} in order to better describe the complexity of the accessibility problem in undirected graphs \dUGAP{}, which only much later was shown to be in \LOGSPACE{} (deterministic logspace) by Reingold~\cite{Reingold08}. 

In the setting of inverse semigroups, it is natural to consider the \emph{partial bijection model} for the membership problem, where the $u_i$ and $t$ are \emph{partial} functions which are injective on their domain.
The membership problem for inverse semigroups in the partial bijection model is also \PSPACE-complete~-- and, thus, as difficult as for arbitrary semigroups~-- as observed by Birget and Margolis \cite{BirgetM08} and independently by Jack \cite{Jack23}.
This observation is based on a result by Birget, Margolis, Meakin, and Weil \cite{BirgetMMW94} showing that the intersection non-emptiness problem for \emph{inverse automata} is \PSPACE-complete.
Roughly speaking, an inverse automaton is a DFA with a partially defined transition function where every letter induces a partial bijection on the set of states and the action of each letter can be ``inverted'' by a sequence of letters. 
Thus, inverse automata can be seen as a generalization of permutation automata, for which the intersection non-emptiness problem is \NP-complete \cite{BlondinKM12}.
Interestingly, the corresponding membership problem for permutation groups is even in \NC as shown by Babai, Luks, and Seress \cite{BabaiLS87} (for a series of preliminary results, see \cref{sec:related-work}).

\medbreak

A different variant of the membership problem has been introduced by Jones, Lien, and Laaser \cite{JonesLL76} in 1976: for the membership problem  in the \emph{Cayley table model} the ambient semigroup is given as its full multiplication table (a.k.a.\ Cayley table), \ie instead of the finite set $\Omega$ as above, the input includes a multiplication table of a semigroup and the elements are given as indices to rows/columns of that multiplication table. 
Clearly, this is a much less compressed form than the membership problem in the transformation or partial bijection model and, indeed, the membership problem  in the Cayley table model is \NL-complete \cite{JonesLL76}.

Like in the transformation model, the case of groups appears to be easier than the general case.
Indeed, in 1991, Barrington and McKenzie \cite{BarringtonM91} observed that the membership problem for groups in the Cayley table model (which they denote by ``\textsc{gen}(groups)'' and we by $\dMemb[CT]{\vG}$) can be solved in \L with an oracle for \dUGAP and speculated about it potentially being $\LOGSPACE^\dUGAP$-complete.
Indeed, they posed the following question.

\begin{quote}
	\em	Does \textup{\textsc{gen}(groups)} belong to \LOGSPACE? We doubt that this is the case: we believe rather that \textup{\textsc{gen}(groups)} is complete for the $\NC^1$-closure of \dUGAP, though we do not
  yet see how to apply the techniques in Cook and McKenzie (1987) to prove that \textup{\textsc{gen}(groups)} is even \LOGSPACE-hard.
\end{quote} 

This conjecture has been refuted by \ifAnonimous Fleischer\else the first author of the present article\fi~\cite{Fleischer19diss,Fleischer22} by showing that $\dMemb[CT]{\vG}$ can be solved in \NPOLYLOGTIME  (at least if we read completeness with respect to \ACz-reductions; when using $\NC^1$-reductions, the results in \cite{Fleischer19diss,Fleischer22} give only a strong indication that the conjecture does not hold).
Yet, in this work, we establish that the conjecture actually holds if we replace groups with inverse semigroups.

\medbreak

To find out in which cases the membership and conjugacy problem are easy and in which cases they are difficult, we study these problems restricted to certain varieties of finite inverse semigroups.
A \emph{variety of finite (inverse) semigroups}, frequently called a \emph{pseudovariety}, is a class of (inverse) semigroups that is closed under finite direct products, (inverse) subsemigroups, and quotients.
Important varieties of finite (inverse) semigroups are groups, semilattices, or aperiodic (inverse) semigroups.
Varieties of finite semigroups are closely linked to varieties of formal languages (\ie classes of languages enjoying natural closure properties) by Eilenberg's Correspondence Theorem \cite{eil76}.

Beaudry, McKenzie, and Thérien \cite{BeaudryMT92} classified the varieties of finite aperiodic monoids in terms of the complexity of their membership problem.
They found the following five classes: \ACz, \Ptime-complete, \NP-complete, \NP-hard, and \PSPACE-complete. 
Note that it might seem like a negligible difference whether semigroups or monoids are considered; however, the landscape of varieties of finite semigroup is much richer than the varieties of finite monoids. 
The aim of this work is to provide a similar classification for inverse semigroups.

\subsection{Our Results}\label{sec:results}

We consider the membership and conjugacy problems for inverse semigroups parametrized by a variety $\vV$.
For both problems we are given \emph{inverse} semigroups $U\leq S$ where $U$ is given by generators and $U \in \vV$.
Given an element $t \in S$, the membership problem asks whether $t \in U$.
Given elements $s, t \in S$, the conjugacy problem asks whether $\bar u s u = t$ and $s = u t \bar u$ for some $u\in U \cup \{1\}$.
Both problems are examined with respect to two models of input -- the Cayley table model and the partial bijection model.
We write $\dMemb[CT]{\vV}$, $\dConj[CT]{\vV}$, $\dMemb[PB]{\vV}$, and $\dConj[PB]{\vV}$, accordingly.
Regarding further details on the definition we refer to \Cref{sec:prelims-problems}.

Our main result regarding the Cayley table model is the following dichotomy.
Herein $\vCl$ denotes the variety of finite Clifford semigroups, which is the smallest variety containing all finite groups and semilattices.
The class \NPOLYLOGTIME comprises all problems solvable by non-deterministic random access Turing machines in time $\log^{\bigO(1)} n$; see \cref{sec:complexity}.

\begin{maintheorem}[Cayley Table Model]\label{thm:main-CT}
	Let $\vV$ be a variety of finite inverse semigroups. 
	\begin{itemize}
		\item If $\vV \sse \vCl$, then $\dMemb[CT]{\vV}$ and $\dConj[CT]{\vV}$ are in \NPOLYLOGTIME and in \LOGSPACE.
		\item If $\vV \not\sse \vCl$, then $\dMemb[CT]{\vV}$ and $\dConj[CT]{\vV}$ are \LOGSPACE-complete.
	\end{itemize}
\end{maintheorem}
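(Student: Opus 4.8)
The plan is to prove the dichotomy by treating the two cases separately, with the upper bound for $\vV \sse \vCl$ and the completeness result for $\vV \not\sse \vCl$ relying on fundamentally different structural facts about inverse semigroups.

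\textbf{Upper bound for $\vV \sse \vCl$.} First I would recall the structure theory of Clifford semigroups: a finite Clifford semigroup is a semilattice of groups, i.e.\ it is a disjoint union $\bigcup_{e \in E} G_e$ of its maximal subgroups $G_e$ (one for each idempotent $e$), with multiplication $G_e \cdot G_f \sse G_{ef}$ induced by connecting homomorphisms. The key consequence is that for $U \leq S$ with $U \in \vCl$ generated by $u_1, \dots, u_k$, an element $t$ lies in $U$ if and only if (a) the idempotent $t\ov t$ lies in the semilattice generated by the idempotents $u_i \ov u_i$, i.e.\ $t \ov t = u_{i_1}\ov u_{i_1} \cdots u_{i_m}\ov u_{i_m}$ for some subset of indices, and (b) within the group $G_f$ where $f = t\ov t$, the element $t$ lies in the subgroup generated by the appropriate projections of the $u_i$ into $G_f$. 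Part (a) is a semilattice membership question, solvable by a simple closure computation; I would argue this is in $\NPOLYLOGTIME$ because we can guess which idempotents to multiply and verify in polylogarithmic time in the Cayley table, and in $\LOGSPACE$ because semilattice closure (idempotent-commutative) reduces to reachability-type computations. Part (b) reduces to group membership in the Cayley table model, which by the cited result of Fleischer \cite{Fleischer19diss,Fleischer22} is in $\NPOLYLOGTIME$, and is also in $\LOGSPACE$ by a direct argument (the subgroup generated is computed by closure, and in a group of size $n$ one only needs $\bigO(\log n)$ generators' worth of products by the standard tower argument). For conjugacy in a Clifford semigroup: since idempotents are central within each $\gD$-class component and conjugation by $u$ and $\ov u$ must move $s$ to $t$ and back, I would show $s$ and $t$ must lie in the same $G_f$ and that conjugacy in $U$ reduces to conjugacy (i.e.\ membership of a quotient in a coset, equivalently a group-theoretic condition) inside $G_f$, again landing in $\NPOLYLOGTIME \cap \LOGSPACE$ via the group case.

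\textbf{Completeness for $\vV \not\sse \vCl$.} For the lower bound, membership of $\dMemb[CT]{\vV}$ in $\LOGSPACE$ (even for the full class of inverse semigroups) follows from the $\NL \sse$-type bound of Jones, Lien, and Laaser \cite{JonesLL76} together with Reingold's theorem — actually one expects a cleaner direct argument: inverse semigroup membership in the Cayley table model is in $\LOGSPACE$ because the problem reduces to reachability in an undirected graph (the undirectedness coming precisely from the presence of inverses, so that the Cayley graph of the generated substructure is symmetric up to the idempotent structure), hence \dUGAP, hence $\LOGSPACE$ by Reingold \cite{Reingold08}. This matches the Barrington–McKenzie intuition quoted in the introduction and is the point where inverseness is essential. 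For $\LOGSPACE$-hardness, since $\vV \not\sse \vCl$ is a variety, by Eilenberg-type / Tarski-type arguments $\vV$ must contain some ``bad'' inverse semigroup witnessing the failure of the Clifford identities; the standard fact is that a variety of finite inverse semigroups is contained in $\vCl$ iff it satisfies the identity making idempotents central (equivalently, contains no copy of the five-element Brandt semigroup $B_2$ or of a comparable non-Clifford example). I would take such a fixed semigroup $T \in \vV \setminus \vCl$ and encode \dUGAP{} (or directly the canonical $\LOGSPACE$-complete problem) into $\dMemb[CT]{\vV}$ by building, from a graph instance, an inverse semigroup in $\vV$ (a direct power of $T$ suffices, since $\vV$ is closed under finite products) whose Cayley table is constructed in $\ACz$, such that a target element is generated iff the graph instance is a yes-instance. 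The non-central idempotent gives enough ``directionality'' to encode the reachability structure.

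\textbf{Main obstacle.} I expect the hardest part to be the $\LOGSPACE$-hardness reduction: making the Cayley-table construction simultaneously (i) computable in the weak reducibility class ($\ACz$ or $\DLOGTIME$-uniform), (ii) land inside the arbitrary given variety $\vV$ using only closure under products/subsemigroups/quotients and the single witness $T \notin \vCl$, and (iii) faithfully encode an $\LOGSPACE$-complete problem so that the substructure generated by the designated generators contains the target element exactly on yes-instances. The subtlety is that we do not get to choose $\vV$ — we must extract a usable non-Clifford gadget from whatever $\vV$ is, using the structural dichotomy for inverse semigroup varieties (every such variety either collapses into $\vCl$ or contains one of a small list of non-Clifford building blocks). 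Pinning down that dichotomy for inverse semigroup varieties, and checking the chosen building block supports the reduction, is where the real work lies; the $\NPOLYLOGTIME \cap \LOGSPACE$ upper bound, by contrast, is mostly a matter of assembling the semilattice-closure routine with the already-cited group membership results.
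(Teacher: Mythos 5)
There are two genuine gaps, both at the places where the paper has to do real work. First, your \LOGSPACE{} upper bound rests on the claim that membership in the Cayley table model ``reduces to reachability in an undirected graph'' because inverses make the Cayley graph symmetric. This is false as stated: in an inverse semigroup only the strongly connected components of the right Cayley graph are undirected, while multiplication by a generator can strictly descend in the $\gR[U]$-order (think of multiplying by an idempotent in a semilattice, or hitting $0$ in $B_2$), and such steps cannot be undone. Deciding $t \in \gen{\Sigma}$ therefore requires crossing between distinct components, and there is no obvious many-one reduction to \dUGAP. The paper handles this with a greedy $\LOGSPACE$-algorithm making repeated \dUGAP{} oracle calls, maintaining the invariants $x \in U$ and $x\ov x t = t$ and proving by induction that each greedy descent step preserves reachability of $t$; that correctness argument is the substance of the containment and is missing from your sketch. (Only conjugacy, and the group case via coset reachability, admit the direct undirected-reachability reduction; your alternative ``tower argument'' for group membership does not yield a logspace algorithm either, since short products exist but cannot be enumerated in logarithmic space.)

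Second, your hardness reduction proposes building ``a direct power of $T$'' for a witness $T \in \vV \setminus \vCl$ and outputting its Cayley table. A power $T^{\Theta(|V|)}$ has exponentially many elements, so its Cayley table cannot be written down by an \ACz{} (or any polynomial-size) reduction; the input model forces the ambient semigroup to be of polynomial size. The paper avoids this by using the dichotomy $\vV \not\sse \vCl \iff B_2 \in \vV$ together with Kleiman's lemma that the Brandt semigroup $B(V)$ (a Rees quotient of $(B_2)^V$, of size $|V|^2+1$) lies in $\vV$; \dUGAP{} then embeds directly: with generators $e_x$ for vertices and $u_{xy}$ for edges, one has $u_{st} \in \gen{\Sigma}$ (resp.\ $e_s \sim_U e_t$) iff $s$ and $t$ are connected. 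Note also that \cref{thm:main-CT} asserts \LOGSPACE-hardness of $\dConj[CT]{\vV}$ as well, which your sketch does not address; the paper proves conjugacy hardness first (idempotent conjugacy in $B(V)$) and then transfers it to membership via the product with the two-element semilattice $Y_2$. Your Clifford-case upper bound is essentially the paper's route (semilattice part plus group part, with Fleischer's \NPOLYLOGTIME{} result), but even there the guessed product of idempotents must be shown to have length $\bigO(\log\abs{E})$ --- the minimality argument behind the paper's SLP bound --- before it can be guessed within polylogarithmic time.
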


In particular, both problems are in $\NPOLYLOGTIME \sse \qACz$ if and only if $\vV \sse \vCl$ as the class $\NPOLYLOGTIME$ contains no problem that is hard for $\LOGSPACE$ (with respect to $\ACz$ reductions).
% Here \qACz denotes the class of problems decidable by circuits of quasipolynomial size and constant depth.
Furthermore, \cref{thm:main-CT} establishes Barrington and McKenzie's conjecture \cite{BarringtonM91} on $\LOGSPACE^\dUGAP$-completeness\footnote{Recall that $\LOGSPACE^\dUGAP = \LOGSPACE$ by Reingolds seminal result \cite{Reingold08}.} of the membership problem~-- however, for the larger class of inverse semigroups or, more specifically, any variety of finite inverse semigroups not contained in $\vCl$.

The condition in \Cref{thm:main-CT} can be equivalently formulated using the following fact.
A variety of finite inverse semigroups is contained in $\vCl$ if and only if it does not contain the combinatorial Brandt semigroup $B_2$.
The latter consists of elements $\{s,\ov s, s\ov s, \ov s s, 0\}$ where $s^2 = \ov s^2 = 0$ and all of the other products are as one would expect.
Hence, by \cref{thm:main-CT}, the problems $\dMemb[CT]{\vV}$ and $\dConj[CT]{\vV}$ are \LOGSPACE-complete if and only if $B_2 \in \vV$.

\medbreak

We now turn to the partial bijection model.
This input model is similar to the transformation model for semigroups considered above~-- however, the generators and the target elements are partial maps that need to be injective on their domain. 
Our main result for the partial bijection model is the following dichotomy.

\begin{maintheorem}[Partial Bijection Model]\label{thm:main-PB}
	Let $\vV$ be a variety of finite inverse semigroups. 
	\begin{itemize}
		\item If $\vV \sse \vSI$, then $\dMemb[PB]{\vV} $ is in \NC and $\dConj[PB]{\vV}$ is in \NP.
		\item If $\vV \not\sse \vSI$, then $\dMemb[PB]{\vV}$ and $\dConj[PB]{\vV}$ are \PSPACE-complete.
	\end{itemize}
\end{maintheorem}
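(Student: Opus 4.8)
The plan is to split the dichotomy into its two halves and treat each with rather different tools. For the upper-bound half, suppose $\vV \sse \vSI$. The key structural fact about strict inverse semigroups that I would exploit is that they are subdirect products of groups and Brandt-type pieces; more concretely, a strict inverse semigroup embeds into a direct product of the form $\prod_i B(G_i, n_i)$ of Brandt semigroups over groups $G_i$ (equivalently, every $\cJ$-class acts like a Brandt semigroup over a group, and the idempotents form a particularly simple semilattice). Given the generators $u_1, \dots, u_k$ and target $t$ as partial bijections on $\Omega$, I would first compute, in $\NC$, the subsemilattice of idempotents generated by the $u_i \ov{u_i}$: since idempotent partial bijections are just partial identities on subsets of $\Omega$, the generated semilattice is computed by closing a family of subsets of $\Omega$ under intersection, which is an $\NC$ (indeed $\ACz$-ish transitive-closure-style) computation. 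Then, having pinned down the relevant $\cR$- and $\cL$-classes (the possible domains and ranges), membership of $t$ reduces to a \emph{group} membership question inside the group $\cH$-class that $t$ would have to live in — and membership in permutation groups is in $\NC$ by Babai–Luks–Seress \cite{BabaiLS87}, which is the engine that makes the whole thing work. The main obstacle on this side is making precise the reduction "strict inverse $\Rightarrow$ the hard part is purely a permutation-group computation": one must argue that the Schützenberger/$\cH$-class groups arising can be generated in $\NC$ from the input data (tracing products of partial bijections and restricting to a common domain), and that only polynomially many such group computations, composed in an $\NC$-friendly way, are needed.

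For the conjugacy upper bound under $\vV \sse \vSI$, I would layer a nondeterministic guess on top of this: conjugacy of $s,t$ by some $u \in U \cup \{1\}$ first forces $s$ and $t$ to lie in the same $\cD$-class and have matching idempotent "shapes", which is an $\NC$ check; then the conjugating element $u$, modulo its group part, can be guessed — giving $\NP$ overall (guess the coset representative data, then verify with the $\NC$ membership machinery). The reason this lands in $\NP$ rather than $\NC$ is exactly that we cannot, in general, \emph{find} a conjugator deterministically in parallel, only verify a guessed one; this mirrors the situation for permutation-group conjugacy.

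For the hardness half, suppose $\vV \not\sse \vSI$. The strategy is to identify a single small "forbidden" inverse semigroup $W$ with $W \in \vV \setminus \vSI$ and reduce a known $\PSPACE$-complete problem — the intersection non-emptiness problem for inverse automata, shown $\PSPACE$-complete by Birget–Margolis–Meakin–Weil \cite{BirgetMMW94} (equivalently $\dMemb[PB]{}$ for all inverse semigroups, Birget–Margolis \cite{BirgetM08} / Jack \cite{Jack23}) — to $\dMemb[PB]{\vV}$. Here I would invoke the Eilenberg-variety machinery: a variety of finite inverse semigroups not contained in $\vSI$ must contain some specific "bad" building block (an analogue of how $B_2 \notin \vCl$ drives the Cayley-table dichotomy) — plausibly a five-element aperiodic inverse semigroup that is not strict — and crucially $\vSI$ being a variety means membership of $t$ in $U$, with $U$ built from this block, can encode arbitrary inverse-automaton intersection instances: the partial-bijection generators of $U$ simulate the transition partial bijections of the automata, and $t$ encodes "a common accepted word exists". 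One checks $U \in \vV$ because $U$ is generated inside a power of the forbidden block, which lies in $\vV$. The $\PSPACE$ upper bound for the whole problem is standard (guess a product word of polynomial-in-$|\Omega|$ length nondeterministically, using Savitch/reachability in the exponential-size but implicitly-represented semigroup), and conjugacy hardness follows from membership hardness by a routine padding argument (set $s$ to an idempotent so that conjugacy by $u$ degenerates to a membership-like condition). The hard part on this side will be the variety-theoretic step: proving that \emph{every} pseudovariety $\vV \not\sse \vSI$ actually contains a block rich enough to carry the $\PSPACE$-hardness reduction — i.e. classifying the minimal non-strict inverse semigroups generating such varieties and verifying the reduction is uniform across all of them.
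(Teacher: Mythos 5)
Your outline has the right overall shape for the case $\vV \sse \vSI$ --- reduce to a group $\gH$-class and invoke Babai--Luks--Seress --- but the step you explicitly defer is exactly where the paper's real work lies, and your sketch of it does not hold up. Computing ``the subsemilattice of idempotents generated by the $u_i\bar u_i$'' by closing domains under intersection is not an \ACz-ish computation: the generated semilattice can have exponentially many elements, and, more importantly, in a strict (non-Clifford) inverse semigroup the minimal idempotent of $E(U)$ above $t\bar t$ is \emph{not} obtained by intersecting generator domains; the paper points out that identifying the appropriate subgroup is already \LOGSPACE-hard once $B_2 \in \vV$, so no purely local/intersection computation can succeed. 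The paper's solution is the Munn-graph machinery: orbits of $U$ are partitioned by domains of elements of $U$ (this is where $B_2^1 \not\preccurlyeq U$ is used), paths in the Munn graph encode conjugation of idempotents, and a basis obtained from connected-component computations (via \dUGAP, hence in \LOGSPACE by Reingold, and so in \NC) yields both the minimal idempotent $\hat e \in E(U)$ and a generating set of the $\gH$-class $U_{\hat e}$, giving a \LOGSPACE many-one reduction of $\dMemb[PB]{\vSI}$ and $\dConj[PB]{\vSI}$ to the corresponding group problems. (Your \NP argument for conjugacy --- guess the conjugator, verify with the membership routine --- is fine once membership is settled.)

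For $\vV \not\sse \vSI$, the variety-theoretic step you flag as the hard part is in fact known: $\vV \not\sse \vSI$ if and only if $B_2^1 \in \vV$ (the obstruction is the six-element Brandt \emph{monoid}; $B_2$ itself, with five elements, is strict), by Djadchenko/Kleiman, which the paper cites. The genuine gap is in the reduction itself: in $\dMemb[PB]{\vV}$ the generated subsemigroup $U$ must lie in $\vV$, and arbitrary instances of inverse-automata intersection non-emptiness cannot be ``simulated inside a power of the forbidden block'' --- powers of $B_2^1$ are aperiodic, whereas the transition semigroups of the Birget--Margolis--Meakin--Weil instances are unrestricted (they may contain nontrivial groups), so the $U$ your reduction builds will in general not lie in $\vBM$, and for $\vV = \vBM$ you get no hardness at all. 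This is precisely why the paper reduces from non-deterministic constraint logic instead: NCL's locality (degree-three vertices, at most seven local configurations, transitions touching two vertices) lets the whole instance be encoded into $\prod_v B^1(\cC(\Gamma,v)) \in \vBM$, so $U_\Gamma \in \vV$ by construction; idempotent conjugacy of $e(C_s)$ and $e(C_t)$ then encodes configuration reachability, and membership hardness is \emph{derived from} conjugacy hardness via a $Y_2$-factor trick --- the opposite direction from your ``routine padding'' claim, which as stated is not justified. The two-state inverse-automata hardness (Corollary C) is a by-product of this NCL encoding, not an available starting point for it.
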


Herein $\vSI$ denotes the variety of \emph{strict inverse semigroups}, which is the smallest variety containing all groups and the combinatorial Brandt semigroup $B_2$; it contains $\vCl$, in particular, all semilattices (denoted as $\vSl$), and the variety generated by $B_2$ (denoted as $\vBS$).
As such, $B_2$ no longer serves as a key obstruction to an easy membership problem (as was the case in the Cayley table model).
This r\^{o}le is now played by the combinatorial Brandt monoid $B_2^1$.
Indeed, a variety of finite inverse semigroups $\vV$ is contained in $\vSI$ if and only if $B_2^1 \not\in \vV$.

The case $\vV \sse \vSI$ in \Cref{thm:main-PB} can be further refined as follows.
\begin{itemize}
	\item If $\vV \sse \vSl$, then $\dMemb[PB]{\vV} $ and $\dConj[PB]{\vV}$ are in \ACz (see \cite{BeaudryMT92}).
	\item If $\vV = \vBS$, then $\dMemb[PB]{\vV} $ and $\dConj[PB]{\vV}$ are \LOGSPACE-complete.
	\item If $\vV \not\sse \vBS$, then $\dMemb[PB]{\vV} $ is in \NC and $\dConj[PB]{\vV}$ is in \NP; both are hard for \LOGSPACE.
\end{itemize}

Note that here \ACz and $\NC$ refer to uniform circuit classes and, as such, the three levels of complexity $\ACz \subsetneq \NC \subsetneq \PSPACE$ are separated unconditionally.
Therefore, in particular, each of the problems $\dMemb[PB]{\vV}$ and $\dConj[PB]{\vV}$ is in $\ACz$ if and only if $\vV \sse \vSl$, and the problem $\dMemb[PB]{\vV}$ is in $\NC$ if and only if $\vV \sse \vSI$.

For $\vV \sse \vSI$ we reduce the problems $\dMemb[PB]{\vV}$ and $\dConj[PB]{\vV}$ to the corresponding problems for the variety of finite groups, matching their complexity.
We build on the celebrated result of Babai, Luks, and Seress~\cite{BabaiLS87} which states that the membership problem for permutation groups is in \NC, as well as the observation that, due to groups admitting polylogarithmic SLPs, the corresponding conjugacy problem is in \NP.

\medbreak

As outlined above, membership problems are deeply intertwined with intersection non-emptiness problems for the corresponding classes of automata.
In 2016, Bulatov, Kozik, Mayr, and Steindl \cite{BulatovKMS16} proved that the intersection non-emptiness problem for DFAs remains \PSPACE-complete even if the input automata are restricted to at most three states exactly one of which is accepting.
This corresponds to semigroups in the transformation model.
Here we obtain a similar result for inverse automata, which corresponds to inverse semigroups in the partial bijection model, using the same reduction as for the hardness part of \cref{thm:main-PB}.

\begin{maincorollary}\label{cor:main-intersection}
	 The intersection non-emptiness problem for inverse automata is \PSPACE-complete.
   This holds even if the automata have only two states, one of which is accepting.
\end{maincorollary}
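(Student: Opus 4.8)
The plan is to treat the two bounds separately; the $\PSPACE$ upper bound is routine, and the lower bound recycles the reduction behind \cref{thm:main-PB}. An inverse automaton is in particular a (partial) deterministic finite automaton, so intersection non-emptiness for inverse automata is a special case of DFA intersection non-emptiness, which lies in $\PSPACE$ by the textbook argument: a non-deterministic machine walks through the exponential-size product automaton one letter at a time, keeping only the current tuple of states (of polynomial size) in memory, and accepts once every coordinate is accepting; Savitch's theorem then removes the non-determinism. Bounding the number of states by two does not affect this.

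For $\PSPACE$-hardness I would reuse the reduction from the hardness direction of \cref{thm:main-PB} for a variety $\vV \not\sse \vSI$ (say, the variety generated by $B_2^1$) and observe that it factors through intersection non-emptiness of two-state inverse automata. That reduction outputs instances of $\dMemb[PB]{\vV}$ in which the ambient inverse semigroup $S$ is a subsemigroup of a power $(B_2^1)^m$ (one may even take $S = (B_2^1)^m$); equivalently, $S$ acts on $\Omega = \bigsqcup_{j=1}^{m} B_j$, a disjoint union of two-element blocks $B_j = \{0_j,1_j\}$, with each generator $u_i$ and its inverse $\bar u_i$ acting on every $B_j$ as a partial bijection (an element of $B_2^1$); moreover, inspecting the reduction, the target $t$ may be assumed to act on each $B_j$ either as the identity of $B_j$ or by sending one of its two points to the other. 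Since $U$ is generated as an \emph{inverse} subsemigroup, $t \in U$ holds iff some word $w$ over $\{u_1,\dots,u_k,\bar u_1,\dots,\bar u_k\}$ equals $t$ in $S$, hence — by the coordinatewise embedding — iff $w$ and $t$ induce the same partial bijection on every block $B_j$.

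I would then encode this as an intersection non-emptiness instance. For each block index $j$ and each point $x \in B_j$ on which $t$ is defined, let $\cA_{j,x}$ have state set $B_j$, initial state $x$, unique accepting state $t(x)$, and transitions given by the actions of $u_1,\dots,u_k,\bar u_1,\dots,\bar u_k$ on $B_j$; it is an inverse automaton (all letters act as partial bijections of a two-element set and the inverse letters are present), has exactly two states and exactly one accepting state, and is logspace-computable from the given instance. A word $w$ is accepted by $\cA_{j,x}$ exactly when the partial bijection that $w$ induces on $B_j$ maps $x$ to $t(x)$, so $w$ lies in the common language of all the $\cA_{j,x}$ exactly when, on every block, $w$ agrees with $t$ wherever $t$ is defined. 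By the shape of $t$ this already forces $w$ and $t$ to coincide on every block — a partial bijection in $B_2^1$ fixing both points of $B_j$ is the identity, and one sending $0_j$ to $1_j$ can only be the map $0_j \mapsto 1_j$, since the transposition does not belong to $B_2^1$ — so the common language is non-empty iff $t \in U$, which establishes $\PSPACE$-hardness.

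The real content is the reduction of \cref{thm:main-PB}, taken as given, together with the fact that \emph{two} states (rather than three) suffice. The latter is where partiality is essential: an invalid step becomes an undefined transition that simply deletes the offending run, so the dedicated rejecting sink of the three-state DFA construction of Bulatov, Kozik, Mayr and Steindl~\cite{BulatovKMS16} is not needed — and could not be added anyway, since a sink must absorb every letter and therefore destroys the injectivity that an inverse automaton requires. Accordingly, the main point to get right is that the reduction never has to verify a \emph{negative} condition on a block (``$w$ is undefined at this point''), which cannot be witnessed by reaching an accepting state; this is exactly why it matters that on each two-element block the target is the identity or a point-to-point map, both of which are pinned down inside $B_2^1$ by plain reachability conditions alone.
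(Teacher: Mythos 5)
Your \PSPACE{} upper bound and the final shape of your construction (two-state indicator automata with one accepting state, letters acting as partial injections, undefined transitions killing bad runs) are fine and close in spirit to what the paper builds. The genuine gap is the premise that the hardness reduction behind \cref{thm:main-PB} ``outputs instances of $\dMemb[PB]{\vV}$ in which the ambient inverse semigroup is a subsemigroup of a power $(B_2^1)^m$'' with a target acting on each two-element block as the identity or a point-to-point cross map. That is not what the reduction in \cref{thm:pbm-hardness} produces: there the ambient semigroup is $S_\Gamma \times Y_2$ with $S_\Gamma = \prod_v B^1(\cC(\Gamma,v))$, a product of Brandt monoids on up to seven points (and $B^1_n$ is only a \emph{divisor} of $(B^1_2)^n$, not a subsemigroup), and the target is the idempotent $(e(C_t),0)$, which on each factor is a partial identity defined on a single point and undefined elsewhere. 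This is exactly the ``negative-condition'' situation you yourself flag as fatal: membership of such a target cannot be certified by positive point-to-point conditions. Concretely, the generator $(e(C_t),1)$ of $U'_\Gamma$ fixes every point of the domain of $(e(C_t),0)$, so all of your automata $\cA_{j,x}$ accept this one-letter word regardless of whether $C_s$ and $C_t$ are reachable from one another; the common language is then always non-empty while $t \in U$ need not hold, so the equivalence ``intersection non-empty iff $t \in U$'' fails on the very instances you propose to reduce from. A routine binarization does not rescue the premise either: products of the transition generators restrict domains on every block they touch, so elements of the generated inverse subsemigroup are essentially never of the clean ``identity or cross'' shape on all blocks, and you give no construction of \PSPACE-hard membership instances with the target shape your translation requires.

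The paper avoids this by not passing through a membership instance at all: it reduces from \dNCL{} directly, building for each local configuration $c$ at each vertex a two-state inverse automaton whose state records whether the current configuration agrees with $c$, with start and accepting states determined by $C_s$ and $C_t$; a word lies in the intersection of all these languages iff it spells a valid transition sequence from $C_s$ to $C_t$. This encodes a reachability/transporter-type condition rather than equality with a fixed partial bijection, which is precisely what sidesteps the undefinedness issue. To repair your argument you would either have to actually construct and verify \PSPACE-hard membership instances over $(B^1_2)^m$ whose target is blockwise the identity or a cross map (this does not follow from ``inspecting'' the paper's reduction and is not obviously attainable), or argue the equivalence with NCL reachability directly at the automaton level, as the paper does.
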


The reason that two states suffice to show \PSPACE-hardness is grounded in the fact that inverse automata have partially defined transition functions, whereas the above-mentioned result concerns automata with total transition functions.

In the same work Bulatov, Kozik, Mayr, and Steindl also considered the \emph{subpower membership problem} and showed that it is \PSPACE-complete for arbitrary semi\-groups.
Let $S$ be a fixed (inverse) semigroup.
The input for the subpower membership problem for $S$ consists of a number $m$, elements $u_1, \dots, u_k \in S^m$, and $t\in S^m$.
The question is whether $t $ is contained in the (inverse) subsemigroup generated by $\{u_1, \dots, u_k\}$.
As a consequence of \cref{thm:main-PB} and \cref{cor:main-intersection}, we obtain the following dichotomy for the complexity of the subpower membership problem for inverse semigroups.

\begin{maincorollary}\label{cor:main-subpower}
The subpower membership problem for an inverse semigroup $S$ is in~\NC if and only if $S \in \vSI$.
		Otherwise, it~is~\PSPACE-complete.	
\end{maincorollary}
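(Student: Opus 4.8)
The plan is to derive Corollary~\ref{cor:main-subpower} from \Cref{thm:main-PB} and \Cref{cor:main-intersection} by observing that the subpower membership problem for a fixed inverse semigroup $S$ is, up to the complexities involved, the same as $\dMemb[PB]{\vV}$ for $\vV$ the variety generated by $S$. First I would fix an inverse semigroup $S$ and consider the $m$-th power $S^m$; an element of $S^m$ acts faithfully on a finite set by representing $S$ itself as a semigroup of partial bijections (the Wagner--Preston representation) and taking $m$ disjoint copies of the ground set, so the tuple $(u_1,\dots,u_k,t)$ in $S^m$ becomes an instance of the partial bijection membership problem whose ``alphabet size'' (size of $\Omega$) is $m \cdot |S_\Omega|$ where $S_\Omega$ is the fixed ground set of the Wagner--Preston representation of $S$. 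Since $S$ is fixed, this blowup is linear in $m$, hence polynomial in the input size, so the subpower membership problem for $S$ $\AC^0$- (or logspace-) reduces to $\dMemb[PB]{\vV}$ where $\vV$ is generated by $S$; conversely $\dMemb[PB]{\{S\}}$-type instances with ground set of size $n$ embed into subpower membership for $S$ with $m$ on the order of $n$, so the two problems are equivalent under the reductions we care about.

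Once this equivalence is set up, the positive direction is immediate: if $S \in \vSI$ then the variety $\vV$ generated by $S$ satisfies $\vV \sse \vSI$ (since $\vSI$ is a variety, in particular closed under generated subvarieties), so by \Cref{thm:main-PB} the problem $\dMemb[PB]{\vV}$ is in \NC, and since \NC is closed under the relevant reductions, the subpower membership problem for $S$ is in \NC.

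For the negative direction I would argue the contrapositive: suppose $S \notin \vSI$. Then the variety $\vV$ generated by $S$ is \emph{not} contained in $\vSI$, and the goal is to conclude \PSPACE-completeness of the subpower membership problem for $S$ itself, not merely for $\vV$. The \PSPACE upper bound is the easy half --- even the general (non-subpower) membership problem for finite inverse semigroups in the partial bijection model is in \PSPACE, as recalled in the introduction, and a subpower instance reduces to such an instance as above. The hardness half is the crux: I need \PSPACE-hardness for a single fixed semigroup $S$ rather than for a whole variety. Here I would reuse the reduction behind \Cref{cor:main-intersection}: since $S \notin \vSI$ we have $B_2^1 \in \vV$, and I want to extract from $S$ a concrete finite inverse semigroup (ideally $B_2^1$ itself, or a divisor of $S$ mapping onto $B_2^1$) that already witnesses \PSPACE-hardness via intersection non-emptiness of inverse automata. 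The subtle point, which I expect to be the main obstacle, is that membership-problem hardness does not in general transfer along division of semigroups: knowing $B_2^1$ is a quotient of a subsemigroup of $S^n$ does not by itself give a reduction from subpower membership for $B_2^1$ to subpower membership for $S$. One must instead show directly that the \PSPACE-hardness construction of \Cref{cor:main-intersection} --- which encodes an arbitrary computation into inverse automata / partial bijections --- can be carried out using only the operations available in $S$, exploiting that $S$ generates a variety outside $\vSI$. Concretely, I would use the structure theory of strict inverse semigroups: $S \notin \vSI$ forces $S$ to contain, as a divisor realized inside a fixed power $S^c$ for a constant $c = c(S)$, a copy of $B_2^1$; then a subpower membership instance over $B_2^1$ (which is \PSPACE-hard by the same reduction as \Cref{cor:main-intersection}, now with $m$ components) is simulated inside a subpower membership instance over $S$ by replacing each $B_2^1$-component with the fixed constant-size block of $S$-components realizing that divisor, padding with idempotents as needed, and translating the target accordingly. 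Verifying that this simulation is faithful (that membership is preserved in both directions, not just implied) is the technical heart, and it is exactly the step where one needs the variety-theoretic characterization ``$\vV \not\sse \vSI \iff B_2^1 \in \vV$'' stated in the excerpt together with an effective bound on the power $c$ in which the division is witnessed.

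Putting the two directions together yields the stated dichotomy: subpower membership for $S$ is in \NC when $S \in \vSI$ and \PSPACE-complete otherwise, with the unconditional separation $\NC \subsetneq \PSPACE$ ensuring these cases are genuinely distinct.
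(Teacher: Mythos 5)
Your upper bounds and the overall shape of the argument coincide with the paper's: subpower membership for $S$ reduces to $\dMemb[PB]{\vV}$ for the variety $\vV$ generated by $S$, which gives the \PSPACE{} upper bound in general and, via \cref{cor:gb-complexity} (i.e.\ \cref{thm:main-PB}), membership in \NC{} when $S \in \vSI$. The gap is in the hardness direction, and it is precisely the step you yourself label ``the technical heart'' and then do not carry out. Two concrete ingredients are missing. First, you do not need a power $S^c$ or blocks of coordinates at all: by the characterization of $\vSI$ (\cref{lem:gb-characterization}), $S \notin \vSI$ already means $B_2^1 \preccurlyeq S$, and the paper works inside $S$ itself, choosing an idempotent $e$ in the preimage of $1$ and an element $s$ (a preimage of $u$ cut down by $e$) with $s\ov s \neq \ov s s$ and $e \geq s\ov s \vee \ov s s$; the five elements $e, s, \ov s, s\ov s, \ov s s$ then multiply exactly like the nonzero elements of $B_2^1$ whenever the corresponding product in $B_2^1$ is nonzero.

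Second~-- and this is the point your sketch never identifies~-- the faithfulness of the coordinate-wise substitution $1 \mapsto e$, $u \mapsto s$, and so on, hinges on a special property of the \PSPACE-hard instances of subpower membership over $B_2^1$ obtained from \cref{thm:ia-hardness}: the target element does not project to the zero of $B_2^1$ in any coordinate. Since $0$ is absorbing, every factorization of the target over the generators avoids $0$ in every coordinate, so the substituted factors multiply in $S$ exactly as in $B_2^1$, which gives the completeness direction of the reduction; soundness follows by applying the dividing homomorphism coordinate-wise. Without this zero-avoidance the substitution is genuinely unsound, because products that vanish in $B_2^1$ may behave arbitrarily in $S$; hence ``padding with idempotents as needed'' does not by itself resolve the faithfulness issue you flagged, and the ``effective bound on the power $c$'' you ask for is not the missing ingredient (the division is witnessed in $S$ directly). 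As written, your proposal establishes the \NC{} case and the \PSPACE{} upper bound, but leaves the \PSPACE-hardness of subpower membership for a fixed $S \notin \vSI$ unproved.
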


Finally, we apply our results to the problems of determining the minimal size of a generating set (\dMGS{}) and deciding satisfiability of an equation (\dEqn{}) and obtain a similar dichotomy as above.
The minimum generating set problems receives as input an inverse semigroup $S$ and an integer $k$ and asks whether $S$ can be generated by at most $k$ elements.
For \dEqn{} the input is an inverse semigroup $S$ and a single equation, and the question is whether there is a satisfying assignment of the variables to elements of $S$.

\begin{maincorollary}\label{cor:main-mgs-equations}
	Let $\vV$ be a variety of finite inverse semigroups.
  \begin{itemize}
    \item If $\vV \sse \vSI$, then $\dMGS[PB]{\vV}$ and $\dEqn[PB]{\vV}$ are in \NP.
    \item If $\vV \not\sse \vSI$, then $\dMGS[PB]{\vV}$ and $\dEqn[PB]{\vV}$ are \PSPACE-complete.
  \end{itemize}
\end{maincorollary}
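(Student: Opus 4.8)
The plan is to prove both directions by relating $\dMGS[PB]{\vV}$ and $\dEqn[PB]{\vV}$ to the membership problem $\dMemb[PB]{\vV}$, whose complexity is settled by \cref{thm:main-PB}.

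For the upper bound assume $\vV \sse \vSI$. For $\dEqn[PB]{\vV}$, given $S \in \vV$ acting on a finite set $\Omega$ and an equation $w(x_1,\dots,x_n) = w'(x_1,\dots,x_n)$, I would nondeterministically guess partial bijections $\sigma_1,\dots,\sigma_n$ on $\Omega$ (each of polynomial description size), check $\sigma_i \in S$ for every $i$ by invoking the $\dMemb[PB]{\vV}$ algorithm with $U := S$ (and the full symmetric inverse monoid on $\Omega$ as ambient), and then evaluate both sides of the equation by composing partial bijections, accepting iff the two results agree. Since $\dMemb[PB]{\vV} \in \NC \sse \Ptime$ by \cref{thm:main-PB}, this is a nondeterministic polynomial-time procedure. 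For $\dMGS[PB]{\vV}$, given $S = \gen{g_1,\dots,g_m} \in \vV$ and $k$, I would accept at once if $k \ge m$, and otherwise guess partial bijections $\sigma_1,\dots,\sigma_k$ on $\Omega$ and verify that $\sigma_i \in S$ for all $i$ and $g_j \in \gen{\sigma_1,\dots,\sigma_k}$ for all $j$; the latter are again $\dMemb[PB]{\vV}$ instances, since $\gen{\sigma_1,\dots,\sigma_k}$ is an inverse subsemigroup of $S$ and hence lies in $\vV$. All checks run in $\NC \sse \Ptime$, and they succeed for some guess iff $S$ is generated by $k$ of its elements; thus both problems are in \NP. (For an arbitrary variety the same guess-and-check procedure decides the two problems in nondeterministic polynomial time with an oracle for $\dMemb[PB]{\vV}$; since the membership problem for finite inverse semigroups in the partial bijection model always lies in \PSPACE, this gives a \PSPACE upper bound, which is what is needed when $\vV \not\sse \vSI$.)

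For the lower bound assume $\vV \not\sse \vSI$, so that $B_2^1 \in \vV$; consequently every finite direct power $(B_2^1)^m$ and every inverse subsemigroup of such a power lies in $\vV$, and — since the two-element semilattice embeds into $B_2^1$ — so does every finite semilattice. By \cref{cor:main-subpower} the subpower membership problem for $B_2^1$ is \PSPACE-complete, and, in the partial bijection model, it is precisely the restriction of $\dMemb[PB]{\vV}$ to instances with ambient semigroup $S = (B_2^1)^m$ (acting faithfully on $2m$ points), subsemigroup $U = \gen{u_1,\dots,u_k} \le S$, and target $t \in S$. I would reduce this problem to $\dMGS[PB]{\vV}$ by \emph{tagging}: build $S' \le S \times L$ for a finite semilattice $L \in \vV$ whose idempotents serve as pairwise orthogonal tags — one tag $a_i$ for each generator $u_i$ and one extra tag $b$ for $t$ — and put $S' = \gen{(u_1,a_1),\dots,(u_k,a_k),(t,b)}$, with minor adjustments. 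The semilattice structure is meant to force every generating set of $S'$ to contain, for each index, a generator carrying the corresponding tag, so that $S'$ is generated by $k$ elements exactly when the $b$-tagged part of $S'$ is already reachable from the $a_i$-tagged ones — which is the case precisely when $t$ is a product over $\os{u_1,\ov{u}_1,\dots,u_k,\ov{u}_k}$, that is, when $t \in U$. For $\dEqn[PB]{\vV}$ I would use a companion construction, a single equation over a tagged semigroup in $\vV$ whose solutions must spell out such a factorisation of $t$; alternatively, $\dMGS[PB]{\vV}$ itself reduces to $\dEqn[PB]{\vV}$ by expressing ``the guessed elements generate $S'$'' as one equation over a tag semigroup. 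Together with the \PSPACE upper bound above this gives \PSPACE-completeness.

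I expect the hardness reductions to be the main obstacle, the upper bound being routine. Three points need care. First, the tag semigroup must be kept inside $\vV$, which forbids any nontrivial group ingredient and restricts us to aperiodic building blocks such as semilattices and copies of $B_2^1$. Second, the construction must be \emph{rigid}: a generating set of size $k$ (respectively a solution of the single equation) has to encode a genuine factorisation of $t$ rather than exploit an accidental shortcut. Third, degenerate situations — a monogenic $\gen{u_i}$, or a target $t$ that is itself a power of a single $u_i$ — must be excluded, which I would handle with a slightly more elaborate, multi-level tag semilattice and a short case analysis.
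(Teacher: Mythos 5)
Your upper bounds are correct and essentially the paper's own argument: guess an assignment (resp.\ a candidate generating set) and verify it with the membership algorithm of \cref{thm:main-PB}, giving \NP when $\vV \sse \vSI$ and a \PSPACE bound in general (cf.\ \cref{prop:eqn-alg}). The gaps are in the hardness reductions. For $\dMGS[PB]{\vV}$ your tagging idea is the right one, but the concrete construction fails: in $S' = \gen{(u_1,a_1),\dots,(u_k,a_k),(t,b)}$ a product of generators carrying distinct tags fixes no tag point, so every element of $S'$ fixes at most one tag, and the only elements fixing $a_i$ (resp.\ $b$) are powers of $(u_i,a_i)$ (resp.\ of $(t,b)$). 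Since the first factor of any factorization of $(u_i,a_i)$ or $(t,b)$ must already contain the corresponding tag point in its domain, \emph{every} generating set of $S'$ needs one element per tag, hence at least $k+1$ elements, irrespective of whether $t \in U$; the count never detects membership. Even after dropping the tag on $t$, one tagged copy per generator does not suffice: if $t$ is a power of a single $u_i$, every available product still fixes the tag point $a_i$ and so never equals the untagged $t$. These are precisely the ``degenerate situations'' you defer to a multi-level tag semilattice, but they are the crux; the paper's \cref{lem:mgs-reduction} resolves them by taking \emph{two} tagged copies $(u,1),(u,2)$ of each generator and adding $t$ untagged, so that $u = (u,1)\,\overline{(u,2)}(u,2)$ recovers each untagged generator, the threshold becomes $2k$, and the reduction lands in $\dMGS[PB]{\vV\vee\vSl} = \dMGS[PB]{\vV}$ because $\vSl \sse \vV$ when $\vV \not\sse \vSI$.

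For $\dEqn[PB]{\vV}$ the proposal is not a reduction at all yet, and the two routes you sketch face real obstacles: in this problem the variables range over the whole input semigroup $S = \gen{\Sigma}$ and constants must be elements of $S$, so you can neither write ``$X = t$ with $X$ restricted to $U$'' (putting $t$ into $S$ trivializes the instance, and per-variable subsemigroup constraints are not part of the problem in the lower-bound setting), nor express ``these elements generate $S'$'' as a single equation (that is a conjunction of existential membership conditions, not an equation). The missing idea is that \emph{conjugacy}, not membership, is the right source problem: the paper reduces from $\dEConjS[PB]{\vV}$, which is \PSPACE-hard for $\vV \not\sse \vSI$ by \cref{thm:pbm-hardness}, and uses \cref{lem:green-semi_rel} and \cref{lem:idem_conj_is_green_j} to show that for idempotents $e_s, e_t$ the single equation $\overline{X} e_s X = e_t$, solved over $\gen{U \cup \{e_s,e_t\}} \in \vV$, characterizes $e_s \sim_U e_t$ (see \cref{lem:eqn-hardness}). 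Without this, or an equally concrete substitute, the \PSPACE-hardness of $\dEqn[PB]{\vV}$ is not established.
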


\subsection{Technical Overview}\label{sec:technical-overview}

A central role in our results is played by the combinatorial Brandt semigroup $B_2$ (defined above) and the Brandt monoid $B_2^1$, which is the Brandt semigroup with an adjoined identity.
The former and the latter are the sole obstruction to inclusion in the variety $\vCl$ and $\vSI$, respectively.
As such, both inverse semigroups are crucial obstructions preventing the membership and conjugacy problems from being ``easy'':
for example in the Cayley table model, the Brandt semigroup $B_2$ is the obstruction from $\dMemb[CT]{\vV}$ being in \NPOLYLOGTIME; in the partial bijection model the Brandt monoid $B_2^1$ makes the problem \PSPACE-hard.

\mysubparagraph{Outline of the Proof of \cref{thm:main-CT}.}
The proof of \cref{thm:main-CT} consists of three main steps: first, show that for Clifford semigroups both problems can be solved in \NPOLYLOGTIME, second, show that in any case they can be solved in \LOGSPACE, and finally, show that, if the variety under consideration contains the Brandt semigroup $B_2$, then the problems are hard for \LOGSPACE.

For the first point there is not much left to do.
Indeed, \ifAnonimous Fleischer\else the first author\fi~\cite{Fleischer19diss,Fleischer22} showed that $\dMemb[CT]{\vCl}$ is in \NPOLYLOGTIME.
Thus, it merely remains to apply this result also to the conjugacy problem.
Yet, for the sake of completeness and because the proof allows us to describe some interesting consequences, we give a proof in \Cref{sec:Clifford-CT}.
The crucial idea is to use compression via straight-line programs (SLPs) and the Reachability Lemma due to Babai and Szemer\'edi \cite{BabaiS84}:
if $G$ is a finite group and $g \in G$, then there is an SLP of length $\bigO(\log^2 \abs{G})$ that computes $g$.
We also say that groups \emph{admit polylogarithmic SLPs}.
One can guess such an SLP and verify whether it actually computes $g$ in \NPOLYLOGTIME.
In \cite{Fleischer19diss,Fleischer22} \ifAnonimous Fleischer\else the first author\fi\ extended the Reachability Lemma to Clifford semigroups.

Conversely and as a consequence of both parts of \Cref{thm:main-CT} we also obtain a complete characterization of when a variety of finite inverse semigroups  $\vV$ admits polylogarithmic SLPs, namely this is the case if and only if $\vV \sse \vCl$ (see \cref{cor:short-SLP-char}).

The proof that $\dMemb[CT]{\vV}$ and $\dConj[CT]{\vV}$ are hard for \LOGSPACE if $\vV \not\sse \vCl$ follows via a reduction from undirected graph accessibility (\dUGAP), which is intimately related to the membership problem for the combinatorial Brandt semigroups $B_n$; for details, see \cref{sec:SL-hard}.
Finally, to prove that $\dMemb[CT]{}$ and $\dConj[CT]{}$ are in \LOGSPACE, the crucial observation is that the strongly connected components of the Cayley graph of an inverse semigroup are actually undirected graphs. 
This allows to reduce the problem to $\dUGAP$, which by Reingold's result \cite{Reingold08} is in \LOGSPACE.
While for the conjugacy problem this yields a direct many-one reduction to $\dUGAP$, for the membership problem we apply a greedy algorithm using oracle calls for $\dUGAP$ in the associated Cayley graph.
To be more precise, to decide whether $t \in U$, we start initializing a variable $x$ to the neutral element, which we assume to exist and to be contained in $U$.
We update $x$ iteratively while maintaining the invariants that $x \in U$ and $x \ov x t = t$ (\ie $x$ is greater than or equal to $t$ with respect to Green's relation $\gR$). 
Using the \dUGAP oracle, in each iteration of the algorithm we greedily pick an element $x_{\mathrm{new}}$ to replace $x$ that satisfies the invariants and is adjacent to the strongly connected component of $x$ in the associated Cayley graph (so that $x$ is strictly greater than $x_{\mathrm{new}}$ with respect to $\gR$).
While the invariants guarantee that, at any point, we still might multiply $x$ on the right by another element of $U$ to reach $t$, the way we choose  $x_{\mathrm{new}}$ ensures that we actually make progress towards $t$.

\mysubparagraph{Outline of the Proof of \cref{thm:main-PB}.}

Our approach to the dichotomy result for the partial bijection model, \ie to proving \cref{thm:main-PB}, is similar to the above.
In the group case, we build on the celebrated result of Babai, Luks, and Seress \cite{BabaiLS87} which states that the membership problem for permutation groups is in \NC, as well as the observation that, due to groups admitting polylogarithmic SLPs, the corresponding conjugacy problem is in \NP.

Extending these bounds to Clifford semigroups is rather straight-forward: 
one can identify an appropriate subgroup (in fact, an $\gH$-class\footnote{Here $\gH$ and $\gD$ refer to Green's relations; for a definition, see \cref{sec:Green-conjugacy}.}) to which the problem can be reduced to; for details, see \cref{sub:clifford-pb}.
Interestingly, a similar reduction also works for strict inverse semigroups.
However, the proof is much more involved as, in that case, identifying an appropriate subgroup is no longer possible in \ACz but hard for \LOGSPACE due to the presence of Brandt semigroups.
We show that a \LOGSPACE-reduction is nonetheless possible building on some special properties of representations of strict inverse semigroups which we now briefly describe.

Suppose that $U$ is a strict inverse semigroup generated by a set $\Sigma$ of partial bijections on some set $\Omega$.
We say that an element $u \in U$ is $\Delta$-large for some $U$-invariant subset $\Delta \sse \Omega$ if its domain $\dom(u)$ or, equivalently, its range $\ran(u)$ intersects every $U$-orbit $x^U \sse \Delta$.
Consider the graph $\mathrm{M}(\Delta; \Sigma)$ which, for each $\Delta$-large $u \in \Sigma$, possesses an edge labeled $u$ from a vertex associated with the set $\Delta \cap \dom(u) \sse \Omega$ to a vertex associated with the set $\Delta \cap \ran(u) \sse \Omega$.
This graph, which we call the \emph{Munn graph} $\mathrm{M}(\Delta; \Sigma)$ at $\Delta$ and with respect to $\Sigma$, is the basis of our reduction.
As it turns out, every $\gD$-class of the strict inverse semigroup $U$ is generated (as a groupoid) by a connected component of the Munn graph $\mathrm{M}(\Delta; \Sigma)$ at some suitably chosen $U$-invariant set $\Delta \sse \Sigma$.
Moreover, crucially, we can identify the set $\Delta$ suitable for the $\gD$-class of $U$ that contains a given element $t \in U$.
This allows us to first reduce the membership problem for $U$ to some $\gD$-class of $U$ and, ultimately, to some $\gH$-class of $U$.

We refer the reader to \cref{sub:munn} for details on the properties of Munn graphs and to \cref{sub:sis} for details of the reduction outlined above as well as a corresponding reduction for the conjugacy problem based on the same ideas.
We crucially rely on the graph accessibility problem (\dUGAP) for the Munn graph.
Even more, all relevant computations can be preformed in $\LOGSPACE^{\dUGAP}$ and are, in fact, even easy to implement if one replaces oracle calls to \dUGAP with standard algorithms for this problem.
On the other hand, using $\dUGAP \in \LOGSPACE$ \cite{Reingold08} yields $\dMemb[PB]{\vSI} \leq_m^{\LOGSPACE} \dMemb[PB]{\vG}$ and $\dConj[PB]{\vSI} \leq_m^{\LOGSPACE} \dConj[PB]{\vG}$.

Finally, let us attend to the general case of the membership and conjugacy problems for inverse semigroups in the partial bijection model.
On the one hand, it is well-known that both problems can be solved in \PSPACE. 
On the other hand, both problems are \PSPACE-hard in general \cite{BirgetM08,Jack23}. %, \ie without any restrictions imposed on the inverse semigroups in question.
%Their proofs are based on the \PSPACE-hardness result for the intersection non-emptiness problem for inverse automata due to Birget, Margolis, Meakin, and Weil \cite{BirgetMMW94}.
Here we show that the (idempotent) membership and conjugacy problems are \PSPACE-hard for any variety of finite inverse semigroups $\vV$ containing the combinatorial Brandt monoid $B^1_2$ or, equivalently, $\vV \not\sse \vSI$.
We do so via reduction from \dNCL, the configuration-to-configuration problem variant of non-deterministic constraint logic (NCL). 
This problem, introduced and shown to be \PSPACE-complete by Hearn and Demaine~\cite{HearnD05}, asks whether two given configurations of an NCL machine can be transformed into one another.
Crucially, in this problem, configurations and transitions between these can be specified locally.
We encode each local aspect of a problem instance into a (small) combinatorial Brandt monoid $B^1_n$ and the entire instance into the Cartesian product of all  these $B^1_n$; for details, we refer the reader to \cref{sub:pbm-hardness}.
Here we use of the fact that $B^1_n$ divides the $n$-fold Cartesian power $(B^1_2)^n$ and, thus, $B^1_n \in \vV$ whenever $B^1_2 \in \vV$.

The hardness proof for \cref{cor:main-intersection} closely follows the hardness proof of \cref{thm:main-PB}, but using Cartesian powers of $B^1_2$ directly. 
Then we obtain \cref{cor:main-subpower}, the \PSPACE-hardness of the subpower membership problem, as a corollary of \cref{cor:main-intersection} by simply replacing the combinatorial Brandt monoid $B^1_2$ with any inverse semigroup $S$ divided by it.

\mysubparagraph{Details on Further Results.}

Our results on the minimum generating set problem and on deciding satisfiability of equations in \cref{cor:main-mgs-equations} are rather direct applications of our main results. 
In both cases, the upper bounds are established via an algorithm that guesses a witness or solution, which is then verified in polynomial time using access to an oracle for the membership problem.

The lower bounds, \ie \PSPACE-hardness, are obtained via reductions from (suitably restricted variants) of the membership and conjugacy problems.
In the case of the minimum generating set problem, we reduce from (such a variant) $\dMembS[PB]{\vV}$ to $\dMGS[PB]{\vV \vee \vSl}$ using the following idea.
Given an instance $\Sigma \sse \ISym(\Omega)$ and $t\in \ISym(\Omega)$ of the former problem, we would like that the inverse subsemigroup $\gen{\Sigma \cup \{t\}}$ is generated by $\abs{\Sigma}$ elements if and only if $t \in \gen{\Sigma}$.
However, this is clearly not the case as $\Sigma$ might contain redundant generators.
Nevertheless, by adding extra elements to $\Omega$ for each element of $\Sigma$ on which the generators behave as a semilattice, we can ensure that every single generator of $\Sigma$ is needed; 
thus, in this modified instance the above wishful thinking actually applies.

To see \PSPACE-hardness of deciding satisfiability of equations, observe first that conjugacy is represented by a system of equations.
As such, the problem $\dEqnSys[PB]{\vV}$ of deciding whether a system of equations has a solution is \PSPACE-hard whenever $\vV \not\sse \vSI$ by \cref{thm:main-PB}.
Using a suitably restricted variant $\dEConjS[PB]{\vV}$ of the conjugacy problem, were a single equation suffices to represent conjugacy, we obtain that the corresponding problem \dEqn[PB]{\vV} of deciding whether a single equation has a solution is also \PSPACE-hard whenever $\vV \not\sse \vSI$.

\subsection{Related Work}\label{sec:related-work}

Inverse semigroups have been introduced by Wagner~\cite{Wagner52} and Preston~\cite{Preston54} to formalize partial symmetries. 
Implicitly, they had been studied even before for example in the context of so-called \emph{pseudogroups} \cite{Golab39}.
Inverse semigroup have been investigated extensively from geometric, combinatorial, and algorithmic viewpoints; see \eg \cite{ElliottLM24,Gray2020,Margolis-Meakin:1989,ms96,mun74,OlijnykSS10,Djadchenko77,Kleiman76,Kleiman79} for a rather random selection.
For additional background on inverse semigroups, we refer to the standard books \cite{Law99,Petrich84} and the many references therein.

While membership problems in infinite semigroups recently have also gained a lot of attention (see \cite{ChoffrutK05,BellHP17,DiekertPS2024,Dong24} for a few examples), let us in the following give an overview on related work on the membership problem with the input models we use in the present work.

\mysubparagraph{Membership Problem (Cayley table model).}

The membership problem has been studied for many algebraic structures.
Indeed \cite{JonesL76} shows that the membership problem for magmas (\ie\ having a binary operation with no additional axioms) is \Ptime-complete in the Cayley table model.
In contrast, by \cite{CollinsGLW24} for quasigroups (magma with ``inverses'', a.k.a.\ latin squares) it is in \NPOLYLOGTIME using similar techniques as  we apply for the first part of \cref{thm:main-CT}. 

The membership problems for semigroups in the Cayley table model has been introduced by Jones, Lien, and Laaser \cite{JonesLL76}.
Further studies by Barrington, Kadau, and Lange \cite{BarringtonKLM01} showed that it can be solved in $\mathsf{FOLL}$ for nilpotent groups of constant class.
This result has been further improved by Collins, Grochow, Levet, and \ifAnonimous Weiß\else the last author\fi~\cite{CollinsGLW24} showing that the problem can be solved in $\mathsf{FOLL}$ for all nilpotent groups (\ie of arbitrary class) and served as a catalyst for \ifAnonimous Fleischer\else the first author\fi's work \cite{Fleischer19diss,Fleischer22} giving \NPOLYLOGTIME algorithms for Clifford semigroups, on which we build in the present work.

\mysubparagraph{Membership Problem (Transformation/Partial Bijection Model).}

The membership problem for semigroups in the transformation model has been shown to be $\PSPACE$-complete by Kozen~\cite{koz77}.
Beaudry \cite{Beaudry88} showed that the membership problem in commutative semigroups is \NP-complete in the transformation model.
This was later extended in \cite{Beaudry88thesis,BeaudryMT92} to classify the complexity of the membership problem in aperiodic monoids as outlined above.

Based on Sims' work \cite{Sims67}, Furst, Hopcroft, and Luks \cite{FurstHopcroftLuks80} showed that the membership problem for permutation groups is solvable in polynomial time, which after several partial results \cite{LuksM88,Luks86,McKenzieC87} was improved to \NC by Babai, Luks, and Seress \cite{BabaiLS87}.
Interestingly, the problem of rational subset membership is \NP-complete due to Luks \cite{Luks93} (see also \cite{LohreyRZ22}).

Turning our attention to the partial bijection model, it was observed by Birget and Margolis \cite{BirgetM08} and, recently, by Jack \cite{Jack23} that the membership problem for \emph{inverse semigroups} given by partial bijections is \PSPACE-complete.
This follows from an earlier result by by Birget, Margolis, Meakin, and Weil~\cite{BirgetMMW94} showing that the intersection non-emptiness problem for \emph{inverse automata} is \PSPACE-complete.
Indeed, this problem remains \PSPACE-complete over a two-letter alphabet \cite{BirgetM08}.

\mysubparagraph{Subpower Membership Problem.}
While there is no obvious generalization of the partial bijection or transformation semigroup model to non-associative structure such as magmas, the subpower membership problem still can be posed in this case.
Indeed, the subpower membership problem initially has been studied within the context of universal algebra, see \eg \cite{Mayr12,BulatovMS19,Kompatscher24}, and has turned out to be $\mathsf{EXPTIME}$-complete \cite{Kozik08} in general.
For arbitrary semi\-groups the subpower membership problem has been shown to be $\PSPACE$-complete by Bulatov, Kozik, Mayr, and Steindl \cite{BulatovKMS16}.

Further results on the subpower membership problem in semigroups are due to Steindl giving a \Ptime vs.\ \NP-completeness dichotomy for the special case of bands \cite{Steindl17} and a \Ptime vs.\ \NP-complete vs.\ \PSPACE-complete trichotomy for combinatorial Rees matrix semigroups with adjoined identity
\cite{Steindl19}. 
Here it is interesting to note that by our results the \NP-completeness case does not exists for inverse semigroups.

\medbreak

We now turn our attention to the intimately related intersection non-emptiness problem.

\mysubparagraph{Intersection Non-Emptiness Problem.}
The DFA intersection non-emptiness problem has been introduced and shown to be $\PSPACE$-complete by Kozen~\cite{koz77}.
Further work studying the complexity (including parametrized and fine-grained complexity) of the DFA intersection non-emptiness problem can be found in \cite{FernauHW21,KarakostasLV03,LangeR92,HolzerK11,SwernofskyW15,Wehar14,OliveiraW20,ArrighiFHHJOW21}.
Two special cases are that the DFA intersection non-emptiness problem is \NP-complete for DFA accepting finite languages \cite{RampersadS10} and for DFA over a unary alphabet \cite{StockmeyerM73} (see also \cite{FernauK17}).

Another important special case are permutation automata~\cite{Thierrin68} (a.k.a.\ group DFAs). 
This is closely linked to the membership problem in groups, which is in \NC \cite{BabaiLS87}. 
Thus, it comes rather as a surprise that the intersection non-emptiness problem is \NP-complete as Brondin, Krebs, and McKenzie \cite{BlondinKM12} showed; however, when restricting to permutation automata with a single accepting state it, indeed, is in \NC \cite{BlondinKM12}.
Even more, intersection non-emptiness for permutation automata plus one context-free language is \PSPACE-complete \cite{LohreyRZ22}.

Note that every permutation automaton is an inverse automaton as studied in the present work and \eg by Birget, Margolis, Meakin, and Weil \cite{BirgetMMW94}.
Furthermore, inverse automata are a special case of \emph{reversible} automata (or injective automata as they are called in \cite{BirgetMMW94}), which were studied \eg by Pin \cite{Pin87} and Radionova and Okhotin \cite{RadionovaO24}.

\medbreak

Another problem related to membership is the conjugacy problem, which for infinite groups was introduced by Dehn in 1911 \cite{Dehn11}.
For generalizations to semigroups see \cite{Otto84}.

\mysubparagraph{Conjugacy Problem.}
The conjugacy problem for permutation groups is in \NP and hard for graph isomorphism as shown by Luks \cite{Luks93}.
Jack \cite{Jack23} showed that the conjugacy problem for inverse semigroups in the partial bijection model is \PSPACE-complete. 
For a overview on different variants of conjugacy in (inverse) semigroups, we refer to \cite{AraujoKinyonKnieczny19}.

\medbreak

The following problems, which are closely tied to the membership and conjugacy problems (see, e.g., \cref{lem:mgs-reduction} and \cref{prop:eqn-alg}), have also attracted independent interest.

\mysubparagraph{Minimum Generating Set Problem.}

The minimum generating set problem has first been considered by Papadimitriou and Yannahakis \cite{PapadimitriouY96} and further studied in \cite{ArvindT06,Tang13Thesis} showing polylogarithmically space-bounded algorithms.
For groups, it has been shown recently to be solvable in polynomial time by Lucchini and Thakkar \cite{LucchiniT24}.
This bound was further improved to \NC by Collins, Grochow, Levet, and \ifAnonimous Weiß\else the last author\fi~\cite{CollinsGLW24}.
Moreover, they also showed that the minimum generating set problem for magmas is \NP-complete.

\mysubparagraph{Equations.}

There is an extensive work on equations in algebraic structures. 
In particular, the case of groups has attracted a lot of attention after Goldmann and Russell \cite{GoldmannR02} showed that deciding satisfiability of a system of equations is \NP-complete for every fixed non-abelian group and in \Ptime for abelian groups. 
For more recent conditional lower bounds and algorithms for deciding satisfiability of (single) equations, see \eg \cite{IdziakKKW22,IdziakKKW24,FoldvariH19}.

The case of semigroups has attracted much less attention.
While here the closely related problem of checking identities has been investigated thoroughly, \cite{Klima09,SeifS06,Kisielewicz04,AlmeidaVG09,Seif05}, there is relatively little work on deciding whether a (system of) equation(s) has a solution.

In \cite{BarringtonMMTT00} the problem of deciding whether a (single) equation in finite monoids is satisfiable has been investigated.
Among other results it has been shown that in the Brandt monoid $B_2^1$, which also plays an important role in our work, this problem is \NP-complete.
Furthermore, in \cite{KlimaTT07} systems of equations in semigroups were studied.
They presented dichotomy results for the class of finite monoids and the class of finite regular semigroups.
The result for finite regular semigroups is for a restricted variant of the problem, where one side of each equation contains no variable.

\subsection{Outline}

After fixing our notation in \cref{sec:prelims}, we first consider the special cases of groups and Clifford semigroups in \cref{sec:groups} and \cref{sec:clifford}, respectively.
After that, we turn our attention to strict inverse semigroups and prove that, like for Clifford semigroups, their membership and conjugacy problems can be reduced to the respective problems for groups.

In \cref{sec:SL-Cayley}, we present and prove our results on the Cayley table model, \ie the \LOGSPACE-completeness statement in \cref{thm:main-CT}.
In \cref{sec:PSPACE-hard} we give our \PSPACE-hardness proof (part of \cref{thm:main-PB}) and discuss the consequences for the intersection non-emptiness problem for inverse automata and the subpower membership problem.
In \cref{sec:applications} we apply our results to the minimum generating set problem and the problem of solving equations.
Finally, in \cref{sec:discussion} we provide a short summary of our results and discuss interesting open questions.

\section{Preliminaries and Notation}\label{sec:prelims}

\subsection{Inverse Semigroups}\label{sec:prelims-inverse}

A \emph{semigroup} is a non-empty set equipped with an associative binary
operation denoted by $xy$.
For a semigroup $S$ we write $E(S)$ to denote its set of \emph{idempotents}, \ie elements $e \in S$ that satisfy $e e = e$.
A  monoid is a semigroup $M$ with a \emph{neutral element}, \ie{}an element $1 \in M$ such that $1x = x = x1$ for all $x \in M$, which throughout is denoted as $1$.
A \emph{zero element} $z$ of $S$ satisfies $zx = z = xz$ for all $x \in S$; if it exists, we denote it by $0$.
For further background on general semigroups we refer to  \cite{cp67,alm94,rs09qtheory} and for inverse semigroups to \cite{Petrich84,Law99}.

An \emph{inverse semigroup} is a semigroup $S$ where every element $x \in S$  possesses a \emph{unique} inverse $\ov x \in S$, i.e., $x \ov x x = x$ and $\ov x x \ov x = \ov x$ hold and $\ov x$ is the unique element with this property.
In an inverse semigroup $S$ all idempotents commute; in particular, $E(S)$ is always a subsemigroup of $S$ and a semilattice.
We denote the \emph{natural order} on the elements of an inverse semigroup by $\leq$, \ie $x \leq y$ if and only if $x =  x \ov x y$ or, equivalently, $x = y \ov x x$.

For an inverse semigroup $S$ and a subset $\Sigma$ of $S$, we denote by $\gen{\Sigma}$ the inverse subsemigroup of $S$ \emph{generated by $\Sigma$}, \ie the smallest inverse subsemigroup of $S$ containing $\Sigma$. 
The elements of the set $\Sigma$ are called \emph{generators} for $\gen{\Sigma}$.
Note that all elements of $\gen{\Sigma}$ can be written as words over $\Sigma \cup \ov{\Sigma}$.
Therefore, we assume from now on that generating sets are closed under formation of inverses, i.e., $\ov\Sigma = \Sigma$.
However, be aware that unlike in finite groups, arbitrary subsemigroups of an inverse semigroup need not be inverse semigroups again. 

An inverse semigroup $T$ is a \emph{divisor} of $S$, written $T \preccurlyeq S$, if there exists a surjective homomorphism from an inverse subsemigroup of $S$ onto $T$.

\mysubparagraph{Symmetric Inverse Semigroups.}

The \emph{symmetric inverse semigroup $\ISym(\Omega)$} is the inverse semigroup of all partial bijections on a set $\Omega$, \ie partial maps $s\colon\Omega \to \Omega$ that induce a bijection from their \emph{domain} $\dom(s)$ to their \emph{range} $\ran(s)$.
We write $\ISym_n = \ISym(\{1, \dotsc, n\})$.

For $s \in \ISym(\Omega)$ and $x \in \Omega$, we write $x^s$ for the image of $x$ under $s$, where $x^s = \bot$ means that $x^s$ is undefined.
We extend this notation to sets $\Delta \sse \Omega$, i.e., $\Delta^s = \{ x^s \mid x \in \Delta\}$.
Be aware that $\Delta^s$ might be empty even if $\Delta$ is not (this happens if $\Delta \cap \dom(s) = \emptyset$). 
For~$\Delta \sse \Omega$ we write $e_\Delta$ for the idempotent associated to $\Delta$, which is the partial identity on $\Delta$.
Every idempotent of $\ISym(\Omega)$ is of this form, and $e_{\Delta'} e_{\Delta''} = e_{\Delta' \cap \Delta''}$.
We also write $e_{\Delta'} \vee e_{\Delta''} = e_{\Delta' \cup \Delta''}$.

The inverse subsemigroups of $\ISym(\Omega)$ are sometimes called partial bijection semigroups.
An important result by Preston~\cite{Preston54} and Wagner~\cite{Wagner52} states that every inverse semigroup $S$ can be embedded into the symmetric inverse semigroup $\ISym(S)$ in a natural way.

\mysubparagraph{Important Combinatorial Inverse Semigroups.}

We denote the two-element semilattice~by~$Y_2$, which consists of a zero $0$ and a neutral element $1$.
The Cartesian power $(Y_2)^\Omega$ is naturally isomorphic to $E(\ISym(\Omega))$; in particular, every semilattice embeds into some power of $Y_2$.

The (combinatorial) \emph{Brandt semigroup} $B(\Omega)$ on some set $\Omega$ is the inverse subsemigroup $\{ s \in \ISym(\Omega) \mid \lvert \dom(s) \rvert \leq 1 \}$ of $\ISym(\Omega)$.
The (combinatorial) \emph{Brandt monoid} on $\Omega$ is the inverse submonoid $B^1(\Omega) = B(\Omega) \cup \{1\}$ of $\ISym(\Omega)$.
We write $B_n$ and $B^1_n$ in case $\Omega = \{1,\dotsc,n\}$.

The Brandt semigroup (or monoid) can be thought of as the complete directed graph on vertices $\Omega$ together with an additional zero element (and an identity) where the multiplication of edges $(u,v)$ and $(w,z)$ is $(u,z)$ if $v = w$ and $0$ otherwise. 
For example, \[
B_2 = \big\{ \smash{\mbox{
		\begin{tikzpicture}[baseline=-3]
			\draw[thick] (0:1.4cm) circle(3pt);
			\draw[thick] (180:1.4cm) circle(3pt);  				
			\foreach \a in {0,180} {
				\draw[thick, shorten <=5pt, shorten >=5pt, ->] (\a:1.4cm) to[in=\a+30, out=\a-30, loop, min distance=.9cm] (\a:1.4cm);
				\draw[thick, shorten <=8pt, shorten >=5pt, ->] (\a:1.4cm) to[bend right=15] (\a + 180:1.4cm);
			};
		\end{tikzpicture}
}}  \big\} \cup \{0\}.
\]

\subsection{Green's Relations and Conjugacy}\label{sec:Green-conjugacy}

The following relative variants of \emph{Green's relations} will be very useful.
As usual, given an inverse semigroup $S$, we denote by $S^1$ the smallest inverse monoid containing $S$.
However, in a relative context, i.e., for an inverse subsemigroup $U \leq S$, we denote by $U^1$ the inverse submonoid $U \cup \{1\} \leq S^1$.
This abuse of notation is employed in the following definition.

\begin{definition}
	Let $U \leq S$.
	Given elements $s,t \in S$, we write 
	\begin{gather*}
		s \gLle[U] t \iff U^1 \, s \subseteq U^1 \, t, \\
		s \gRle[U] t \iff s \, U^1 \subseteq t \, U^1, \\
		s \gJle[U] t \iff U^1 \, s \, U^1 \subseteq U^1 \, t \, U^1.
	\end{gather*}
	Furthermore, we write $s \gL[U] t$ provided that $s \gLle[U] t$ and $s \gLge[U] t$; the Green's relations $\gR[U]$ and $\gJ[U]$ are defined similarly.
  Finally, we let ${\gH[U]} = {\gL[U] \cap \gR[U]}$ and ${\gD[U]} = {\gL[U] \vee \gR[U]}$.\footnote{The definition of ${\gD[U]}$ is provided here only for completeness, as ${\gD_U} = {\gJ[U]}$ whenever $S$ is finite.}
\end{definition}

We recover the usual definition of Green's relations~\cite{gre51:short} as ${\gXle} = {\gXle[S]}$ and ${\gX} = {\gX[S]}$, where ${\gX} \in \{ {\gL}, {\gR}, {\gJ} \}$ or ${\gX} \in \{ {\gH}, {\gL}, {\gR}, {\gD}, {\gJ} \}$, respectively.

\begin{lemma}\label{lem:green-semi_rel}
  Let $S$ be a finite inverse semigroup. 
  If $u s v \gJge s$ for some $s \in S$ and $u,v \in S^1$, then $\bar u u s v \bar v = s$.
  In particular, if $s \gX t$ for some $s,t \in S$ with ${\gX} \in \{ {\gL}, {\gR}, {\gJ} \}$ and $U \leq S$, then $s \gXle[U] t$ if and only if $s \gXge[U] t$.
\end{lemma}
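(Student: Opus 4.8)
The plan is to prove the first assertion by the classical ``stabilization'' argument for finite semigroups and then to derive the ``In particular'' clause as a formal consequence of it. For the first assertion, the hypothesis $usv \gJge s$ means $s \in S^1\,(usv)\,S^1$, so $s = a\,usv\,b$ for some $a,b \in S^1$. Writing $p = au$ and $q = vb$ gives $s = psq$, hence $s = p^n s\, q^n$ for all $n \geq 1$ by iteration. As $S^1$ is finite I can choose $n \geq 1$ for which $e := p^n$ and $f := q^n$ are both idempotent; then $es = e(esf) = esf = s$ and, symmetrically, $sf = s$. Thus the two‑sided condition on $usv$ has been replaced by $es = s$ and $sf = s$ with idempotents $e,f$ that ``begin with $u$'' and ``end with $v$'': since $e = p^n = \bigl((au)^{n-1}a\bigr)\,u$, the identity $s = es$ exhibits $s \in S^1 us$, and as trivially $us \in S^1 s$ we get $us \gL s$; dually, $f = q^n = v\,\bigl(b(vb)^{n-1}\bigr)$ gives $sv \gR s$.

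Next I would peel off the \emph{exact} idempotent factors $\bar u u$ and $v\bar v$. From $us \gL s$ one obtains $\bar s\,\bar u u\,s = \bar s s$ (comparing principal left ideals, or invoking the standard description of $\gL$ in inverse semigroups); left‑multiplying by $s$ gives $s = s\bar s s = (s\bar s)\,(\bar u u\, s)$. Since $s\bar s$ is idempotent, $(s\bar s)(\bar u u\,s) \leq \bar u u\,s$ in the natural order, whereas $\bar u u\,s = (\bar u u)\,s \leq s$ because $\bar u u$ is idempotent; antisymmetry of the natural order then forces $\bar u u\, s = s$. Dually $s\,v\bar v = s$, and hence $\bar u u\,s\,v\bar v = (\bar u u\, s)(v\bar v) = s\,v\bar v = s$, proving the first assertion.

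For the ``In particular'' clause, fix $s \gX t$ in $S$; then in particular $s \gJ t$, so $s \gJge t$ and $t \gJge s$. By the symmetry $s \leftrightarrow t$ it suffices to show $s \gXle[U] t \Rightarrow s \gXge[U] t$, and I would treat $\gX = \gJ$, the cases $\gX \in \{\gL,\gR\}$ being identical upon setting $v = 1$ or $u = 1$. If $s \gJle[U] t$ then $s = utv$ with $u,v \in U^1$; since $utv = s \gJge t$, the first assertion applied with $t$ in the role of $s$ yields $\bar u u\, t\, v\bar v = t$, whence $t = \bar u\, s\, \bar v \in U^1 s\, U^1$ (note $\bar u, \bar v \in U^1$). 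Therefore $s \gJge[U] t$, as required.

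The finiteness of $S$ is used only to secure an idempotent power, which is routine. The step needing genuine care is the extraction of the precise idempotents $\bar u u$ and $v\bar v$ (rather than the larger $e$ and $f$), for which the commutation of idempotents and the antisymmetry of the natural order are essential, together with the observation that the two‑sided hypothesis on $usv$ splits into the one‑sided facts $us \gL s$ and $sv \gR s$.
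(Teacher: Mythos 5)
Your proof is correct, and it reaches the key identity $\bar u u s v \bar v = s$ by a somewhat different route than the paper. The paper's argument is a single self-contained computation: from $s = u' u s v v'$ it passes to $s = (u'u)^\omega s (vv')^\omega$, inserts $\bar u u$ and $v \bar v$ for free (because $(u'u)^\omega$ ends in the letter $u$ and $(vv')^\omega$ begins with $v$), and then commutes these idempotents past $(u'u)^\omega$ and $(vv')^\omega$ to land directly on $\bar u u s v \bar v = s$. You use the same finiteness input (idempotent powers of $p = au$ and $q = vb$), but instead of finishing by one chain of equalities you split the two-sided hypothesis into the one-sided facts $us \gL s$ and $sv \gR s$ and then peel off exactly $\bar u u$ and $v\bar v$ by invoking two standard facts about inverse semigroups: the characterization $a \gL b \iff \bar a a = \bar b b$ (dually for $\gR$) and the antisymmetry of the natural partial order. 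Both approaches ultimately rest on the same two ingredients~-- idempotent powers from finiteness and commutation of idempotents (the latter is hidden inside the $\gL$-characterization and the antisymmetry of $\leq$ that you cite)~-- but the paper's version is shorter and needs no external lemmas, while yours makes the structural content more visible, since $us \gL s$ and $sv \gR s$ are themselves meaningful intermediate statements. Your treatment of the ``in particular'' clause (apply the first assertion with $u,v \in U^1$, with $v=1$ or $u=1$ for $\gL$ and $\gR$, and use the symmetry of the hypothesis $s \gX t$) coincides with the paper's.
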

\begin{proof}
  If $u s v \gJge s$, then $s = u' u s v v'$ for some $u', v' \in S^1$.
	As such, we have
	\begin{equation*}
    s = u' u s v v' = (u' u)^\omega s (v v')^\omega = (u' u)^\omega \bar u u s v \bar v (v v')^\omega
    = \bar u u (u' u)^\omega s (v v')^\omega v \bar v = \bar u u s v \bar v.
	\end{equation*}
  Taking $u,v \in U^1$ in the above, shows that $s \gJge[U] t \gJge s$ implies $t \gJge[U] s$.
	The remaining cases ${\gX} = {\gL}$ and ${\gX} = {\gR}$ follow by setting $v = v' = 1$ and $u = u' = 1$, respectively.
\end{proof}

Similarly, we will use a relative variant of \emph{conjugacy} in inverse semigroups defined as follows.
Beware that there are other notions of conjugacy frequently considered in the context of semigroup theory; see also \cite{AraujoKinyonKnieczny19,Jack23}.

\begin{definition}\label{def:conjugacy}
	Let $U \leq S$ be inverse semigroups.
  We call $s,t \in S$ \emph{conjugate} relative to $U$, written $s \sim_U t$, if there exists some $u \in U^1$ such that $\bar u s u = t$ and $s = u t \bar u$.
\end{definition}

Conjugacy and $\gJ$-equivalence are closely related for idempotents of inverse semigroups.

\begin{lemma}\label{lem:idem_conj_is_green_j}
	Let $U \leq S$ be inverse semigroups, and let $e, e' \in E(S)$.
	Then $e \gJge[U] e'$ holds if and only if $e' = \bar u e u$ for some $u \in U^1$.
	If $S$ is finite, then $e \gJ[U] e'$ if and only if $e \sim_U e'$.
\end{lemma}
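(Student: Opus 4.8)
The plan is to prove the two biconditionals in turn, exploiting that in an inverse semigroup idempotents commute and the natural order on idempotents is $e \le f \iff e = ef = fe$. First I would handle the characterization of $e \gJge[U] e'$ via conjugation. For the backward direction, suppose $e' = \bar u e u$ with $u \in U^1$; then since $e$ is idempotent we also have $e' = \bar u e u = \bar u e e u$, so $e' = (\bar u e)(e u)$ lies in $U^1 e U^1$, i.e.\ $e' \gJle[U] e$, which is the claim $e \gJge[U] e'$. For the forward direction, assume $e \gJge[U] e'$, so $e' = v e w$ for some $v, w \in U^1$. Applying \Cref{lem:green-semi_rel} (with $S$ finite) to $e' = v e w \gJge e'$ — wait, we need the comparison to go the right way; instead I would argue directly: since $e'$ is idempotent, $e' = e' e' = v e w v e w$, and one shows $e' = v e \bar v$ works. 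The cleanest route: set $u = \bar v e' \cdot(\text{something})$; more carefully, using $e' \gJle[U] e$ and the dual $e \gJge[U] e'$ we are in the situation $e' = v e w$; multiply on suitable sides and use idempotent commutativity to massage this into $e' = \bar u e u$. Concretely, I expect to take $u = e u_0$ for an appropriate $u_0\in U^1$ obtained from $v,w$, verifying $\bar u e u = e'$ by a short computation that repeatedly uses $e^2 = e$, $(e')^2 = e'$, and the fact that $e = \bar u u$-type idempotents all commute.

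Second I would prove that, for finite $S$, $e \gJ[U] e' \iff e \sim_U e'$. The direction $\sim_U \Rightarrow \gJ[U]$ is immediate: if $\bar u s u = t$ and $s = u t \bar u$ then $t \in U^1 s U^1$ and $s \in U^1 t U^1$, so $s \gJ[U] t$ — and this works for arbitrary $s,t$, not just idempotents. For the converse, assume $e \gJ[U] e'$. By the first part (applied in both directions) there exist $u, u' \in U^1$ with $e' = \bar u e u$ and $e = \bar{u'} e' u'$. This does not yet give a single conjugating element satisfying both equations of \Cref{def:conjugacy} simultaneously, so the key trick is to produce one. I would set $u'' = e u u'' $ — rather, the standard fix: replace $u$ by $w := e u e'$. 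Then $\bar w e w = e' \bar u e \cdot e \cdot e u e' = e' \bar u e u e' = e' e' e' = e'$ (using $e' = \bar u e u$), and symmetrically, using $e = \bar{u'} e' u'$, one checks $e = w e' \bar w$. Here one has to be slightly careful that $w \in U^1$: since $e, e'\in E(S)$ but not necessarily in $U$, I would instead work with $w = \bar u e' $ or note that conjugacy is defined with $u\in U^1$ and the idempotents $e, e'$ need not lie in $U$; the relation $e' = \bar u e u$ with $u\in U^1$ already encodes everything, and $w = u e'$ (which lies in $U^1$ iff... ) — this is the delicate point.

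The main obstacle, therefore, is the symmetrization step in the second biconditional: turning the two separate conjugating elements into one element $u \in U^1$ simultaneously witnessing $\bar u e u = e'$ and $e = u e' \bar u$, while keeping $u$ genuinely in $U^1$. I expect the resolution to be the substitution $u \mapsto u e'$ (or $e u$), combined with \Cref{lem:green-semi_rel}: from $e' = \bar u e u \gJ[U] e$ and finiteness, \Cref{lem:green-semi_rel} gives $\bar{(\bar u)}(\bar u) e u \bar u = e$, i.e.\ $u \bar u$ and $\bar u u$ act as identities on the relevant side, which forces $u e' \bar u = e$ once $\bar u e u = e'$ is known. So the plan is: (i) prove the conjugation characterization of $\gJge[U]$ directly from the definitions using idempotent commutativity; (ii) derive $\sim_U \Rightarrow \gJ[U]$ trivially; (iii) for $\gJ[U] \Rightarrow \sim_U$, obtain one-sided conjugators from (i), then invoke \Cref{lem:green-semi_rel} to upgrade a single equation $\bar u e u = e'$ to the full conjugacy relation, after possibly post-multiplying the conjugator by an idempotent to land back in $U^1$.
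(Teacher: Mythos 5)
The heart of this lemma is the forward direction of the first biconditional, and that is precisely where your proposal has a genuine gap. From $e' = v e w$ with $v,w \in U^1$ you assert that ``one shows $e' = v e \bar v$ works'', but this is false in general: in $S = U = \ISym_2$ take $e = e_{\{1,2\}}$, $e' = e_{\{1\}}$, $v = 1$ and $w = e_{\{1\}}$; then $e' = v e w$, yet $v e \bar v = e \neq e'$. What one actually gets from $e' = e'\,\overline{e'} = v e w \bar w e \bar v$ and commuting idempotents is $e' = v\,(e\, w \bar w)\,\bar v \leq v e \bar v$, so the factor $w\bar w$ cannot be dropped; the paper's proof keeps it and reads off the explicit conjugator $u = w \bar w \bar v \in U^1$ (so $\bar u e u = v\, w\bar w\, e\, w \bar w\, \bar v = e'$). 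Your fallback guess ``$u = e u_0$ for an appropriate $u_0 \in U^1$'' does not repair this, because $e \in E(S)$ need not lie in $U$, so $e u_0 \notin U^1$ in general~-- the very objection you later raise against $w = e u e'$. Also note that your first instinct to invoke \cref{lem:green-semi_rel} here would smuggle in finiteness, which the first statement of the lemma does not assume; you abandon that route, but the ``short computation'' you promise in its place is never carried out, and its one concrete instantiation is the incorrect identity above.

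For the second biconditional your final plan does coincide with the paper's argument: given $e \gJ[U] e'$, the first part yields $u \in U^1$ with $\bar u e u = e' \gJge e$, and \cref{lem:green-semi_rel} (here finiteness is used) gives $e = u \bar u e u \bar u = u e' \bar u$, so the \emph{same} $u$ witnesses $e \sim_U e'$. In particular no substitution $u \mapsto u e'$ or ``post-multiplying the conjugator by an idempotent to land back in $U^1$'' is needed~-- multiplying by an idempotent of $S$ would in general take you \emph{out} of $U^1$, not back into it. Since this step presupposes the conjugation characterization of $\gJge[U]$, whose proof is the missing piece, the proposal as written does not amount to a proof of the lemma.
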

\begin{proof}
	Suppose that $e \gJge[U] e'$, i.e., there exist $v_1, v_2 \in U^1$ with $e' = v_1 e v_2$.
	Then 
	\[
	e' = e'e' = v_1 e v_2 \bar v_2 e \bar v_1 = v_1 e v_2 \bar v_2 v_2 \bar v_2 e \bar v_1 = v_1 v_2 \bar v_2 e e v_2 \bar v_2 \bar v_1 = v_1 v_2 \bar v_2 e v_2 \bar v_2 \bar v_1 = \bar u e u
	\]
	where $u = v_2\bar v_2 \bar v_1 \in U^1$.
	The converse is trivial.
	If $S$ is finite, then $\bar u e u = e' \gJge e$ implies that $e = u \bar u e u \bar u = u e' \bar u$ by \cref{lem:green-semi_rel} and, hence, that $e \sim_U e'$.
\end{proof}

\subsection{(Pseudo-)Varieties}

A class $\cC$ of finite inverse semigroups is called a \emph{variety of finite inverse semigroups} (a.k.a.\ \emph{pseudovariety}) if it is closed under formation of finite direct products and of divisors.
By a theorem of Reiterman~\cite{rei82:short}, such a class $\cC$ consists of all finite inverse semigroups that satisfy some set of (pseudo-)identities.
We also note that these classes are intimately related to certain classes of formal languages through Eilenberg's Correspondence Theorem \cite{eil76}.

We will use boldface roman letters to denote varieties of finite inverse semigroups and sometimes also give a set of defining inverse semigroup identities for them.
For example, $\vG = \vId{x\ov x = y \ov y} =  \vId{x \bar x = 1}$ denotes the variety of finite groups, $\vCl = \vId{x \bar x = \bar x x}$ denotes the variety of finite \emph{Clifford semigroups}, and $\vSI = \vId{\bar y x y \bar y \bar x y = \bar y \bar x y \bar y x y}$ denotes the variety of finite \emph{strict inverse semigroups}.
An overview of the varieties of finite inverse semigroups most relevant to our work is given in \cref{fig:varieties}.

\begin{figure}[ht]
\begin{minipage}{.37\textwidth}
  \begin{tikzpicture}[scale=.795]
    % aperiodic varieties
    \foreach \v [count=\n] in {\vT, \vSl, \vBS, \vBM} {
      \coordinate (a\n) at (\n, \n);
      \node at (a\n) {\(\v\)};
    }
    \coordinate (aa) at (5.4, 5.4);
    \node at (aa) {\(\mathbf{A}\)};

    % general varieties
    \foreach \v [count=\n] in {\vG, \vCl, \vSI} {
      \coordinate (g\n) at (\n - 1.4, \n + 1.4);
      \node at (g\n) {\(\v\)};
    }
    \coordinate (ga) at (4, 6.8);
    \node at (ga) {\(\mathbf{IS}\)};

    % coverings
    \begin{scope}[shorten <= 10pt, shorten >= 10pt]
      \draw (a1) -- (a2);
      \draw (a2) -- (a3);
      \draw (a3) -- (a4);

      \draw (g1) -- (g2);
      \draw (g2) -- (g3);
    \end{scope}

    % inclusions
    \begin{scope}[densely dashed, shorten <= 11pt, shorten >= 11pt]
      \draw (a1) -- (g1);
      \draw (a2) -- (g2);
      \draw (a3) -- (g3);

      \draw (aa) -- (ga);

      \draw (a4) -- (aa);
      \draw (g3) -- (ga);
    \end{scope}
  \end{tikzpicture}
\end{minipage}
\begin{minipage}{.6\textwidth}
  \centering
  \newcolumntype{D}[1]{>{\centering\arraybackslash}m{#1}}
  \begin{tabular}[t]{p{.9cm}p{2.95cm}p{3.4cm}}
    \toprule
    Variety & Identities & Description \\
    \midrule
    \hfil$\vT$& $x = y$ & trivial \\
    \hfil$\mathbf{IS}$& & inverse semigroups \\
    \midrule
    \hfil$\vG$& $x\ov x = y \ov y$ & groups \\
    \hfil$\vCl$& $x \bar x = \bar x x$ & Clifford semigroups \\
    \hfil$\vSI$& $\bar y x y \, \bar y \bar x y = \bar y \bar x y \,\bar y x y$ & strict inverse semigroups \\
    \midrule
    \hfil$\vSl$& $x^2 = x$ & semilattices \\
    \hfil$\vBS$& $(\bar y x y)^2 = \bar y x y$ & generated by $B_2$ \\
    \hfil$\vBM$& not finitely based~\cite{Kleiman79} & generated by $B_2^1$ \\
    \hfil$\mathbf{A}$& $x^{\omega+1} = x^\omega$ & aperiodic \\
    \bottomrule
  \end{tabular}
\end{minipage}
\caption{Varieties of finite inverse semigroups, their relationships, and defining identities.}
\label{fig:varieties}
\end{figure}

Varieties of finite inverse semigroups form a complete lattice under inclusion. (They are closed under arbitrary intersections and the variety of all finite inverse semigroups $\mathbf{IS}$ serves as a largest element.)
The join of two varieties of finite inverse semigroups $\vU$ and $\vV$, denoted by $\vU \vee \vV$, is the smallest variety of finite inverse semigroups containing both $\vU$ and $\vV$.

The chain $\vT \sse \vSl \sse \vBS \sse \vBM$ forms the bottom of the lattice of the varieties of finite combinatorial (i.e., aperiodic) inverse semigroups.
Herein, $\vT = \vId{x=y} = \vId{x=1}$ denotes the variety of trivial inverse semigroups, $\vSl = \vId{x^2 = x}$ denotes the variety of all finite semilattices, and $\vBS = \vId{(\bar y x y)^2 = \bar y x y}$ and $\vBM$ denote the varieties of finite inverse semigroups generated by the (combinatorial) Brandt semigroup $B_2$ and monoid $B_2^1$, respectively.

\begin{lemma}[{Kleiman~\cite[Lemma~4]{Kleiman77}}]\label{lem:b2_implies_bn}
	For every finite set $\Omega$, $B(\Omega) \in \vBS$ and $B^1(\Omega) \in \vBM$.
\end{lemma}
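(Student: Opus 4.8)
The plan is to establish the two statements $B(\Omega) \in \vBS$ and $B^1(\Omega) \in \vBM$ via the divisor relation, using the fact that a variety generated by a semigroup $T$ contains exactly the divisors of finite powers of $T$. Concretely, it suffices to exhibit, for $n = \abs{\Omega}$, a surjective inverse-semigroup homomorphism from an inverse subsemigroup of $(B_2)^{N}$ onto $B_n$ (respectively, from an inverse subsemigroup of $(B_2^1)^{N}$ onto $B_n^1$) for some $N$. Since $\vBS$ and $\vBM$ are pseudovarieties (closed under finite products and divisors), this yields $B_n \in \vBS$ and $B_n^1 \in \vBM$, and the general statement for arbitrary finite $\Omega$ follows by the obvious isomorphism $B(\Omega) \cong B_{\abs{\Omega}}$.

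The key step is the explicit construction of the embedding. Think of $B_n$ as the set of ``matrix units'' $\os{E_{ij} \mid 1 \le i,j \le n}$ together with $0$, where $E_{ij}$ corresponds to the partial bijection sending $j \mapsto i$ (domain $\os{j}$, range $\os{i}$), with $\ov{E_{ij}} = E_{ji}$ and $E_{ij} E_{kl} = E_{il}$ if $j = k$ and $0$ otherwise. For each index $k \in \os{1,\dots,n}$ take a copy $B_2^{(k)}$ of $B_2$ on the two-element set $\os{k, \star_k}$, and consider the map $\varphi \colon B_n \to \prod_{k=1}^{n} B_2^{(k)}$ defined coordinatewise: the $k$-th coordinate of $\varphi(E_{ij})$ records the ``restriction'' of $E_{ij}$ to the states relevant for $k$. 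A clean way to do this: identify $B_2^{(k)}$ with partial bijections of $\os{k,\star_k}$ and send $E_{ij}$ in coordinate $k$ to the partial bijection that agrees with $E_{ij}$ on whatever of $\os{j\}$ lies in $\os{k}$ and on an auxiliary point otherwise; one checks this is a monomorphism of inverse semigroups (products and inverses are computed coordinatewise, and $0$ maps to the all-zero tuple). For the monoid statement, use the $B_2^1$ copies instead and note $\varphi$ extends to send $1 \mapsto (1,\dots,1)$. I expect this bookkeeping — choosing the coordinate maps so that composition really is preserved in every coordinate simultaneously — to be the main obstacle; the standard trick is to have coordinate $k$ ``see'' only transitions into and out of state $k$, so that a product $E_{ij}E_{jl} = E_{il}$ is witnessed in the coordinates indexed by $i$, $j$, and $l$, while all other coordinates multiply to $0$ consistently on both sides.

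Finally, I would record the two immediate consequences used later: since $B(\Omega)$ has $\abs{\dom} \le 1$, it is an inverse subsemigroup of $\ISym(\Omega)$ and the construction above shows $B(\Omega) \preccurlyeq (B_2)^{\abs{\Omega}}$, hence $B(\Omega) \in \vBS$; and adjoining the identity and repeating with $B_2^1$ gives $B^1(\Omega) \preccurlyeq (B_2^1)^{\abs{\Omega}}$, hence $B^1(\Omega) \in \vBM$. I would remark that one does not even need to know a defining set of identities for $\vBM$ (which is not finitely based, cf.\ \cite{Kleiman79}), since the argument is purely via the divisor characterization of the generated pseudovariety. A brief sanity check that $\varphi$ is injective — distinct matrix units differ in at least one coordinate because they have different domains or ranges among the states $\os{1,\dots,n}$, which are exactly the ``own'' states of the coordinates — completes the proof.
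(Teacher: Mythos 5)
Your high-level strategy — realizing $B(\Omega)$ as a divisor of a finite power of $B_2$ (and $B^1(\Omega)$ of a power of $B_2^1$) and invoking closure of the generated pseudovariety under finite products and divisors — is exactly the paper's, and your indicator-style coordinates (coordinate $k$ only records what happens at state $k$) are the right combinatorial core. The gap is the claim that this gives a \emph{monomorphism} $B_n \to (B_2)^n$: no embedding of $B_n$ into any finite power of $B_2$ exists once $n \geq 3$. Indeed, $B_n$ is congruence-free for $n \geq 2$ (identifying any two distinct elements forces the universal congruence), so every homomorphism $B_n \to B_2$ is either injective — impossible for $n \geq 3$, since $\lvert B_n \rvert = n^2 + 1 > 5$ — or has a one-element image; hence the coordinate projections of any homomorphism $B_n \to (B_2)^N$ cannot separate points, and no choice of coordinate maps can repair this. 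Concretely, if coordinates not involving $i,j$ send $E_{ij}$ to the idempotent fixing the auxiliary point, then $\varphi(E_{11})\varphi(E_{22})$ is nonzero in coordinate $3$ although $E_{11}E_{22} = 0$ must map to the all-zero tuple; if instead those coordinates are set to $0$, then $\varphi(E_{12})\varphi(E_{23})$ vanishes in coordinate $1$ although $E_{12}E_{23} = E_{13}$ does not. So the bookkeeping you flagged as the main obstacle is genuinely fatal for an embedding, and your final injectivity check is moot because the map is not a homomorphism.

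What is missing is a quotient step, and this is precisely how the paper (following Kleiman) argues: in $S = (B_2)^{\Omega}$ the set $I = \{ s \in S \mid \pi_x(s) = 0 \text{ for some } x \in \Omega \}$ is an ideal, and in the Rees quotient $S/I$ every tuple with at least one zero coordinate is identified with $0$. After this identification your indicator coordinates work: sending the element $x \mapsto y$ of $B(\Omega)$ to the class of the tuple whose $z$-coordinate encodes the pair of indicators $([x=z],[y=z])$, products with matching middle index multiply coordinatewise to the correct tuple, while mismatched products acquire a zero coordinate, land in $I$, and hence equal $0$ in $S/I$, exactly as in $B(\Omega)$. This exhibits $B(\Omega)$ inside a quotient of $(B_2)^{\Omega}$, i.e., $B(\Omega) \preccurlyeq (B_2)^{\Omega}$, so $B(\Omega) \in \vBS$; running the same argument with $S^1 = S \cup \{1\} \leq (B_2^1)^{\Omega}$ and the same ideal $I$ gives $B^1(\Omega) \in \vBM$. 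In short: division here really must go through a homomorphic image, not a subsemigroup of a power.
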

\begin{proof}
  Let $S = (B_2)^{\Omega}$ be the $\Omega$-fold Cartesian power of the Brandt semigroup $B_2$.
  The set $I = \{ s \in S \mid \pi_x(s) = 0\text{ for some }x \in \Omega\} \leq S$ is an ideal where $\pi_x \colon S \to B_2$ denotes projection onto the $x$ coordinate.
  The Rees quotient $S / I \in \vBS$ is isomorphic to $B(\Omega)$, and the Rees quotient $S^1 / I \in \vBM$ of $S^1 = S \cup \{ 1 \} \leq (B^1_2)^\Omega$ is isomorphic to $B^1(\Omega)$.
\end{proof}

The bottom of the lattice of varieties of finite inverse semigroups is structured as follows. 
These results were originally proved for arbitrary varieties allowing also infinite direct products and then, by \cite{HallJohnston89}, transferred to pseudovarieties (\ie varieties  of finite inverse semigroups in our sense).

\begin{proposition}[{Djadchenko~\cite{Djadchenko77}, Kleiman~\cite{Kleiman76,Kleiman77}; see also \cite{HallJohnston89}}]\label{pro:varieties}
  Let $\vV$ be a variety of finite inverse semigroups.
  Then $\vV$ is subject to each of the following alternatives.
  \begin{enumerate}
    \item Either $\vSl \sse \vV$ or $\vV \sse \vG = \vG \vee \vT$.
    \item Either $\vBS \sse \vV$ or $\vV \sse \vCl = \vG \vee \vSl$.
    \item Either $\vBM \sse \vV$ or $\vV \sse \vSI = \vG \vee \vBS$.
  \end{enumerate}\smallskip
  \hspace{\parindent}Moreover, the intervals $[\vSl, \vCl]$ (or $[\vBS, \vSI]$) and $[\vT, \vG]$ in the lattice of all varieties of finite inverse semigroups are isomorphic via $\vV \mapsto \vV \cap \vG$ and $\vH \mapsto \vH \vee \vSl$ (or $\vH \mapsto \vH \vee \vBS$).
\end{proposition}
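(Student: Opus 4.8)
The plan is to reduce the entire proposition to a single \emph{local} divisor criterion, together with the equalities $\vG=\vG\vee\vT$, $\vCl=\vG\vee\vSl$ and $\vSI=\vG\vee\vBS$. The first is immediate from $\vT\sse\vG$; the other two, and their ``relative'' forms used below, are exactly the structure theory of Clifford and of strict inverse semigroups cited from Kleiman and Djadchenko (a finite Clifford semigroup is a strong semilattice of groups and lies in $\vSl\vee\vH$, with $\vH$ the variety generated by its maximal subgroups; a finite strict inverse semigroup is a subdirect product of groups and Brandt semigroups and lies in $\vBS\vee\vH$; conversely groups and semilattices are Clifford, groups and $B_2$ are strict inverse, and both classes are divisor-closed). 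I take these equalities as given.

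The heart of the proof is the following \textbf{local lemma}: for a finite inverse semigroup $S$, \emph{(i)}~$Y_2\preccurlyeq S$ iff $S$ is not a group; \emph{(ii)}~$B_2\preccurlyeq S$ iff $S$ is not a Clifford semigroup; \emph{(iii)}~$B_2^1\preccurlyeq S$ iff $S$ fails the defining identity $\vId{\ov yxy\,\ov y\ov xy=\ov y\ov xy\,\ov yxy}$ of $\vSI$. Each trichotomy then follows immediately: if $\vSl\not\sse\vV$ then $Y_2\notin\vV$ (as $\vSl=\gen{Y_2}$), so by \emph{(i)} every member of $\vV$ is a group, i.e.\ $\vV\sse\vG$, and the converse holds since $Y_2\notin\vG$; the other two are identical using $\vBS=\gen{B_2}$, $\vBM=\gen{B_2^1}$ and $B_2\notin\vCl$, $B_2^1\notin\vSI$. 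In the local lemma the ``in the variety $\Rightarrow$ no such divisor'' directions are trivial (varieties are divisor-closed and $Y_2,B_2,B_2^1$ lie outside $\vG,\vCl,\vSI$, the last witnessed by $y=1$ and a non-trivial generator). For the converses: a non-group $S$ has two commuting idempotents, hence a $2$-element subsemilattice $\cong Y_2$. If $S$ is not Clifford, fix $a$ with $e:=a\ov a\ne\ov aa=:g$; then $a,\ov a,e,g$ all lie in one $\gJ$-class $J$, while $eg$ and $a^2$ lie strictly $\gJ$-below $J$ (for $eg$, otherwise $eg$ is an idempotent strictly below $e$ inside $e$'s $\gD$-class, forcing $eg=e$ and then $e=g$; for $a^2$, since $a\ov a\ne\ov aa$ the class $H_a$ is not a group, so $a^2\notin H_a$, hence $a^2\gJlt a$ by stability of finite semigroups). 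In the Rees quotient $S/I$ by the ideal $I$ of elements strictly $\gJ$-below $J$ (which is non-empty, as $eg\in I$), the five pairwise-distinct elements $\{0,e,g,a,\ov a\}$ then form a subsemigroup with precisely the multiplication table of $B_2$, so $B_2\preccurlyeq S$.

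For \emph{(iii)}, suppose $x,y$ witness the failure of the identity, with $a:=\ov yxy$ and $a\ov a\ne\ov aa$. Put $f:=\ov yy\in E(S)$; the inverse-semigroup identities give $faf=a$, hence $a$, $e:=a\ov a=fef$ and $g:=\ov aa=fgf$ all lie in the local monoid $fSf$, whose identity is $f$. The crucial point is $e\ne f$ and $g\ne f$: if $a\ov a=f$ then, since $a^k\ov a^k=f$ for every $k$, choosing $k$ with $a^k$ idempotent yields $a^k=a^kf=a^k\ov a^k=f$, so $a$ is a unit of $fSf$ and $\ov aa=a^k=f=a\ov a$, contradicting $e\ne g$ --- and symmetrically for $g=f$. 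Now repeat the construction of \emph{(ii)} \emph{inside} $fSf$: in $T:=fSf/I$, where $I$ is the ideal of elements strictly $\gJ$-below $a$ in $fSf$, the set $\{0,e,g,a,\ov a\}$ is a copy of $B_2$; moreover, since $a$ is not a unit of $fSf$ we have $a\gJlt f$, so $f\notin I$, and $\{0,e,g,a,\ov a,f\}\sse T$ is then a subsemigroup with $f$ acting as an adjoined identity and all six elements pairwise distinct --- i.e.\ a copy of $B_2^1$. Hence $B_2^1\preccurlyeq S$, which completes the local lemma and the three trichotomies.

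For the interval isomorphisms it suffices, by the evident symmetry (replace $\vSl$, ``Clifford'' by $\vBS$, ``strict inverse''), to treat $[\vSl,\vCl]\cong[\vT,\vG]$. The maps $\vH\mapsto\vH\vee\vSl$ and $\vV\mapsto\vV\cap\vG$ are monotone and well-defined between these intervals. Their composite on $[\vT,\vG]$ is the identity, i.e.\ $(\vH\vee\vSl)\cap\vG=\vH$: a group $G\in\vH\vee\vSl$ divides a product $N\times L$ with $N\in\vH$ a group and $L$ a semilattice, say via a subsemigroup $R\le N\times L$ mapping onto $G$; the ``bottom fibre'' $R_0:=R\cap(N\times\{z\})$ over the least element $z$ of the subsemilattice $\pi_L(R)$ is a subgroup of $N$ and an ideal of $R$, so its image is a non-empty ideal of the group $G$ and hence all of $G$, whence $G$ is a quotient of $R_0\in\vH$. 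Their composite on $[\vSl,\vCl]$ is the identity too, i.e.\ $(\vV\cap\vG)\vee\vSl=\vV$: for $\vV\in[\vSl,\vCl]$ and $S\in\vV$ (necessarily Clifford), either $S$ is a group and $S\in\vV\cap\vG$, or $S\in\vSl\vee\gen{\text{maximal subgroups of }S}\sse\vSl\vee(\vV\cap\vG)$ by the relative form of $\vCl=\vG\vee\vSl$, as those subgroups lie in $\vV\cap\vG$. Thus the two maps are mutually inverse order-isomorphisms. \textbf{The main obstacle} is the external structure theory --- the equalities $\vCl=\vG\vee\vSl$, $\vSI=\vG\vee\vBS$ and their relative forms; among the self-contained parts, the delicate step is the finiteness argument in \emph{(iii)} placing the local identity $f=\ov yy$ strictly $\gJ$-above the Brandt $\gD$-class, without which one recovers only $B_2$ --- the genuine difference between $\vCl$ and $\vSI$ being the presence of monoidal data, reflected in $\vBM$ not being finitely based.
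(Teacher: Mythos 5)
Your proposal is correct in substance, but note first that the paper never proves this proposition: it is quoted verbatim from Djadchenko, Kleiman and Hall--Johnston, so there is no internal proof to compare against, and what you wrote is strictly more than the paper supplies. Your ``local lemma'' is exactly the content of the paper's separately cited \cref{lem:characterization-groups}, \cref{lem:characterization-clifford} and \cref{lem:gb-characterization}, and your proofs of it are sound: the two-idempotent subsemilattice for (i); for (ii) the Rees quotient by the ideal of elements strictly $\gJ$-below the class of $a$ (with the stability arguments placing $e g$ and $a^2$ in that ideal, and the five-element multiplication check, all correct); and for (iii) the passage to the local monoid $fSf$ with $f=\ov y y$, where the finiteness argument showing $a\ov a\neq f\neq\ov a a$ is precisely what lets $f$ survive as an adjoined identity above the Brandt $\gD$-class. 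The derivation of the three alternatives from the local lemma together with $\vSl=\gen{Y_2}$, $\vBS=\gen{B_2}$, $\vBM=\gen{B_2^1}$ is immediate, and the bottom-fibre argument giving $(\vH\vee\vSl)\cap\vG=\vH$ is a nice self-contained half of the interval isomorphism. What you do not prove, and openly import, are the inclusions $\vCl\sse\vG\vee\vSl$ and $\vSI\sse\vG\vee\vBS$ together with their relative (subdirect-product) refinements; since these equalities are part of the statement, your argument is a reduction of the proposition to the classical structure theorems rather than an independent proof --- but that is the same dependence the paper itself has, since it cites those very sources for the whole proposition.

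Two small points to patch. The ``evident symmetry'' for $[\vBS,\vSI]\cong[\vT,\vG]$ is not a literal substitution: $\pi_L(R)$ is then not a semilattice, so ``least element'' must be replaced by the zero of $\pi_L(R)$, which exists because every member of $\vBS$ is an aperiodic inverse semigroup and the minimal ideal of a finite aperiodic inverse semigroup is a single element acting as a zero; with that replacement the bottom-fibre argument goes through verbatim. Also, in (iii) the step $\ov a a=a^k$ deserves its one-line justification: from $a\ov a=f$ and $a^k=f$ one gets $\ov a=f\ov a=a^k\ov a=a^{k-1}(a\ov a)=a^{k-1}f=a^{k-1}$, hence $\ov a a=a^k=f$.
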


\subsection{Algorithmic Problems}
\label{sec:prelims-problems}

The main focus of this work lies on analyzing the algorithmic complexity of two important decision problems for inverse semigroups, and several variants thereof.
The first of these problems is the \emph{membership problem} (\dMemb{}); it is defined as follows.

\begin{decproblem}
  \iitem An inverse semigroup $S$, a subset $\Sigma \sse S$, and an element $t \in S$.
  \qitem Is $t \in U$ where $U = \gen{\Sigma}$?
\end{decproblem}

\noindent
The second main decision problem is the \emph{conjugacy problem} (\dConj{}).

\begin{decproblem}
	\iitem An inverse semigroup $S$, a subset $\Sigma \sse S$, and elements $s,t \in S$.
	\qitem Is $s \sim_U t$ where $U = \gen{\Sigma}$?
\end{decproblem}

Recall that we consider the relative variant of conjugacy (see \cref{def:conjugacy}) meaning that we restrict the conjugating element to the inverse subsemigroup $U$, whereas the elements $s$ and $t$ are not restricted. 
Be aware that this might differ from other references in the literature, especially those concerned with infinite structures. 
Also, unlike what is common for infinite semigroups, we require that $s$ and $t$ are explicitly given as elements of $S$ and not as words over some set of generators.

\mysubparagraph{Input Models.}
As with any decision problem, its algorithmic complexity depends substantially on how its input is provided.
We restrict our attention to two input models, the \emph{Cayley table model} ($\mathbf{CT}$) and the \emph{partial bijection model} ($\mathbf{PB}$).
In the former model, the surrounding inverse semigroup is provided as a complete multiplication table, the so-called \emph{Cayley table} of $S$, and all elements of $S$ (in particular, $\Sigma$ and $t$) are encoded as indices into this table.\footnote{More precisely, a Cayley table of an $n$-element semigroup $S$ is encoded as an array of size $n^2$ each of which entries is encoded using $\ceil{\log n}$ bits and where at position $i + jn$ we find the index of the element obtained by multiplying elements $i$ and $j$ (indices starting at $0$).}

In the partial bijection model, the surrounding inverse semigroup is the symmetric inverse semigroup $\ISym_n$ on $n$ elements with only $n$ provided as part of the input.
All elements of $S = \ISym_n$ (in particular, $\Sigma$ and $t$) are given as partial bijections on $\Omega = \{ 1, \dotsc, n\}$.
More specifically, we assume that each partial bijection is encoded as a complete, ordered list of its images (using a special symbol $\bot$ to denote undefined images)\footnote{Another different representation for permutations, which is also commonly used, is the cycle representation. While two permutations in our encoding (as a partial functions) can be multiplied in \ACz, in the cycle notation, multiplication is $\mathsf{FL}$-complete \cite{CookM87}.}.
For example, $(2,\bot,1)$ encodes the partial bijection on $\{1,2,3\}$ with $1 \mapsto 2$, $3 \mapsto 1$, and undefined on $2$.

We denote by \dMemb[CT]{} and \dMemb[PB]{}, and by \dConj[CT]{} and \dConj[PB]{} the membership and conjugacy problems in the respective input model.
Intuitively, membership and conjugacy are easier to decide in the Cayley table model than in the partial bijection model. 
The fact that the Preston-Wagner representation \cite{Preston54,Wagner52} of an inverse semigroup is efficiently computable allows us to make this intuition precise.

\begin{lemma}\label{lem:CT_to_PB}
  On input of an inverse semigroup $S$ given as a Cayley table, one can compute an embedding $S \to \ISym(S)$ in \ACz.
  Hence, $\dMemb[CT]{} \leq^{\ACz}_m \dMemb[PB]{}$ and $\dConj[CT]{} \leq^{\ACz}_m \dConj[PB]{}$.
\end{lemma}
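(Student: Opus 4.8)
The plan is to construct the Preston--Wagner representation explicitly and to verify that, on input of a Cayley table, its evaluation falls within \ACz. Recall that the Preston--Wagner embedding sends an element $s \in S$ to the partial bijection $\rho_s \in \ISym(S)$ given by $x \mapsto xs$ with domain $\dom(\rho_s) = \set{x \in S}{x = x s \bar s} = S \bar s s$ (equivalently, the principal left ideal determined by the idempotent $\bar s s$), and range $\ran(\rho_s) = S s$. The key point is that this is injective: if $\rho_s = \rho_t$, then in particular the domains agree, so $\bar s s = \bar t t$ (both are the maximal idempotent fixed), and applying $\rho_s = \rho_t$ to $\bar s s = \bar t t$ gives $s = \bar s s \cdot s = \bar t t \cdot t = t$. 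Homomorphism is the standard computation $\rho_s \rho_t = \rho_{st}$ with matching domains, which relies on idempotents commuting in an inverse semigroup.

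First I would fix the encoding: the Cayley table gives us $S = \{0, \dotsc, n-1\}$ with the product map available by table lookup, which is an \ACz operation (each output bit depends on a bounded-depth computation indexing into the array). From the product we can in \ACz compute, for every $s$, the inverse $\bar s$: the defining equations $s \bar s s = s$ and $\bar s s \bar s = \bar s$ single out a unique element, and checking these two equations for a candidate pair $(s, y)$ is a constant number of table lookups and an equality test, so we can select $\bar s$ by a big disjunction over all $y \in S$ — a standard \ACz pattern. Next, to output the partial bijection $\rho_s$ as an ordered list of images indexed by $x \in S$, we set the $x$-th entry to $x s$ if $x$ lies in the domain, i.e.\ if $x \bar s s = x$, and to $\bot$ otherwise; both the membership test and the image are single table lookups, done in parallel over all $x$. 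This produces the full encoding of $\rho_s$ for all $s$ simultaneously in \ACz.

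With the embedding $\varphi \colon S \to \ISym(S)$, $s \mapsto \rho_s$ in hand, the reductions are immediate. Given an instance $(S, \Sigma, t)$ of $\dMemb[CT]{}$, we map it to the instance $(\varphi(\Sigma), \varphi(t))$ of $\dMemb[PB]{}$ over $\ISym(S) = \ISym_n$; since $\varphi$ is an injective homomorphism, $t \in \gen{\Sigma}$ if and only if $\varphi(t) \in \gen{\varphi(\Sigma)}$, because $\varphi$ commutes with products and with inversion and, being injective, reflects membership. The same argument applies verbatim to conjugacy: $s \sim_U t$ in $S$ with $U = \gen{\Sigma}$ holds if and only if there is $u \in U^1$ with $\bar u s u = t$ and $s = u t \bar u$, and applying the injective homomorphism $\varphi$ (extended to $S^1$ by sending the adjoined identity to the identity of $\ISym(S)$, which is $\rho$ of the $1$ we adjoin) shows this is equivalent to $\varphi(s) \sim_{\gen{\varphi(\Sigma)}} \varphi(t)$. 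Hence $\dConj[CT]{} \leq^{\ACz}_m \dConj[PB]{}$.

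I do not anticipate a genuine obstacle here; the statement is essentially a bookkeeping exercise once one commits to the right representation. The one point requiring a little care is the domain computation: one must verify that $\dom(\rho_s)$ really is $\{x \in S \mid x \bar s s = x\}$ and that this coincides with $\set{x}{\exists y\, xs = y \text{ and } ys\bar s = \dots}$ — but this is just the observation that $x$ is in the domain of right multiplication by $s$ as a partial bijection precisely when $xs$ lies in $\ran(\rho_s) = Ss$ and the map back recovers $x$, which reduces to $x \bar s s = x$. A second minor point is ensuring the reduction is honestly \ACz and not merely logspace: since every individual entry of every $\rho_s$ is determined by a constant number of lookups into the input table plus the precomputed inverse table, and the inverse table itself is \ACz-computable as above, the whole map is \ACz. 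This is worth stating explicitly because, as the paper's own footnote notes, multiplication of permutations in cycle notation would already be $\mathsf{FL}$-complete — so the choice of the ``list of images'' encoding is what makes the \ACz bound go through.
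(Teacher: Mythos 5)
Your approach is exactly the paper's: the Wagner--Preston representation by right multiplication, the observation that $\rho_s$ is just a column of the Cayley table with some entries replaced by $\bot$, and the \ACz{} computability of products and of $\bar s$ by exhaustive testing of $s\bar s s = s$ and $\bar s s \bar s = \bar s$. However, as written your construction suffers from a left/right mix-up that makes the algorithm in your second paragraph incorrect. For the action $x \mapsto xs$ the correct domain is $\{x \in S \mid x\ms s\bar s = x\} = S s\bar s$, governed by the idempotent $s\bar s$ (this is the condition the paper uses, and the one you state once in your first paragraph) --- not $S\bar s s$ and the idempotent $\bar s s$ as you write there, and not the test $x\bar s s = x$ that your second paragraph actually implements. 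With the test $x \bar s s = x$ the resulting map need not be injective on its domain, so the output is not even an element of $\ISym(S)$: in $B_2 = \{s, \bar s, s\bar s, \bar s s, 0\}$ the elements $s$, $\bar s s$ and $0$ all satisfy $x\bar s s = x$, yet all three are sent to $0$ by right multiplication with $s$. (The set $S\bar s s$ is the \emph{range} of the correctly defined $\rho_s$, not its domain.)

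The same confusion infects your injectivity argument. With the correct domains, $\rho_s = \rho_t$ gives $s\bar s \ms t\bar t = s\bar s$ and $t\bar t \ms s\bar s = t\bar t$, hence $s\bar s = t\bar t$ since idempotents commute, and then $s = s\bar s\, s = s\bar s\, t = t\bar t\, t = t$. Your step ``$s = \bar s s \cdot s$'' is false in a general inverse semigroup (in $B_2$ one has $\bar s s\, s = 0 \neq s$); only $s\bar s\, s = s$ holds. Once the condition $x\ms s\bar s = x$ is used consistently, the remainder of your argument --- parallel emission of the image lists and transport of membership and conjugacy along the injective homomorphism --- is fine and coincides with the paper's proof.
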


\begin{proof}
  Every inverse semigroup $S$ acts on itself via multiplication on the right.
  We can restrict this action to obtain a representation via partial bijections.
  Indeed, given $s \in S$, we define the partial map $\rho_s \colon S \to S$ via $t \rho_s = ts$ if $t s\bar s = t$ and $t \rho_s = \bot$ otherwise.
  The resulting map $\rho \colon S \to \ISym(S) \colon s \mapsto \rho_s$ is the desired embedding.
  Now note that encoding of the partial bijection $\rho_s$ is simply the corresponding column of the Cayley table for $S$ with some of its entries replaced by $\bot$.
  It thus remains to argue that we can decide the condition for such a replacement (i.e., whether $t s\bar s \neq t$) in \ACz.
  To see that this is indeed the case, note that we can compute the product of two elements of $S$ in \ACz and, given $s \in S$, we can compute $\bar s$ in \ACz (for $\bar s$ is the unique element of $S$ with $s\bar ss = s$ and $\bar ss\bar s = \bar s$).
\end{proof}

\mysubparagraph{Idempotent Membership and Conjugacy.}
In order to obtain a detailed analysis of the algorithmic complexity, we impose certain restrictions on the allowed inputs.
On the one hand, we consider the \emph{idempotent membership} and \emph{idempotent conjugacy} problems where we require that $s,t \in E(S)$.
We denote these problem variants by \dEMemb[IM]{} and \dEConj[IM]{} where $\mathbf{IM} \in \{\mathbf{CT}, \mathbf{PB}\}$.
The latter, in particular, is closely tied to many other important problems regarding partial symmetries (\eg the set transporter problem; see \cref{sub:groups-pbm}).

\mysubparagraph{Restriction to Varieties.}
The other kind of restriction we impose is on the structure of the inverse subsemigroup $U = \gen{\Sigma}$ under consideration.
More specifically, we consider the above problems with $U$ confined to some fixed class $\vV$ of finite inverse semigroups.
We call these the (idempotent) \emph{membership} and \emph{conjugacy problem for $\vV$}, and denote them by \dMemb[IM]{\vV} and \dConj[IM]{\vV} where $\mathbf{IM} \in \{\mathbf{CT},\mathbf{PB}\}$, respectively.
Throughout, the class $\vV$ will be some variety of finite inverse semigroups such as, e.g., the variety $\vG$ of finite groups.
Be aware that only $U$ is restricted to the class $\vV$, while $S$ and the elements $s,t \in S$ can be arbitrary!

We also consider a more restricted variant of the problems, which we denote with a $\sharp$ superscript (\eg $\dMembS[PB]{}$ and $\dConjS[PB]{}$). 
For these we require in the Cayley table model that $S \in \vV$ and in the partial bijection model that there is some $S\leq \ISym(\Omega)$ with $S \in \vV$ such that $\Sigma\sse S$ and $t \in S$ (resp.\ $s,t \in S$).
For the conjugacy problem we also require that $s \sim_S t$.
We use these restricted variants to show stronger statements for our hardness results.

\subsection{Complexity}\label{sec:complexity}

We assume that the reader is familiar with standard complexity classes such as \PSPACE or \NP; see any standard textbook \cite{AroBar09,pap94} on complexity theory. In particular, if $\cC$ and $\cD$ are complexity classes, then we use the notation $\cC^\cD$ for the class of problems that can be solved in $\cC$ with oracles for a finite set of problems from $\cD$.

\mysubparagraph{Circuit Classes and Reductions.}
The circuit class $\ACz$ is defined as the class of problems decidable by polynomial-size, constant-depth Boolean circuits (where all gates may have arbitrary fan-in).
Likewise \ACz-computable functions are defined.
 We say that a problem $K\sse\{0,1\}^\ast$ is \emph{\ACz-(many-one-)reducible} to $L\sse\{0,1\}^\ast$ if there is an \ACz-computable function $f\colon\{0,1\}^\ast \to \{0,1\}^\ast$ such that $w\in K  \iff f(w) \in L$.
Throughout, we consider only uniform classes meaning that the circuits can be constructed (or verified) efficiently, for details see \cite{Vollmer99}.
 The classes $\qACz$ and $\NC$ are defined analogously to \ACz but allowing circuits of quasipolynomial (\ie $2^{\log^{\bigO(1)}n}$) size (resp.\ polynomial size and depth $\log^{\bigO(1)}n$). 
 
\mysubparagraph{Logarithmic Space.}
We write \LOGSPACE to denote logarithmic space. For many-one reductions computable in logarithmic space we write \LOGSPACE-reductions.
Recall that the composition of two \LOGSPACE-reductions is again a \LOGSPACE-reduction, that the class \LOGSPACE is closed under \LOGSPACE-reductions, and that the class \LOGSPACE is low for itself, \ie $\LOGSPACE^\LOGSPACE = \LOGSPACE$ (see \eg \cite[Lemma 4.17]{AroBar09}).
When talking about \LOGSPACE-hard problems, throughout we refer to \ACz many-one reductions.

Let \dUGAP denote the \emph{undirected graph accessibility problem}, \ie the input is an undirected graph and two vertices $s$ and $t$ and the question is to decide whether  $s$ and $t$ lie in the same connected component.
Note that \dUGAP is \LOGSPACE-hard under \ACz reductions \cite{CookM87} (even if the graphs are restricted to trees).
 The class $ \LOGSPACE^{\dUGAP}$ is also denoted as \SL. 
All our results on \LOGSPACE will rely crucially on the following seminal result, which shows that $\SL = \LOGSPACE$.
\begin{theorem}[Reingold \cite{Reingold08}]
	The problem $\dUGAP$ is in $\LOGSPACE$.
\end{theorem}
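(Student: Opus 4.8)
The plan is to follow Reingold's argument via the \emph{zig-zag product} of Reingold, Vadhan, and Wigderson. First I would reduce $\dUGAP$ to the special case of graphs that are $d$-regular for a fixed constant $d$ and, moreover, whose every connected component has second-largest eigenvalue $\lambda$ bounded away from $1$ by an absolute constant (i.e.\ is a good spectral expander). Once the instance has this form, two vertices $s$ and $t$ lie in the same component if and only if they are joined by a path of length $\bigO(\log n)$; such a path can be searched for by guessing it one step at a time while storing only the current vertex and an $\bigO(\log n)$-bit step counter, which yields the $\LOGSPACE$ algorithm. So the whole difficulty is reducing to expanders in logarithmic space.

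To perform that reduction, I would first make the graph regular of some fixed constant degree (replace each vertex $v$ by a cycle on $\deg(v)$ vertices, link consecutive cycle vertices, and route one original edge to each cycle vertex, then pad to a common constant degree); this is $\LOGSPACE$-computable and preserves the partition into connected components. I represent the resulting graph by its \emph{rotation map}. Then I iterate $\bigO(\log n)$ times the composition of two operations: (i) raise the graph to a fixed constant power, which multiplies the spectral gap $1-\lambda$ by a constant factor but inflates the degree to a constant power, and (ii) take the zig-zag product with a fixed constant-size expander $H$, which brings the degree back down to the original constant while only mildly degrading expansion. The quantitative lemma on the zig-zag product (an estimate of the form $\lambda(\text{zig-zag of }G\text{ and }H) \le \lambda(G) + \lambda(H) + \lambda(H)^2$, combined with the spectral-gap amplification under powering) guarantees that, for suitably chosen constants, each round increases $1-\lambda$ by a constant multiplicative factor as long as it lies below a fixed threshold. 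Hence after $\bigO(\log n)$ rounds every connected component is a constant-degree expander with $\lambda$ bounded by an absolute constant, and the search routine above applies.

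The heart of the proof is the space accounting. The level-$i$ graph has size exponential in $i$, so it cannot be materialized; instead one shows that a single entry of its rotation map is computable from entries of the level-$(i-1)$ rotation map using only $\bigO(1)$ additional bits of workspace (powering and the zig-zag product are \emph{local} operations on rotation maps), with the fixed expander $H$ handled by a constant-size lookup table. Unrolling the recursion, an entry of the final rotation map is computed in $\bigO(\log n)$ space, since the $\bigO(\log n)$ recursion levels reuse the same $\bigO(1)$ scratch space plus a position counter. Plugging this $\LOGSPACE$-computable rotation map into the logarithmic-length path search, and invoking closure of $\LOGSPACE$ under composition with $\LOGSPACE$-computable maps ($\LOGSPACE^{\LOGSPACE} = \LOGSPACE$), finishes the argument.

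The step I expect to be the main obstacle is exactly this interplay of quantitative expansion and space. One must fix the degree $d$, the power, and the base expander $H$ so that the zig-zag eigenvalue bound yields a genuine \emph{constant-factor} improvement of $1-\lambda$ per round (so that $\bigO(\log n)$ rounds suffice, not $\mathrm{poly}(n)$), while simultaneously keeping every rotation-map computation strictly within logarithmic space and making sure that all gadget replacements preserve connectivity exactly. Getting the spectral analysis of the zig-zag product right, and verifying that it composes correctly with the powering step, is the technical crux on which everything else rests.
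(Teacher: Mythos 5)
The paper does not prove this statement at all—it is quoted as an external result, citing Reingold's work—and your sketch is precisely the argument of that cited proof: logspace reduction to constant-degree graphs, iterated powering combined with zig-zag products against a fixed expander $H$ to make every component an expander after $\bigO(\log n)$ rounds, rotation-map locality for the space accounting, and exhaustive search over the polynomially many label sequences of length $\bigO(\log n)$ (note this last step must be a deterministic enumeration of label sequences rather than a ``guess,'' a trivial repair since each sequence fits in $\bigO(\log n)$ bits). So your proposal is correct and coincides with the proof the paper relies on; there is nothing in the paper to contrast it with.
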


\begin{remark}\label{rem:compute-path}
  Using an \dUGAP oracle, we can compute a path between any two vertices of an undirected graph in \LOGSPACE; in fact, we can even compute the vertices of a connected component, or a spanning tree of a connected component in \LOGSPACE, for details see \cite[Lemma 2.4]{NisanT95}.
\end{remark}

\mysubparagraph{Sublinear Time Classes.}

For sublinear time classes, we use random access Turing machines meaning that the Turing machine has a separate address tape and a query state; whenever the Turing machine goes into the query state and on the address tape the number $i$ is written in binary, the $i$-th symbol of the input is read (the content of the address tape is \emph{not} deleted after that).
Apart from that, random access Turing machines work like regular Turing machines.
The class \NPOLYLOGTIME consists of the problems decidable by non-deterministic random access Turing machines in time $\log^{\bigO(1)}n$.

\mysubparagraph{Overview.}

 For an overview over the complexity classes we use in this paper and their relationships, see \cref{fig:complexityclasses}.
We also note that, by \cite{Ruzzo81} and the space hierarchy theorem \cite{StearnsHL65}, we know that $\NC \subsetneqq \PSPACE$.
Moreover, by \cite{CollinsGLW24}, \NPOLYLOGTIME can be simulated by circuits of quasipolynomial size and depth two,\footnote{Thus, in terms of circuit depth, our corresponding results are optimal.} \ie $\NPOLYLOGTIME\sse \qACz$. 
Clearly, \ACz and \LOGSPACE are not contained in \NPOLYLOGTIME (as for example the conjunction of all input bits cannot be computed in \NPOLYLOGTIME).
Even more, by \cite{FurstSS84,Hastad86}, \qACz does not contain any \LOGSPACE-hard problem as it cannot compute, for instance, \prob{parity}.

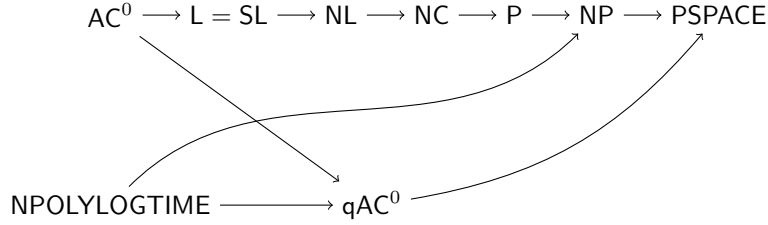
\begin{figure}
  \centering
	\begin{tikzpicture}[node distance=.5cm, auto]
		
		% Nodes (complexity classes)
		\node (PSPACE) {\(\PSPACE\)};
		\node (NP) 		[left=of PSPACE] {\(\NP\)};
		\node (PTIME) 	[left=of NP] {\(\PTIME\)};
		\node (NC) 		[left=of PTIME] {\(\NC\)};
		\node (NL) 		[left=of NC] {\(\NL\)};
		\node (SL) 		[left=of NL] {\(\LOGSPACE = \SL\)};
		\node (ACz) 	[left=of SL] {\(\ACz\)};
		\node (NPOLYLOGTIME) [below=2cm of ACz] {\(\NPOLYLOGTIME\)};
		\node (qACz) 	[right=1.5cm of NPOLYLOGTIME] {\(\qACz\)};
		
		% Arrows (subset relations, bottom to top)
		\draw[->] (NPOLYLOGTIME) -- (qACz);
    \draw[->] (qACz) to[bend right=20] (PSPACE);
    \draw[->] (NPOLYLOGTIME) to[out=45, in=-135](NP);
		\draw[->] (NP) -- (PSPACE);
		\draw[->] (ACz) -- (qACz);
		\draw[->] (ACz) -- (SL);
		\draw[->] (SL) -- (NL);
		\draw[->] (NL) -- (NC);
		\draw[->] (NC) -- (PTIME);
		\draw[->] (PTIME) -- (NP);
		
	\end{tikzpicture}
  \caption{Complexity classes in this paper (all circuit classes are assumed to be uniform).}\label{fig:complexityclasses} 
\end{figure}

\subsection{Straight-Line Programs}

Let $S$ be an inverse semigroup and $\Sigma \sse S$.
A \emph{straight-line program} (SLP) over $\Sigma$ is a finite sequence $(s_1, \dots, s_k) \in S^k$ such that for all $i$ either $s_i \in \Sigma$, or $s_i = s_js_\ell$ for some $j,\ell < i$, or $s_i = \ov{s_j}$ for some $j < i$. 
An SLP as above \emph{computes} an element $s \in S$ if $s \in \{s_1, \dots, s_k\}$.

Note that for $\Sigma \sse S$ closed under formation of inverses (as is the case for generating sets by our convention) the rule allowing for $s_i = \ov{s_j}$ could have been omitted from the definition of a straight-line program over $\Sigma$.
Note also that our definition of SLPs is according to Babai and Szemer\'edi \cite{BabaiS84} -- other authors define them slightly differently via circuits (or equivalently context-free grammars). The difference is that in our definition the evaluation of the SLP in the semigroup $S$ is already part of its definition.

\begin{definition}\label{def:polylog-slps}
  We say that a class $\cC$ of finite inverse semigroups \emph{admits polylogarithmic SLPs} if there exists a polynomial $P$ such that, for all $S \in \cC$ and all generating sets $\Sigma \subseteq S$, every element $s \in S$ is computed by some SLP over $\Sigma$ of length at most $P(\log \abs{S})$.
\end{definition}

The analogous property for semigroups and monoids was studied \ifAnonimous in~\cite{Fleischer22,Fleischer19diss}
\else by the first author~\cite{Fleischer22,Fleischer19diss}
\fi
 under the name \emph{polylogarithmic circuits property}. 
\ifAnonimous
A straight-forward guess-and-check approach yields the following result.
\else Using a straight-forward guess-and-check approach, we obtain the following result~-- for a proof see \cite[Corollary 5.2]{Fleischer19diss}.
\fi

\begin{lemma}[Fleischer {\cite[Corollary 5.2]{Fleischer19diss}}]\label{lem:SLP-npolylogtime}
  Let $\cC$ be a class of finite (inverse) semigroups admitting polylogarithmic SLPs.
  Then the problem $\dMemb[CT]{\cC}$ is in $\NPOLYLOGTIME$.
\end{lemma}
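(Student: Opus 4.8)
The plan is to prove that for a class $\cC$ of finite (inverse) semigroups admitting polylogarithmic SLPs, the problem $\dMemb[CT]{\cC}$ lies in $\NPOLYLOGTIME$ by a guess-and-check strategy on a random access non-deterministic Turing machine. The input is the Cayley table of some $S$, a generating set $\Sigma$ (closed under inverses), and a target $t$; we must decide whether $t \in \gen{\Sigma}$ where $\gen{\Sigma} \in \cC$. Since $\cC$ admits polylogarithmic SLPs, there is a fixed polynomial $P$ such that, if $t \in \gen{\Sigma}$, there is an SLP $(s_1, \dots, s_k)$ over $\Sigma$ computing $t$ with $k \le P(\log \abs{\gen{\Sigma}}) \le P(\log \abs{S})$. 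The machine guesses such an SLP and verifies it.

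The key steps, in order: First, I would set $N = \abs{S}$, which is recoverable from the input length (the Cayley table has $N^2$ entries each of $\ceil{\log N}$ bits, so $N$ and $\log N$ are computable in polylogarithmic time by reading the address region / consulting a few input bits). Second, the machine non-deterministically guesses a number $k \le P(\log N)$ and then guesses the SLP instructions: for each $i \in \{1, \dots, k\}$ it guesses a tag saying whether $s_i$ is a generator (and which one, i.e.\ an index into $\Sigma$), a product $s_j s_\ell$ with $j, \ell < i$, or an inverse $\ov{s_j}$ with $j < i$; each such piece of data is $\bigO(\log N)$ bits (indices into $\Sigma$ or into $\{1,\dots,i-1\}$, plus the explicit element-index of $s_i$ in $S$ which is $\ceil{\log N}$ bits). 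The whole guessed object has size $\bigO(k \log N) = \log^{\bigO(1)} N$ bits, written on the machine's work tape. Third, the machine verifies: for each $i$ it checks consistency of $s_i$ with its stated rule — if $s_i$ is declared to be generator number $p$, it checks $s_i$ equals the $p$-th element of $\Sigma$ (a comparison of two $\bigO(\log N)$-bit indices, using random access to read $\Sigma$ from the input); if $s_i = s_j s_\ell$, it looks up entry $(j\text{-index}, \ell\text{-index})$ of the Cayley table via the address tape — one random-access query, reading $\ceil{\log N}$ bits — and checks it matches the declared index of $s_i$; if $s_i = \ov{s_j}$, it needs the inverse of $s_j$, which it can obtain by additionally having guessed the index of $\ov{s_j}$ and verifying $s_j \ov{s_j} s_j = s_j$ and $\ov{s_j} s_j \ov{s_j} = \ov{s_j}$ via four table lookups (or, since $\Sigma = \ov\Sigma$, this rule can be dropped entirely as noted in the excerpt, simplifying matters). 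Finally it checks that some $s_i$ equals $t$. Each of the $k$ verification steps takes $\log^{\bigO(1)} N$ time (constantly many random-access lookups plus $\bigO(\log N)$-bit comparisons), so the total running time is $k \cdot \log^{\bigO(1)} N = \log^{\bigO(1)} N$, i.e.\ polylogarithmic. Correctness is immediate: if $t \in \gen{\Sigma}$ the polylogarithmic-SLP hypothesis guarantees an accepting computation exists, and conversely any accepting computation exhibits an SLP over $\Sigma$ computing $t$, so $t \in \gen{\Sigma}$.

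The main technical care — rather than a genuine obstacle — is making sure every verification step genuinely fits in polylogarithmic time under the random-access model: in particular that arithmetic on $\bigO(\log N)$-bit indices (computing the address $j + \ell N$ of a Cayley-table entry, comparing indices for equality) is doable in $\log^{\bigO(1)} N$ steps, and that the guessed SLP, being of length $\log^{\bigO(1)} N$ with each instruction of size $\bigO(\log N)$, occupies only polylogarithmically much work-tape space so that traversing it costs polylogarithmic time. A secondary point is that the bound $P(\log\abs{\gen{\Sigma}}) \le P(\log\abs{S})$ uses monotonicity of $P$ (WLOG $P$ has non-negative coefficients) together with $\gen{\Sigma} \le S$; and that the machine does not need to know $\abs{\gen{\Sigma}}$ exactly — using the upper bound $P(\log N)$ on the SLP length is enough. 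This is precisely the ``straight-forward guess-and-check approach'' referenced in the statement, so I would keep the write-up brief and cite \cite[Corollary 5.2]{Fleischer19diss} for the detailed bookkeeping.
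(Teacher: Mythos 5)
Your proposal is correct and is exactly the straight-forward guess-and-check argument that the paper invokes (and defers to \cite[Corollary 5.2]{Fleischer19diss} for): guess a polylogarithmic-length SLP together with the explicit element index of each line, verify every line by constantly many random-access lookups into the Cayley table, and compare against $t$. The bookkeeping points you flag (monotonicity of $P$, recovering $N$, dispensability of the inversion rule when $\Sigma=\ov\Sigma$) are handled appropriately, so nothing is missing.
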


\begin{remark}
	Notice that in \cite{CollinsGLW24arxiv}	membership in quasigroups in the Cayley table model has been shown to be decidable in the class $\exists^{\log^2}\mathsf{DTISP}(\mathrm{polylog}, \mathrm{log})$, meaning that, after non-deterministically guessing $\bigO(\log^2 n)$ bits, it can be verified  deterministically in time $\log^{\bigO(1)} n$ with space restricted to $\bigO(\log n)$. 
	Our results could be strengthened to the similar (yet slightly larger) class $\exists^{\log^{k+1}}\mathsf{DTISP}(\mathrm{polylog}, \mathrm{log})$ where $k$ is the degree of a polynomial $P$ as in \cref{def:polylog-slps}.
  Note that $k \leq 2$ for the variety $\vCl$ of finite Clifford semigroups by \cref{lem:clifford-slp}, matching the bound for the variety $\vG$ of finite groups obtained by Babai and Szemer\'edi~\cite{BabaiS84}; see \cref{lem:groups-slp}.
	Nevertheless, as $\NPOLYLOGTIME = \exists^{\log^{\bigO(1)}}\mathsf{DTISP}(\mathrm{polylog}, \mathrm{log})$, this would give little additional insight and we refrain from doing so for the sake of a cleaner presentation.
\end{remark}

\section{Membership and Conjugacy in Groups}\label{sec:groups}

Groups are a primary example for inverse semigroups.
Therefore, let us start exploring some known result and new observations about the membership and conjugacy problems in groups.

The following characterization of the variety of finite groups is well known (see, \eg \cite{HallJohnston89}).

\begin{lemma}\label{lem:characterization-groups}
  Let $S$ be a finite inverse semigroup.
  Then the following are equivalent.
  \begin{enumerate}
    \item The inverse semigroup $S$ is contained in $\vG = \vId{x \bar x = 1}$.
    \item The two-element semilattice $Y_2$ does not divide $S$.
  \end{enumerate}
\end{lemma}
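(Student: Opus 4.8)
The plan is to prove both implications directly. For $(1) \Rightarrow (2)$, suppose $Y_2 \preccurlyeq S$, i.e., there is an inverse subsemigroup $T \leq S$ and a surjective homomorphism $\varphi \colon T \to Y_2$. Since $Y_2 = \{0,1\}$ is not a group (the element $0$ has no inverse making it a group, as $0 \cdot x = 0 \neq 1$ for all $x$), it cannot satisfy the identity $x \bar x = 1$: indeed $0 \cdot \bar 0 = 0 \neq 1$ because $\bar 0 = 0$ is forced (since $0$ is a zero, $0 \bar 0 0 = 0$ and $\bar 0 0 \bar 0 = \bar 0$ give $\bar 0 = \bar 0 0 \bar 0 = 0$). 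Hence $Y_2 \notin \vG$. But $\vG$ is a variety of finite inverse semigroups, so it is closed under divisors; therefore $S \notin \vG$, contradicting~(1). So if $S \in \vG$ then $Y_2 \not\preccurlyeq S$.

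For the contrapositive direction $(2) \Rightarrow (1)$ — equivalently, $\lnot(1) \Rightarrow \lnot(2)$ — assume $S \notin \vG$, so the identity $x \bar x = 1$ fails; concretely there is some $s \in S$ with $e := s \bar s \neq 1_{S^1}$, i.e., $e$ is an idempotent of $S$ that is not a neutral element of $S$. Since $e$ is not neutral, there is some $a \in S^1$ with $ea \neq a$ or $ae \neq a$; using that idempotents commute in an inverse semigroup and passing to $\bar a a$ or $a \bar a$ if needed, one reduces to producing an idempotent $f \in E(S)$ with $ef \neq f$, hence $ef < f$ strictly in the natural order (since $ef \leq f$ always and $ef$ is idempotent). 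The plan is then to form the two-element subsemilattice $T = \{f, ef\} \leq E(S) \leq S$ (closed under products since $f \cdot f = f$, $f \cdot ef = ef \cdot f = ef$, $ef \cdot ef = ef$, and under inverses since idempotents are self-inverse). The map $f \mapsto 1$, $ef \mapsto 0$ is an isomorphism $T \cong Y_2$, so $Y_2 \preccurlyeq S$ (in fact $Y_2 \leq S$), giving $\lnot(2)$.

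The one step needing a little care — and the main (minor) obstacle — is the reduction from "$e$ is not neutral" to "there is an idempotent $f$ with $ef \neq f$." The cleanest route: if $e = s\bar s$ is not the identity of $S^1$, then in particular $e$ does not act as identity on all of $E(S)$ (if $e g = g$ for every $g \in E(S)$, then since every element $x$ satisfies $x = (x\bar x) x$ and $e \geq x \bar x$ would give $e x = x$, and symmetrically $xe = x$, making $e$ neutral — here one uses that $E(S)$ is a semilattice and $x \mapsto x\bar x$ lands in it). So pick $f \in E(S)$ with $ef \neq f$ and proceed as above. I would state this reduction as a short self-contained paragraph rather than invoking outside machinery, since everything needed (idempotents commute, $E(S)$ is a semilattice, the natural order) is already recalled in Section~\ref{sec:prelims-inverse}.
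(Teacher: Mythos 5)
Your proof is correct. Note that the paper does not prove this lemma at all: it is stated as a well-known fact with a citation to Hall--Johnston, so there is no in-paper argument to compare against; your self-contained proof of both directions is exactly the kind of elementary argument the citation stands in for. Two small remarks. First, the detour through $1_{S^1}$ and ``$a \in S^1$'' in the converse direction is slightly awkward; the substance you actually need (and do establish in your last paragraph) is that a finite inverse semigroup which is not a group has an idempotent $e$ and an idempotent $f$ with $ef \neq f$, and the argument that ``$eg = g$ for all $g \in E(S)$ forces $e$ neutral'' via $ex = e(x\bar x)x = (x\bar x)x = x$ and commuting idempotents is fine. Second, there is a shortcut that avoids the neutrality discussion entirely: $S \notin \vG$ means $S$ has two distinct idempotents $e \neq f$ (a finite inverse semigroup with a single idempotent is a group), and then $ef$ differs from at least one of $e,f$, say $ef \neq f$, so $\{f, ef\}$ is already your copy of $Y_2$ inside $E(S)$. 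Either way, $(1)\Rightarrow(2)$ via closure of $\vG$ under division and $(2)\Rightarrow(1)$ via exhibiting $Y_2 \leq S$ is a complete and correct proof.
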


\subsection{The Cayley Table Model}

In the Cayley table model, deciding the membership and conjugacy problems for groups is comparatively easy.
The best currently known approach \cite{CollinsGLW24arxiv} is based on the non-deterministic computation of a succinct representation of a target element as a product of generators.

We pursue a similar idea here and use SLPs for succinct representation.
In the case of groups, this approach is afforded by the following Reachability Lemma.

\begin{lemma}[Babai, Szemer\'edi~{\cite[Theorem~3.1]{BabaiS84}}]\label{lem:groups-slp}
	The variety $\vG$ of finite groups admits polylogarithmic SLPs.
	More precisely, for every group $G$ and generating set $\Sigma \subseteq G$, every element of $G$ is computed by an SLP over $\Sigma$ of length $\mathcal{O}(\log^2 \abs{G})$.
\end{lemma}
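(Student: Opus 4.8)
The plan is to build the SLP in two phases, following the classical "cube" technique. First I would fix an enumeration $\Sigma = \{g_1, \dots, g_m\}$ of the generators (recall $\Sigma = \ov\Sigma$, so inverses are already present) and, for a parameter $\ell$ to be chosen, define the \emph{cube} $C(\vec w) = \{1, w_1\}\{1, w_2\}\cdots\{1, w_\ell\}$ associated with a tuple $\vec w = (w_1, \dots, w_\ell) \in G^\ell$; this is the set of all $2^\ell$ subproducts $w_1^{\varepsilon_1}\cdots w_\ell^{\varepsilon_\ell}$ with $\varepsilon_i \in \{0,1\}$. The first phase is to produce, by a doubling/greedy process, a tuple $\vec w$ with $\ell = \bigO(\log\abs G)$ such that $C(\vec w)$ is "large" — more precisely, such that $\abs{C(\vec w)} \geq \abs G / 2$ (or, in the cleanest version, such that $C(\vec w)$ meets every coset of every proper subgroup it could otherwise be trapped in). One starts with the empty tuple and repeatedly: if the current cube $C$ does not already have the desired size, then since $\Sigma$ generates $G$ there is some generator $g_j$ with $g_j C \not\subseteq C\Sigma^{\le t}$-ish — concretely, one shows that as long as $\abs C < \abs G$, some generator $g_j$ satisfies $\abs{C \cup Cg_j} \geq \tfrac32\abs C$ (a generator moving $C$ off itself must at least, say, collide on no more than half), so appending $g_j$ multiplies the cube size by at least $3/2$. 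After $\bigO(\log\abs G)$ steps the cube has size $\geq \abs G/2$. Each $w_i$ appended is a single generator, so each has a trivial length-one SLP; thus the whole tuple $\vec w$ is "cheap."

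The second phase is to reach an \emph{arbitrary} target $g \in G$. Having a large cube $C = C(\vec w)$, I would show that for every $g \in G$ one has $g \in C \cdot C^{-1} = C(\vec w)\,C(\vec w)^{-1}$: since $\abs C > \abs G/2$, the two sets $gC^{-1}$ and $C$ both have size $>\abs G/2$, hence intersect, giving $g = c_1 c_2^{-1}$ with $c_1, c_2 \in C$. Now the crucial point for the SLP length is that an element of $C(\vec w)$ is a subproduct $w_1^{\varepsilon_1}\cdots w_\ell^{\varepsilon_\ell}$, which can be computed by an SLP of length $\bigO(\ell)$ built on top of the (length-$\bigO(\ell)$) SLP for the $w_i$: one simply writes down the running partial products. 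So $c_1$ and $c_2$ each have SLPs of length $\bigO(\ell) = \bigO(\log\abs G)$, $c_2^{-1}$ costs one more line, and the final product $c_1 c_2^{-1}$ one more. Since the $w_i$-part is shared, the total SLP length is $\bigO(\log\abs G) + \bigO(\log\abs G) = \bigO(\log\abs G)$ lines; however, one must be slightly careful: in the doubling phase one does not know a priori that the cube size strictly grows every step without a subgroup getting in the way, and handling that correctly is what forces $\ell = \bigO(\log\abs G)$ (rather than $\bigO(\log\abs G)$ exactly) and is responsible for the stated $\bigO(\log^2\abs G)$ bound — see the next paragraph.

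The main obstacle, and the source of the $\log^2$ rather than $\log$, is the doubling-phase analysis: a single generator need not increase the cube size at all if $C$ is already a union of full cosets of some subgroup that the generator normalizes. The standard fix is to iterate over \emph{all} $m$ generators at each stage and argue that, since $\Sigma$ generates $G$ and $C \ne G$, at least one generator must enlarge the cube by a constant factor; but bounding $m$ in terms of $\abs G$ only gives $m \le \abs G$, so naively the length budget blows up. Babai–Szemerédi's resolution is to note that one never actually appends all $m$ generators — one appends $\bigO(\log\abs G)$ of them total, each a single line — so the genuine cost is controlled; the factor $\log\abs G$ in "$\log^2$" instead comes from the fact that to guarantee the cube becomes large (and escapes every intermediate subgroup) one iterates the enlargement argument $\bigO(\log\abs G)$ times, and in the careful version each stage is itself a sub-SLP of length $\bigO(\log\abs G)$. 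I would therefore present the argument in the form: (i) cube doubling in $\bigO(\log\abs G)$ rounds, each round contributing $\bigO(\log\abs G)$ SLP lines (because recomputing/extending the running subproducts after inserting a new factor in the middle costs $\bigO(\ell)$), giving an $\bigO(\log^2\abs G)$-line SLP for a large cube; then (ii) the free $C\,C^{-1}$ covering step above adds only $\bigO(\log\abs G)$ more lines. Combining (i) and (ii) yields the claimed bound $\bigO(\log^2\abs G)$, and since this holds for every $G$ and every generating set, the class $\vG$ admits polylogarithmic SLPs.
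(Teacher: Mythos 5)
There is a genuine gap in the doubling phase. Your growth claim~--- that as long as $\abs{C} < \abs{G}$ some generator $g_j$ satisfies $\abs{C \cup Cg_j} \geq \tfrac32 \abs{C}$~--- is false, and the parenthetical justification (``a generator moving $C$ off itself must collide on no more than half'') is not an argument. All one can conclude from $\langle\Sigma\rangle = G$ and $C \neq G$ is that some generator adds \emph{at least one} new element. A concrete counterexample: take $G = \Z/n\Z$ with $\Sigma = \{\pm 1\}$; any cube built from single generators is an arithmetic-progression-like set, and appending $\pm 1$ enlarges it by exactly one element per step, so reaching size $\abs{G}/2$ this way takes $\Theta(\abs{G})$ appends, not $\mathcal{O}(\log \abs{G})$. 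Your third paragraph senses the problem but the proposed resolution (keep appending single generators, argue the count still works) does not repair it, and the claimed source of the per-round cost (``recomputing running subproducts after inserting a factor in the middle'') is not where the cost lies in any correct version.

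The missing idea in Babai--Szemer\'edi's argument is that the element appended at each round is \emph{not} a generator but an element of $C^{-1}C\,\Sigma \setminus C^{-1}C$: as long as $K = C^{-1}C \neq G$, since $\Sigma$ generates $G$ there exist $c_1, c_2 \in C$ and $s \in \Sigma$ with $w \coloneqq c_1^{-1} c_2\ms s \notin K$, and then $C \cap Cw = \emptyset$ (otherwise $w \in C^{-1}C$), so the new cube $C\{1,w\}$ has exactly twice the size. Hence at most $\log_2 \abs{G}$ rounds suffice to reach $C^{-1}C = G$ (which also replaces your pigeonhole step, where incidentally $\abs{C} = \abs{G}/2$ exactly would not suffice). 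Each new cube word $w$ is a product of two subproducts of the already-computed $w_i$'s and one generator, hence costs $\mathcal{O}(\ell)$ SLP lines at round $\ell$; summing gives $\sum_{\ell} \mathcal{O}(\ell) = \mathcal{O}(\log^2 \abs{G})$, and writing the target as $c_1^{-1}c_2$ adds a final $\mathcal{O}(\log \abs{G})$ lines. This is exactly where the $\log^2$ comes from; with only single-generator appends, as in your phase one, no polylogarithmic bound follows.
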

For the first part of the following result, see \cite{Fleischer19diss}.
\begin{proposition}\label{pro:ctm-groups}
  The problems \dMemb[CT]{\vG} and \dConj[CT]{\vG} are in \NPOLYLOGTIME{}.
\end{proposition}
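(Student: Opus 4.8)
The plan is to use \cref{lem:SLP-npolylogtime} together with \cref{lem:groups-slp}, and then to bootstrap the membership algorithm into a conjugacy algorithm. For membership, there is essentially nothing to do: by \cref{lem:groups-slp} the variety $\vG$ of finite groups admits polylogarithmic SLPs, so \cref{lem:SLP-npolylogtime} immediately gives that $\dMemb[CT]{\vG}$ is in $\NPOLYLOGTIME$. Concretely, on input of a Cayley table for $S$, a generating set $\Sigma$ of a group $U = \gen{\Sigma} \leq S$, and a target $t \in S$, the algorithm non-deterministically guesses an SLP over $\Sigma$ of length $\bigO(\log^2 \abs{U}) = \bigO(\log^2 \abs{S})$ and verifies in polylogarithmic time on a random access Turing machine that it evaluates (using the Cayley table to perform each of the polylogarithmically many multiplications, and noting that $\ov{s}$ for $s \in \Sigma$ can be looked up/guessed along with the program) to $t$; a minor point is to also verify that the guessed SLP only uses the provided generators, i.e.\ that the group $U$ it implicitly describes contains $t$, which is exactly what the guess-and-check does.

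For conjugacy we reduce to the membership machinery. Recall that $s \sim_U t$ means there is $u \in U^1$ with $\ov u s u = t$ and $s = u t \ov u$; since $U$ is a group, $U^1 = U$ (or the adjoined identity lies in the group already), so this is the classical notion: $s$ and $t$ are conjugate in $U$ via some $u$ with $u^{-1} s u = t$ (the second equation $s = u t \ov u$ is then automatic in a group). So the algorithm guesses an SLP over $\Sigma$ of length $\bigO(\log^2 \abs{S})$ computing a candidate conjugator $u$, and verifies — again using the Cayley table, in polylogarithmically many steps — that $u \in U$ (automatic from the SLP), that $\ov u s u = t$, and that $s = u t \ov u$. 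By \cref{lem:groups-slp}, if a conjugator exists then one is computed by such a short SLP, so the non-deterministic procedure accepts on some branch iff $s \sim_U t$; this places $\dConj[CT]{\vG}$ in $\NPOLYLOGTIME$ as well.

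I do not expect a genuine obstacle here; the statement is essentially a packaging of the Reachability Lemma plus the generic guess-and-check lemma. The only things that require a word of care are bookkeeping details: that all the semigroup operations needed (multiplication, inversion) are performed via table lookups on the random access Turing machine in time $\log^{\bigO(1)} n$, that $\log \abs{U} \leq \log \abs{S} \leq \log n$ so the SLP length stays polylogarithmic in the input size $n$, and that the second conjugacy equation is checked explicitly rather than assumed. Since this is merely an instance of the general framework (and, as the excerpt notes, the membership half already appears in \cite{Fleischer19diss}), the proof is short.
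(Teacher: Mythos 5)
Your proposal is correct and takes essentially the same route as the paper: membership is exactly the combination of \cref{lem:SLP-npolylogtime} with \cref{lem:groups-slp}, and conjugacy is handled, as in the paper, by guessing a (short SLP for a) conjugating element and verifying the equations via Cayley-table lookups. One small caveat: since $s,t \in S$ need not lie in $U$, the second equation $s = u t \bar u$ is \emph{not} automatic from $\bar u s u = t$ (your algorithm rightly checks it anyway), and the case $u = 1 \in U^1 \setminus U$ simply amounts to testing $s = t$.
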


\begin{proof}
  The combination of \cref{lem:SLP-npolylogtime} and \cref{lem:groups-slp} shows that \dMemb[CT]{\vG} is decidable in \NPOLYLOGTIME{}.
  To see that this is also true for \dConj[CT]{\vG}, we observe that one can simply guess a conjugating element, thereby reducing the problem to \dMemb[CT]{\vG}.
\end{proof}

\subsection{The Partial Bijection Model}\label{sub:groups-pbm}

In the partial bijection model, the groups under consideration are permutation groups and it is in this latter setting that the membership and conjugacy problems have been widely studied.
However, the problems \dMemb[PB]{\vG} and \dConj[PB]{\vG} are also subtly different from the corresponding problems for permutation groups simply because the former allow for more possible inputs (partial bijections instead of bijections). 
In case of the membership problem this distinction is mostly artificial (and can be  resolved by appropriate \ACz reductions).

\begin{proposition}[Babai, Luks, and Seress \cite{BabaiLS87}]\label{pro:pbm-groups-membership}
  The problem \dMemb[PB]{\vG} is in \NC{}.
\end{proposition}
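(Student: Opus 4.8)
The plan is to reduce $\dMemb[PB]{\vG}$ to the classical membership problem for permutation groups, which is known to be in \NC by Babai, Luks, and Seress~\cite{BabaiLS87}, and then invoke that result. The subtlety flagged in the text is that the input generators $\Sigma \sse \ISym_n$ and the target $t \in \ISym_n$ are \emph{partial} bijections, whereas the classical result concerns honest permutations. However, since $U = \gen{\Sigma} \in \vG$ is a group, \cref{lem:characterization-groups} tells us that $Y_2$ does not divide $U$; in particular $U$ has no nontrivial idempotents other than its identity $e$, and $e = u\bar u = \bar u u$ for every $u \in U$. Consequently every element of $U$ has the same domain $\Delta := \dom(e) = \ran(e) \sse \Omega$, and each $u \in U$ restricts to a genuine permutation of the finite set $\Delta$.

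First I would compute, in \ACz, the common domain $\Delta$. The idempotent $e$ is the identity of $U$, and since $\Sigma = \ov\Sigma$ we have $e = u\bar u$ for any generator $u \in \Sigma$; one checks that $\dom(e)$ is the common value of $\dom(\sigma) \cap \ran(\sigma)$ — in fact $\dom(e) = \dom(\sigma)$ for every $\sigma \in \Sigma$ when $U$ is a group, and equality of all these domains can be tested directly. (If they disagree, or if some $\sigma$ does not act as a permutation on their common part, we may reject, or rather observe this cannot happen when the promise $U \in \vG$ holds; for a robust reduction on arbitrary inputs one simply outputs a trivial no-instance in the bad cases.) Having isolated $\Delta$, I would then check whether $\dom(t) = \Delta = \ran(t)$; if not, then $t \notin U$ and we reject. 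Otherwise $t$ restricts to a permutation $t|_\Delta$ of $\Delta$, each $\sigma \in \Sigma$ restricts to a permutation $\sigma|_\Delta$, and we have $t \in U$ if and only if $t|_\Delta \in \gen{\sigma|_\Delta : \sigma \in \Sigma}$ as a permutation group on $\Delta$. Relabelling $\Delta$ as $\{1, \dots, |\Delta|\}$ (an \ACz step) produces a bona fide instance of permutation group membership.

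The reduction above is \ACz-computable: extracting domains and ranges, comparing them, restricting the partial bijections, and relabelling a subset of $\{1,\dots,n\}$ to an initial segment are all standard \ACz operations on the list-of-images encoding. Composing this \ACz many-one reduction with the \NC membership algorithm of~\cite{BabaiLS87} — and using that \NC is closed under \ACz reductions — yields $\dMemb[PB]{\vG} \in \NC$.

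I expect the only genuine point requiring care is the justification that, under the promise $U \in \vG$, all generators (and the identity) share a single domain on which each acts as a permutation — i.e., that the "partial" nature of the input is vacuous for group instances. This is exactly the content of \cref{lem:characterization-groups}: if two generators had different domains, or a generator acted non-surjectively on its domain within $U$, one could exhibit $Y_2 \preccurlyeq U$, contradicting $U \in \vG$. Making this airtight — and deciding what the reduction outputs when the input violates the promise, so that the reduction is total — is the main (though minor) obstacle; everything else is bookkeeping.
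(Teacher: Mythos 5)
Your proposal is correct and follows essentially the same route as the paper: observe that all elements of a group $U \leq \ISym(\Omega)$ share a common domain, reject if $\dom(t)$ or $\ran(t)$ differs from it, and otherwise restrict everything to that set and invoke the \NC permutation-group membership algorithm of Babai, Luks, and Seress. The extra care you take about promise violations and relabelling is fine but not needed beyond what the paper's shorter argument already does.
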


\begin{proof}
	Let $\Sigma \sse \ISym(\Omega)$ such that $U = \gen{\Sigma}$ is a group and $t \in \ISym(\Omega)$ denote our input.
	First, observe that $\dom(u) = \dom(v)$ for all $u,v \in U$ as $U$ is a group. 
	Hence, to check membership, we first check whether $\dom(t) = \ran(t) = \dom(u)$ for some $u \in \Sigma$. 
  If this is not the case, then $t \not\in U$. Otherwise, we use the algorithm for permutation groups \cite{BabaiLS87} to test whether $t \in\gen{\Sigma}$, where we interpret $t$ and all elements of $\Sigma$ as permutations on the set $\Omega^U = \dom(t) \sse \Omega$.
\end{proof}

We complement the above with the following hardness result.

\begin{proposition}\label{lem:nontrivial-group-hard}
	Let $\vV$ be a variety of finite inverse semigroups containing a non-trivial group.
	Then \dMemb[PB]{\vV} as well as its restricted variant $\dMembS[PB]{\vV}$ are \LOGSPACE-hard.
\end{proposition}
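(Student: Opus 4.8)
The plan is to reduce the \LOGSPACE-hard problem \dUGAP{} (restricted to trees, see the remark after Reingold's theorem) to $\dMembS[PB]{\vV}$. Since $\vV$ contains a non-trivial group, by \cref{pro:varieties}(1) we have $\vSl \not\sse \vV$ fails to help directly, but more simply: $\vV$ contains some non-trivial finite group $G$, and (taking a cyclic subgroup of prime order, which is a divisor of $G$ and hence in $\vV$) we may assume $G = \Z/p\Z$ for a prime $p$; in fact $p = 2$, i.e.\ $G = \Z/2\Z$, will suffice and keeps the construction cleanest. The key idea is the standard ``XOR along a path'' trick: given a tree $T$ with two distinguished vertices $s$ and $t$, we build an action on a set $\Omega$ so that the subgroup generated by the generators-associated-to-edges contains a certain target permutation if and only if $s$ and $t$ lie in the same connected component of $T$.

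Concretely, here is the construction I would carry out. Root the tree (or each connected component) arbitrarily; fix $\Omega = V(T) \times \{0,1\}$ together with two extra points if needed for bookkeeping. For each edge $\{a,b\}$ of $T$ introduce a generator $g_{\{a,b\}} \in \ISym(\Omega)$ that acts as the ``flip the second coordinate'' involution on $\{a,b\}\times\{0,1\}$ and as the identity elsewhere. Each $g_{\{a,b\}}$ is an involution, so the group $U = \gen{\Sigma}$ it generates is an elementary abelian $2$-group, hence $U \in \vBS$-no wait, $U$ is a group so $U \in \vG \sse \vV$; moreover all generators and the target lie in the single group $U$, which gives the stronger property required by the $\sharp$-variant $\dMembS[PB]{\vV}$ (we must exhibit $S \le \ISym(\Omega)$ with $S \in \vV$ containing $\Sigma$ and the target; take $S = U$). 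Now let the target $t_0 \in \ISym(\Omega)$ be the involution that flips the second coordinate exactly on $\{s,t\}\times\{0,1\}$ and is the identity elsewhere. A product of the $g_{\{a,b\}}$'s equals, on each fiber $\{v\}\times\{0,1\}$, the flip iff an odd number of the chosen edges are incident to $v$; such a function on $V(T)$ is realizable iff it is a $\Z/2\Z$-sum of ``edge-indicator parity vectors'', and the span of those vectors over $\F 2$ is exactly the set of vertex-subsets that meet each connected component of $T$ in an even-size set. Hence $t_0 \in U$ iff $\{s,t\}$ has even intersection with every component, i.e.\ iff $s$ and $t$ lie in the same component. This is precisely \dUGAP{}.

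The last thing to check is that the reduction is computable in \ACz{} (the yardstick for \LOGSPACE-hardness in this paper). Reading off the tree, the construction of each $g_{\{a,b\}}$ as an ordered list of images over $\Omega$ is a purely local, coordinate-wise computation, and likewise for $t_0$; the size $|\Omega| = 2|V(T)|$ and the number of generators $|E(T)|$ are linear in the input, so the whole output has polynomial size and each output bit depends on $\bigO(1)$ input bits. I expect the main (minor) obstacle to be bookkeeping around degenerate cases-self-loops or multi-edges in the input graph (harmless: a self-loop contributes the identity, parallel edges contribute the same vector), and making sure the ``even intersection with every component'' linear-algebra fact is stated cleanly; the cycle-space/cut-space duality over $\F 2$ gives it immediately. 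Combining with \cref{lem:CT_to_PB} is not needed here since we work directly in the partial bijection model. Finally, the argument only used a group of order $2$; for a general non-trivial group in $\vV$ one replaces the two-element fibers with copies of $G$ acted on by a fixed non-identity element, which does not change anything essential.
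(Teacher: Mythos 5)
There is a genuine gap, and it sits exactly where you waved your hands. Your worked-out construction uses $G = \Z/2\Z$, but the hypothesis only gives you \emph{some} non-trivial group in $\vV$; if, say, $\vV$ is the variety generated by $\Z/3\Z$, then $\Z/2\Z \notin \vV$, your generated group $U$ (an elementary abelian $2$-group) is not in $\vV$, and the reduction does not produce instances of $\dMemb[PB]{\vV}$ at all. The closing sentence, that one can "replace the two-element fibers with copies of $G$ acted on by a fixed non-identity element, which does not change anything essential", is false as stated: your correctness argument is genuinely characteristic-$2$ linear algebra. With the symmetric generators (apply the same $g$ on both fibers of an edge) and target "apply $g$ on the fibers of $s$ and $t$", take $G = \Z/3\Z$ and the path $s - v - t$: the target $e_s + e_t$ is not in the subgroup generated by $e_s + e_v$ and $e_v + e_t$ (one would need $2 \equiv 0$), even though $s$ and $t$ are connected, so the reduction gives the wrong answer. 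The fix is to orient the construction: for an edge $\{u,v\}$ use the "difference" element that acts by $\ov g$ on the $u$-fiber and by $g$ on the $v$-fiber (i.e.\ $g_{uv} = \ov{g_u}\,g_v$ inside $G^V$), with target $g_{st} = \ov{g_s}\,g_t$; then the telescoping identity $g_{uv} g_{vw} = g_{uw}$ holds for \emph{any} group (the coordinates are disjoint), and membership holds iff $s$ and $t$ are connected. This is precisely the paper's construction, which works directly with any non-trivial $g \in G \in \vV$ and never passes through $\Z/2\Z$.

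A second, smaller but real defect concerns the $\sharp$-variant: you certify the restriction by "take $S = U$", but for negative instances the target $t_0$ is \emph{not} in $U$, so $S = U$ does not witness the required condition that $\Sigma \sse S$ and the target lies in $S$ for some $S \in \vV$. The correct witness is the full power $S = G^V \le \ISym\bigl(\bigsqcup_{v} G\bigr)$, which lies in $\vV$ by closure under finite direct products and contains all generators and the target irrespective of connectivity (and then $U = \gen{\Sigma} \in \vV$ by closure under divisors). Your $\F{2}$ analysis of the span of edge-indicator vectors and the \ACz{} computability discussion are fine in themselves, but the two points above must be repaired, and repairing the first one essentially turns your proof into the paper's.
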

\begin{proof}
	Let us reduce \dUGAP to \dMembS[PB]{\vV}. 
	Let $G \in \vV$ denote a non-trivial group with some non-trivial element $g \in G$.
	We can interpret $G$ as a permutation group acting on itself.
	Given an undirected graph $\Gamma = (V,E)$, set $S = G^V \in \vV$ (which can be interpreted as a permutation group acting on $\bigsqcup_{v\in V} G$). 
  For each $v \in V$ we define $g_v\colon V \to G$ with $g_v(v) = g$ and $g_v(u) = 1$ otherwise, and for each pair $(u,v) \in V\times V$ we define $g_{uv} = \ov g_{u} g_{v}$.
  Clearly, $g_{st} \in \gen{\,g_{uv} \mid \{u,v\} \in E\,}$ if and only if $s \in V$ and $t \in V$ are in the same component of $\Gamma$.
\end{proof}

The complexity of the conjugacy problem for groups is more intricate.
On the one hand, the problem is clearly in \NP{} due to the existence of short SLPs (\cref{lem:groups-slp}).
This observation holds for permutation groups as well as groups in the partial bijection model.

On the other hand, the conjugacy problem is \GI-hard for permutation groups, as was observed by Luks \cite{Luks93}.
Luks also exhibited several other problems for permutation groups that are polynomial-time equivalent to conjugacy, among which is the \emph{set transporter problem}.

\begin{decproblem}
	\iitem A group $G \leq \Sym(\Omega)$ given by generators, and sets $\Delta_s, \Delta_t \sse \Omega$.
	\qitem Does there exist an element $g \in G$ such that $\Delta_s^g = \Delta^{\mathstrut}_t$?
\end{decproblem}

The conjugacy problem for permutation groups is clearly a special case of the conjugacy problem for $\vG$ in the partial bijection model, and so is the set transporter problem.
In fact, the latter is precisely the idempotent conjugacy problem for $\vG$ (recall that the idempotents of the symmetric inverse semigroup $\ISym(\Omega)$ are in canonical bijection with the subsets of $\Omega$).

\begin{lemma}
  The problem \dConj[PB]{\vG} is $\ACz$-reducible to \dEConj[PB]{\vG}.
\end{lemma}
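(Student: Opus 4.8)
The plan is to represent each partial bijection by the idempotent of $\ISym(\Omega\times\Omega)$ supported on its graph, letting $U=\gen{\Sigma}$ act diagonally. Write $\Gamma_s=\set{(x,x^s)}{x\in\dom(s)}$ for the graph of $s\in\ISym(\Omega)$, and for $u\in\ISym(\Omega)$ let $u'\in\ISym(\Omega\times\Omega)$ be its diagonal action $(x,y)^{u'}=(x^u,y^u)$ (defined for $x,y\in\dom(u)$). The reduction fixes once and for all an $\ACz$-computable injective encoding of $\Omega\times\Omega$ into an initial segment $\{1,\dots,N\}$ of the naturals with $N\le 4\abs{\Omega}^2$~-- \eg pad $\Omega$ to size $2^k$ with $k=\ceil{\log_2\abs{\Omega}}$ and concatenate binary representations, so that decoding is bit extraction and encoding is a variable shift followed by an addition, all of constant depth. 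On input $(\Sigma,s,t)$ of $\dConj[PB]{\vG}$ it outputs the $\dEConj[PB]{\vG}$ instance with generators $\Sigma'=\set{u'}{u\in\Sigma}$ and idempotent elements $e_{\Gamma_s}$ and $e_{\Gamma_t}$. Since $u\mapsto u'$ and $s\mapsto e_{\Gamma_s}$ are simple re-codings of index lists, the whole map is computable in $\ACz$; moreover $u\mapsto u'$ is a homomorphism $\ISym(\Omega)\to\ISym(\Omega\times\Omega)$, so $\gen{\Sigma'}$ is the diagonal image of the group $U$, hence itself a finite group, and the output is a legal instance of $\dEConj[PB]{\vG}$.

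For correctness I would use two routine facts: first, $\ov g\,e_\Gamma\,g=e_{\Gamma^g}$ for any $g\in\ISym(X)$ and $\Gamma\sse X$; second, a direct computation showing $\Gamma_s^{u'}=\Gamma_{\ov u s u}$ for every $u\in U$ (using that each element of $U$ restricts to a bijection of the common domain $\Delta$ of the elements of $U$). Note the latter holds verbatim even though $u'$ is only a \emph{partial} bijection of $\Omega\times\Omega$ when $\Delta\neq\Omega$: the points of $\Gamma_s$ outside $\Delta\times\Delta$ are dropped by $u'$, but $\ov u s u$ likewise only ``sees'' the restriction $e_\Delta s e_\Delta$ of $s$, so the two sides agree. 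Combining these, for $u\in U$ we obtain $\ov{u'}e_{\Gamma_s}u'=e_{\Gamma_s^{u'}}=e_{\Gamma_{\ov u s u}}$ and $u'e_{\Gamma_t}\ov{u'}=e_{\Gamma_{u t\ov u}}$, so that the pair of equations $\ov{u'}e_{\Gamma_s}u'=e_{\Gamma_t}$ and $e_{\Gamma_s}=u'e_{\Gamma_t}\ov{u'}$ holds exactly when $\ov u s u=t$ and $s=ut\ov u$ both hold (a graph determines its partial bijection). Adding the remaining possibility, conjugation by the identity of $\ISym(\Omega\times\Omega)$ resp.\ of $\ISym(\Omega)$, which in both problems is equivalent to $s=t$, this yields $e_{\Gamma_s}\sim_{\gen{\Sigma'}}e_{\Gamma_t}$ if and only if $s\sim_U t$, i.e.\ $w\in\dConj[PB]{\vG}\iff f(w)\in\dEConj[PB]{\vG}$ for the reduction $f$ just described.

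I expect the only genuine obstacle to be the bookkeeping forced by the \emph{two-sided} conjugacy relation $\sim$ (see \cref{def:conjugacy}): one must check that $\ov{u'}e_{\Gamma_s}u'=e_{\Gamma_t}$ \emph{and} $e_{\Gamma_s}=u'e_{\Gamma_t}\ov{u'}$ together correspond to $\ov u s u=t$ \emph{and} $s=ut\ov u$ together, and in particular that the degenerate cases (for instance $\Sigma=\emptyset$, or $\dom(s)\not\sse\Delta$, or the conjugator being forced to be the identity) are absorbed into this correspondence rather than needing separate treatment. The remaining ingredients~-- the idempotent-moving identity $\ov g e_\Gamma g=e_{\Gamma^g}$, the fact that the diagonal action is a homomorphism sending groups to groups, and the constant-depth computability of the pairing $\Omega\times\Omega\hookrightarrow\N$~-- are routine.
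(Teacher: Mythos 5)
Your reduction is correct and is essentially the paper's own: encode $s$ and $t$ by the idempotents on their graphs in $\Omega\times\Omega$, let $U$ act diagonally, and observe that the two conjugacy equations for $e_{\Gamma_s},e_{\Gamma_t}$ under $u'$ correspond exactly to $\bar u s u = t$ and $s = u t \bar u$. The only (cosmetic) difference is that the paper first normalizes to the case where $U$ consists of total bijections of $\Omega$, mapping out-of-domain inputs to a default instance, whereas you keep the partial diagonal action and let the two-sided conditions absorb those cases directly~-- which works just as well.
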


\begin{proof}
	Let $U \leq \ISym(\Omega)$ be a group, and let $s,t \in \ISym(\Omega)$ be partial bijections.
	We may assume that $U$ consists of bijections of $\Omega$, for it consists of bijections on some $\Omega^U \subseteq \Omega$ and if $\mathrm{dom}(s) \cup \mathrm{dom}(t) \cup \mathrm{ran}(s) \cup \mathrm{ran}(t) \not\subseteq \Omega^U$, then $s \not\sim_U t$.
	Now let $\Delta_s, \Delta_t \subseteq \Omega\times\Omega$ be the graphs of $s$ and $t$, respectively.
	We claim that $\Delta^u_s = \Delta^{\mathstrut}_t$ for some $u \in U$ with respect to the diagonal action of $U$ on $\Omega \times \Omega$ if and only if $\bar u s u = t$ and $s = u t \bar u$.
	Indeed, a short calculation shows that $(x, x^s) \in \Delta_s$ and $(x, x^s)^u \in \Delta_t$ if and only if $x^s = x^{ut\bar u}$ with both sides defined.
	
	Provided that the input is suitably encoded, transforming an instance of the conjugacy problem to its corresponding instance of the idempotent conjugacy problem, as above, (or to a default instance if $\mathrm{dom}(s) \cup \mathrm{dom}(t) \cup \mathrm{ran}(s) \cup \mathrm{ran}(t) \not\subseteq \Omega^G$) is possible with $\ACz$-circuits.
\end{proof}

The above discussion can be summarized as follows.

\begin{proposition}\label{pro:bpm-groups-conjugacy}
  Both of the problems \dConj[PB]{\vG} and \dEConj[PB]{\vG} are polynomial-time equivalent to the conjugacy problem for permutation groups; hence, \GI-hard and in \NP{}.
\end{proposition}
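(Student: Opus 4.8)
The plan is to establish a cycle of polynomial-time reductions through four problems: the conjugacy problem for permutation groups, the set transporter problem, $\dConj[PB]{\vG}$, and $\dEConj[PB]{\vG}$. Two of the links are already in hand: the reduction $\dConj[PB]{\vG} \leq^{\ACz}_m \dEConj[PB]{\vG}$ established in the lemma above, and Luks's result \cite{Luks93} that the set transporter problem and the conjugacy problem for permutation groups are polynomial-time equivalent, with both \GI-hard and in \NP. It thus remains to supply the links ``conjugacy for permutation groups $\leq \dConj[PB]{\vG}$'' and ``$\dEConj[PB]{\vG} \leq$ set transporter problem'', which then closes the cycle.

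For the first of these I would simply observe that the conjugacy problem for permutation groups is the restriction of $\dConj[PB]{\vG}$ to instances in which $U$ as well as $s$ and $t$ are total permutations: in a group one has $u\bar u = \bar u u = 1$, so $\bar u s u = t$ already entails $s = u t \bar u$, and hence the relative conjugacy of \cref{def:conjugacy} collapses to ordinary conjugacy in that case. This yields an immediate (identity) reduction.

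For the second link I would make the identification hinted at in the text precise. Let $U = \gen{\Sigma} \leq \ISym(\Omega)$ be a group and let $s = e_{\Delta_s}$, $t = e_{\Delta_t}$ be idempotents. Since $U$ is a group, every element of $U$ has the same domain $\Omega^U$ (which equals $\dom(u)$ for any $u \in \Sigma$) and acts as a bijection of $\Omega^U$. From $\bar u\, e_{\Delta}\, u = e_{(\Delta \cap \Omega^U)^u}$ one reads off that $s \sim_U t$ forces $\Delta_s, \Delta_t \sse \Omega^U$ and is then equivalent to the existence of $u \in U$ with $\Delta_s^{\,u} = \Delta_t$, the identity $s = u t \bar u$ being automatic because $u$ permutes $\Omega^U$. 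Hence the reduction computes $\Omega^U$, checks the two containments, and outputs the set transporter instance given by $U$ (viewed as a permutation group on $\Omega^U$) together with $\Delta_s$ and $\Delta_t$; if a containment fails it outputs a fixed negative instance. All of this is doable in \ACz, in particular in polynomial time.

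Closing the cycle shows that the four problems are polynomial-time equivalent, whence \GI-hardness of all of them follows from \GI-hardness of conjugacy for permutation groups \cite{Luks93}. Membership in \NP can be read off directly: a conjugating element of $U$ can be guessed as a straight-line program of length $\bigO(\log^2 \abs{U})$ over $\Sigma$ by \cref{lem:groups-slp}, evaluated in polynomial time, and tested against the two identities defining $\sim_U$. I expect the step $\dEConj[PB]{\vG} \leq$ set transporter to be the only delicate one, since the ambient group $U$ acts only on its support $\Omega^U$; the reduction must therefore first pass to $\Omega^U$ and correctly reject instances in which $s$ or $t$ involves points outside $\Omega^U$, where no conjugator can exist.
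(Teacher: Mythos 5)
Your route is essentially the paper's: the paper also obtains the result by combining (i) permutation-group conjugacy being the total-permutation special case of \dConj[PB]{\vG}, (ii) the preceding lemma \dConj[PB]{\vG} $\leq^{\ACz}_m$ \dEConj[PB]{\vG}, (iii) the identification of \dEConj[PB]{\vG} with the set transporter problem, (iv) Luks's equivalence and \GI-hardness, and (v) \NP{} membership via polylogarithmic SLPs; you merely spell out link (iii), which the paper states as ``precisely'' without the restriction to $\Omega^U$.

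One small wrinkle in your (and, in the same form, the paper's) treatment of that link: it is not true that $s \sim_U t$ forces $\Delta_s, \Delta_t \sse \Omega^U$, because the conjugating element ranges over $U^1$, and $1$ here is the identity of $\ISym(\Omega)$, i.e.\ the total identity on $\Omega$. Hence $e_{\Delta} \sim_U e_{\Delta}$ holds for \emph{every} $\Delta \sse \Omega$, so an instance with $\Delta_s = \Delta_t \not\sse \Omega^U$ is a yes-instance, while your reduction maps it to a fixed negative instance. The fix is trivial: first test whether $\Delta_s = \Delta_t$ (output a fixed positive instance in that case), and only then apply your containment test and pass to the transporter instance on $\Omega^U$; with that guard the argument, and the resulting cycle of polynomial-time reductions, is correct.
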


Notice that the situation is different for the restricted variants \dConjS[PB]{\vG} and \dEConjS[PB]{\vG}.
As conjugacy in the symmetric group $S_n$ can be tested in \L, \dConj[PB]{\vG} can be reduced to \dConjS[PB]{\vG}; thus, the latter is as difficult as the general case.
On the other hand, the restricted problem variant \dEConjS[PB]{\vG} is trivial as every group only contains a single idempotent.

\section{Membership and Conjugacy in Clifford Semigroups}\label{sec:clifford}

We now turn to Clifford semigroups, which constitute the smallest variety of finite inverse semigroups $\vCl$ to properly contain the variety of finite groups $\vG$.
Our goal is to show that the membership and conjugacy problems for Clifford semigroups are essentially as complex as the corresponding problems for groups.
To this end, let us first recall the following well known characterization of Clifford semigroups (see, \eg \cite{Petrich84,Law99}).

\begin{lemma}\label{lem:characterization-clifford}
  Let $S$ be a finite inverse semigroup.
  Then the following are equivalent.
  \begin{enumerate}
    \item The inverse semigroup $S$ is contained in $\vCl = \vId{x\bar x = \bar x x}$.
    \item The Brandt semigroup $B_2$ does not divide $S$.
  \end{enumerate}
\end{lemma}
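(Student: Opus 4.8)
The plan is to prove the two implications separately. For $(1)\Rightarrow(2)$: since $\vCl = \vId{x\bar x = \bar x x}$ is a variety of finite inverse semigroups it is closed under divisors, while $B_2$ violates the identity $x\bar x = \bar x x$ (take $x = s$, so that $s\bar s$ and $\bar s s$ are the two distinct non-zero idempotents of $B_2$); hence $B_2 \notin \vCl$, and therefore $B_2 \not\preccurlyeq S$ whenever $S \in \vCl$.

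The content lies in the contrapositive of $(2)\Rightarrow(1)$: if $S$ fails $x\bar x = \bar x x$, then $B_2 \preccurlyeq S$. Fix $x \in S$ with $e := x\bar x \ne \bar x x =: f$; these are idempotents lying in the common $\gD$-class $D$ of $x$ (indeed $e \gR x \gL f$). Put $I = \{\, s \in S \mid s \gJlt x \,\}$, an ideal of $S$, and $T = S^1 x S^1 \supsetneq I$; the latter is an inverse subsemigroup of $S$, since principal ideals of inverse semigroups are closed under inversion. The Rees quotient $\pi\colon T \to T/I$ is an inverse-semigroup homomorphism, injective on $D = T\setminus I$, and it sends $e,f,x,\bar x$ to non-zero elements. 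Write $a = \pi(x)$, $a' = \bar a = \pi(\bar x)$, $p = \pi(e) = aa'$, $q = \pi(f) = a'a$, and let $0$ denote the zero of $T/I$; then $p \ne q$ and $0,a,a',p,q$ are pairwise distinct.

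The key step is to show that $\{0,a,a',p,q\}$ is a sub-inverse-semigroup of $T/I$ isomorphic to $B_2$, via $a\mapsto s$, $a'\mapsto\bar s$, $p\mapsto s\bar s$, $q\mapsto\bar s s$, $0\mapsto 0$; this gives $B_2 \preccurlyeq T/I \preccurlyeq T \preccurlyeq S$ and finishes the proof. It suffices to establish the three relations $a^2 = 0$, $(a')^2 = 0$, and $pq = qp = 0$: granting them, every remaining product is forced by the inverse-semigroup axioms $aa'a = a$ and $a'aa' = a'$ (for instance $pa = aa'a = a$, $ap = a^2 a' = 0$, $p^2 = aa'aa' = p$), and the whole multiplication table then coincides with that of $B_2$. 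To obtain $pq = 0$ I would first record the standard fact that in a finite semigroup two $\gJ$-related idempotents comparable in the natural order are equal (by stability: $g \le h$ gives $g \gLle h$ and $g \gRle h$, so $g \gJ h$ forces $g \gH h$, hence $g = h$ since $\gH$-classes of idempotents are groups). Now $ef = fe \le e$ and $ef \le f$; were $ef \in D$, this fact would give $ef = e$ and $ef = f$, so $e = f$, a contradiction. Hence $ef \in I$, so $\pi(ef) = 0$. Next, from $x = ex = xf$ we get $x^2 = (xf)(ex) = x(fe)x \in I$ because $I$ is an ideal, so $a^2 = \pi(x^2) = 0$; and $\overline{x^2} = \bar x\,\bar x \in I$ likewise (again $I$ is an ideal, closed under inversion), so $(a')^2 = 0$.

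The only non-routine point — and the expected main obstacle — is the passage from ``$e \ne f$ in $S$'' to ``$p$ and $q$ are orthogonal idempotents in $T/I$'', \ie the verification that $ef \in I$; the rest is bookkeeping with Green's relations and the axioms. Conceptually this is just the Rees structure theory: $T/I$ is the principal factor of the $\gJ$-class $D$, which is non-minimal since it contains two idempotents, hence $T/I$ is completely $0$-simple and, being inverse, a Brandt semigroup $B(G,n)$ with $n \ge 2$, inside which $B_2$ visibly embeds. I would present the elementary computation above while pointing out this structural reason.
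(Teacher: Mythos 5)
Your proof is correct. Note that the paper does not prove this lemma at all: it is quoted as a well-known characterization with a pointer to the standard references \cite{Petrich84,Law99}, so there is no in-paper argument to compare against. Your argument is the standard one behind those references: pass to the principal factor of the $\gJ$-class of a witness $x$ with $x\bar x \neq \bar x x$, observe that the product of the two idempotents $e = x\bar x$ and $f = \bar x x$ falls strictly $\gJ$-below $x$ (via stability: comparable $\gJ$-related idempotents coincide — this is also extractable from the paper's Lemma~\ref{lem:green-semi_rel}), and read off a copy of $B_2$ on $\{0,\pi(x),\pi(\bar x),\pi(e),\pi(f)\}$; the converse direction is immediate since $\vCl$ is closed under division and $B_2$ violates the defining identity. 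Two small presentational points, neither a gap: you should record that $I \neq \emptyset$ (which your computation $ef \in I$ supplies) before invoking the Rees quotient $T/I$, and you should spell out why the five elements $0, a, a', p, q$ are pairwise distinct (if $x$ coincided with $\bar x$, $e$, or $f$, then $x$ would be idempotent or forced to satisfy $e=f$, contradicting $e \neq f$; injectivity of $\pi$ on $D$ then separates the images). With those two sentences added, the verification of the $B_2$-table from $a\bar a a = a$, $\bar a a\bar a = \bar a$, $a^2 = \bar a^2 = 0$, $pq = qp = 0$ is complete as you indicate (in fact $pq = qp = 0$ already follows from $a^2 = \bar a^2 = 0$, so your list of relations is even slightly redundant).
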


The following simple observations regarding the structure of the idempotents $E(S)$ of a Clifford semigroup $S$ are also of crucial importance for our discussion.

\begin{lemma}\label{lem:idempotent_nesting-clifford}
	Let $S \in \vCl$.
	If $s_1, \dotsc, s_n \in S$, then 
	\[
	s_1 s_2 \dotsc s_n \bar s_n \dotsc \bar s_2 \bar s_1 = s_1 \bar s_1 s_2 \bar s_2 \dotsc s_n \bar s_n.
	\]
\end{lemma}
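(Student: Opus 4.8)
The plan is to argue by induction on $n$, relying on two standard facts. First, in any inverse semigroup the idempotents commute, so $E(S)$ is a subsemilattice and in particular any product of idempotents is again idempotent. Second, in a Clifford semigroup the idempotents are \emph{central}, i.e.\ $es = se$ for all $e \in E(S)$ and $s \in S$ (see, e.g., \cite{Petrich84,Law99}); if one wants a self-contained account, this can be derived from the defining identity $x\bar x = \bar x x$ by a short computation before starting the induction.

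The base case $n = 1$ is immediate, since both sides equal $s_1\bar s_1$. For the inductive step, put $Q := s_2 \bar s_2 \dotsc s_n \bar s_n$, which is a product of idempotents and hence itself idempotent. Applying the induction hypothesis to $s_2, \dotsc, s_n$ gives $s_2 \dotsc s_n \bar s_n \dotsc \bar s_2 = Q$, so
\[
  s_1 s_2 \dotsc s_n \bar s_n \dotsc \bar s_2 \bar s_1
  = s_1 \, \bigl(s_2 \dotsc s_n \bar s_n \dotsc \bar s_2\bigr) \, \bar s_1
  = s_1 Q \bar s_1 .
\]
Since $Q \in E(S)$ and $S \in \vCl$, the idempotent $Q$ is central, hence commutes with $\bar s_1$, and therefore $s_1 Q \bar s_1 = s_1 \bar s_1 Q = s_1 \bar s_1 s_2 \bar s_2 \dotsc s_n \bar s_n$, which is exactly the right-hand side.

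There is no real obstacle here: the computation is entirely routine, and the only place the hypothesis $S \in \vCl$ is used is the centrality of idempotents. This use is essential, as the claimed identity already fails for $n = 2$ in a general inverse semigroup (for instance in $B_2$, where $s(\bar s s)\bar s = s\bar s$ but $s\bar s\cdot\bar s s = 0$); so any proof must invoke some feature equivalent to commutativity of idempotents with arbitrary elements.
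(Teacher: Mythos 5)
Your proof is correct, but it takes a genuinely different route from the paper's. You argue by induction on $n$, reducing the inductive step to the centrality of the idempotent $Q = s_2\bar s_2 \dotsm s_n\bar s_n$, where centrality of idempotents is the standard alternative characterization of Clifford semigroups (and, as you note, can be derived from the identity $x\bar x = \bar x x$ together with commutativity of idempotents). The paper instead exploits that, for each fixed $n$, the claimed equation is an inverse-semigroup identity: it checks it separately in groups (both sides are idempotent, hence equal the neutral element) and in semilattices (where $S = E(S)$ and idempotents commute), and then invokes $\vCl = \vG \vee \vSl$ to conclude it holds throughout $\vCl$, since identities are inherited by the join. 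Your argument buys self-containedness at the element level and in fact proves the identity in any inverse semigroup with central idempotents, finite or not, without appealing to the lattice-theoretic fact $\vCl = \vG \vee \vSl$; the paper's argument is shorter and fits its general pattern of verifying identities on the generating subvarieties (the same template is reused for \cref{lem:gb-idempotent_nesting}), at the cost of leaning on \cref{pro:varieties}. Your closing observation that some such hypothesis is indispensable (the identity already fails in $B_2$ for $n=2$) is accurate and consistent with the paper's use of $B_2$ as the obstruction to membership in $\vCl$.
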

\begin{proof}
	The identity holds for all $S \in \vG$, since both sides are idempotent, and for all $S \in \vSl$, since $S \in \vSl$ implies $S = E(S)$ and idempotents in an inverse semigroup commute.
	Hence, the identity also holds for all $S \in \vG \vee \vSl = \vCl$.
\end{proof}

\begin{lemma}\label{lem:idempotent_generators-clifford}
	Let $S \in \vCl$.
	If $S = \langle s_1, \dotsc, s_k \rangle$, then $E(S) = \langle s_1 \bar s_1, \dotsc, s_k \bar s_k \rangle$.
\end{lemma}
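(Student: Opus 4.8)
The plan is to prove the two inclusions separately. The inclusion $\langle s_1\bar s_1, \dotsc, s_k\bar s_k\rangle \subseteq E(S)$ is immediate: each $s_i\bar s_i$ is an idempotent, hence its own inverse, and $E(S)$ is an inverse subsemigroup of $S$ (in fact a subsemilattice, since idempotents of an inverse semigroup commute and therefore multiply to idempotents). So the inverse subsemigroup generated by the $s_i\bar s_i$ is contained in $E(S)$. Note in particular that, for this family of idempotent generators, the inverse subsemigroup generated coincides with the subsemigroup generated.

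For the reverse inclusion $E(S) \subseteq \langle s_1\bar s_1, \dotsc, s_k\bar s_k\rangle$, the idea is to take an arbitrary $e \in E(S)$, use that $\bar e = e$ (so $e = e\bar e$), write $e$ as a word in the generators and their inverses, and then apply \cref{lem:idempotent_nesting-clifford}. Concretely: since $S = \langle s_1, \dotsc, s_k\rangle$, we may write $e = t_1 t_2 \dotsc t_n$ with each $t_j \in \{s_1, \dotsc, s_k, \bar s_1, \dotsc, \bar s_k\}$, and then
\[
  e = e\bar e = t_1 t_2 \dotsc t_n \bar t_n \dotsc \bar t_2 \bar t_1 = t_1\bar t_1\, t_2\bar t_2 \dotsc t_n\bar t_n
\]
by \cref{lem:idempotent_nesting-clifford}. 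For each $j$, if $t_j = s_i$ then $t_j\bar t_j = s_i\bar s_i$, while if $t_j = \bar s_i$ then $t_j\bar t_j = \bar s_i s_i = s_i\bar s_i$ by the defining identity $x\bar x = \bar x x$ of $\vCl$. In either case $t_j\bar t_j \in \{s_1\bar s_1, \dotsc, s_k\bar s_k\}$, so $e$ is a product of these elements and lies in $\langle s_1\bar s_1, \dotsc, s_k\bar s_k\rangle$.

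I do not expect a genuine obstacle here: the statement reduces to \cref{lem:idempotent_nesting-clifford} once one observes $\bar e = e$ for idempotents and that $\bar s_i s_i = s_i\bar s_i$ in a Clifford semigroup. The only points needing a little care are the bookkeeping that passing from a generator to its inverse does not change the associated idempotent, that $E(S)$ is closed under products (subsemilattice), and the harmless remark that it is irrelevant whether the family $s_1, \dotsc, s_k$ is assumed closed under inverses, since $\langle \Sigma\rangle = \langle \Sigma \cup \bar\Sigma\rangle$ and every element of $\langle\Sigma\rangle$ is a word over $\Sigma \cup \bar\Sigma$.
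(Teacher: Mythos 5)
Your proof is correct and follows essentially the same route as the paper: write an idempotent $e$ as a word in the generators, use $e = e\bar e$ together with \cref{lem:idempotent_nesting-clifford} to turn it into a product of the idempotents $s_i\bar s_i$, the easy inclusion being immediate since idempotents commute. The only difference is that you spell out the bookkeeping for inverse generators via $\bar s_i s_i = s_i\bar s_i$, which the paper leaves implicit under its standing convention that generating sets are inverse-closed.
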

\begin{proof}
	Suppose that $S = \langle s_1, \dotsc, s_k \rangle$ and let $e \in E(S)$.
	Then, by \cref{lem:idempotent_nesting-clifford},  
	\[
	e = s_{i_1} \dotsc s_{i_n} = s_{i_1} \dotsc s_{i_n} \bar s_{i_n} \dotsc \bar s_{i_1} = s_{i_1} \bar s_{i_1} \dotsc s_{i_n} \bar s_{i_n} \in \langle s_1 \bar s_1, \dotsc, s_k \bar s_k \rangle.\qedhere
	\]
\end{proof}

\subsection{The Cayley Table Model}\label{sec:Clifford-CT}

As is the case for groups, Clifford semigroups afford succinct representations of their elements. 

\begin{lemma}\label{lem:semilattice-slp}
	Every finite semilattice $E$ admits SLPs of length $\mathcal{O}(\log {}\lvert E \rvert)$.
\end{lemma}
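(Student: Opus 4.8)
The plan is to produce, for every element $e$ of a finite semilattice $E = \langle \Sigma \rangle$, a short word over $\Sigma$ that evaluates to $e$, and then to read off the corresponding SLP. Since $E$ is commutative and consists of idempotents, the product $\prod_{a \in A} a$ of a finite non-empty set $A \sse E$ is independent of the order and multiplicity of its factors, and with respect to the natural order one has $ab \le b$ (and $ab \le a$) for all $a,b \in E$; hence $\prod_{a \in A} a \le b$ whenever $b \in A$, and more factors give a smaller product. Because $e$ lies in the subsemilattice generated by $\Sigma$, we have $e = \prod_{a \in A} a$ for some finite non-empty $A \sse \Sigma$ (take the set of generators occurring in any word over $\Sigma$ evaluating to $e$). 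Starting from such an $A$ and discarding redundant generators one at a time, I would obtain a non-empty set $T = \{ s_1, \dotsc, s_m \} \sse \Sigma$ that is \emph{minimal} with $\prod_{i=1}^m s_i = e$, i.e.\ $\prod_{i \neq k} s_i \neq e$ for every $k$.

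The core of the argument is the claim that $m \le \log_2(\abs{E}+1)$. I would prove this by showing that the map $S \mapsto \prod_{i \in S} s_i$ is injective on the non-empty subsets $S \sse \{ 1, \dotsc, m \}$; as all $2^m - 1$ of these products lie in $E$, the bound $2^m - 1 \le \abs{E}$ follows. For the injectivity, suppose $\prod_{i \in S} s_i = \prod_{i \in S'} s_i$ for distinct non-empty $S, S'$, and pick $k \in S' \setminus S$ (swapping $S$ and $S'$ if necessary). Then $\prod_{i \in S} s_i \le s_k$, and since $k \notin S$ we get $\prod_{i \neq k} s_i \le \prod_{i \in S} s_i \le s_k$; combining this with $\prod_{i \neq k} s_i \le s_j$ for all $j \neq k$ yields $\prod_{i \neq k} s_i \le \prod_{i=1}^m s_i = e$. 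As the reverse inequality $e \le \prod_{i \neq k} s_i$ holds automatically, we conclude $\prod_{i \neq k} s_i = e$, contradicting the minimality of $T$.

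Finally, the SLP is assembled as $(s_1, \dotsc, s_m, \ s_1 s_2, \ (s_1 s_2) s_3, \ \dotsc, \ (\dotsb(s_1 s_2)\dotsb) s_m)$: the first $m$ entries are generators from $\Sigma$, and the remaining $m-1$ entries form the running product, whose final value is $\prod_{i=1}^m s_i = e$. This is a valid SLP over $\Sigma$ of length $2m - 1 \le 2\log_2(\abs{E}+1) - 1 = \bigO(\log \abs{E})$ computing $e$. (In the degenerate case $m = 1$, the element $e$ is itself a generator and the SLP has length $1$.)

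I expect the only genuinely non-trivial point to be the injectivity claim in the middle paragraph; once that doubling-style counting argument is in place, everything else is routine bookkeeping, and I would keep the reduction to a minimal generating subset and the final SLP construction brief.
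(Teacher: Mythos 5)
Your proof is correct and follows essentially the same route as the paper: take a representation of $e$ as a product of generators that is minimal (the paper uses a minimal-length word, you use a subset minimal under single-element removal), show the subset products are pairwise distinct to get $2^m-1\le\lvert E\rvert$, and read off an SLP of length $\bigO(\log\lvert E\rvert)$ from the running products. The differences (non-empty subsets versus subsets in $E^1$, and your explicit SLP assembly) are only cosmetic.
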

\begin{proof}
	Let $E$ be a finite semilattice generated by $\Sigma \subseteq E$.
	Given $e \in E$, let $e_1, \dotsc, e_n \in \Sigma$ such that $e = e_1 \dotsc e_n$ and $n$ is minimal with this property.
	The elements $e_I = \prod_{i \in I} e_i \in E^1$ with $I \subseteq \{1, \dotsc, n\}$ are pairwise distinct.
	Indeed, if $e_I = e_J$ for some $I \neq J$ and $i \in I \setminus J$, say, then $e = e_1 \dotsc e_{i-1} e_{i+1} \dotsc e_n$ which contradicts the minimality of $n$.
	Hence, $\lvert E \rvert \geq 2^n - 1$.
\end{proof}
\ifAnonimous For the following result, see also \cite[Lemma 4.10]{Fleischer19diss}.
\else The following result is also part of the first author's dissertation \cite[Lemma 4.10]{Fleischer19diss}.
\fi
\begin{lemma}\label{lem:clifford-slp}
	Every finite Clifford semigroup $S$ admits SLPs of length $\mathcal{O}(\log^2 \abs{S})$.
\end{lemma}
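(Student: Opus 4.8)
The plan is to exploit the well-known structure theory of Clifford semigroups: a finite Clifford semigroup $S$ is a \emph{semilattice of groups}, meaning $S = \bigsqcup_{e \in E(S)} G_e$ where $E(S)$ is a semilattice and each $G_e$ is the maximal subgroup (the $\gH$-class) at the idempotent $e$, with multiplication $G_e \cdot G_f \subseteq G_{ef}$ realized via the structure homomorphisms $\varphi_{e,f} \colon G_e \to G_{ef}$ for $e \geq ef$. In particular $\abs{S} = \sum_{e \in E(S)} \abs{G_e}$, so both $\abs{E(S)} \le \abs{S}$ and $\abs{G_e} \le \abs{S}$ for every $e$.

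First I would fix a generating set $\Sigma \subseteq S$ (closed under inverses, by convention) and a target element $s \in G_e$ for some idempotent $e \in E(S)$. By \cref{lem:idempotent_generators-clifford}, $E(S) = \gen{\sigma \bar\sigma \mid \sigma \in \Sigma}$, so by \cref{lem:semilattice-slp} there is an SLP of length $\mathcal{O}(\log \abs{E(S)}) = \mathcal{O}(\log \abs{S})$ over $\Sigma$ computing $e$; indeed each generator $e_i$ of that semilattice SLP is itself $\sigma_i \bar\sigma_i$, costing one extra step per generator, so the total length is still $\mathcal{O}(\log \abs{S})$. This gives us, as a prefix of the SLP we are building, a word computing the idempotent $e$.

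The next step is to produce $s$ itself. Consider the group $G_e$. I claim $G_e$ is generated by the set $\Sigma_e = \set{e\sigma e}{\sigma \in \Sigma} \cap G_e$ (equivalently, the elements $e \sigma e$ ranging over those $\sigma$ with $e \le \sigma\bar\sigma$, i.e.\ those generators whose idempotent lies above $e$). This holds because any element of $G_e$ is a product $\sigma_{i_1}\cdots\sigma_{i_n}$ whose idempotent is $e$, forcing $e \le \sigma_{i_j}\bar\sigma_{i_j}$ for each $j$ (since the idempotent of the product is the product of the idempotents, which lies below each factor's idempotent), and then $\sigma_{i_1}\cdots\sigma_{i_n} = e\sigma_{i_1}e \cdot e\sigma_{i_2} e \cdots e \sigma_{i_n} e$ because $e$ is a two-sided identity on $G_e$ and absorbs through the nested products in a Clifford semigroup — this is the place where \cref{lem:idempotent_nesting-clifford} and the commuting of idempotents get used. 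Each element $e\sigma e$ of $\Sigma_e$ can be computed from the SLP prefix for $e$ plus two multiplications, so building all of $\Sigma_e$ adds $\mathcal{O}(\abs{\Sigma}) = \mathcal{O}(\log\abs{S})$ further steps (the generating set has at most $\abs{\Sigma} \le \abs{S}$ many elements, but in fact we only need the relevant ones; in any case this is bounded by a polynomial in $\log\abs{S}$ — if $\abs{\Sigma}$ is large compared to $\log\abs S$ we first prune to a generating set of size $\mathcal{O}(\log\abs{S})$, which exists since $\abs{G_e}\le\abs S$). Then apply the Reachability Lemma \cref{lem:groups-slp} to the group $G_e$ with generating set $\Sigma_e$: the target $s\in G_e$ is computed by an SLP over $\Sigma_e$ of length $\mathcal{O}(\log^2\abs{G_e}) = \mathcal{O}(\log^2\abs{S})$. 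Splicing the three pieces together — the $\mathcal{O}(\log\abs S)$-length SLP for $e$, the $\mathcal{O}(\log\abs S)$-many steps computing the $e\sigma e$, and the $\mathcal{O}(\log^2\abs S)$-length group SLP — yields an SLP over $\Sigma$ computing $s$ of total length $\mathcal{O}(\log^2\abs{S})$.

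The main obstacle I anticipate is bookkeeping around which generators to use: we must argue cleanly that $G_e$ is generated by exactly those $e\sigma e$ with $e$ below the idempotent of $\sigma$, and that the nested-product identity of \cref{lem:idempotent_nesting-clifford} really lets us insert the idempotent $e$ between consecutive factors without changing the value. A subtlety is that $e\sigma e$ need not lie in $G_e$ for every $\sigma$ (only when $e \le \sigma\bar\sigma$), so the intersection with $G_e$ in the definition of $\Sigma_e$ is essential; but membership $e\sigma e \in G_e$ is automatic precisely when $e\bar\sigma\sigma e \cdot e\sigma\bar\sigma e = e$, i.e.\ when $e \le \sigma\bar\sigma$, so this is a clean combinatorial condition. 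Once these structural facts are in place the length bound is immediate by summing the three contributions. Everything else — that $E(S)$ is generated by the $\sigma\bar\sigma$, that semilattices have logarithmic SLPs, that groups have $\mathcal{O}(\log^2)$ SLPs — is already available from \cref{lem:idempotent_generators-clifford}, \cref{lem:semilattice-slp}, and \cref{lem:groups-slp}.
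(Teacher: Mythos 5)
Your proposal is correct and follows essentially the same route as the paper's proof: compute the idempotent $e = s\bar s$ via the semilattice SLP of \cref{lem:semilattice-slp} over the generators $\sigma\bar\sigma$ (using \cref{lem:idempotent_generators-clifford}), then apply the Reachability Lemma \cref{lem:groups-slp} to the maximal subgroup at $e$ with generating set $\{e\sigma \mid \sigma \in \Sigma,\ e \leq \sigma\bar\sigma\}$, which is exactly the paper's $\Sigma'$ since idempotents are central, so $e\sigma e = e\sigma$. Your detour about pruning $\Sigma_e$ is harmless but unnecessary: the group SLP references only $\mathcal{O}(\log^2\abs{G_e})$ of its generators, and each of these is obtained from the already-computed $e$ by a single multiplication.
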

\begin{proof}
	Let $S \in \vCl$ be generated by $\Sigma \subseteq S$ and let $t \in S$.
	The element $t \bar t \in E(S)$ can be computed by an SLP of length $\mathcal{O}(\log {} \lvert E(S)\rvert)$ over $\{ s \bar s \mid s \in \Sigma \} \subseteq E(S)$ by \cref{lem:semilattice-slp}.
	Moreover, $t$ is contained in the subgroup $S' = \{ s \in S \mid s \bar s = t \bar t \, \} \leq S$, which is generated by the set $\Sigma' = \{ t \bar t s \in S \mid s \in \Sigma, s \bar s \geq t \bar t\, \} \subseteq S'$.
	Therefore, $t$ can be computed by an SLP of length $\mathcal{O}(\log^2 \lvert S' \rvert)$ over $\Sigma'$ by \cref{lem:groups-slp}.
	Combining these two observations, we see that $t$ can be computed by an SLP of length $\mathcal{O}(\log {} \lvert E(S) \rvert + \log^2 {}\lvert S' \rvert) \subseteq \mathcal{O}(\log^2 \lvert S \rvert)$ over $\Sigma$.
\end{proof}

The following is an immediate consequence (for membership see also \cite{Fleischer19diss}).

\begin{proposition}\label{pro:ctm-clifford}
  The problems \dMemb[CT]{\vCl} and \dConj[CT]{\vCl} are in \NPOLYLOGTIME{}.
\end{proposition}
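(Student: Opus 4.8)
The plan is to reduce $\dMemb[CT]{\vCl}$ and $\dConj[CT]{\vCl}$ to the corresponding problems for groups, or directly to an SLP guess-and-check, exactly as was done in the group case (\cref{pro:ctm-groups}). The key structural input is \cref{lem:clifford-slp}, which tells us that the variety $\vCl$ of finite Clifford semigroups admits polylogarithmic SLPs (of length $\bigO(\log^2\abs{S})$). Combined with \cref{lem:SLP-npolylogtime}, this immediately yields that $\dMemb[CT]{\vCl}$ is in $\NPOLYLOGTIME$: on input a Cayley table of $S$, generators $\Sigma$ with $\gen{\Sigma} = U \in \vCl$, and a target $t$, a non-deterministic random-access machine guesses an SLP over $\Sigma$ of length $\bigO(\log^2\abs{S})$ and verifies in polylogarithmic time that it computes $t$, using random access to the Cayley table to perform each multiplication.

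For the conjugacy problem, I would reduce to the membership problem just as in the proof of \cref{pro:ctm-groups}. Recall that $s \sim_U t$ means there is some $u \in U^1$ with $\bar u s u = t$ and $s = u t \bar u$. The machine first checks whether $s, t \in S$ are such that a conjugator could exist; then it non-deterministically guesses an SLP over $\Sigma$ of polylogarithmic length computing a candidate conjugator $u$ (guessing also whether $u = 1$), computes $\bar u$ (which is $\bigO(1)$ extra SLP steps or can be read off), and verifies the two equalities $\bar u s u = t$ and $s = u t \bar u$ by performing a constant number of multiplications in the Cayley table. Each such check is doable in polylogarithmic time with random access, so the whole procedure runs in $\NPOLYLOGTIME$. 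Since the existence of a short SLP for $u$ is guaranteed by \cref{lem:clifford-slp} whenever a conjugator exists (note $u \in U = \gen{\Sigma}$, and $U$ is a Clifford semigroup), this correctly decides $\dConj[CT]{\vCl}$.

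Strictly speaking there is no real obstacle here: the statement is labelled as an immediate consequence of \cref{lem:clifford-slp}, and the only thing to spell out is the routine guess-and-check argument, which for membership is literally \cref{lem:SLP-npolylogtime} and for conjugacy is the same small extension used in \cref{pro:ctm-groups}. The one point that deserves a line of care is that for conjugacy we are guessing an SLP for an element $u$ of the \emph{subsemigroup} $U = \gen{\Sigma}$, so the relevant Reachability-type bound must be applied to $U$ (a Clifford semigroup, hence covered by \cref{lem:clifford-slp}) rather than to the possibly non-Clifford ambient $S$; the SLP is over $\Sigma$ and evaluates inside $U \leq S$, so all arithmetic is still performed in the given Cayley table of $S$. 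Thus the proof is essentially a two-sentence appeal to \cref{lem:SLP-npolylogtime}, \cref{lem:clifford-slp}, and the reduction of conjugacy to membership.
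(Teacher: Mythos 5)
Your proposal is correct and follows essentially the same route as the paper, which states the proposition as an immediate consequence of \cref{lem:clifford-slp} together with \cref{lem:SLP-npolylogtime} for membership, and handles conjugacy exactly as in \cref{pro:ctm-groups} by guessing a conjugating element from $U$ and verifying the two defining equalities with a constant number of Cayley-table lookups. Your remark that the polylogarithmic SLP bound must be applied to $U=\gen{\Sigma}\in\vCl$ rather than to the ambient $S$ is the right point of care, and your verification of the inverse can indeed be done by guessing its index and checking the two defining identities in the table.
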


\subsection{The Partial Bijection Model}\label{sub:clifford-pb}

Even in the partial bijection model, the complexity of the membership and conjugacy problems for Clifford semigroups is essentially equivalent to those for groups.

\begin{lemma}\label{lem:clifford-idempotent_cover}
  Let $U \in \vCl$ be generated by $\Sigma \sse \ISym(\Omega)$.
	On input $e \in E(\ISym(\Omega))$ and $\Sigma$, the minimal idempotent $\hat e \in E(U) \cup \{ 1 \}$ such that $\hat e \geq e$ can be computed in $\ACz$.
\end{lemma}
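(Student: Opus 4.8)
The plan is to produce an explicit $\ACz$ description of $\hat e$ directly from $e$ and $\Sigma$, using the known structure of $E(U)$ for $U \in \vCl$. Recall that the idempotents of $\ISym(\Omega)$ are exactly the partial identities $e_\Delta$ ($\Delta \sse \Omega$), that $e_{\Delta'} \leq e_{\Delta''}$ iff $\Delta' \sse \Delta''$, and that $e_{\Delta'} e_{\Delta''} = e_{\Delta' \cap \Delta''}$; write $\Delta_e = \dom(e)$, so $e = e_{\Delta_e}$. Since $U = \gen\Sigma \in \vCl$, \cref{lem:idempotent_generators-clifford} yields $E(U) = \gen{\,s\bar s \mid s \in \Sigma\,}$, and because the generators $s\bar s = e_{\dom(s)}$ are commuting idempotents this subsemilattice is simply
\[
  E(U) = \bigl\{\, e_{\Delta} \;\bigm|\; \Delta = \textstyle\bigcap_{s \in T}\dom(s) \ \text{ for some nonempty } T \sse \Sigma \,\bigr\}.
\]

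Next I would set $\hat\Delta := \bigcap\{\,\dom(s) \mid s \in \Sigma,\ \Delta_e \sse \dom(s)\,\}$ (an empty intersection being $\Omega$ by convention) and $\hat e := e_{\hat\Delta}$, and check that $\hat e$ is the least element of $F := \{\,f \in E(U) \cup \{1\} \mid f \geq e\,\}$. That $\hat e \in E(U) \cup \{1\}$ is immediate: if no generator domain contains $\Delta_e$ the intersection is $\Omega$ and $\hat e = e_\Omega = 1$, and otherwise $\hat e$ is a meet of some of the generators $s\bar s$, hence in $E(U)$. That $\hat e \geq e$ holds because $\Delta_e$ is contained in every $\dom(s)$ entering the intersection, so $\Delta_e \sse \hat\Delta$. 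For minimality, take any $f \in F$: if $f = 1$ there is nothing to show, and otherwise $f = e_{\Delta_f}$ with $\Delta_f = \bigcap_{s \in T}\dom(s)$ for a nonempty $T$; from $e \leq f$ we get $\Delta_e \sse \Delta_f \sse \dom(s)$ for each $s \in T$, so each such $s$ occurs in the intersection defining $\hat\Delta$, whence $\hat\Delta \sse \dom(s)$ for all $s \in T$ and thus $\hat\Delta \sse \Delta_f$, \ie $\hat e \leq f$. Hence $\hat e = \min F$ is exactly the minimal idempotent of $E(U) \cup \{1\}$ lying above $e$.

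Finally I would verify the $\ACz$ bound, which is the routine part. For each $x \in \Omega$ one has $x \in \hat\Delta$ iff $\bigwedge_{s \in \Sigma}\bigl(\Delta_e \sse \dom(s) \Rightarrow x \in \dom(s)\bigr)$, where in turn $\Delta_e \sse \dom(s)$ iff $\bigwedge_{z \in \Omega}\bigl(z \in \Delta_e \Rightarrow z \in \dom(s)\bigr)$; the atomic predicates $z \in \dom(s)$ and $z \in \Delta_e = \dom(e)$ are read off the input encodings, and the nested bounded conjunctions have polynomial size and constant depth. This also handles the corner case with no suitable generator (the outer conjunction becomes vacuously true, giving $\hat\Delta = \Omega$ and $\hat e = 1$), and emitting $\hat e = e_{\hat\Delta}$ as a list of images ($x \mapsto x$ on $\hat\Delta$, $\bot$ elsewhere) is again an $\ACz$ step. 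I do not expect a genuine obstacle; the only point requiring a little care is invoking \cref{lem:idempotent_generators-clifford} to pin down $E(U)$, which is what makes the explicit formula for $\hat\Delta$—and with it the minimality of $\hat e$—correct.
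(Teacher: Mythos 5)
Your proof is correct and takes essentially the same route as the paper, which also invokes \cref{lem:idempotent_generators-clifford} to write $\hat e = \prod \{\, u \bar u \mid u \in \Sigma \cup \{1\},\ u\bar u \geq e \,\}$, observes that $u\bar u \geq e$ amounts to $\dom(u) \supseteq \dom(e)$, and notes that both this test and the resulting intersection of domains are computable in $\ACz$. Your write-up merely spells out the minimality argument and the circuit details more explicitly.
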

\begin{proof}
  This follows from the fact that $\hat e = \prod \{ u \bar u \mid u \in \Sigma \cup \{1\}, u \bar u \geq e \}$ by \cref{lem:idempotent_generators-clifford}.
  Moreover, note that $u \bar u$ is the idempotent associated with the set $\dom(u) \sse \Omega$ and, as such, the condition $u \bar u \geq e$ is equivalent to $\dom(u) \supseteq \dom(e)$ which can be verified in \ACz.
  The product in question and, in fact, the product $\prod_{f \in F} f$ of any set of idempotents $F \sse E(\ISym(\Omega))$ can also be computed in \ACz; it is the idempotent associated with the set $\bigcap_{f \in F} \dom(f)$.
\end{proof}

\begin{proposition}\label{pro:pbm-clifford}
  If $\vH \sse \vG$ is a variety of finite groups, then the problems \dMemb[PB]{\vH\vee\vSl} and \dConj[PB]{\vH\vee\vSl} are $\ACz$-reducible to \dMemb[PB]{\vH} and \dConj[PB]{\vH}, respectively.
\end{proposition}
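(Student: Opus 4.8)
The plan is to reduce each of the two problems, in $\ACz$, to the corresponding problem for a single maximal subgroup of $U=\gen\Sigma$, which will turn out to lie in $\vH$. Throughout I use that every semigroup in $\vH\vee\vSl\sse\vCl$ is Clifford, i.e.\ (by \cref{lem:characterization-clifford}) a strong semilattice of groups: $U=\bigsqcup_{e\in E(U)}G_e$ with $G_e=H^U_e=\{u\in U\mid u\bar u=e\}$, $u\bar u=\bar uu$ for all $u\in U$, and with structure homomorphisms $\phi_{f,e}\colon G_f\to G_e$ for $f\ge e$. Exactly as in the proof of \cref{lem:clifford-slp} (combined with \cref{lem:idempotent_nesting-clifford}), for $e\in E(U)$ the group $G_e$ is generated by $\Sigma_e=\{\,es\mid s\in\Sigma,\ s\bar s\ge e\,\}$ (equivalently $\dom(s)\supseteq\dom(e)$); since $U$ is Clifford each such $es$ is a permutation of $\dom(e)$, and $\Sigma_e$ is nonempty whenever $e\in E(U)$ is the least idempotent of $E(U)\cup\{1\}$ above some idempotent. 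Moreover $G_e\preccurlyeq U\in\vH\vee\vSl$ is a group, so $G_e\in(\vH\vee\vSl)\cap\vG=\vH$ by \cref{pro:varieties}. Note that $\Sigma_e$ is computable from $\Sigma$ and $e$ in $\ACz$.

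For the membership problem, given $\Sigma\sse\ISym(\Omega)$ with $U=\gen\Sigma\in\vH\vee\vSl$ and $t\in\ISym(\Omega)$: since $U$ is Clifford, $t\in U$ forces $t\bar t=\bar tt$, so if this fails we output a fixed no-instance. Otherwise set $e=t\bar t$ and use \cref{lem:clifford-idempotent_cover} to compute, in $\ACz$, the minimal $\hat e\in E(U)\cup\{1\}$ with $\hat e\ge e$. If $\hat e\ne e$, then $e\notin E(U)$, hence $t\notin U$, and we output a fixed no-instance. If $\hat e=e$, then $e\in E(U)$ and $t\in U$ if and only if $t\in G_e$; we therefore output the instance $(\Sigma_e,t)$ of $\dMemb[PB]{\vH}$. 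By \cref{lem:clifford-idempotent_cover} the whole transformation is in $\ACz$.

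For the conjugacy problem the starting point is the elementary observation that, for $u\in U$, the conjunction $\bar usu=t\wedge ut\bar u=s$ is equivalent to $\bar usu=t\wedge\dom(s)\cup\ran(s)\sse\dom(u)$, and that (using $u\bar u=\bar uu$ and the inclusion $\dom(t)\cup\ran(t)\sse\ran(u)$, which is automatic from $\bar usu=t$) these conditions force $\dom(u)\supseteq\dom(s)\cup\ran(s)\cup\dom(t)\cup\ran(t)$. Thus, given $(\Sigma,s,t)$: if $s=t$ we output a fixed yes-instance. Otherwise any conjugator $u\in U$ lies in some $G_{g'}$ with $g'=u\bar u\ge e_0$, where $e_0=e_{\dom(s)\cup\ran(s)\cup\dom(t)\cup\ran(t)}$; conversely, a short computation inside the strong semilattice of groups (replacing $u$ by $g_0ug_0=\phi_{g',g_0}(u)$ and using $g_0\le u\bar u$ together with $g_0sg_0=s$ and $g_0tg_0=t$, which hold since $\dom(s),\ran(s),\dom(t),\ran(t)\sse\dom(g_0)$) shows that if a conjugator exists at all, one exists in $G_{g_0}$ for $g_0=\hat{e_0}$, the minimal idempotent of $E(U)\cup\{1\}$ above $e_0$, obtained via \cref{lem:clifford-idempotent_cover}. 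Hence: if $\hat{e_0}=1$ we output a fixed no-instance; otherwise, with $g=\hat{e_0}\in E(U)$, we have $s\sim_Ut$ if and only if $s\sim_{G_g}t$, and we output the instance $(\Sigma_g,s,t)$ of $\dConj[PB]{\vH}$. Again this is an $\ACz$ transformation.

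I expect the only real content to lie in the two structural facts that membership and conjugacy in a Clifford semigroup ``localize'' to the maximal subgroup at the idempotent determined by the target (namely $t\bar t$ for membership, and the least idempotent of $U$ dominating $\dom(s)\cup\ran(s)\cup\dom(t)\cup\ran(t)$ for conjugacy), together with the already-available fact \cref{lem:clifford-idempotent_cover} that this idempotent and a generating set for its maximal subgroup can be extracted in $\ACz$. The delicate point is the faithfulness of the localization in the conjugacy case: one must check that committing to the minimal admissible idempotent $g$ loses no conjugator, which is exactly where the ``shrinking'' $u\mapsto gug$ and the identities $gsg=s$, $gtg=t$ enter.
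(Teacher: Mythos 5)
Your reduction follows the paper's proof almost verbatim: for both problems you localize to the $\gH$-class of $U$ at the minimal idempotent $\hat e \in E(U)\cup\{1\}$ dominating $t\bar t$ (resp.\ $s\bar s \vee \bar s s \vee t\bar t \vee \bar t t$), computed via \cref{lem:clifford-idempotent_cover}, and your faithfulness argument for conjugacy (replacing a conjugator $u$ by $\hat e u \hat e$, using $u\bar u \geq \hat e$ and $\hat e s\hat e = s$, $\hat e t \hat e = t$) is exactly the one in the paper. The membership half is fine; your claim that $\hat e = e$ forces $e \in E(U)$ can fail when $e = e_\Omega \notin E(U)$, but in that case $\Sigma_e = \emptyset$ and the produced instance still answers ``no'', so this slip is harmless.

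The conjugacy half, however, has a concrete error in the case split. In \cref{lem:clifford-idempotent_cover} the adjoined element $1$ is the identity $e_\Omega$ of $\ISym(\Omega)$, so ``$\hat e_0 = 1$'' does \emph{not} mean that no idempotent of $U$ dominates $e_0$: it also occurs when $e_\Omega \in E(U)$ (i.e.\ $U$ contains total permutations) and $e_\Omega$ is the least $U$-idempotent above $e_0$, and in that situation a conjugator may exist in the unit group $G_{e_\Omega}$, yet you output a fixed no-instance. Concretely, for any nontrivial $G \in \vH$ acting on $\Omega = G$ by right translations take $\Sigma = G$, $s = e_{\{x\}}$ and $t = e_{\{xg\}}$ with $g \neq 1$: then $s \neq t$, $\hat e_0 = e_\Omega = 1$, but $\bar g s g = t$ and $g t \bar g = s$, so the correct answer is ``yes'' while your reduction says ``no''. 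The fix is immediate: distinguish ``$\hat e_0 \in E(U)$'' (equivalently, some generator $u$ satisfies $u\bar u \geq e_0$) from ``$\hat e_0 = 1 \notin E(U)$''; in the former case reduce to $G_{\hat e_0}$ even when $\hat e_0 = e_\Omega$, and only in the latter output ``no'' (legitimate since $s = t$ was handled separately). Equivalently, do as the paper does and always pass to the $\gH$-class of $\hat e_0$ in $U^1$.
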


\begin{proof}
	Let $U \leq \ISym(\Omega)$ with $U \in \vH \vee \vSl$ generated by $\Sigma \subseteq U$.
	In the case of membership, we are given an additional element $t \in \ISym(\Omega)$.
  Using \cref{lem:clifford-idempotent_cover}, we compute the minimal idempotent $\hat e \in E(U) \cup \{1\}$ with $\hat e \geq t\bar t$ and (as part of this computation) verify that $\hat e \in E(U)$.
  Next, we compute the generating set $\Sigma_{\hat e} = \{ \hat e u \mid u \in \Sigma, \hat e \leq u \bar u \}$ of the $\gH$-class $U_{\hat e} = \{ u \in U \mid u \bar u = \hat e \} \leq U$ (which is a group).
	Then $t \in U$ if and only if $t \in U_{\hat e}$.

	In case of conjugacy, on input $s,t \in \ISym(\Omega)$, we perform the reduction with $\hat e \in E(U) \cup \{1\}$ minimal such that $\hat e \geq s\bar s \vee \bar s s \vee t \bar t \vee \bar t t$ and the $\gH$-class $U_{\hat e} \leq U^1$ of $\hat e$.
	Note that $\bar u s u = t$ and $u t \bar u = s$ with $u \in U^1$ imply $u \bar u \geq \hat e$ and, therefore, $\bar u' s u' = t$ and $u' t \bar u' = s$ with $u' = \hat e u \hat e \in U_{\hat e}$.
	As such, $s \sim_U t$ if and only if $s \sim_{U_{\hat e}} t$.
\end{proof}

\begin{corollary}\label{cor:pbm-clifford-groups}
  The problems \dMemb[PB]{\vCl} and \dConj[PB]{\vCl} are in \NC and \NP, respectively.
\end{corollary}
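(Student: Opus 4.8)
The plan is to derive \Cref{cor:pbm-clifford-groups} as a direct consequence of \Cref{pro:pbm-clifford} combined with the results of \Cref{sec:groups}. First I would recall that $\vCl = \vG \vee \vSl$ by \Cref{pro:varieties}, so that taking $\vH = \vG$ in \Cref{pro:pbm-clifford} yields $\ACz$-reductions $\dMemb[PB]{\vCl} \leq^{\ACz}_m \dMemb[PB]{\vG}$ and $\dConj[PB]{\vCl} \leq^{\ACz}_m \dConj[PB]{\vG}$. Then I would invoke \Cref{pro:pbm-groups-membership}, which places $\dMemb[PB]{\vG}$ in $\NC$, and \Cref{pro:bpm-groups-conjugacy}, which places $\dConj[PB]{\vG}$ in $\NP$.

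To conclude, I would observe that both $\NC$ and $\NP$ are closed under $\ACz$-many-one reductions. For $\NC$ this is standard ($\ACz \sse \NC$ and $\NC$ is closed under composition with $\ACz$-computable functions, since an $\NC$ circuit can be precomposed with a constant-depth polynomial-size circuit computing the reduction); for $\NP$ it is immediate since $\ACz$-computable functions are in particular polynomial-time computable. Hence $\dMemb[PB]{\vCl} \in \NC$ and $\dConj[PB]{\vCl} \in \NP$, which is the claim.

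There is essentially no obstacle here: the corollary is a bookkeeping step that merely chains together \Cref{pro:pbm-clifford}, the identification $\vCl = \vG \vee \vSl$, and the group-case bounds \Cref{pro:pbm-groups-membership} and \Cref{pro:bpm-groups-conjugacy}. The only point that needs a moment of care is confirming closure of the target classes under the reductions at hand, which is routine. I would write the proof in two or three lines, simply citing \Cref{pro:pbm-clifford}, \Cref{pro:pbm-groups-membership}, and \Cref{pro:bpm-groups-conjugacy}.
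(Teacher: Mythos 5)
Your proposal is correct and is exactly the paper's (implicit) argument: the corollary follows by taking $\vH = \vG$ in \cref{pro:pbm-clifford} (using $\vCl = \vG \vee \vSl$) and combining the resulting $\ACz$-reductions with \cref{pro:pbm-groups-membership} and \cref{pro:bpm-groups-conjugacy}, together with the routine closure of \NC and \NP under such reductions. Nothing is missing.
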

\begin{corollary}[Beaudry, McKenzie, Thérien \cite{BeaudryMT92}]\label{cor:pbm-semilattices}
  The problem \dMemb[PB]{\vSl} is in \ACz.
\end{corollary}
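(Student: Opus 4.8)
The plan is to exploit the very rigid structure that a semilattice must have inside $\ISym(\Omega)$. Since $U = \gen{\Sigma}$ is a semilattice (as guaranteed by the input restriction), every element of $U$ -- in particular every generator in $\Sigma$ -- is an idempotent, hence a partial identity $e_\Delta$ for some $\Delta \sse \Omega$; write $\Sigma = \{e_{\Delta_1}, \dots, e_{\Delta_k}\}$ with $\Delta_i = \dom(s_i)$. Because products of idempotents of $\ISym(\Omega)$ merely intersect domains, $U = \{\, e_\Delta \mid \Delta = \bigcap_{i \in I}\Delta_i \text{ for some } \emptyset \neq I \sse \{1,\dots,k\}\,\}$. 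So first I would check that the target $t$ is a partial identity (if not, $t \notin U$), and then decide whether $\dom(t)$ is an intersection of some of the $\Delta_i$.

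The key simplification is that one never has to search over $I$. Setting $J = \{\, i \mid \dom(t) \sse \Delta_i \,\}$, I claim that $\dom(t) = \bigcap_{i \in I}\Delta_i$ for some nonempty $I$ if and only if $J \neq \emptyset$ and $\dom(t) = \bigcap_{i \in J}\Delta_i$. Indeed, any witnessing $I$ satisfies $I \sse J$, whence $\bigcap_{i\in J}\Delta_i \sse \bigcap_{i\in I}\Delta_i = \dom(t)$, while $\dom(t) \sse \bigcap_{i\in J}\Delta_i$ holds by definition of $J$; conversely, $\dom(t) = \bigcap_{i\in J}\Delta_i$ with $J \neq \emptyset$ already exhibits $t = \prod_{i\in J}e_{\Delta_i} \in U$. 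This also covers the degenerate case $\dom(t) = \emptyset$ (where $t$ is the zero of $\ISym(\Omega)$): then $J = \{1,\dots,k\}$ and the test reduces to $\bigcap_i \Delta_i = \emptyset$.

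It then remains to observe that all the required checks are computable by constant-depth polynomial-size circuits. Testing whether $t$ is a partial identity is plainly $\ACz$. For each index $i$, the predicate $i \in J$ -- i.e.\ $y^{s_i} \neq \bot$ for every $y \in \dom(t)$ -- is computable by an $\ACz$ subcircuit, so the condition ``$x \in \dom(t) \iff x \in \Delta_i$ for every $i \in J$'', which rewrites as $\bigl(x^t \neq \bot\bigr) \iff \bigwedge_i \bigl(i \notin J \vee x^{s_i} \neq \bot\bigr)$, is $\ACz$; taking the conjunction over all $x \in \Omega$, together with the nonemptiness of $J$, yields the decision. Since there is no nesting of unbounded products to evaluate, the depth stays constant, so the whole algorithm is in $\ACz$.

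For a quicker derivation one could instead invoke results already in hand: by \cref{lem:clifford-idempotent_cover} one computes in $\ACz$ the minimal idempotent $\hat e \in E(U) \cup \{1\}$ with $\hat e \geq t$, and then $t \in U$ iff $t$ is idempotent, $\hat e \in E(U)$, and $\hat e = t$; or one may apply \cref{pro:pbm-clifford} with $\vH = \vT$, using $\vT \vee \vSl = \vSl$ and the triviality of $\dMemb[PB]{\vT}$. I do not anticipate a real obstacle here; the only point deserving a moment of care is keeping the circuit constant-depth while letting the index set $J$ itself depend on the input, which is why the rewriting $\bigwedge_i(i \notin J \vee \cdots)$ above is convenient.
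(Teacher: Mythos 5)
Your proposal is correct and is essentially the paper's argument: the paper obtains \cref{cor:pbm-semilattices} by applying \cref{pro:pbm-clifford} with $\vH = \vT$ (using $\vT \vee \vSl = \vSl$ and the triviality of \dMemb[PB]{\vT}), which is exactly the alternative you name at the end. Your main direct argument (all generators are partial identities, so $U$ consists of the partial identities on intersections of generator domains, tested greedily via $J = \{i \mid \dom(t) \sse \Delta_i\}$) is just the semilattice specialization of \cref{lem:clifford-idempotent_cover} and the reduction in \cref{pro:pbm-clifford}, unrolled, and it is correct as written.
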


By \cref{lem:CT_to_PB}, it follows from \cref{cor:pbm-semilattices} that also \dMemb[CT]{\vSl} is in \ACz. On the other hand, the following important question remains open.

\begin{question}
  Are the problems \dMemb[CT]{\vG} and  \dMemb[CT]{\vCl} in \ACz?
\end{question}

\section{Membership and Conjugacy in Strict Inverse Semigroups}

In this section we show that, in the partial bijection model, the membership and conjugacy problem for $\vSI$ are \LOGSPACE-reducible to the corresponding problems for $\vG$.
In the Cayley table model these problems are \LOGSPACE-complete for $\vSI$ as we will show in \cref{sec:SL-Cayley}.

Our reduction is explicit and based on special properties of the representation theory of the variety $\vSI$.
For now, let us recall the following characterization of $\vSI$ (see, e.g., \cite{HallJohnston89}).

\begin{lemma}\label{lem:gb-characterization}
  Let $S$ be a finite inverse semigroup.
  Then the following are equivalent.
  \begin{enumerate}
    \item The inverse semigroup $S$ is contained in $\vSI$.
    \item If $e,f_1,f_2 \in E(S)$ with $e \geq f_1, f_2$ and $f_1 \gJ f_2$, then $f_1 = f_2$.
    \item The Brandt monoid $B^1_2$ does not divide $S$.
  \end{enumerate}
\end{lemma}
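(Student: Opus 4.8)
The plan is to prove the four implications $(1)\Rightarrow(2)$, $(2)\Rightarrow(1)$, $(1)\Rightarrow(3)$, and $(3)\Rightarrow(1)$, which together give all three equivalences. Throughout I will use two standard facts about the finite inverse semigroup $S$: (a) since $S$ is finite, $\gJ$ and $\gD$ coincide, so any $f_1,f_2\in E(S)$ with $f_1\gJ f_2$ can be written as $f_1=a\bar a$ and $f_2=\bar a a$ for some $a\in S$; (b) $\gR$-related, and dually $\gL$-related, idempotents of an inverse semigroup are always equal, so (via stability of finite semigroups, or the Preston--Wagner embedding) two distinct $\gJ$-related idempotents of $S$ are incomparable in the natural order. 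In particular, whenever $e\ge f_1,f_2$ with $f_1\gJ f_2$ and $f_1\ne f_2$, the idempotent $e$ differs from both $f_1$ and $f_2$.

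For $(1)\Rightarrow(2)$, take $e\ge f_1,f_2$ with $f_1\gJ f_2$ and choose $a$ as in (a). From the axiom $a\bar a a=a$ we get $f_1a=a=af_2$, hence $a=f_1af_2=(ef_1)\,a\,(f_2e)=eae$, using $ef_1=f_1=f_1e$ and the analogue for $f_2$. Substituting $x:=a$ and $y:=e$ (so that $\bar y=e$, $eae=a$ and $e\bar a e=\overline{eae}=\bar a$) into the defining identity $\bar yxy\,\bar y\bar xy=\bar y\bar xy\,\bar yxy$ of $\vSI$ collapses it to $a\bar a=\bar a a$, i.e.\ $f_1=f_2$. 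For $(1)\Rightarrow(3)$, recall that $\vSI$ is closed under division, so $B_2^1\preccurlyeq S\in\vSI$ would force $B_2^1\in\vSI$; but this is false, since the defining identity of $\vSI$ already fails in $B_2^1$ for $y=1$: the two sides then equal $x\bar x$ and $\bar x x$, which differ for $x=s$, being the two distinct nonzero idempotents of $B_2\le B_2^1$.

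For the two converse implications, assume $S\notin\vSI$, so there are $x_0,y_0\in S$ with $a\bar a\ne\bar a a$ where $a:=\bar y_0x_0y_0$ (note $\bar a=\bar y_0\bar x_0y_0$). Put $e:=\bar y_0y_0\in E(S)$, $f_1:=a\bar a$, $f_2:=\bar a a$. Then $f_1\ne f_2$, $f_1\gD f_2$, and $f_1,f_2\le e$ by the standard inequalities $(uv)\overline{(uv)}\le u\bar u$ and $\overline{(uv)}(uv)\le\bar vv$ applied to the factorisations $a=\bar y_0\cdot x_0y_0$ and $a=\bar y_0x_0\cdot y_0$; moreover a short computation gives $eae=a$. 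The triple $(e,f_1,f_2)$ already witnesses the failure of $(2)$, which proves $(2)\Rightarrow(1)$. It remains to prove $(3)\Rightarrow(1)$, i.e.\ that $B_2^1\preccurlyeq S$. Here I would pass to the local inverse monoid $M:=eSe$, which has identity $e$ and contains $a,\bar a,f_1,f_2$; inside $M$ the idempotents $f_1\gD[M] f_2$ are distinct and lie below $e$, and by fact (b) the $\gD[M]$-class $K$ of $f_1$ lies strictly $\gJ[M]$-below $e$ and contains at least two idempotents. By the structure theory of finite inverse semigroups, the principal factor of $K$ in $M$ is a Brandt semigroup $B(G;X)$ with $\lvert X\rvert\ge 2$. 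Forming the Rees quotient $M/I$, where $I$ is the ideal of all elements strictly $\gJ[M]$-below $K$, the image of $K$ together with the new zero realises this principal factor as an ideal of $M/I$, and adjoining the image of $e$ (the identity of $M/I$, which lies outside that copy of $B(G;X)$) yields an inverse submonoid isomorphic to $B(G;X)^1$. Hence $B(G;X)^1\preccurlyeq M\preccurlyeq S$, and restricting $X$ to the two indices corresponding to $f_1$ and $f_2$ gives $B(G;2)^1\le B(G;X)^1$; collapsing the group coordinate gives a surjective homomorphism $B(G;2)^1\to B_2^1$, so $B_2^1\preccurlyeq S$, contradicting $(3)$.

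The conceptually interesting steps — substituting $y=e$ into the identity of $\vSI$, and exhibiting a Brandt semigroup with a dominating idempotent — are short; the main obstacle I anticipate is the bookkeeping in the last implication, namely verifying that the indicated subsets of the Rees quotient $M/I$ are honest inverse subsemigroups of the claimed isomorphism type. This is routine given the structure theory (principal factors of the regular $\gJ$-classes of a finite inverse semigroup are Brandt semigroups), but it should be written out with care.
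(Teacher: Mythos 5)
Your proof is correct. Note that the paper does not actually prove this lemma: it is quoted as a known characterization with a pointer to the literature (Hall--Johnston), so there is no in-paper argument to compare against; your write-up supplies a self-contained proof consistent with the paper's definition $\vSI = \vId{\bar y x y\, \bar y \bar x y = \bar y \bar x y\, \bar y x y}$. The individual steps check out: the substitution $x:=a$, $y:=e$ (using $a = f_1 a f_2 = eae$ and $e\bar a e = \bar a$) collapses the identity to $a\bar a = \bar a a$, giving $(1)\Rightarrow(2)$; the witness $e = \bar y_0 y_0$, $f_1 = a\bar a$, $f_2 = \bar a a$ with $a = \bar y_0 x_0 y_0$ gives $(2)\Rightarrow(1)$; and the failure of the identity in $B_2^1$ at $y=1$, $x=s$ gives $(1)\Rightarrow(3)$. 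For $(3)\Rightarrow(1)$ your principal-factor route through $M = eSe$ works, and the bookkeeping you flag does go through: since $K$ contains the two distinct idempotents $f_1,f_2$, it cannot be the kernel of $M$ (the kernel of a finite inverse semigroup is a group), so the ideal $I$ of elements strictly $\gJ[M]$-below $K$ is nonempty; in the Rees quotient $M/I$ the set $K\cup\{0\}$ is an ideal isomorphic to the principal factor $B(G;X)$ with $\lvert X\rvert\ge 2$, and since $e$ is the identity of $M$ (hence of $M/I$) and lies strictly $\gJ[M]$-above $K$ by your fact (b), the set $\{e\}\cup K\cup\{0\}$ is an inverse submonoid isomorphic to $B(G;X)^1$; restricting indices and collapsing $G$ then yields $B^1_2 \preccurlyeq S$, using that a subsemigroup of a quotient is a divisor and that division is transitive. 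When writing this out, the only points worth making explicit are exactly these: nonemptiness of $I$, closure of $K\cup\{0\}$ and of $\{e\}\cup K\cup\{0\}$ under products and inverses, and $e\notin K$. A marginally shorter alternative in the spirit of the paper's own toolkit (cf.\ its \cref{lem:gb-orbits}) would exhibit $B^1_2$ inside a faithful partial-bijection representation directly, but your argument via principal factors is standard and complete.
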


We will also need the following analogue of \cref{lem:idempotent_nesting-clifford}.

\begin{lemma}\label{lem:gb-idempotent_nesting}
	Let $S \in \vSI$.
	If $s_1, \dotsc, s_n \in S$ with $s_i \bar s_i = \bar s_i s_i$ for all $1 \leq i \leq n$ then 
	\[
	s_1 s_2 \dotsc s_n \bar s_n \dotsc \bar s_2 \bar s_1 = s_1 \bar s_1 s_2 \bar s_2 \dotsc s_n \bar s_n.
	\]
\end{lemma}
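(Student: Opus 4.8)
The plan is to prove the statement by induction on $n$, the cases $n \le 1$ being trivial. The inductive step will peel one "group‑bound" element off the outside: writing $e = s_2 \bar s_2 \dotsc s_n \bar s_n \in E(S)$, the induction hypothesis applied to $s_2, \dotsc, s_n$ (which inherit $s_i \bar s_i = \bar s_i s_i$) gives $s_2 \dotsc s_n \bar s_n \dotsc \bar s_2 = e$, so by associativity it suffices to show $s_1 e \bar s_1 = s_1 \bar s_1 e$ for an arbitrary idempotent $e$ and an element $s_1$ with $s_1 \bar s_1 = \bar s_1 s_1$. The key input here will be the idempotent characterization of $\vSI$ in \cref{lem:gb-characterization}\,(2): two $\gJ$-related idempotents lying below a common idempotent must be equal. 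So the whole proof reduces to exhibiting $s_1 e \bar s_1$ and $s_1 \bar s_1 e$ as $\gJ$-related idempotents both $\leq s_1 \bar s_1$, after which \cref{lem:gb-characterization}\,(2) finishes the step.

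Concretely, I would first record two facts valid in any inverse semigroup under the hypothesis $s_1 \bar s_1 = \bar s_1 s_1$: (i) $s_1 s_1 \bar s_1 = s_1 \bar s_1 s_1 = s_1$; and (ii) for any $f \in E(S)$, both $s_1 f \bar s_1$ and $s_1 \bar s_1 f$ are idempotents — the latter because the idempotents $s_1 \bar s_1$ and $f$ commute, the former via the short computation $(s_1 f \bar s_1)^2 = s_1 f (\bar s_1 s_1) f \bar s_1 = s_1 f (s_1 \bar s_1) f \bar s_1 = s_1 (s_1 \bar s_1) f f \bar s_1 = s_1 s_1 \bar s_1 f \bar s_1 = s_1 f \bar s_1$, using that idempotents commute together with (i). Both of these idempotents are $\leq s_1 \bar s_1$ since $(s_1 f \bar s_1)(s_1 \bar s_1) = s_1 f \bar s_1$ (as $\bar s_1 s_1 \bar s_1 = \bar s_1$) and $(s_1 \bar s_1 f)(s_1 \bar s_1) = s_1 \bar s_1 f$ (as $f$ commutes with $s_1 \bar s_1$).

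Next I would verify $s_1 e \bar s_1 \gJ s_1 \bar s_1 e$ by the two explicit conjugations $\bar s_1 (s_1 e \bar s_1) s_1 = (\bar s_1 s_1) e (\bar s_1 s_1) = (\bar s_1 s_1) e = s_1 \bar s_1 e$ and $s_1 (s_1 \bar s_1 e) \bar s_1 = (s_1 s_1 \bar s_1) e \bar s_1 = s_1 e \bar s_1$, again using that $e$ commutes with $s_1 \bar s_1 = \bar s_1 s_1$ and fact (i). With both idempotents $\gJ$-related and below $s_1 \bar s_1$, \cref{lem:gb-characterization}\,(2) yields $s_1 e \bar s_1 = s_1 \bar s_1 e$, so
\[
s_1 s_2 \dotsc s_n \bar s_n \dotsc \bar s_1 = s_1 (s_2 \bar s_2 \dotsc s_n \bar s_n) \bar s_1 = s_1 \bar s_1 s_2 \bar s_2 \dotsc s_n \bar s_n,
\]
completing the induction. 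The only point requiring a little care is that no property of $\vSI$ beyond \cref{lem:gb-characterization}\,(2) is actually used, so the bookkeeping must be kept clean; I do not anticipate a genuine obstacle. One could alternatively try to imitate the proof of \cref{lem:idempotent_nesting-clifford} by checking the statement in $\vG$ and in $\vBS$ (where every element with $s\bar s = \bar s s$ is idempotent, so the identity is trivial) and invoking $\vSI = \vG \vee \vBS$, but because of the side condition $s_i \bar s_i = \bar s_i s_i$ this is not a plain identity and the join argument becomes delicate, so the direct induction above seems preferable.
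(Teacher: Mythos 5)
Your proof is correct, but it takes a genuinely different route from the paper's. You argue by induction, reducing everything to the single step $s e \bar s = s \bar s e$ for $e \in E(S)$ and $s$ with $s\bar s = \bar s s$, and you settle that step by exhibiting both sides as $\gJ$-related idempotents below the common idempotent $s\bar s$ and invoking \cref{lem:gb-characterization}\,(2); all the supporting computations (idempotency of $sf\bar s$, the two conjugations, the bounds below $s\bar s$) check out, and the needed fact that $e = s_2\bar s_2 \dotsm s_n \bar s_n$ is idempotent is immediate since idempotents commute. The paper instead takes exactly the route you set aside as ``delicate'': it observes that $s\bar s = \bar s s$ holds if and only if $s = \bar x x x$ for some $x$ (take $x = s$), so after the substitution $s_i = \bar x_i x_i x_i$ the guarded implication becomes an honest identity in the variables $x_i$; this identity is then verified in $\vG$ (both sides are idempotent, hence trivial) and in $B_2$ (where every such element is idempotent and idempotents commute), and transfers to all of $\vSI = \vG \vee \vBS$ --- the substitution is precisely the trick that tames the side condition and makes the join argument a one-liner, parallel to the proof of \cref{lem:idempotent_nesting-clifford}. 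What each approach buys: the paper's proof is shorter and uniform with the Clifford case, but leans on the description of $\vSI$ as a join and on identities transferring from generators of a variety; yours is a direct structural argument that uses only the idempotent characterization of $\vSI$ and isolates the reusable fact $s e \bar s = s\bar s e$, at the cost of somewhat more bookkeeping.
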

\begin{proof}
	Note that $s \bar s = \bar s s$ if and only if $s = \bar x x x$ for some $x \in S$ (e.g., $x=s$).
	As such, the implication can be written as an identity.
	This identity holds for all $S \in \vG$, since both sides are idempotent, and for $B_2$, since $s \in B_2$ with $s \bar s = \bar s s$ implies $s \in E(B_2)$ and idempotents in an inverse semigroup commute.
	Hence, the identity also holds for all $S \in \vG \vee \vBS = \vSI$.
\end{proof}

\subsection{Representations of Strict Inverse Semigroups}\label{sub:munn}

Our goal is to develop an efficient description of the local structure of a strict inverse semigroup $U \leq \ISym(\Omega)$ based on its action on $\Omega$ and on $U$-invariant subsets $\Delta \sse \Omega$.
To this end, we will show how to obtain a generating set for each $\gD$-class of $U$ (as a groupoid) from a given generating set $\Sigma \sse U$.
As shown in the next subsection, this can be used to reduce the membership and conjugacy problems for $U$ to an appropriate $\gD$-class and, ultimately, to a single $\gH$-class, i.e., to a subgroup of $U$.
The attentive reader might notice that this strategy also underlies our approach for Clifford semigroups in \cref{sec:clifford} (and even the transition from groups of partial bijections to permutation groups in \cref{sec:groups}).

Given an inverse semigroup $U \leq \ISym(\Omega)$, we say that a set $\Delta \sse \Omega$ is \emph{$U$-invariant} if $\Delta^s \sse \Delta$ for all $s \in U$.
Equivalently, the set $\Delta$ is $U$-invariant if the idempotent $e_\Delta \in E(\ISym(\Omega))$ associated with it centralizes $U$, i.e., $s e_\Delta = e_\Delta s$ holds for all $s \in U$.
Clearly, each $\Delta \sse \Omega$ generates a $U$-invariant subset $\Delta^U \coloneqq \bigcup_{s \in U} \Delta^s \sse \Omega$.
If each point $x \in \Delta$ is in the domain of some $s \in U$, then $\Delta^U$ is the minimal $U$-invariant subset containing $\Delta$.
Conversely, the set of all points $x \in \Omega$ that are contained in the domain of some $s \in U$ is precisely $\Omega^U$.

The following lemma describes the structure of the \emph{orbit} $x^U \coloneqq \{x\}^U$ of a point $x \in \Omega$ under an inverse semigroup $U \in \vSI$, i.e., of a minimal non-empty $U$-invariant subset.
Throughout, it will be helpful to keep in mind that $y \in x^U$ if and only if $x \in y^U$.

\begin{lemma}\label{lem:gb-orbits}
  Let $U \leq \ISym(\Omega)$ with $U \in \vSI$, $\Delta = x^U$ for some $x \in \Omega$ and $s_1, s_2 \in U$.
  Then $\dom(s_1) \cap \Delta = \dom(s_2) \cap \Delta$ or $\dom(s_1) \cap \dom(s_2) \cap \Delta = \emptyset$.
  In other words, every orbit $x^U$ is partitioned by the non-empty sets of the form $\dom(s) \cap x^U$ with $s \in U$.
\end{lemma}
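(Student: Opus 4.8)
The plan is to reduce the statement to the transparent case of \emph{Clifford} inverse semigroups, exploiting that a finite strict inverse \emph{monoid} is automatically a Clifford semigroup. We may assume $\Delta \neq \emptyset$; then $x \in \Delta$ and $\Delta = y^U$ for every $y \in \Delta$, so $U$ acts transitively on $\Delta$ in the sense that for all $a,b \in \Delta$ there is $u \in U$ with $a^u = b$. It suffices to prove that $y \in \dom(s_1) \cap \dom(s_2) \cap \Delta$ implies $\dom(s_1) \cap \Delta \sse \dom(s_2)$: interchanging $s_1$ and $s_2$ then yields the claimed equality, and the partition statement follows since every point of $\Delta$ lies in $\dom(u) \cap \Delta$ for some $u \in U$.

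So fix such a $y$, put $e_i = s_i \bar s_i \in E(U)$ (hence $\dom(e_i) = \dom(s_i)$), and set $\Delta' \coloneqq \dom(e_1) \cap \Delta$. First I would pass to the local submonoid $U' = e_1 U e_1$: it is an inverse subsemigroup of $U$ (closed under inverses, as $\overline{e_1 u e_1} = e_1 \bar u e_1$) with identity $e_1$, hence $U' \in \vSI$. Every $v \in U'$ satisfies $\dom(v),\ran(v) \sse \dom(e_1)$ and preserves $\Delta$ (as $\Delta$ is $U$-invariant), so $v$ maps $\Delta'$ into $\Delta'$; restricting the action of $U'$ to the invariant set $\Delta'$ therefore yields a surjective homomorphism onto an inverse semigroup $W \leq \ISym(\Delta')$ with $W \in \vSI$ (a quotient of $U'$) and with identity $1_{\Delta'}$ (the restriction of $e_1$). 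Moreover $W$ is transitive on $\Delta'$: given $a,b \in \Delta'$, pick $u \in U$ with $a^u = b$; then $e_1 u e_1 \in U'$ sends $a$ to $b$, and so does its restriction to $\Delta'$.

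The key step is that the finite strict inverse monoid $W$ is Clifford. Indeed, for every $w \in W$ the idempotents $w\bar w$ and $\bar w w$ are $\gJ$-equivalent and both lie below the identity $1_{\Delta'}$, so $w\bar w = \bar w w$ by \cref{lem:gb-characterization}\,(2); that is, $W \in \vId{x\bar x = \bar x x} = \vCl$, and in particular the idempotents of $W$ are central. Now let $\gamma \in E(W)$ be the restriction of $e_1 e_2 = e_1 e_2 e_1$ to $\Delta'$; then $\dom(\gamma) = \dom(s_1) \cap \dom(s_2) \cap \Delta$, which is non-empty as it contains $y$. Since $\gamma$ is central, every $w \in W$ commutes with the partial identity $\gamma$, which (by comparing the domains of $\gamma w$ and $w \gamma$) forces $(\dom\gamma)^w \sse \dom\gamma$; thus $\dom(\gamma)$ is a non-empty $W$-invariant subset of $\Delta'$. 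Transitivity of $W$ on $\Delta'$ now gives $\dom(\gamma) = \Delta'$, that is $\dom(s_1) \cap \dom(s_2) \cap \Delta = \dom(s_1) \cap \Delta$, which is the desired inclusion.

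The main obstacle is obtaining a structure on which \cref{lem:gb-characterization}\,(2) can be applied with the identity serving as a common upper bound for all idempotents: merely restricting $U$ to the orbit $\Delta$ need not produce a monoid, and adjoining an identity is not a variety operation (witness $B_2 \in \vSI$ but $B_2^1 \notin \vSI$), so passing to the local submonoid $e_1 U e_1$ — which does stay inside $\vSI$ and does have an identity — is what makes the Clifford reduction legitimate. The remaining ingredients are routine bookkeeping: that $e_1 U e_1$ is an inverse subsemigroup, that restriction to an invariant subset is a homomorphism, that a finite inverse semigroup has $\gJ = \gD$, and the textbook fact that an inverse semigroup satisfying $x\bar x = \bar x x$ has central idempotents.
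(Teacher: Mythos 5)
Your proof is correct, but it takes a genuinely different route from the paper. The paper argues by contradiction in a purely local, two-point fashion: assuming $x_1 \in \dom(s_1)\cap\dom(s_2)\cap\Delta$ and $x_2 \in (\dom(s_1)\setminus\dom(s_2))\cap\Delta$, it picks $t \in U$ with $x_1^t = x_2$ and observes that $e = s_1\bar s_1$ and $s = s_2\bar s_2 t$, restricted to $\{x_1,x_2\}$, generate $B^1_2$, so $B_2^1$ divides $U$, contradicting the forbidden-divisor characterization (item (3) of \cref{lem:gb-characterization}). You instead pass to the local submonoid $e_1 U e_1$, restrict along the genuinely $U'$-invariant set $\Delta' = \dom(e_1)\cap\Delta$ to get a finite monoid $W \in \vSI$ acting transitively on $\Delta'$, note via item (2) of \cref{lem:gb-characterization} that any finite monoid in $\vSI$ is Clifford (the conjugate idempotents $w\bar w$ and $\bar w w$ sit below the identity), and then exploit centrality of idempotents plus transitivity to force $\dom(e_1e_2)\cap\Delta = \dom(e_1)\cap\Delta$. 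Your route is longer but entirely structural: every step is an honest homomorphism (local submonoid, restriction to an invariant subset), so you never need to justify that a restriction to a non-invariant two-element subset yields a divisor~--- the one point in the paper's argument that requires a moment's care~--- and you isolate the reusable fact that strict inverse monoids are Clifford, which is exactly why adjoining an identity (cf.\ $B_2$ versus $B_2^1$) is the critical obstruction. The paper's route buys brevity and makes the $B_2^1$ obstruction completely explicit. The supporting details you treat as routine (that $e_1Ue_1$ is an inverse submonoid in $\vSI$, that restriction to a $U'$-invariant set is a homomorphism of inverse semigroups, that commuting with a central partial identity forces invariance of its domain) all check out.
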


\begin{proof}
  Suppose to the contrary that there exist $x_1, x_2 \in x^U$ with $x_1 \in \dom(s_1) \cap \dom(s_2)$ and $x_2 \in \dom(s_1) \setminus \dom(s_2)$.
  Since $x_1, x_2 \in x^U$, there exists some $t \in U$ with $x_1^t = x_2$.
  Observe that $e = s_1 \bar s_1 \in U$ and $s = s_2 \bar s_2 t \in U$ satisfy  $x_1^e = x_1$, $x_2^e = x_2$, $x_1^s = x_2$, $x_2^s = \bot$.
  Hence, the restrictions of $e$ and $s$ to $\Omega' = \{x_1,x_2\} \sse \Omega$ generate the Brandt monoid $B^1(\Omega') \leq \ISym(\Omega')$.
  As such, $B_2^1$ divides $U$ which contradicts $U \in \vSI$.
\end{proof}

Important to our cause are the elements of $U \leq \ISym(\Omega)$ that act on all orbits contained in some $U$-invariant set $\Delta \sse \Omega$.
Formally, we say that $s \in U$ is \emph{$\Delta$-large} provided that $(\dom(s)\cap \Delta)^U = \Delta$ or, equivalently, $(\ran(s)\cap \Delta)^U = \Delta$.
We claim that if $s \in U$ is $\Delta$-large and $t \in U$ satisfies $s \gJle t$, then $t$ is $\Delta$-large itself.
Indeed, if $s$ is $\Delta$-large and if $s \gRle t$ or $s \gLle t$, then $t$ is $\Delta$-large since $\dom(s) \sse \dom(t)$ or $\ran(s) \sse \ran(t)$, respectively; finally, if $s \gJle t$, then $s \gRle r \gLle t$ for some $r \in U$ from which we conclude that $t$ is $\Delta$-large.

\begin{lemma}\label{lem:gb-large}
  Let $U \leq \ISym(\Omega)$ with $U \in \vSI$ and $s,t \in U$. 
  Further, let $\Delta \sse \Omega$ be $U$-invariant.
  If $s$ is $\Delta$-large and $e_\Delta s \leq e_\Delta t$, then $t$ is $\Delta$-large and $e_\Delta s = e_\Delta t$.
\end{lemma}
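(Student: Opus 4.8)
The plan is to reduce the entire statement to a single inclusion of domains and then invoke the orbit‑wise rigidity of \cref{lem:gb-orbits}. First I would record the effect of $e_\Delta$: since $\Delta$ is $U$-invariant, $e_\Delta$ centralizes $U$, so $e_\Delta s = s e_\Delta$ is simply $s$ restricted to the $U$-invariant set $\Delta$ (on both its domain and its range), and likewise $e_\Delta t = t e_\Delta$. Reading the hypothesis $e_\Delta s \leq e_\Delta t$ in the natural order of $\ISym(\Omega)$, it says that $e_\Delta s$ is a restriction of $e_\Delta t$; in particular $\dom(s) \cap \Delta \sse \dom(t) \cap \Delta$ and the two partial maps agree on $\dom(s) \cap \Delta$. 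Hence it suffices to prove the reverse inclusion $\dom(t) \cap \Delta \sse \dom(s)$: once the domains of $e_\Delta s$ and $e_\Delta t$ coincide the two maps are equal, and then $t$ is automatically $\Delta$-large because $(\dom(t) \cap \Delta)^U = (\dom(s) \cap \Delta)^U = \Delta$.

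To establish $\dom(t) \cap \Delta \sse \dom(s)$ I would argue by contradiction. Suppose $z \in (\dom(t) \cap \Delta) \setminus \dom(s)$. Since $z \in \dom(t) \sse \Omega^U$ and $\Delta$ is $U$-invariant, the orbit $z^U$ is a well-defined piece of the orbit partition and is contained in $\Delta$. Because $s$ is $\Delta$-large, $\dom(s)$ meets every $U$-orbit contained in $\Delta$ (this is the equivalent reformulation of $(\dom(s)\cap\Delta)^U = \Delta$), so we may pick $y \in \dom(s) \cap z^U$; then $y \in \dom(s) \cap \Delta \sse \dom(t) \cap \Delta$, so $y$ lies in $\dom(s) \cap \dom(t) \cap z^U$, which is therefore non-empty. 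Now \cref{lem:gb-orbits}, applied with the two elements $s$ and $t$ to the orbit $z^U$, forces $\dom(s) \cap z^U = \dom(t) \cap z^U$, contradicting the fact that $z$ belongs to the latter set but not the former.

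I do not expect a genuine obstacle here: the natural‑order and centralizer manipulations are routine, and the whole content of the lemma is precisely the orbit‑wise rigidity supplied by \cref{lem:gb-orbits}, which is exactly where $U \in \vSI$ (equivalently, the absence of $B_2^1$ as a divisor) enters. The only point worth stating carefully is that ``the orbit $z^U$'' is meaningful and lies inside $\Delta$; this is immediate since $z \in \dom(t) \sse \Omega^U$ and $\Delta$ is $U$-invariant.
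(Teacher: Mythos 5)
Your proposal is correct and follows essentially the same route as the paper's proof: extract $\dom(s)\cap\Delta \sse \dom(t)\cap\Delta$ from the natural-order hypothesis, then use $\Delta$-largeness of $s$ together with \cref{lem:gb-orbits} on a single orbit to get the reverse inclusion, which yields both $e_\Delta s = e_\Delta t$ and $\Delta$-largeness of $t$. The only difference is that you phrase the orbit step as a contradiction where the paper argues directly, which is purely cosmetic.
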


\begin{proof}
  Suppose that $s,t \in U$ are such that $e_\Delta s \leq e_\Delta t$.
  Then \[
    \dom(s) \cap \Delta = \dom(e_\Delta s) \sse \dom(e_\Delta t) = \dom(t) \cap \Delta.
  \]
  In particular, if $s$ is $\Delta$-large, then so is $t$.
  Let us show that in this case the above is an equality, i.e., $\dom(s) \cap \Delta = \dom(t) \cap \Delta$ and thus $e_\Delta s = e_\Delta t$.
  Consider some $x \in \dom(t) \cap \Delta$.
  Then $x \in x^U \sse \Delta = (\dom(s) \cap \Delta)^U$; hence, $\dom(s) \cap x^U \neq \emptyset$.
  As $\dom(s) \cap x^U \sse \dom(t) \cap x^U$, we conclude that $\dom(s) \cap x^U = \dom(t) \cap x^U$ by \cref{lem:gb-orbits}; hence, $x \in \dom(s)$. 
\end{proof}

For the remainder of this section, we restrict our attention to the inverse semigroups contained in the variety of interest.
We assume throughout that $U \in \vSI$ is generated by the set $\Sigma \sse \ISym(\Omega)$, which is closed under formation of inverses.

The construction we use is closely tied to the representation of an inverse semigroup via its action on idempotents by conjugation -- the Munn representation (see \eg \cite{Petrich84,Law99}).
Indeed, the following graph can be obtained as (part of) the Schreier graph of such an action.

\begin{definition}\label{def:gb-munn}
  Let $\Delta \sse \Omega$ be $U$-invariant and let $\Sigma_\Delta \coloneqq \{ u \in \Sigma \mid u \text{ is $\Delta$-large} \}$.
  We then define the graph $\mathrm{M}(\Delta; \Sigma)$, which we call the \emph{Munn graph} at $\Delta$ with respect to $\Sigma$, as follows.
  Its set of vertices is $E_\Delta \coloneqq \{e_\Delta u\bar u \mid u \in \Sigma_\Delta\} \sse E(\ISym(\Omega))$ and its set of edges is $\Sigma_\Delta$, where the edge $u \in \Sigma_\Delta$ connects its source vertex $e_\Delta u\bar u$ to its target vertex $e_\Delta \bar uu$.
\end{definition}

Recall that, as $\Delta$ is $U$-invariant, we have $e_\Delta u\bar u = ue_\Delta \bar u = u\bar u e_\Delta$ for all $u \in \Sigma$.
The Munn graph is undirected in the sense that every edge $u \in \Sigma_\Delta$ has an inverse, viz.\ $\bar u$.
As indicated above, paths in $\mathrm{M}(\Delta; \Sigma)$ encode the action of $U$ on $E_\Delta$ by conjugation.

\begin{lemma}\label{lem:gb-munn-paths}
  Let $u_1, \dotsc, u_n \in \Sigma$ and $e_s,e_t \in E_\Delta$ for some $U$-invariant subset $\Delta \sse \Omega$.
  Then the product $u = u_1 \cdots u_n \in U$ satisfies $\bar u e_s u = e_t$ if and only if the sequence $(u_1, \dotsc, u_n)$ is a path from $e_s$ to $e_t$ in the Munn graph $\mathrm{M}(\Delta; \Sigma)$ (and thus, in particular, $u_1, \dotsc, u_n \in \Sigma_\Delta$).
\end{lemma}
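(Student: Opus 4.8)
The plan is to prove both directions by induction on the length $n$ of the sequence, using the key structural facts established so far: that a $\Delta$-large element remains $\Delta$-large after left-multiplication by another element with comparable image, that $e_\Delta$ centralises $U$ (so $e_\Delta u \bar u = u \bar u e_\Delta$), and \cref{lem:gb-large} which forces $e_\Delta s = e_\Delta t$ whenever $s$ is $\Delta$-large and $e_\Delta s \leq e_\Delta t$. The base case $n = 0$ (with $u = 1$, the empty path) is immediate: $\bar u e_s u = e_s$ equals $e_t$ precisely when $e_s = e_t$, which is exactly the condition that the empty sequence is a path from $e_s$ to $e_t$. For the base case $n = 1$: the sequence $(u_1)$ is a path from $e_s$ to $e_t$ in $\mathrm{M}(\Delta;\Sigma)$ iff $u_1 \in \Sigma_\Delta$, $e_s = e_\Delta u_1 \bar u_1$, and $e_t = e_\Delta \bar u_1 u_1$; a direct computation using $e_s u_1 = e_\Delta u_1 \bar u_1 u_1 = e_\Delta u_1 = u_1 e_\Delta$ shows $\bar u_1 e_s u_1 = \bar u_1 u_1 e_\Delta = e_\Delta \bar u_1 u_1 = e_t$, and conversely one must recover the vertex conditions.

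First I would handle the forward direction of the inductive step. Suppose $(u_1,\dots,u_n)$ is a path from $e_s$ to $e_t$; write $u' = u_1\cdots u_{n-1}$, which by induction is a path from $e_s$ to the intermediate vertex $e_r := e_\Delta u_n \bar u_n$, so $\bar{u'} e_s u' = e_r$ and $e_t = e_\Delta \bar u_n u_n$. Then $\bar u e_s u = \bar u_n (\bar{u'} e_s u') u_n = \bar u_n e_r u_n = \bar u_n e_\Delta u_n \bar u_n u_n = \bar u_n u_n e_\Delta \bar u_n u_n = e_\Delta \bar u_n u_n = e_t$, where I repeatedly use that idempotents commute and that $e_\Delta$ is central in $U$. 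The converse direction of the inductive step is the one I expect to be the main obstacle: given only the global equation $\bar u e_s u = e_t$ with $e_s, e_t \in E_\Delta$ and $u_1,\dots,u_n\in\Sigma$, I must show each $u_i$ is $\Delta$-large and that the intermediate idempotents $e_\Delta (u_1\cdots u_i)\bar{(u_1\cdots u_i)}$, $e_\Delta \bar{(u_1\cdots u_i)}(u_1\cdots u_i)$ match up along consecutive edges. The idea is that $e_s \in E_\Delta$ means $e_s = e_\Delta v\bar v$ for some $\Delta$-large $v \in \Sigma_\Delta$; then $e_s u_1 \in U$ is $\Delta$-large (left multiplication by $e_s$, whose $\Delta$-part equals $e_\Delta v \bar v$, of the $\Delta$-large element... — more carefully, $e_s u_1 \cdots u_n$ has range $e_t \Delta$-large, and one argues backwards that each $u_i$ must be $\Delta$-large using the fact that a $\Delta$-large element cannot have a non-$\Delta$-large element to its right without collapsing, i.e.\ via \cref{lem:gb-orbits} / \cref{lem:gb-large}).

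The cleanest route through that obstacle, which I would pursue, is: set $e_s = e_\Delta v \bar v$ with $v \in \Sigma_\Delta$ and consider $w = v u_1 \cdots u_n$. Since $\bar u e_s u = e_t \in E_\Delta$ with $e_t$ $\Delta$-large, we get $\bar w w = \bar u \bar v v u = \bar u e_\Delta v \bar v v u = \bar u e_s v \bar v u$; massaging this and using that $e_t$ is $\Delta$-large forces $e_\Delta w \bar w = e_\Delta v\bar v = e_s$ (via \cref{lem:gb-large}, as $v$ is $\Delta$-large and $e_\Delta w\bar w \leq e_\Delta v\bar v$), hence $w$ is $\Delta$-large, hence each prefix $v u_1\cdots u_i$ is $\Delta$-large (a prefix of a $\Delta$-large product is $\Delta$-large because its image contains the image of the whole product... — here one uses $\dom(w)\subseteq\dom(v u_1\cdots u_i)$), hence in particular each $u_i \in \Sigma_\Delta$. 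Then, writing $p_i = u_1\cdots u_i$ and $e_i = e_\Delta p_i \bar{p_i}$, $f_i = e_\Delta \bar{p_i} p_i$, one checks by the same central-idempotent computation as in the forward direction that $\bar{p_i} e_s p_i = e_i'$ for suitable idempotents and that $f_{i} = e_{i+1}$, i.e.\ the target of edge $u_i$ equals the source of edge $u_{i+1}$, so $(u_1,\dots,u_n)$ is a genuine path; finally $e_s$ is the source of $u_1$ and $e_t = \bar u e_s u = f_n$ is the target of $u_n$. The bookkeeping identifying the intermediate vertices with elements of $E_\Delta$ (rather than arbitrary idempotents of $E(\ISym(\Omega))$) is the fiddly part, and is exactly where $\Delta$-largeness of each $u_i$ — just established — is needed.
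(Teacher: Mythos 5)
Your forward direction is fine and matches the paper, but the converse — which you correctly identify as the crux — is where your plan breaks, and at the exact point where \cref{lem:gb-large} must be invoked. That lemma only lets you upgrade an inequality $e_\Delta s \leq e_\Delta t$ to an equality (and conclude largeness of $t$) when the element on the \emph{smaller} side is already known to be $\Delta$-large. You apply it to $e_\Delta w\bar w \leq e_\Delta v\bar v$ knowing only that $v$, on the larger side, is $\Delta$-large; that yields nothing, and using it to conclude that $w$ is $\Delta$-large is circular. In fact the conclusion you want is simply false: take $\Omega = \{1,2,3\}$, $\Sigma = \{u_1,\bar u_1,u_2,\bar u_2\}$ with $u_1\colon 1\mapsto 2$, $u_2\colon 2\mapsto 3$, $\Delta = \Omega$, $e_s = e_{\{1\}}$, $e_t = e_{\{3\}}$, $u = u_1u_2$. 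Then the only $v \in \Sigma_\Delta$ with $e_\Delta v\bar v = e_s$ is $v = u_1$, and $w = vu_1u_2 = u_1u_1u_2$ is the empty map, so $e_\Delta w\bar w \neq e_s$ and $w$ is not $\Delta$-large. (Your computation $\bar w w = \bar u\bar v v u = \bar u e_\Delta v\bar v v u$ is also unjustified: $\bar v v \leq e_\Delta$ fails in general, since a $\Delta$-large generator need not have $\ran(v) \sse \Delta$.) The correct move — and the heart of the paper's proof — puts the $\Delta$-large element on the \emph{small} side: from $e_\Delta \geq e_s$ one gets $e_\Delta\bar u u = \bar u e_\Delta u \geq \bar u e_s u = e_t = e_\Delta u_t\bar u_t$ with $u_t \in \Sigma_\Delta$, so \cref{lem:gb-large} gives that $\bar u u$ is $\Delta$-large (hence $u$, hence every $u_i$, since $u \gJle u_i$ and largeness is upward closed) and $e_\Delta\bar u u = e_t$; a second application to $e_\Delta u\bar u = ue_t\bar u \leq e_s = e_\Delta u_s\bar u_s$ gives $e_\Delta u\bar u = e_s$.

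The second gap is the endgame. Being a path means $e_\Delta u_1\bar u_1 = e_s$, $e_\Delta\bar u_i u_i = e_\Delta u_{i+1}\bar u_{i+1}$ for $1 \leq i < n$, and $e_\Delta\bar u_n u_n = e_t$ — conditions on the individual generators. Your prefix bookkeeping with $e_i = e_\Delta p_i\bar p_i$ and $f_i = e_\Delta\bar p_i p_i$ does not deliver these: in the example above, $f_1 = e_{\{2\}}$ while $e_2 = e_{\{1\}}$, so the claimed identity $f_i = e_{i+1}$ is false even though $(u_1,u_2)$ is a genuine path from $e_s$ to $e_t$; and for $n = 1$ you explicitly defer "recovering the vertex conditions", which is precisely where the content lies. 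The paper's route is to prove the single-generator case completely (the two \cref{lem:gb-large} applications above) and then induct, showing that the intermediate conjugate $\bar u_1 e_s u_1$ again lies in $E_\Delta$ (it is $\gJge e_t$, hence $\Delta$-large, and \cref{lem:gb-large} identifies it with $e_\Delta\bar u_1 u_1$); as written, your plan closes neither of these steps.
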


\begin{proof}
  Since the general case follows by a simple induction on the number $n$, we will only consider the case of a single generator $u = u_1 \in \Sigma$.
  If $u$ is an edge from $e_s$ to $e_t$ in $\mathrm{M}(\Delta; \Sigma)$, then $u \in \Sigma_\Delta$ and $\bar u e_s u = \bar u e_\Delta u\bar u u = e_\Delta \bar u u \bar u u = e_\Delta \bar u u = e_t$.
  
  Conversely, let us now assume that $\bar u e_s u = e_t$, and let $u_s, u_t \in \Sigma_\Delta$ with $e_s = e_\Delta u_s \bar u_s$ and $e_t = e_\Delta u_t \bar u_t$.
  Using the fact that $e_\Delta \geq e_s$, we obtain $e_\Delta \bar u u = \bar u e_\Delta u \geq \bar u e_s u = e_t = e_\Delta u_t \bar u_t$.
  By \cref{lem:gb-large}, $\bar u u$ is $\Delta$-large (and thus so is $u\bar u$) and $e_\Delta \bar u u = e_t$.
  Applying \cref{lem:gb-large} again, the inequality $e_\Delta u \bar u = ue_\Delta \bar u u \bar u = u e_t \bar u \leq e_s = e_\Delta u_s\bar u_s$ implies $e_\Delta u\bar u = e_s$.
\end{proof}

Next, we show that every $\gD$-class of $U$ can be recovered from the Munn graph $\mathrm{M}(\Delta; \Sigma)$ at an appropriately chosen $\Delta \sse \Omega$, beginning with the idempotents of such a class.
Recall that the conditions $e \gD f$, $e \gJ f$, and $e \sim f$ (i.e., $e$ and $f$ are conjugate) are equivalent for idempotents $e,f \in E(S)$ of a finite inverse semigroup $S$ (see \cref{lem:idem_conj_is_green_j}).

\begin{lemma}\label{lem:gb-munn-vertices}
  Let $\Delta = (\dom(e))^U$ for some $e \in E(U)$.
  Then the set $\{f \in E(U) \mid e \sim_U f\}$ is the vertex set of a connected component of $\mathrm{M}(\Delta; \Sigma)$.
\end{lemma}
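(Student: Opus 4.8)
The plan is to show both that every vertex of the component of $e$ in $\mathrm{M}(\Delta;\Sigma)$ is a conjugate idempotent of $e$ (in $E(U)$), and conversely that every $f \in E(U)$ with $e \sim_U f$ actually appears as a vertex of this component. The key tool is \cref{lem:gb-munn-paths}, which translates paths in the Munn graph into conjugation in $U$. First I would observe that $e$ itself is a vertex of $\mathrm{M}(\Delta;\Sigma)$: since $\Delta = (\dom(e))^U$ and $e \in E(U) = \gen{\Sigma}$, we may write $e = u_1 \cdots u_n$ with $u_i \in \Sigma$; the first generator $u_1$ (more precisely, the product) has $\dom(e) \sse \dom(u_1)$-type containment so that $u_1$ is $\Delta$-large, hence $u_1 \in \Sigma_\Delta$, and one checks $e = e_\Delta e = e_\Delta u_1\bar u_1 \cdots$; a short argument using \cref{lem:gb-idempotent_nesting} (noting idempotents satisfy $s\bar s = \bar s s$) gives $e = e_\Delta u_1\bar u_1 u_2 \bar u_2 \cdots u_n\bar u_n$, which is $\leq e_\Delta u_1\bar u_1 \in E_\Delta$, and combined with $\Delta$-largeness and \cref{lem:gb-large} one deduces $e = e_\Delta u_1 \bar u_1 \in E_\Delta$. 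So the component containing $e$ is well-defined.

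For the forward inclusion, suppose $f$ lies in the connected component of $e$; then there is a path $(u_1,\dotsc,u_n)$ in $\mathrm{M}(\Delta;\Sigma)$ from $e$ to $f$, and by \cref{lem:gb-munn-paths} the element $u = u_1\cdots u_n \in U$ satisfies $\bar u e u = f$. Since $e \in E(S)$, \cref{lem:idem_conj_is_green_j} gives $e \gJ[U] f$, hence (as $S$ is finite) $e \sim_U f$; moreover $f = e_\Delta \bar u u$ is visibly an idempotent, and it lies in $E(U)$ because $u \in U$ and $e_\Delta$ centralizes $U$ (so $f = \bar u e u \cdot e_\Delta \cdots$ is a product of elements of $U^1$ landing in $U$, or more directly $f = \bar u e u$ is a conjugate of an idempotent of $U$, hence in $E(U)$). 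For the reverse inclusion, let $f \in E(U)$ with $e \sim_U f$, so there is $u \in U^1$ with $\bar u e u = f$ and $u f \bar u = e$. Write $u = u_1\cdots u_n$ with $u_i \in \Sigma$ (the case $u=1$ forces $f=e$). From $\bar u e u = f$ and $e \leq e_\Delta$ one gets $e_\Delta \bar u u \geq \bar u e u = f$, so by \cref{lem:gb-large} (applied to a $\Delta$-large witness for $e$, using that $e$ being a vertex means the relevant generators are $\Delta$-large) the element $\bar u u$ is $\Delta$-large and $e_\Delta \bar u u = f$; similarly $e_\Delta u\bar u = e$. Then each $u_i$ is $\Delta$-large (a prefix/suffix argument via $s\gJ$-order and the remark preceding \cref{lem:gb-large}), so $(u_1,\dotsc,u_n)$ is a path in $\mathrm{M}(\Delta;\Sigma)$, and by \cref{lem:gb-munn-paths} it runs from $e$ to $f$; hence $f$ is in the component of $e$.

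The main obstacle I anticipate is the bookkeeping needed to show that \emph{each individual generator} $u_i$ appearing in a factorization of the conjugating element is $\Delta$-large, so that the factorization genuinely traces out a path in the Munn graph rather than just an abstract conjugation. The key idea is that $\Delta$-largeness propagates upward under $\gJ$-order (the paragraph preceding \cref{lem:gb-large}): one shows inductively that every prefix $u_1\cdots u_i$ and every suffix $u_{i+1}\cdots u_n$ is $\Delta$-large, using that $\bar u u$ and $u\bar u$ are $\Delta$-large together with the fact that $u_1\cdots u_i \gJge u$ and applying \cref{lem:gb-large} to the intermediate idempotents $e_\Delta u_1\cdots u_i \overline{u_1 \cdots u_i}$. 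A secondary point requiring care is confirming the vertices produced are genuinely in $E(U)$ and not merely in $E(\ISym(\Omega))$; this follows because $e_\Delta u\bar u = e_\Delta \cdot u\bar u$ with $u\bar u \in E(U)$ and, once $u$ is shown $\Delta$-large, $e_\Delta u\bar u$ coincides with a conjugate $\bar v e v$ of $e$ by an element $v \in U$, hence lies in $E(U)$.
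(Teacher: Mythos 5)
Your plan follows the same route as the paper -- establish that $e$ and $f$ are vertices via \cref{lem:gb-large} and convert between conjugation and paths via \cref{lem:gb-munn-paths} -- but two steps as written do not go through. First, the identity $e = e_\Delta u_1\bar u_1 u_2\bar u_2 \cdots u_n\bar u_n$ that you extract from \cref{lem:gb-idempotent_nesting} is false in general: that lemma requires each \emph{factor} to satisfy $u_i\bar u_i = \bar u_i u_i$, which the generators need not (already in $B_2$ one has $e = \bar s s = \bar s \cdot s$, yet $\bar s s \, s\bar s = 0 \neq e$). The slip is repairable, because all you need is $e = e\bar e = u_1\cdots u_n\bar u_n\cdots\bar u_1 \leq u_1\bar u_1$; then $\Delta$-largeness of $e$ (which follows from $(\dom(e))^U = \Delta$) together with \cref{lem:gb-large} gives $e = e_\Delta u_1\bar u_1 \in E_\Delta$, exactly as in the paper.

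Second, the reverse inclusion as sketched has a genuine gap: you never verify the hypothesis $f \in E_\Delta$ of \cref{lem:gb-munn-paths} (note that $f = e_\Delta\bar u u$ does not by itself place $f$ in $E_\Delta$, whose elements are of the form $e_\Delta v\bar v$ with $v \in \Sigma_\Delta$ a \emph{generator}), and knowing merely that every $u_i$ is $\Delta$-large does not make $(u_1,\dotsc,u_n)$ a path -- the consecutive endpoints $e_\Delta\bar u_i u_i = e_\Delta u_{i+1}\bar u_{i+1}$ must match, which is precisely the inductive content of \cref{lem:gb-munn-paths} that your prefix/suffix argument would end up re-proving. The paper avoids all of this: it first computes $(\dom(f))^U = (\dom(e))^U$ from the two equations $\bar u e u = f$ and $u f \bar u = e$, so that $f$ is a vertex by the very same argument used for $e$, and then applies \cref{lem:gb-munn-paths} to $\bar u e u = f$, which certifies in one stroke that all $u_i$ lie in $\Sigma_\Delta$ and that the factorization is a path from $e$ to $f$. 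Finally, in the forward direction the path only yields $\bar u e u = f$, i.e.\ $e \gJge[U] f$; the definition of $\sim_U$ (and the second part of \cref{lem:idem_conj_is_green_j}) also needs $u f \bar u = e$, which you should obtain explicitly from the reverse path $(\bar u_n, \dotsc, \bar u_1)$.
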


\begin{proof}
  Let $e,f \in E(U)$ with $e \sim_U f$, i.e., with $\bar s e s = f$ and $s f \bar s = e$ for some $s \in U$.
  Then
  \[
    (\dom(e))^U = (\dom(f))^{\bar s U} \sse (\dom(f))^U = (\dom(e))^{s U} \sse (\dom(e))^U
  \]
  which shows that $(\dom(e))^U = (\dom(f))^U$.
  We now prove that $e$ is a vertex of $\mathrm{M}(\Delta; \Sigma)$ which, by the preceding calculation, then also holds for $f$.
  Clearly, $\dom(e) \sse \Delta = (\dom(e))^U$.
  Hence, $e = e_\Delta e$ and $e$ is $\Delta$-large.
  Since $\Sigma \sse U$ is a generating set and $e\in E(U)$, we have $e \leq u\bar u$ for some $u \in \Sigma$.
  Therefore, we have $e = e_\Delta e \leq e_\Delta u\bar u$ which, by \cref{lem:gb-large}, implies that $e = e_\Delta u\bar u$ and $u \in \Sigma_\Delta$.

  The fact that $f$ and $e$ are connected in $\mathrm{M}(\Delta; \Sigma)$ now follows from \cref{lem:gb-munn-paths}, as does the fact that every vertex of $\mathrm{M}(\Delta; \Sigma)$ connected to $e$ is some $f \in E(U)$ with $e \sim_U f$.
\end{proof}

A $\gD$-class $D$ of an inverse semigroup $S$ restricts to a groupoid with objects $D \cap E(S)$ and morphisms $D$ where $s \in D$ is a morphism from $s \bar s$ to $\bar s s$ (see~\cite{CliffordMiller56}).
In the case at hand, we have already identified the objects as the vertices of a connected component of $\mathrm{M}(\Delta; \Sigma)$.

Given a vertex $e \in E_\Delta$ of the Munn graph $\mathrm{M}(\Delta; \Sigma)$ at some $U$-invariant $\Delta \sse \Omega$, we denote by $\mathrm{M}(\Delta, e; \Sigma) \sse \mathrm{M}(\Delta; \Sigma)$ the connected component of $e$ and by $E_{\Delta,e}$ and $\Sigma_{\Delta,e}$ the set of its vertices and edges, respectively.
It will become apparent from the arguments below, but not stated explicitly, that the elements $e_\Delta u = (e_\Delta u \bar u) u (e_\Delta \bar u u)$ with $u \in \Sigma_{\Delta,e}$ generate the $\gD$-class of $e \in E(U)$ as a groupoid when $\Delta = (\dom(e))^U$ is chosen as in \cref{lem:gb-munn-vertices}.

\begin{definition}
  Let $\Delta \sse \Omega$ be a $U$-invariant set and $e \in E_\Delta$.
  We call a map $\gamma \colon E_{\Delta, e} \to U$ a \emph{basis} at $(\Delta, e)$ provided it satisfies the following conditions, wherein we write $\bar\gamma(f) = \overline{\gamma(f)}$.
  \begin{itemize}
    \item The element $\gamma(e)$ is idempotent (i.e., $\gamma(e) = \bar\gamma(e)$) and $\gamma(e) \geq \gamma(f)\bar\gamma(f)$ for all $f \in E_{\Delta, e}$.
    \item The element $\gamma(f)$ satisfies $\bar\gamma(f) e \gamma(f) = f$ and $\gamma(f) f \bar\gamma(f) = e$ for all $f \in E_{\Delta, e}$.
  \end{itemize}
\end{definition}

To construct a basis $\gamma$ at $(\Delta, e)$ we may proceed as follows.
First, let $\tilde e \in E(U)$ be the product $\prod u \bar u$ extending over all $u \in \Sigma_{\Delta,e}$ with $u \bar u \ge e$.
The idempotent $\tilde e$ will serve as $\gamma(e)$.
Next, for each other vertex $f \in E_{\Delta, e}$, we choose a path $(u_1, \dotsc, u_n)$ from $e$ to $f$ in $\mathrm{M}(\Delta, e; \Sigma)$ and set $\gamma(f) \coloneqq \tilde e u_1 \dotsc u_n$.
Using \cref{lem:gb-munn-paths}, it is easy to verify that $\gamma$ is as claimed.

Given a basis $\gamma$ at $(\Delta, e)$, we define $\lambda\colon \Sigma_{\Delta,e} \to U$ by $\lambda(u) = \gamma(e_\Delta u \bar u) \, u \, \bar\gamma(e_\Delta \bar u u)$.
Note that $\lambda(\bar u)$ is the inverse of $\lambda(u)$.
Moreover, we have $\lambda(u)\lambda(\bar u) = \lambda(\bar u )\lambda(u)$ by \cref{lem:gb-characterization} since, clearly, $\lambda(u)\lambda(\bar u) \gJ \lambda(\bar u )\lambda(u)$ and $\gamma(e) \geq \lambda(u)\lambda(\bar u), \lambda(\bar u )\lambda(u)$.

\begin{lemma}\label{lem:gb-munn-hclass}
  Let $e \in E(U)$, and let $\gamma$ be a basis at $(\Delta, e)$ where $\Delta = (\dom(e))^U \sse \Omega$.
  Then the $\gH$-class $U_e = \{s \in U \mid s\bar s = \bar s s = e\} \leq U$ is generated by $\Sigma_e \coloneqq \{ e \lambda(u) \mid u \in \Sigma_{\Delta, e}\} \sse U$.
\end{lemma}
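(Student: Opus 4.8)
The plan is to prove the two inclusions $\Sigma_e \sse U_e$ and $U_e \sse \gen{\Sigma_e}$ separately, where $U_e = \set{s \in U}{s\bar s = \bar s s = e}$ is the $\gH$-class of $e$ in $U$, which is the maximal subgroup of $U$ with identity $e$. Since $U_e$ is closed under products and under the unary inverse operation, the inclusion $\gen{\Sigma_e} \sse U_e$ follows immediately from $\Sigma_e \sse U_e$; hence it suffices to establish $\Sigma_e \sse U_e$ and $U_e \sse \gen{\Sigma_e}$.

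For $\Sigma_e \sse U_e$, I would fix $u \in \Sigma_{\Delta,e}$ with source $f = e_\Delta u\bar u$ and target $g = e_\Delta\bar u u$ (both vertices of $\mathrm{M}(\Delta,e;\Sigma)$, hence in $E_{\Delta,e}$), so that $\lambda(u) = \gamma(f)\,u\,\bar\gamma(g)$. Unpacking the basis identities $\gamma(f) f \bar\gamma(f) = e$ and $\bar\gamma(f) e \gamma(f) = f$ shows that $\gamma(f)$ restricts to a bijection from $\dom(e)$ onto $\dom(f) = \Delta \cap \dom(u)$ with inverse the restriction of $\bar\gamma(f)$; symmetrically $\bar\gamma(g)$ restricts to a bijection from $\dom(g) = \Delta \cap \ran(u)$ onto $\dom(e)$. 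As $\Delta$ is $U$-invariant, $u$ maps $\Delta \cap \dom(u)$ bijectively onto $\Delta \cap \ran(u)$, so tracing the composite $e\,\gamma(f)\,u\,\bar\gamma(g)$ through these sets gives $\dom(e\lambda(u)) = \ran(e\lambda(u)) = \dom(e)$. Therefore $e\lambda(u)\,\overline{e\lambda(u)} = \overline{e\lambda(u)}\,e\lambda(u) = e$, i.e., $e\lambda(u) \in U_e$.

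For $U_e \sse \gen{\Sigma_e}$, take $s \in U_e$ and write $s = u_1 \cdots u_n$ with $u_i \in \Sigma$ and $n \geq 1$. Set $p_k = u_1\cdots u_k$, $v_k = e\, p_k \in U$ and $f_k = \bar v_k v_k$. Since $\dom(s) \sse \dom(p_k)$ and $\dom(s) = \dom(e)$, one gets $\dom(v_k) = \dom(e)$, hence $v_k \bar v_k = e$, $f_k = \bar p_k e p_k$, and $e = v_k f_k \bar v_k$ with $v_k \in U$. These two facts, together with \cref{lem:idem_conj_is_green_j}, yield $e \gJ[U] f_k$ and hence $e \sim_U f_k$, so $f_k$ lies in $E_{\Delta,e}$ by \cref{lem:gb-munn-vertices} (using $\Delta = (\dom(e))^U$). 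Now $f_0 = f_n = e$ and $f_k = \bar u_k f_{k-1} u_k$, so \cref{lem:gb-munn-paths} applied to the single generator $u_k$ (with $f_{k-1}, f_k \in E_\Delta$) shows each $u_k$ is an edge of $\mathrm{M}(\Delta;\Sigma)$ from $f_{k-1}$ to $f_k$; since $f_{k-1} \in E_{\Delta,e}$, this edge lies in the component $\mathrm{M}(\Delta,e;\Sigma)$, so $u_k \in \Sigma_{\Delta,e}$ and $e_\Delta u_k \bar u_k = f_{k-1}$, $e_\Delta \bar u_k u_k = f_k$. Thus $\lambda(u_k) = \gamma(f_{k-1})\,u_k\,\bar\gamma(f_k)$, and using $e \leq \gamma(e)$ (which follows from $\gamma(e)$ being idempotent and $\bar\gamma(e)e\gamma(e) = e$), the identity $\bar\gamma(f_{k-1})\,e\,\gamma(f_{k-1}) = f_{k-1}$, and $f_{k-1} u_k = e_\Delta u_k$, a straightforward telescoping computation gives $(e\lambda(u_1))(e\lambda(u_2)) \cdots (e\lambda(u_n)) = e\, u_1 \cdots u_n\, \bar\gamma(e) = s$. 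Hence $s \in \gen{\Sigma_e}$.

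The main obstacle is the middle portion of the second inclusion: showing that an arbitrary word $u_1\cdots u_n$ representing an element of $U_e$ automatically traces a closed walk through the Munn-graph component of $e$. This hinges on recognising that the idempotents $f_k = \bar p_k e p_k$ remain conjugate to $e$ (for which one invokes the relative Green's-relation machinery of \cref{lem:green-semi_rel} and \cref{lem:idem_conj_is_green_j}) so that \cref{lem:gb-munn-vertices} and \cref{lem:gb-munn-paths} become applicable to each individual generator. The rest is routine bookkeeping once the basis identities are written out; the only mild subtleties are that $\gamma(e)$ need not equal $e$ (but does satisfy $\gamma(e) \geq e$, so the leading and trailing factors $e$ absorb it) and the verification $v_k\bar v_k = e$, which rests on $\dom(s) \sse \dom(p_k)$.
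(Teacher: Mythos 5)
Your proof is correct, and its overall shape matches the paper's: prove both inclusions, factor $s \in U_e$ as a word over $\Sigma$, show that its letters lie in $\Sigma_{\Delta,e}$ with the conjugates of $e$ along the word as intermediate vertices, and then identify $s$ with the product $(e\lambda(u_1))\cdots(e\lambda(u_n))$. Where you deviate is in how the two sub-steps are executed. For the first, the paper applies \cref{lem:gb-munn-paths} once to the \emph{entire} word: since $\bar s e s = e$ and $e \in E_\Delta$, the word is a closed path at $e$ in $\mathrm{M}(\Delta;\Sigma)$, which already gives $u_k \in \Sigma_{\Delta,e}$ together with the intermediate vertices; your detour through the idempotents $f_k = \bar p_k e p_k$, their conjugacy to $e$ (via \cref{lem:idem_conj_is_green_j} and \cref{lem:gb-munn-vertices}), and a letter-by-letter use of \cref{lem:gb-munn-paths} is valid but essentially re-proves the induction hidden inside that lemma, so it costs extra machinery without gaining generality. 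For the second, the paper only derives the inequality $s \geq e\lambda(u_1)\cdots e\lambda(u_n)$ by inserting idempotents and then upgrades it to an equality via \cref{lem:gb-large}, whereas your telescoping computation --- using $\bar\gamma(f_k) e \gamma(f_k) = f_k$, $f_{k-1}u_k = e_\Delta u_k$, the centrality of $e_\Delta$, and $e \leq \gamma(e)$ --- yields the equality exactly and dispenses with \cref{lem:gb-large} altogether, a mild but genuine simplification of that step. Finally, your domain-tracing argument for $\Sigma_e \subseteq U_e$ correctly fills in what the paper dismisses as ``easy to verify''.
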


\begin{proof}
  It is easy to verify that each $s = e\lambda(u)$ satisfies $s \bar s = \bar s s = e$.
  Conversely, let $s \in U$ with $s\bar s = \bar s s = e$ and write $s = u_1 \dotsc u_n$ with $u_1, \dotsc, u_n \in \Sigma$.
  Then $(u_1, \dotsc, u_n)$ is a path from $e$ to $e$ in $\mathrm{M}(\Delta, e; \Sigma)$ by \cref{lem:gb-munn-paths} as $\bar s e s = e$.
  In particular, $u_1, \dotsc, u_n \in \Sigma_{\Delta, e}$.

  Let $e = e_0, e_1, \dotsc, e_n = e \in E_{\Delta,e}$ be the vertices along the path $(u_1, \dotsc, u_n)$ and note that $e_0, e_1 \dotsc, e_n \in E(U)$ by \cref{lem:gb-munn-vertices}.
  We now compute 
  \[
    s = u_1 \dotsc u_n \ge e\gamma(e_0) \, u_1 \, \bar\gamma(e_1) \, e \gamma(e_1) \, u_2 \dotsc u_n \, \bar\gamma(e_n) = e\lambda(u_1)\, e\lambda(u_2) \dotsc e\lambda(u_n),
  \]
  wherein the left side equals $e_\Delta s$ and the right side equals $e_\Delta \lambda(u_1)\lambda(u_2) \dotsc \lambda(u_n)$.
  Therefore, we can then conclude that both sides of the inequality are equal by \cref{lem:gb-large}.
\end{proof}

\begin{lemma}\label{lem:gb-munn-idempotent}
  Let $e \in E(\ISym(\Omega))$.
   If $\hat e \in E(U)$ is minimal with $e \leq \hat e$, then there is a unique $e' \in E_\Delta$ with $e \leq e'$ where $\Delta = (\dom(e))^U$.
  Moreover, $\hat e = \prod \lambda(u)\lambda(\bar u)$ where the product extends over all $u \in \Sigma_{\Delta, e'}$ and $\lambda$ is obtained from some basis $\gamma$ at $(\Delta, e')$.
\end{lemma}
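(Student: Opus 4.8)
The plan is to first identify the vertex $e'$ explicitly as $e' = e_\Delta\hat e$, which gives existence and uniqueness in one stroke, and then to establish the product formula by squeezing $P := \prod_{u\in\Sigma_{\Delta,e'}}\lambda(u)\lambda(\bar u)$ between $\hat e$ and an idempotent that restricts to $e'$ on $\Delta$. For existence, since $\hat e \geq e$ we have $\dom(\hat e)\supseteq\dom(e)$, so $\hat e$ is $\Delta$-large; writing $\hat e = s_1\cdots s_n$ with $s_i\in\Sigma$ and using $\hat e = \hat e\,\overline{\hat e} \leq s_1\bar s_1$, we get $e_\Delta\hat e \leq e_\Delta s_1\bar s_1$ with $s_1\bar s_1\in E(U)$, so \cref{lem:gb-large} forces $e_\Delta\hat e = e_\Delta s_1\bar s_1\in E_\Delta$ and $s_1\in\Sigma_\Delta$; as $\dom(e)\sse\Delta$ and $e\leq\hat e$, also $e = e_\Delta e \leq e_\Delta\hat e =: e'$. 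For uniqueness, let $e''\in E_\Delta$ with $e\leq e''$, say $e'' = e_\Delta u\bar u$ with $u\in\Sigma_\Delta$; then $e\leq e''\leq u\bar u\in E(U)$, so by minimality of $\hat e$ applied to $\hat e\,u\bar u\in E(U)$ (which satisfies $e\leq\hat e\,u\bar u\leq\hat e$) we get $\hat e\leq u\bar u$, whence $e_\Delta\hat e\leq e_\Delta u\bar u = e''$ and \cref{lem:gb-large} yields $e'' = e_\Delta\hat e = e'$.

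For the product formula, fix the basis $\gamma$ at $(\Delta,e')$ constructed after the definition of a basis and let $\lambda$ be the induced map; put $P = \prod_{u\in\Sigma_{\Delta,e'}}\lambda(u)\lambda(\bar u)$. Since $\gamma$ takes values in $U$, each $\lambda(u)\in U$, and as the factors $\lambda(u)\lambda(\bar u) = \lambda(u)\overline{\lambda(u)}$ are idempotents (which commute in an inverse semigroup), $P\in E(U)$. A short computation from the basis identities $\gamma(f)\,f\,\bar\gamma(f) = e'$ and $\bar\gamma(f)\,e'\,\gamma(f) = f$, together with $\bar\gamma(g)\gamma(g)\geq g$ and $u\,g\,\bar u = e_\Delta u\bar u$, gives $\lambda(u)\lambda(\bar u) = \gamma(f)\,u\,\bar\gamma(g)\gamma(g)\,\bar u\,\bar\gamma(f)\geq\gamma(f)\,(e_\Delta u\bar u)\,\bar\gamma(f) = e'$ for every $u$ (with $f = e_\Delta u\bar u$, $g = e_\Delta\bar u u$); combined with $\lambda(u)\lambda(\bar u)\leq\gamma(e')$ (already observed before the statement), this yields $e'\leq P\leq\gamma(e')$. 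Now $\gamma(e')\in E(U)$ with $e'\leq\gamma(e')$, and $\gamma(e')$ is $\Delta$-large since $\dom(\gamma(e'))\supseteq\dom(e')$ and $(\dom(e'))^U = \Delta$; applying \cref{lem:gb-large} to $\hat e$ and $\gamma(e')$ gives $e_\Delta\gamma(e') = e_\Delta\hat e = e'$, so squeezing yields $e_\Delta P = e'$. Finally $e\leq e'\leq P$ with $P\in E(U)$, so minimality of $\hat e$ (via $\hat e P\in E(U)$ with $e\leq\hat e P\leq\hat e$) gives $\hat e\leq P$.

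It remains to prove $P\leq\hat e$, and this is the crux. The set $\{f\in E(U) : f\geq e\}$ is closed under products, hence a meet-subsemilattice whose minimum is $\hat e$; therefore it suffices to show $\dom(P)\sse\dom(f)$ for every $f\in E(U)$ with $f\geq e$. By \cref{lem:gb-orbits} both $\dom(P)$ and $\dom(f)$ are, within each $U$-orbit, unions of the "domain parts" of that lemma, so one argues part by part: for a part $\pi$ with $\pi\sse\dom(P)$ and a given $f$, write $f = s_1\,g_1\,\bar s_1$ with $g_1 = (s_2\cdots s_n)\overline{(s_2\cdots s_n)}\in E(U)$, and note as in the existence step that $s_1\in\Sigma_{\Delta,e'}$ with $s_1\bar s_1\geq e'$. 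Since $\gamma(e')$ is idempotent, $\dom(\lambda(s_1))\sse\dom(s_1)$, so $\pi\sse\dom(P)\sse\dom(\lambda(s_1))$ already forces $\pi\sse\dom(s_1)$; the claim $\pi\sse\dom(f)$ then reduces, after pushing $\pi$ across the edge $s_1$ and replacing $e$ by $\bar s_1 e s_1$ and $f$'s "inner" idempotent by its minimal cover $g_1$, to a smaller instance of the same statement, and this recursion terminates because $U$ is finite. I expect this domain-tracking recursion — rather than the order-theoretic manipulations above, which are routine — to be the main obstacle; an alternative is to show that the constructed basis satisfies $\gamma(e') = \hat e$ and that $P = \gamma(e')$, but this appears to require the same combinatorial input. (The statement generalizes \cref{lem:clifford-idempotent_cover} from $\vCl$ to $\vSI$, where the simple product over generators no longer suffices and the Munn graph is needed.)
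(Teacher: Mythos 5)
Your first half is fine: the identification $e' = e_\Delta\hat e$, its existence and uniqueness via \cref{lem:gb-large}, and the inequality $\hat e \leq P$ (obtained from $e \leq e' \leq P \in E(U)$ and minimality) all match what the paper does, and your verification that $e' \leq \lambda(u)\lambda(\bar u)$ is correct. But the reverse inequality $P \leq \hat e$ is exactly where your argument stops being a proof: the ``domain-tracking recursion'' is only a sketch, with no precise inductive statement, no well-founded measure (what quantity decreases when you replace $e$ by $\bar s_1 e s_1$ and the inner idempotent by its minimal cover is never identified -- ``$U$ is finite'' does not by itself terminate a recursion whose instances are not shown to shrink), and you yourself flag it as the main obstacle. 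So there is a genuine gap, and it sits precisely at the step the lemma is really about.

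The missing idea in the paper is short and does not need any orbit-by-orbit domain analysis. Since $\hat e \in E(U)$, write $\hat e = u_1\dotsm u_n\bar u_n\dotsm\bar u_1$ with $u_i \in \Sigma$ (take any word for $\hat e$ and use $\hat e = \hat e\,\overline{\hat e}$). Because $\overline{\hat e}\,e'\hat e = e'$, \cref{lem:gb-munn-paths} says this word is a closed path at $e'$ in $\mathrm{M}(\Delta;\Sigma)$, so all $u_i \in \Sigma_{\Delta,e'}$; let $e'_0,\dotsc,e'_n,\dotsc,e'_0$ be the vertices along it. Inserting the idempotents $\gamma(e'_0)$, $\bar\gamma(e'_i)\gamma(e'_i)$, $\bar\gamma(e'_0)$ (each $\leq 1$ in the natural order) gives
\[
  \hat e \;\geq\; \lambda(u_1)\dotsm\lambda(u_n)\,\lambda(\bar u_n)\dotsm\lambda(\bar u_1)
  \;=\; \lambda(u_1)\lambda(\bar u_1)\dotsm\lambda(u_n)\lambda(\bar u_n)
  \;\geq\; P,
\]
where the middle equality is \cref{lem:gb-idempotent_nesting} (applicable since $\lambda(u)\lambda(\bar u) = \lambda(\bar u)\lambda(u)$), and the last inequality holds because enlarging a product of commuting idempotents to the full product over $\Sigma_{\Delta,e'}$ only decreases it. Combined with your $\hat e \leq P$ this gives $\hat e = P$. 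So the combinatorial input you were looking for is not a new domain argument but the already-proved path lemma together with the $\vSI$-identity of \cref{lem:gb-idempotent_nesting}; I recommend replacing the recursion sketch by this argument.
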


Note that $e \leq e' \leq \hat e$; hence, if $e= \hat e$, then also $e = e'$.

\begin{proof}
  Suppose that $\hat e \in E(U)$.
  Then $\dom(e) \sse (\dom(e))^U = \Delta$ and thus $e \leq e_\Delta$.
  Let $u \in \Sigma$ with $\hat e \leq u \bar u$.
  Then $e = e_\Delta e \leq e_\Delta \hat e \leq e_\Delta u \bar u$ and $u \bar u$ is $\Delta$-large; hence, $e' = e_\Delta u \bar u \in E_\Delta$.
  Conversely, if $u \in \Sigma_\Delta$ with $e \leq e_\Delta u \bar u$, then $e \leq u \bar u$ and, by minimality, $\hat e \leq u \bar u$.
  We obtain the inequality $e_\Delta \hat e \leq e_\Delta u \bar u$, which, by \cref{lem:gb-large}, is an equality.
  As such, $e' = e_\Delta \hat e$ is the unique vertex $e' \in E_\Delta$ with the property $e \leq e'$.

  To see that $\hat e$ can be written as the product $\hat e_\lambda$ of all $\lambda(u)\lambda(\bar u)$ with $u \in \Sigma_{\Delta, e'}$, we note that $\hat e = u_1 \dotsc u_n \bar u_n \dotsc \bar u_1$ for some $u_1, \dotsc u_n \in \Sigma$.
  Then $(u_1, \dotsc, u_n, \bar u_n, \dotsc, \bar u_1)$ is a path from $e'$ to $e'$ in $\mathrm{M}(\Delta; \Sigma)$ by \cref{lem:gb-munn-paths}; so $u_1, \dotsc, u_n \in \Sigma_{\Delta, e'}$.
  Let $e'_0, e'_1, \dotsc, e'_n, \dotsc, e'_1, e'_0 \in E_{\Delta, e'}$ be the vertices of $\mathrm{M}(\Delta; \Sigma)$ along this path.
  Inserting idempotents as in the proof of \cref{lem:gb-munn-hclass},
  \begin{align*}
    \hat e = u_1 \dotsc u_n \bar u_n \dotsc \bar u_1 &\geq \gamma(e'_0) u_1 \bar\gamma(e'_1) \dotsc u_n \bar\gamma(e'_n) \gamma(e'_n) \bar u_n \dotsc \gamma(e'_1)\bar u_1\bar \gamma(e'_0) \\
      &= \lambda(u_1) \dotsc \lambda(u_n)\lambda(\bar u_n) \dotsc \lambda(\bar u_1) = \lambda(u_1)\lambda(\bar u_1) \dotsc \lambda(u_n) \lambda(\bar u_n) \geq \hat e_\lambda
  \end{align*}
  where the final equality follows from \cref{lem:gb-idempotent_nesting} as $\lambda(u)\lambda(\bar u) = \lambda(\bar u)\lambda(u)$ for all $u \in \Sigma_{\Delta,e'}$.
  Since $e \leq e'\leq \hat  e_\lambda$, we have $\hat e \leq \hat e_\lambda$ by minimality of $\hat e$. 
  Hence, $\hat e = \hat e_\lambda$.
\end{proof}

\subsection{The Membership and Conjugacy Problems in $\vSI$}\label{sub:sis}

We now use the theoretical machinery developed in \cref{sub:munn} to show how the membership problem (and also the conjugacy problem to a certain extend) for finite strict inverse semigroups can be solved efficiently.
More precisely, we will show the following.

\begin{theorem}\label{thm:gb-complexity}
Let $\vH \sse \vG$ be a variety of finite groups. The problems $\dMemb[PB]{\vH\vee \vBS}$ and $\dConj[PB]{\vH \vee \vBS}$ are \LOGSPACE-reducible to $\dMemb[PB]{\vH}$ and $\dConj[PB]{\vH}$ for $\vH$, respectively.
\end{theorem}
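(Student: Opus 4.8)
Both reductions follow the same template: given a strict inverse semigroup $U = \gen{\Sigma} \le \ISym(\Omega)$ with $U \in \vH \vee \vBS$ together with the input elements, use the Munn-graph machinery of \cref{sub:munn} to reduce the question to the corresponding question about a single $\gH$-class (a maximal subgroup) of $U$, which lies in $\vH$ by \cref{pro:varieties}, since it is a group in $\vH\vee\vBS\in[\vBS,\vSI]$. Two facts make this computable in \LOGSPACE. First, the orbit $x^U$ is the connected component of $x$ in the undirected graph on $\Omega$ whose edges record $x \mapsto x^u$ for $u \in \Sigma$ (undirected because $\bar u \in \Sigma$), so every $U$-invariant set we need --- in particular $\Delta = (\dom(e))^U$ for the relevant idempotent $e$ --- is computable in \LOGSPACE{} by Reingold's theorem. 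Second, the Munn graph $\mathrm{M}(\Delta;\Sigma)$ has at most $\abs{\Sigma}$ vertices and at most $\abs{\Sigma}$ edges; hence spanning trees, paths, and all derived data (a basis $\gamma$ at a vertex, the map $\lambda$, the $\gH$-class generating set $\Sigma_{\hat e}$ of \cref{lem:gb-munn-hclass}) are polynomial-size objects built from \LOGSPACE-primitives: products of polynomially many partial bijections, orbit computations, and $\dUGAP$-oracle calls, which are free by Reingold's theorem (cf.\ \cref{rem:compute-path}).

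\emph{Membership.} On input $\Sigma,t$, compute via \cref{lem:gb-munn-idempotent} the minimal idempotent $\hat e\in E(U)$ above $t\bar t$; this exists iff some generator has domain containing $\dom(t)$, and we reject if not, or if $\hat e \neq t\bar t$ (equivalently $t\bar t \notin E(U)$). Symmetrically compute $\hat f\in E(U)$, the minimal idempotent above $\bar t t$, rejecting unless $\hat f = \bar t t$ and $(\dom(t))^U = (\ran(t))^U$. Using \cref{lem:gb-munn-vertices} and a $\dUGAP$-query, decide whether $t\bar t \sim_U \bar t t$, rejecting if not; otherwise a path in the relevant component of $\mathrm{M}(\Delta;\Sigma)$ (with $\Delta = (\dom(t))^U$) yields, after multiplying by $\hat e$ and $\hat f$, a groupoid morphism $v_0 \in U$ with $v_0\bar v_0 = \hat e$ and $\bar v_0 v_0 = \hat f$. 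Using $v_0\bar v_0 = \hat e = t\bar t$ and $\bar v_0 v_0 = \hat f = \bar t t$ one checks that $t\bar v_0$ has both its idempotents equal to $\hat e$, and conversely that $t\bar v_0 \in U$ forces $t = (t\bar v_0)v_0 \in U$; hence $t \in U$ iff $t\bar v_0$ lies in the $\gH$-class $U_{\hat e}$. We therefore output the $\dMemb[PB]{\vH}$-instance consisting of the generating set $\Sigma_{\hat e}$ of $U_{\hat e}$ from \cref{lem:gb-munn-hclass} and the target $t\bar v_0$ (or a fixed negative instance if any check above failed).

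\emph{Conjugacy.} Here $s \sim_U t$ holds via the conjugator $1$ iff $s = t$; so first test $s = t$, and otherwise search for a conjugator in $U$. The key step is a normalization: if $\bar u s u = t$ for some $u \in U$, then $u\bar u \ge s\bar s\vee\bar s s$ and $\bar u u \ge t\bar t\vee\bar t t$, and replacing $u$ first by $\hat e u$ and then by $\hat e u\hat e'$ --- where $\hat e$ (resp.\ $\hat e'$) is the minimal idempotent of $U$ above $s\bar s\vee\bar s s$ (resp.\ $t\bar t\vee\bar t t$), which must exist --- produces a conjugator with $u\bar u = \hat e$ and $\bar u u = \hat e'$ (the second step preserves $u\bar u = \hat e$ because the resulting left idempotent is $\le\hat e$ and is again an idempotent of $U$ lying above $s\bar s\vee\bar s s$, hence equals $\hat e$ by minimality). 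Consequently $s \sim_U t$ iff $\hat e,\hat e'\in E(U)$ exist, $\hat e \sim_U \hat e'$, and --- fixing, as in the membership case, a groupoid morphism $v_0 \in U$ from $\hat e$ to $\hat e'$ --- there is $w \in U_{\hat e}$ with $\bar w s w = v_0 t\bar v_0$; equivalently, $s \sim_{U_{\hat e}} v_0 t\bar v_0$. We output the corresponding $\dConj[PB]{\vH}$-instance with generating set $\Sigma_{\hat e}$ and elements $s$ and $v_0 t\bar v_0$, again defaulting to a fixed negative instance when a preliminary check fails.

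I expect the main effort to lie not in a single hard step but in the careful verification of the reduction recipes, with the conjugacy normalization --- establishing that both idempotents of the conjugator can be pinned to $\hat e$ and $\hat e'$ simultaneously, and that restricting to $U_{\hat e}$ loses no solutions --- being the most delicate piece of bookkeeping. The genuinely structural input, namely that $U\in\vSI$ (equivalently $B^1_2$ does not divide $U$; see \cref{lem:gb-characterization}), has already been absorbed into the Munn-graph lemmas \cref{lem:gb-munn-idempotent}, \cref{lem:gb-munn-hclass}, and \cref{lem:gb-munn-vertices}; the remaining obligation is to confirm that every partial-bijection product and every Munn-graph traversal is carried out on data of size polynomial in the input (guaranteed by $\abs{E_\Delta},\abs{\Sigma_\Delta}\le\abs{\Sigma}$), so that, with a $\dUGAP$ oracle, the whole procedure runs in \LOGSPACE{}, and hence unconditionally in \LOGSPACE{} by Reingold's theorem.
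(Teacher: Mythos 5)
Your proposal is correct and follows essentially the same route as the paper: reduce both problems to the $\gH$-class group $U_{\hat e}$ via the Munn-graph machinery (orbits, minimal idempotents, a connecting element obtained from a path, and the generating set of \cref{lem:gb-munn-hclass}), with all steps carried out in \LOGSPACE{} using a \dUGAP{} oracle and Reingold's theorem. Your normalization of the conjugator via minimality of $\hat e$ and $\hat e'$ is just a mildly different bookkeeping of the paper's argument that any conjugator of $s$ and $t$ already conjugates $\hat e$ to $\hat f$.
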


Using the facts that $\dMemb[PB]{\vG}$ is in \NC (by {\cite{BabaiLS87}}, see \cref{pro:pbm-groups-membership}) and $\dConj[PB]{\vG}$ is in \NP (see \cref{pro:bpm-groups-conjugacy}), this implies the following upper bounds for $\vSI = \vG \vee \vBS$.

\begin{corollary}\label{cor:gb-complexity}
The problem	$\dMemb[PB]{\vSI}$ is in \NC and $\dConj[PB]{\vSI}$ is in \NP.
\end{corollary}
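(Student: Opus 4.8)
The plan is to obtain this corollary as an immediate consequence of \cref{thm:gb-complexity} together with the known upper bounds for the membership and conjugacy problems over the variety of finite groups. Recall that $\vSI = \vG \vee \vBS$ (see \cref{pro:varieties}), so instantiating \cref{thm:gb-complexity} with $\vH = \vG$ yields $\dMemb[PB]{\vSI} \leq_m^{\LOGSPACE} \dMemb[PB]{\vG}$ and $\dConj[PB]{\vSI} \leq_m^{\LOGSPACE} \dConj[PB]{\vG}$.

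For the membership part I would combine this with \cref{pro:pbm-groups-membership} (the Babai--Luks--Seress theorem), which gives $\dMemb[PB]{\vG} \in \NC$. Since $\LOGSPACE \sse \NC$, the class $\NC$ is closed under \LOGSPACE-reductions: compute the reduction in \LOGSPACE, hence in $\NC$, and then compose with the $\NC$ circuit family for $\dMemb[PB]{\vG}$. Thus $\dMemb[PB]{\vSI} \in \NC$. The conjugacy part is entirely analogous: by \cref{pro:bpm-groups-conjugacy} we have $\dConj[PB]{\vG} \in \NP$, and $\NP$ is trivially closed under \LOGSPACE-reductions as $\LOGSPACE \sse \Ptime$; hence $\dConj[PB]{\vSI} \in \NP$.

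At this level there is no genuine obstacle: the corollary is pure bookkeeping once \cref{thm:gb-complexity} is available. The real content lies in \cref{thm:gb-complexity}, whose proof rests on the Munn-graph machinery of \cref{sub:munn}. For membership the idea is, given $U = \gen{\Sigma} \leq \ISym(\Omega)$ with $U \in \vSI$ and a target $t$, to pass to the $U$-invariant set $\Delta = (\dom(t))^U$ attached to the $\gD$-class of $t$, form the Munn graph $\mathrm{M}(\Delta;\Sigma)$, single out the connected component corresponding to that $\gD$-class, extract a basis and the associated map $\lambda$, and thereby reduce the question ``$t \in U$?'' to a membership question in one $\gH$-class of $U$, i.e.\ a subgroup; conjugacy is handled the same way, additionally using that idempotent conjugacy coincides with $\gJ$-equivalence. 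The point demanding care there---and the reason the reduction runs in \LOGSPACE rather than \ACz---is that all of the graph-theoretic operations (connectivity, paths in the Munn graph, and the nested idempotents $\tilde e$ and $\hat e$) must be carried out in $\LOGSPACE^{\dUGAP}$, which equals $\LOGSPACE$ by Reingold's theorem~\cite{Reingold08}.
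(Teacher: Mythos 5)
Your proposal is correct and matches the paper's own argument: the corollary is obtained by instantiating \cref{thm:gb-complexity} with $\vH = \vG$ (using $\vSI = \vG \vee \vBS$) and composing the resulting \LOGSPACE-reductions with \cref{pro:pbm-groups-membership} and \cref{pro:bpm-groups-conjugacy}, exactly as you describe. Your closing sketch of the Munn-graph machinery behind \cref{thm:gb-complexity} is accurate but not needed for this corollary.
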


The other extreme, \ie taking $\vH$ to be the trivial variety $\vT$ in \cref{thm:gb-complexity}, yields the following upper bounds.
These are interesting because $\dEMemb[CT]{\vBS}$ and $\dEConj[CT]{\vBS}$ are already complete for \LOGSPACE{} as we will later show; see \cref{pro:membership_sl_hardness} and \cref{pro:conjugacy_sl_hardness}.

\begin{corollary}\label{cor:mem-BS}
The problems $\dMemb[PB]{\vBS}$ and $\dConj[PB]{\vBS}$ are in \LOGSPACE.
\end{corollary}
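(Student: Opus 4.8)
The plan is to apply \cref{thm:gb-complexity} with $\vH = \vT$, the trivial variety of finite inverse semigroups. Since $\vT$ is the least variety of finite inverse semigroups, we have $\vT \vee \vBS = \vBS$, so the theorem immediately gives $\dMemb[PB]{\vBS} \leq_m^{\LOGSPACE} \dMemb[PB]{\vT}$ and $\dConj[PB]{\vBS} \leq_m^{\LOGSPACE} \dConj[PB]{\vT}$. As \LOGSPACE is closed under \LOGSPACE-reductions, it then remains only to observe that $\dMemb[PB]{\vT}$ and $\dConj[PB]{\vT}$ lie in \LOGSPACE; in fact, both are in \ACz.

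For $\dMemb[PB]{\vT}$ this is essentially by definition: a trivial inverse semigroup has a single element, which is necessarily idempotent, hence a partial identity $e_\Delta$ for some $\Delta \sse \Omega$. Thus, on input $\Sigma \sse \ISym(\Omega)$ with the promise $\gen{\Sigma} \in \vT$ and a target $t \in \ISym(\Omega)$, every element of $\Sigma$ coincides with one and the same $e_\Delta$, and $t \in \gen{\Sigma}$ if and only if $t = e_\Delta$; comparing two partial bijections in the given encoding is possible in \ACz. (The degenerate case $\Sigma = \emptyset$, if admitted by the encoding, is handled by answering ``no''.) Alternatively, one may invoke $\vT \sse \vSl$ together with \cref{cor:pbm-semilattices}.

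For $\dConj[PB]{\vT}$, writing $U \coloneqq \gen{\Sigma} = \{e_\Delta\}$ as above, we have $U^1 = \{e_\Delta, 1\}$, so for $s,t \in \ISym(\Omega)$ the relation $s \sim_U t$ holds exactly when $s = t$, or else both $e_\Delta s e_\Delta = t$ and $s = e_\Delta t e_\Delta$ hold (using $\bar{e}_\Delta = e_\Delta$). Evaluating a bounded number of products of partial bijections and comparing the results is again possible in \ACz. There is essentially no obstacle in this corollary: all the substance lies in \cref{thm:gb-complexity}, and the base case $\vH = \vT$ is trivial, the only care needed being the handling of input-encoding corner cases, which the direct comparisons above already cover.
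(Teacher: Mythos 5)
Your proposal is correct and follows essentially the same route as the paper, which obtains \cref{cor:mem-BS} precisely by instantiating \cref{thm:gb-complexity} with $\vH = \vT$ (so that $\vT \vee \vBS = \vBS$) and observing that the resulting target problems for $\vT$ are trivial. Your explicit \ACz{} arguments for $\dMemb[PB]{\vT}$ and $\dConj[PB]{\vT}$ merely spell out what the paper leaves implicit, and they are accurate.
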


We prove \cref{thm:gb-complexity} using a sequence of short lemmas showing that the constructions from the previous subsection can actually be computed in \LOGSPACE.

\begin{lemma}\label{lem:gb-delta-sl}
	Given $\Sigma \sse \ISym(\Omega)$ and $X \sse\Omega$ the $\gen{\Sigma}$-invariant set $X^{\gen{\Sigma}}$ can be computed in \LOGSPACE.
\end{lemma}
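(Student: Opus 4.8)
The plan is to reduce the computation to the undirected graph accessibility problem \dUGAP{}, which is in \LOGSPACE{} by Reingold's theorem \cite{Reingold08}. Consider the graph $\Gamma$ on vertex set $\Omega$ that has an edge $\{x, x^u\}$ for every $u \in \Sigma$ and every $x \in \dom(u)$. Since $\Sigma$ is closed under formation of inverses, $\Gamma$ is genuinely \emph{undirected}: the edge arising from $u$ and $x \in \dom(u)$ is the same as the one arising from $\bar u$ and $x^u \in \dom(\bar u) = \ran(u)$. This symmetry of $\Gamma$ is the crucial point that places the problem in \LOGSPACE{} rather than merely in \NL{}.

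Next I would establish the combinatorial characterization
\[
  X^{\gen{\Sigma}} = \{\, z \in \Omega \mid \text{$z \in \dom(u)$ for some $u \in \Sigma$ and $z$ is connected in $\Gamma$ to some $x \in X$} \,\}.
\]
For ``$\subseteq$'', if $z = x^s$ with $x \in X$ and $s = u_1 \cdots u_n \in \gen{\Sigma}$ (so $n \geq 1$ and $u_i \in \Sigma$), then $x \in \dom(s) \sse \dom(u_1)$ forces all intermediate points $x^{u_1 \cdots u_i}$ to be defined, so $(x, x^{u_1}, x^{u_1 u_2}, \dotsc, z)$ is a walk in $\Gamma$; moreover $z \in \ran(u_n) = \dom(\bar u_n)$ with $\bar u_n \in \Sigma$, so $z$ satisfies the right-hand side. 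For ``$\supseteq$'', suppose $z \in \dom(u)$ for some $u \in \Sigma$ and $z$ lies in the connected component of some $x \in X$. If $z \neq x$, a path $x = x_0, x_1, \dotsc, x_m = z$ in $\Gamma$ with $m \geq 1$ yields, using $\Sigma = \bar\Sigma$, generators $u_i \in \Sigma$ with $x_i = x_{i-1}^{u_i}$, whence $z = x^{u_1 \cdots u_m} \in X^{\gen{\Sigma}}$; if $z = x$, then $z = z^{u\bar u} \in X^{\gen{\Sigma}}$ since $z \in \dom(u)$. (The side condition ``$z \in \dom(u)$ for some $u \in \Sigma$'' is needed only to handle the degenerate case where $z$ is an isolated vertex of $\Gamma$ that happens to lie in $X$.)

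Finally I would argue that this characterization is computable in \LOGSPACE{}. Iterating over all $z \in \Omega$, one checks in $\ACz$ whether $z^u \neq \bot$ for some $u \in \Sigma$, and one checks connectivity of $z$ to some $x \in X$ by at most $\abs{X}$ calls to a \dUGAP{} routine for the graph $\Gamma$; the adjacency relation of $\Gamma$ is $\ACz$-computable from $\Sigma$ (vertices $a, b$ are adjacent iff $b = a^u$ for some $u \in \Sigma$), so presenting $\Gamma$ implicitly to such a routine is unproblematic. Since $\dUGAP \in \LOGSPACE$ and $\LOGSPACE$ is closed under composition, the overall procedure runs in \LOGSPACE{}. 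I do not foresee any serious obstacle: the only points requiring care are the observation that $\Sigma = \bar\Sigma$ makes $\Gamma$ undirected and the correct handling of isolated vertices in the characterization above.
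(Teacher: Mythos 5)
Your proof is correct and essentially identical to the paper's: both construct the Schreier graph on $\Omega$, observe that closure of $\Sigma$ under inverses makes it undirected, characterize $X^{\gen{\Sigma}}$ as the vertices with an incident edge that are reachable from $X$, and conclude via $\dUGAP \in \LOGSPACE$. You merely spell out the correctness of the characterization (including the isolated-vertex caveat) in more detail than the paper does.
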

\begin{proof}
	Define the (Schreier) graph $\Gamma$ with vertex set $\Omega$ and an edge from $x$ to $y \in \Omega$ whenever $y = x^u$ for some $u \in \Sigma$.
  Observe that this graph is undirected because $y = x^u$ if and only if $x = y^{\bar u}$.
  Now, $X^{\gen{\Sigma}}$ consists simply of the vertices with an incident edge and which are reachable from $X$.
  The latter can be checked using an oracle for \dUGAP. 
\end{proof}

\begin{lemma}\label{lem:gb-mann-sl}
	Given $\Sigma \sse \ISym(\Omega)$ and $\Delta$, the Munn graph $\mathrm{M}(\Delta; \Sigma)$ can be computed in \LOGSPACE.
\end{lemma}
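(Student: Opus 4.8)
The plan is to exhibit a \LOGSPACE\ transducer that, on input $\Sigma \sse \ISym(\Omega)$ together with a $\gen{\Sigma}$-invariant set $\Delta \sse \Omega$, writes down an explicit encoding of $\mathrm{M}(\Delta; \Sigma)$. Since the vertex set $E_\Delta = \{ e_\Delta u\bar u \mid u \in \Sigma_\Delta \}$ has at most $\abs{\Sigma}$ elements, each of which is a subset of $\Omega$, I would name every vertex by its characteristic bit vector of length $\abs{\Omega}$; then the whole graph has polynomial size, and an edge $u \in \Sigma_\Delta$ simply connects the bit vector of $\Delta \cap \dom(u)$ (the set associated with $e_\Delta u\bar u$) to that of $\Delta \cap \ran(u)$ (the set associated with $e_\Delta \bar u u$).

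First I would note that, for $u \in \ISym(\Omega)$ in the image-list encoding, the sets $\dom(u)$, $\ran(u)$, $\Delta \cap \dom(u)$, and $\Delta \cap \ran(u)$ are all computable in \LOGSPACE\ (indeed in \ACz), as is equality of two bit vectors. The only non-trivial ingredient is deciding, for a given $u \in \Sigma$, whether $u$ is $\Delta$-large, i.e.\ whether $(\dom(u)\cap\Delta)^{\gen{\Sigma}} = \Delta$. For this I would invoke \cref{lem:gb-delta-sl} to compute $X^{\gen{\Sigma}}$ with $X = \Delta \cap \dom(u)$ in \LOGSPACE, and then test $X^{\gen{\Sigma}} = \Delta$; since $\Delta$ is $\gen{\Sigma}$-invariant we automatically have $X^{\gen{\Sigma}} \sse \Delta$, so this equality test is exactly the defining condition of $\Delta$-largeness and is correct.

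With this test in hand, the transducer loops over $u \in \Sigma$, and for each $\Delta$-large $u$ emits the edge $u$ labelled with source $\Delta \cap \dom(u)$ and target $\Delta \cap \ran(u)$; the vertex set $E_\Delta$ is produced by emitting, for each $\Delta$-large $u$, the bit vector $\Delta \cap \dom(u)$ (repeats are harmless, and can be suppressed by re-running the largeness test on the earlier generators, still within \LOGSPACE). Reusing work space across the outer loop keeps everything in logarithmic space, and composition with the \dUGAP\ oracle used inside \cref{lem:gb-delta-sl} stays in \LOGSPACE\ by Reingold's theorem \cite{Reingold08}. The one place where any care is needed is the $\Delta$-largeness test, which is precisely where reachability (\dUGAP, via \cref{lem:gb-delta-sl}) enters; the rest is bookkeeping on bit vectors.
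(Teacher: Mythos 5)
Your proposal is correct and follows essentially the same route as the paper's proof: for each $u \in \Sigma$ test $\Delta$-largeness by computing $(\dom(u)\cap\Delta)^{\gen{\Sigma}}$ via \cref{lem:gb-delta-sl} and comparing with $\Delta$, then read off the vertices and edges directly from \cref{def:gb-munn}. The extra detail about bit-vector encodings and space reuse is fine but not a different argument.
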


\begin{proof}
  To determine the edge set $\Sigma_\Delta$ of $\mathrm{M}(\Delta; \Sigma)$, we check whether $(\dom(u)\cap \Delta)^U = \Delta$ for each $u \in \Sigma$ using \cref{lem:gb-delta-sl}.
   Now, \cref{def:gb-munn} immediately gives us $E_\Delta$ and $\mathrm{M}(\Delta; \Sigma)$.
\end{proof}

\begin{lemma}\label{lem:gb-basis-sl}
	Given a Munn graph $\mathrm{M}(\Delta; \Sigma)$ and $e \in E_\Delta$, a basis $\gamma \colon E_{\Delta, e} \to U$ at $(\Delta, e)$ represented as a list $(f,\gamma(f))_{f \in E_{\Delta, e}}$, where $U = \gen\Sigma$, can be computed in \LOGSPACE.
\end{lemma}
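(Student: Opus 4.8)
The plan is to show that every step of the explicit construction of a basis sketched after the definition can be carried out in $\LOGSPACE^{\dUGAP}$; the claim then follows from Reingold's theorem ($\dUGAP \in \LOGSPACE$). Recall that $\gamma(e)$ is the idempotent $\tilde e = \prod u\bar u$ ranging over all $u \in \Sigma_{\Delta,e}$ with $u\bar u \ge e$, and that for a vertex $f \neq e$ one sets $\gamma(f) = \tilde e\, u_1 \cdots u_n$ for some path $(u_1,\dots,u_n)$ from $e$ to $f$ in the connected component $\mathrm{M}(\Delta,e;\Sigma)$ of $e$. For $f = e$ the empty path gives $\gamma(e) = \tilde e$, so the two bullet points are covered uniformly.

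First I would determine the vertex set $E_{\Delta,e}$ and edge set $\Sigma_{\Delta,e}$ of $\mathrm{M}(\Delta,e;\Sigma)$: these are just the vertices and edges of the given graph $\mathrm{M}(\Delta;\Sigma)$ reachable from $e$, which can be read off using an oracle for $\dUGAP$. Next, since $u\bar u \ge e$ is equivalent to $\dom(u) \supseteq \dom(e)$, the idempotent $\tilde e$ is the partial identity on $\bigcap\{\dom(u) \mid u \in \Sigma_{\Delta,e},\ \dom(u) \supseteq \dom(e)\}$, and an intersection of a list of subsets of $\Omega$ is computable in $\LOGSPACE$. Then, by \cref{rem:compute-path}, I can compute in $\LOGSPACE$ a spanning tree $T$ of $\mathrm{M}(\Delta,e;\Sigma)$; for every vertex $f$ the path from $e$ to $f$ inside $T$ is uniquely determined, which fixes a consistent logspace choice of a path $(u_1,\dots,u_n)$ with $u_i \in \Sigma_{\Delta,e}$.

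It remains to output, for each $f \in E_{\Delta,e}$, the pair $(f,\gamma(f))$ with $\gamma(f) = \tilde e\, u_1 \cdots u_n$. The encoding of $\gamma(f)$ lists, for every $x \in \Omega$, the value $x^{\tilde e u_1 \cdots u_n}$, obtained by the pointer chase $x \mapsto x^{\tilde e} \mapsto x^{\tilde e u_1} \mapsto \cdots$: start with $y := x$ if $x \in \dom(\tilde e)$ and $y := \bot$ otherwise, then update $y := y^{u_i}$ for $i = 1, \dots, n$. Each step needs only the current value $y \in \Omega \cup \{\bot\}$ and a counter $i \le n \le \abs{\Sigma_{\Delta,e}}$; the label $u_i$ of the $i$-th edge of the path is recovered on demand by walking up $T$ from $f$ towards $e$ (the parent edge of a vertex in $T$ being the unique incident tree edge whose removal separates that vertex from $e$, which is detected by a $\dUGAP$ query on $T$ with one edge deleted). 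Thus the whole list $(f,\gamma(f))_{f \in E_{\Delta,e}}$ is computable in $\LOGSPACE^{\dUGAP}$, hence in $\LOGSPACE$. That the resulting map is genuinely a \emph{basis} at $(\Delta,e)$ is precisely the verification already recorded, via \cref{lem:gb-munn-paths} and \cref{lem:gb-large}, in the paragraph following the definition, so nothing new is needed there. The only point requiring real care is the simultaneous demand to select paths in a logspace-uniform fashion and to evaluate the resulting products of partial bijections; both are met by combining spanning-tree construction with iterated pointer chasing, and neither exceeds logarithmic space.
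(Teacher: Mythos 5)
Your proposal is correct and follows essentially the same route as the paper's proof: compute the connected component $\Sigma_{\Delta,e}$ with a \dUGAP{} oracle, obtain paths from $e$ to each $f$ via the \LOGSPACE{} path/spanning-tree computation of \cref{rem:compute-path} (i.e.\ \cite[Lemma 2.4]{NisanT95}), and evaluate the products $\tilde e\,u_1\cdots u_n$ pointwise on $\Omega$ by pointer chasing. Your extra care about computing $\tilde e$ and about recovering path edges on demand from a spanning tree is just a more explicit rendering of the same argument.
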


\begin{proof}
	The alphabet $\Sigma_{\Delta, e}$ needed for the definition of $\gamma$ can be found by computing the connected component of $\mathrm{M}(\Delta; \Sigma)$ containing $e$, which can be done using an \dUGAP oracle. 
	Next, to compute $\gamma$, we use \cite[Lemma 2.4]{NisanT95} to find a path $(u_1, \dots, u_n)$ from $e$ to some arbitrary $f\in E_{\Delta,e}$.
	Note that computing the product of a sequence of elements $u_1, \dots, u_n \in \ISym(\Omega)$ can be done in \LOGSPACE by evaluating $x^{u_1 \cdots u_n}$ for each $x \in \Omega$ separately.
\end{proof}

\begin{lemma}\label{lem:gb-hat-e-sl}
  Given $\Sigma \sse \ISym(\Omega)$ and $e \in E(\ISym(\Omega))$ the minimal $\hat e \in E(U) \cup \{1\}$ with $e \leq \hat e$ can be computed in \LOGSPACE{} and, as part of the computation, one can decide whether $\hat e \in U = \gen\Sigma$.
\end{lemma}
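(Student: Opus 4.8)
The idea is to turn the closed formula for $\hat e$ supplied by \cref{lem:gb-munn-idempotent} into an algorithm and execute each of its ingredients in \LOGSPACE using \cref{lem:gb-delta-sl,lem:gb-mann-sl,lem:gb-basis-sl} together with Reingold's theorem ($\dUGAP \in \LOGSPACE$). The first step is to decide the easy alternative: since $\hat e$ is idempotent by definition, $\hat e \in U$ if and only if $\hat e \in E(U)$, and I claim this holds if and only if $\dom(e) \sse \dom(u)$ for some $u \in \Sigma$. Indeed, if $\hat e = u_1 \cdots u_n$ with $u_i \in \Sigma$, then $\hat e = \hat e \bar{\hat e} = u_1 \cdots u_n \bar u_n \cdots \bar u_1 \leq u_1 \bar u_1$, so $\dom(e) \sse \dom(\hat e) \sse \dom(u_1)$; conversely, any $u \in \Sigma$ with $\dom(e) \sse \dom(u)$ produces the idempotent $u\bar u \in E(U)$ above $e$, so the finite nonempty semilattice $\{f \in E(U) \mid e \leq f\}$ has a least element, which is $\hat e \in E(U)$. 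This alternative is decidable in \ACz. If it fails, output $\hat e = 1$ and report $\hat e \notin U$; so from now on assume $\hat e \in E(U)$. Note that in this case $\dom(e) \sse \Delta := (\dom(e))^U$, since the witness $u \in \Sigma$ with $\dom(e) \sse \dom(u)$ gives $(\dom(e))^{u\bar u} = \dom(e)$.

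To compute $\hat e$ I would follow the construction in \cref{lem:gb-munn-idempotent}. First compute $\Delta = (\dom(e))^U$ (in \LOGSPACE by \cref{lem:gb-delta-sl}) and the Munn graph $\mathrm{M}(\Delta; \Sigma)$ (in \LOGSPACE by \cref{lem:gb-mann-sl}). Pick any $u_0 \in \Sigma$ with $\dom(e) \sse \dom(u_0)$ (which exists as we are in the branch $\hat e \in E(U)$); since $\dom(u_0) \cap \Delta \supseteq \dom(e)$ and $\Delta$ is $U$-invariant, $u_0$ is $\Delta$-large, so the partial identity $e'$ on $\dom(u_0) \cap \Delta$ is a vertex of $\mathrm{M}(\Delta; \Sigma)$. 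By (the proof of) \cref{lem:gb-munn-idempotent}, an application of \cref{lem:gb-large} shows that $e'$ is precisely the unique vertex with $e \leq e'$, namely $e' = e_\Delta \hat e$. Using an \dUGAP oracle I would then extract the connected component $\mathrm{M}(\Delta, e'; \Sigma)$ of $e'$, hence its edge alphabet $\Sigma_{\Delta,e'}$, and by \cref{lem:gb-basis-sl} compute a basis $\gamma$ at $(\Delta, e')$, represented as the list $(f, \gamma(f))_{f \in E_{\Delta,e'}}$.

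Finally I would assemble $\hat e = \prod_{u \in \Sigma_{\Delta, e'}} \lambda(u)\lambda(\bar u)$, where $\lambda(u) = \gamma(e_\Delta u\bar u)\, u\, \bar\gamma(e_\Delta \bar u u)$ as in \cref{lem:gb-munn-idempotent}: each $\lambda(u)$ is the product of three partial bijections read off from the data already computed, and $\hat e$ is a product of at most $2\abs{\Sigma_{\Delta,e'}} \leq 2\abs{\Sigma}$ such factors, while the product of any polynomial-length list of partial bijections on $\Omega$ is computed in \LOGSPACE by evaluating it separately at each point of $\Omega$ (maintaining only the current point, an index into $\Omega$, and a counter into the list). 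Since \LOGSPACE is closed under composition and $\dUGAP \in \LOGSPACE$ by Reingold's theorem, the whole procedure runs in \LOGSPACE, and it decides $\hat e \in U$ along the way via the first step. The main point to be careful about is not a genuine obstacle: the conceptual work is already in \cref{lem:gb-munn-idempotent} and \cref{lem:gb-delta-sl,lem:gb-mann-sl,lem:gb-basis-sl}, and the only care needed is the $\hat e = 1$ edge case and treating the pipeline as a composition of \LOGSPACE transducers (each auxiliary object recomputed on demand rather than stored).
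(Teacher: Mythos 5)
Your proposal is correct and follows essentially the same route as the paper's proof: compute $\Delta = (\dom(e))^U$ and the Munn graph, locate the unique vertex $e' \in E_\Delta$ above $e$ (outputting $\hat e = 1 \notin U$ if none exists), compute a basis via \cref{lem:gb-basis-sl}, and evaluate the product formula of \cref{lem:gb-munn-idempotent} pointwise in \LOGSPACE. Your extra up-front \ACz test ($\dom(e)\sse\dom(u)$ for some $u\in\Sigma$) and the construction of $e'$ from a witness generator $u_0$ are sound but only cosmetic variations of the paper's step of searching $E_\Delta$ for a vertex above $e$.
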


\begin{proof}
	First, compute $\Delta = \dom(e)^U$ using \cref{lem:gb-delta-sl} and the Munn graph $\mathrm{M}(\Delta; \Sigma)$ using \cref{lem:gb-mann-sl}.
	Next find some  $e' \in E_\Delta$ with $e \leq e'$. If no such $e'$ exists, then $\hat e = 1$ and $\hat e \not\in U$.
	Otherwise, compute a basis  $\gamma \colon E_{\Delta, e'} \to U$  at $(\Delta, e')$ by \cref{lem:gb-basis-sl}.	
	Finally, to compute $\hat e$, we apply the formula from \cref{lem:gb-munn-idempotent}, which uses the already computed basis $\gamma$.	
\end{proof}

\begin{proof}[Proof of \cref{thm:gb-complexity}]
  We first consider the membership problem.
  On the input of $\Sigma \sse \ISym(\Omega)$ with $U = \langle \Sigma \rangle \in \vH \vee \vBS$ and $t \in \ISym(\Omega)$, let $e = t \ov{t}$ and $f = \ov{t} t$.
  We then proceeds as follows.

Compute $\Delta = \dom(e)^U = \dom(f)^U$ using \cref{lem:gb-delta-sl}.
 If $ \dom(e)^U \neq \dom(f)^U$, then $t \not\in U$ and we can output a fixed negative instance.
 Next, we compute $\hat e$ and $\hat f$ (\cref{lem:gb-hat-e-sl}) and verify that $ e = \hat e \in E_\Delta$ and $f = \hat f \in E_\Delta$ (if not, then $t \not\in  U$).
  Let $\gamma \colon E_{\Delta, e} \to U$ be a basis at $(\Delta, e)$, which we can compute by \cref{lem:gb-basis-sl}.
  Finally, using the basis $\gamma$ we compute a generating set $\Sigma_{e}\sse \ISym(\Omega)$ of the $\gH$-class $U_{ e} = \{s \in U \mid s\bar s = \bar s s =  e\}$ as in \cref{lem:gb-munn-hclass} and reduce to the question of whether $t' \coloneqq t \bar \gamma(f) \in \langle \Sigma_{e} \rangle$~-- an instance of \dMemb[PB]{\vH}.

  If $t' \in U_e$, then $t = t' \gamma(f) = t \bar \gamma(f) \gamma(f) \in U$.
	For the other direction, assume that $t \in U$.
	Then clearly $e, f \in U$ so $e = \hat e$ and $f = \hat f$.
	Moreover, $e$ and $f$ are vertices of a connected component of $\mathrm{M}(\Delta; \Sigma)$ by \Cref{lem:gb-munn-vertices}.
  Finally, $t' \in U_{e}$ as $t \bar \gamma(f) \gamma(f) \bar t = \gamma(f) \bar t t \bar \gamma(f)  = e$.

	\medskip

  We now turn our attention to the conjugacy problem.
	The input comprises $U = \langle \Sigma \rangle$ as above and two elements $s, t \in \ISym(\Omega)$.
	The question is whether there exists some $u \in U^1$ such that $\ov u s u = t$ and $u t \ov u = s$.
  We assume throughout that $s \neq t$ (otherwise, we reduce to a fixed positive instance).
	Let $e = s\ov s \vee \ov s s \in  E(\ISym(\Omega))$ and $f = t\ov t \vee \ov t t\in E(\ISym(\Omega))$.
	
  Compute $\hat e$ and $\hat f$ (as above), as well as $\hat \Delta = \dom(\hat e)^U = \dom(\hat f)^U$. 
  Note that we do not require that $e = \hat e$ or $f = \hat f$.
  If $\dom(\hat e)^U \neq \dom(\hat f)^U$ or $\hat e, \hat f \not\in U$ or, similarly, if $\hat e$ and $\hat f$ are not connected in $\mathrm{M}(\hat\Delta; \Sigma)$, then $s \not\sim_U t$.
  Next, compute a basis $\gamma \colon E_{\hat\Delta, \hat e} \to U$ at $(\hat\Delta,\hat e)$ and a corresponding generating set $\Sigma_{\hat e}$ of the $\gH$-class $U_{\hat e}$ of $\hat e$ as in \cref{lem:gb-munn-hclass}. 
	Finally, reduce to the question of whether $s$ is conjugate to $t' \coloneqq \gamma(\hat f) t \bar \gamma(\hat f)$ in $\langle \Sigma_{\hat e} \rangle$~-- an instance of \dConj[PB]{\vH}.

  If $s \sim_{U_{\hat e}} t'$, then $s \sim_U t$ as $t \sim_U t'$.
	For the other direction, assume that $s$ and $t$ are conjugate by some element $u \in U$, \ie $\bar us u = t$ and $u t \bar u = s$.
  Then $\hat e$ and $\hat f$ are also conjugate by the element $u$.
  To see this, note that $\bar u \hat e u \ge f$.
  Hence, $\bar u \hat e u \ge \hat f$ (by minimality of $\hat f$) and $u \hat f \bar u \ge \hat e$ (by symmetry).
  The chain of inequalities $\hat f = \bar u u \hat f \bar u u \ge \bar u \hat e  u \ge \hat f$ then finally shows that $\bar u \hat e  u = \hat f$ and $u \hat f \bar u = \hat e$ (by symmetry).
	It now follows from \Cref{lem:gb-munn-vertices} that $\hat e$ and $\hat f$ are vertices of a connected component of $\mathrm{M}(\hat \Delta; \Sigma)$.
  Writing $v = \bar\gamma(\hat f)$, we can easily verify that $uv$ conjugates $s$ to $t'$, \ie $\ov{v}\ov{u} s uv = \ov v t v = t'$ and $uv t' \ov{v}\ov{u} = uv (\ov v t v) \ov{v}\ov{u} = s$.
  As clearly $\hat e s \hat e = s$, the element $\hat e uv$ also conjugates $s$ to $t'$.
  Moreover, $\hat e u v \in U_{\hat e}$ and, as such, $s \sim_{U_{\hat e}} t'$.
\end{proof}

\section{Inverse Semigroups in the Cayley Table Model}\label{sec:SL-Cayley}

In this section we complete the proof of \cref{thm:main-CT}, our main dichotomy theorem for Cayley table model, which is restated here for the readers convenience.

\begin{theorem}\label{thm:rst:main-CT}
	Let $\vV$ be a variety of finite inverse semigroups. 
	\begin{itemize}
		\item If $\vV \sse \vCl$, then $\dMemb[CT]{\vV}$ and $\dConj[CT]{\vV}$ are in \NPOLYLOGTIME and in \LOGSPACE.
		\item If $\vV \not\sse \vCl$, then $\dMemb[CT]{\vV}$ and $\dConj[CT]{\vV}$ are \LOGSPACE-complete under \ACz-reductions.
	\end{itemize}
\end{theorem}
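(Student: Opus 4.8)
The plan is to obtain Theorem~\ref{thm:rst:main-CT} from three pieces: the \NPOLYLOGTIME{} upper bound (essentially already available), a general \LOGSPACE{} upper bound valid for \emph{all} finite inverse semigroups, and a \LOGSPACE{}-hardness argument for the case $\vV \not\sse \vCl$. The first piece is immediate: if $\vV \sse \vCl$ then $\dMemb[CT]{\vV}$ and $\dConj[CT]{\vV}$ are sub-problems of $\dMemb[CT]{\vCl}$ and $\dConj[CT]{\vCl}$, hence lie in \NPOLYLOGTIME{} by \cref{pro:ctm-clifford}. So the real work is (i) to show that ``$t \in U$'' and ``$s \sim_U t$'' are decidable in $\LOGSPACE = \SL = \LOGSPACE^{\dUGAP}$ for every finite inverse semigroup $S$ given by its Cayley table, every $U = \gen\Sigma \le S$, and all $s,t \in S$; and (ii) to prove \LOGSPACE{}-hardness when $\vV \not\sse \vCl$.

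The structural backbone of the upper bound is that the right Cayley graph $\Gamma$ of $U$ (vertex set $S$, an edge labelled $u$ from $y$ to $yu$ for each $u \in \Sigma$) has \emph{undirected} strongly connected components: the component of $x \in U$ is precisely its $\gR[U]$-class, and if $yu \gR[U] y$ then the $u$-edge from $y$ to $yu$ is reversed by the $\bar u$-edge from $yu$ back to $y$, because $yu\bar u = y$ -- indeed $yu\bar u = y(u\bar u) \le y$ while $yu\bar u \gR y$, and $a \le b$ together with $a \gR b$ forces $a = b$. The same phenomenon occurs for the ``conjugation graph'' on $S$ whose $u$-labelled edge sends $x$ to $\bar u x u$: by \cref{lem:green-semi_rel}, whenever $x$ and $\bar u x u$ are $\gJ$-equivalent one has $u(\bar u x u)\bar u = x$, so this graph is undirected on every $\gJ$-class. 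Both facts let us answer the relevant connectivity questions with the \dUGAP{} oracle (which is \LOGSPACE{} by Reingold), and \cref{rem:compute-path} lets us compute components and paths in \LOGSPACE{}. For \emph{conjugacy} this yields a reduction to a bounded number of \dUGAP{} queries: using \cref{lem:green-semi_rel} one shows $s \sim_U t$ iff $s \gJ[S] t$ and $\bar u s u = t$ for some $u \in U^1$; and given $s \gJ[S] t$, the second condition is equivalent to $s$ and $t$ lying in the same connected component of the conjugation graph restricted to the common $\gJ[S]$-class of $s$ and $t$ (any conjugating word keeps all intermediate conjugates of $s$ inside that $\gJ[S]$-class, by a $\gJ$-order squeeze).

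For \emph{membership} I would run a greedy descent along the $\gR[U]$-order, maintaining the invariant ``$x \in U^1$ and $x\bar x t = t$'' (the latter being $t\bar t \le x\bar x$), starting from $x = 1 \in S^1$ after reducing away the trivial case $t = 1$. In each round: if $t$ lies in the strongly connected component of $x$ in $\Gamma$, output ``yes''; otherwise use \dUGAP{} to search for some $x_{\mathrm{new}}$ adjacent to that component with $x_{\mathrm{new}}\overline{x_{\mathrm{new}}}\,t = t$ and $x_{\mathrm{new}} \gRlt[U] x$, replace $x$ by it, and repeat; if no such $x_{\mathrm{new}}$ exists, output ``no''. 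Each round strictly drops the $\gR[U]$-class, so the procedure halts after at most $\abs S$ rounds, each performed in \LOGSPACE{}. The ``yes'' case is clear since the component of $x \in U$ is contained in $U$. For the converse, suppose the procedure is stuck at $x$ while $t \in U$; then $x \in U$ (a valid descent from $1$ always exists when $t \in U$ and $t\neq 1$), so $\bar x t \in U$ and $x(\bar x t) = t$ exhibits a path from $x$ to $t$ inside $U$; the first edge of this path that left the component of $x$ would land on an element $xu'$ with $t \in xu'U^1$, which then satisfies the idempotent constraint and is a valid descent -- contradiction. So the path never leaves the component of $x$, and $t$ lies in it. I expect this greedy-correctness argument, together with carefully verifying the two undirectedness claims and the conjugacy characterisation, to be the main technical content; no single step is deep, but pinning down the invariants and the interplay of $\gR[U]$- and $\gJ[S]$-classes with the idempotent bookkeeping requires care.

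For \emph{\LOGSPACE{}-hardness} when $\vV \not\sse \vCl$: by \cref{pro:varieties}(2) this is equivalent to $\vBS \sse \vV$, i.e.\ to $B_2 \in \vV$, and then $B(V) \in \vBS \sse \vV$ for every finite $V$ by \cref{lem:b2_implies_bn}. Since \dUGAP{} is \LOGSPACE{}-hard under \ACz{}-reductions, it suffices to \ACz{}-reduce \dUGAP{} to $\dMemb[CT]{\vV}$ and $\dConj[CT]{\vV}$. Given an undirected graph $\Gamma = (V,E)$ and $s,t \in V$ -- output a fixed instance when $s = t$ -- produce the Cayley table of $S = B(V)$, which depends only on $\abs V$ and is \ACz{}-constructible, together with $\Sigma = \{(u,v) : \{u,v\} \in E\} \cup \{0\}$, which is closed under inversion since $\overline{(u,v)} = (v,u)$; note $\gen\Sigma \le B(V)$ lies in $\vV$. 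For membership take target $(s,t)$: then $(s,t) \in \gen\Sigma$ iff there is a walk from $s$ to $t$ along edges of $E$, i.e.\ iff $s$ and $t$ lie in the same connected component of $\Gamma$. For conjugacy take the idempotents $(s,s)$ and $(t,t)$: the element $(s,t) \in \gen\Sigma$ conjugates $(s,s)$ to $(t,t)$ and $(t,s)$ conjugates back, while any relative conjugator must equal $(s,t)$, so $(s,s) \sim_{\gen\Sigma} (t,t)$ iff $(s,t) \in \gen\Sigma$ iff $s$ and $t$ are in the same component. Composing these \ACz{}-reductions with the upper bound yields the theorem (and in fact already shows \LOGSPACE{}-hardness for idempotent conjugacy).
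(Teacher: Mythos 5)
Your proposal is correct and follows essentially the same route as the paper: the \NPOLYLOGTIME{} part via \cref{pro:ctm-clifford}, a general \LOGSPACE{} upper bound exploiting that the relevant reachability graphs (right Cayley graph components via \cref{lem:green-semi_rel}, and the conjugation graph) are undirected so \dUGAP{} applies, the same greedy $\gR[U]$-descent with invariant $x\bar x t = t$ for membership, and \LOGSPACE{}-hardness by encoding \dUGAP{} into Brandt semigroups $B(V) \in \vBS \sse \vV$. The only differences are cosmetic: the paper builds the conjugation graph with both conditions $\bar u x u = y$ and $x = u y \bar u$ (so it is undirected by construction) rather than restricting to a $\gJ$-class, and it additionally upgrades the hardness reduction (via the $S \times Y_2$ trick of \cref{lem:idem_memb_from_idem_conj}) to the restricted idempotent variants, which your simpler targets $(s,t)$ and $(s,s),(t,t)$ do not give but which are not needed for the theorem as stated.
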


Recall that the variety $\vCl$ of finite Clifford semigroup is defined by the identity $\ov x x = x \ov x$; it is the smallest variety containing all groups and semilattices. 
In \cref{sec:Clifford-CT} we have seen that \dMemb[CT]{\vCl} and \dConj[CT]{\vCl} are in \NPOLYLOGTIME.
Completing the proof of \cref{thm:rst:main-CT} consists of two steps: in \cref{sec:CT-SL} we establish \LOGSPACE-algorithms for these problems, and in \cref{sec:SL-hard} we show that the problems are hard for \LOGSPACE given that $\vV \not\sse \vCl$.
Before we go into the details of our proof, let us explore an interesting consequence of \cref{thm:rst:main-CT}.

\begin{corollary}\label{cor:short-SLP-char}
  Let $\vV$ be a variety of finite inverse semigroups.
  The following are equivalent.
  \begin{itemize}
    \item The class $\vV$ comprises only Clifford semigroups, i.e., $\vV \sse \vCl$.
    \item The class $\vV$ admits polylogarithmic SLPs.
    \item The problem $\dMemb[CT]{\vV}$ is in \NPOLYLOGTIME.
    \item The problem $\dMemb[CT]{\vV}$ is in \qACz.
  \end{itemize}
\end{corollary}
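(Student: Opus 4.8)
The plan is to prove the four listed conditions equivalent by the cycle $\mathrm{(i)}\Rightarrow\mathrm{(ii)}\Rightarrow\mathrm{(iii)}\Rightarrow\mathrm{(iv)}\Rightarrow\mathrm{(i)}$, where $\mathrm{(i)}$--$\mathrm{(iv)}$ refer to the conditions in the order listed; each implication will be a short consequence of a result already at hand.

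For $\mathrm{(i)}\Rightarrow\mathrm{(ii)}$: the variety $\vCl$ admits polylogarithmic SLPs by \cref{lem:clifford-slp}, and this property clearly passes to every subclass, since the polynomial $P$ witnessing \cref{def:polylog-slps} for $\vCl$ also witnesses it for $\vV \sse \vCl$. The implication $\mathrm{(ii)}\Rightarrow\mathrm{(iii)}$ is precisely \cref{lem:SLP-npolylogtime}. The implication $\mathrm{(iii)}\Rightarrow\mathrm{(iv)}$ is the inclusion $\NPOLYLOGTIME \sse \qACz$ recorded in \cref{sec:complexity}.

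The implication $\mathrm{(iv)}\Rightarrow\mathrm{(i)}$ is the only one carrying real content, and I would prove it in contrapositive form. Assume $\vV \not\sse \vCl$. Then \cref{thm:rst:main-CT} tells us that $\dMemb[CT]{\vV}$ is \LOGSPACE-complete under \ACz-reductions, hence in particular \LOGSPACE-hard under \ACz-reductions. Since $\prob{parity} \in \LOGSPACE$, a membership $\dMemb[CT]{\vV} \in \qACz$ would place $\prob{parity}$ in $\qACz$, contradicting \cite{FurstSS84,Hastad86} (equivalently, \qACz contains no \LOGSPACE-hard problem, as observed in \cref{sec:complexity}). Thus $\dMemb[CT]{\vV} \notin \qACz$, which closes the cycle.

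The argument is essentially bookkeeping once \cref{thm:rst:main-CT} is available; the one point worth attention is that the hardness half of that theorem is stated — and is used here — with respect to \ACz-reductions (rather than, say, $\NC^1$-reductions), which is exactly what makes it strong enough to exclude membership in \qACz. I do not anticipate any further obstacle.
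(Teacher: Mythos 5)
Your proposal is correct and follows essentially the same route as the paper: the forward implications via \cref{lem:clifford-slp}, \cref{lem:SLP-npolylogtime}, and $\NPOLYLOGTIME \sse \qACz$, and the closing implication by contraposition from the \LOGSPACE-hardness (under \ACz-reductions) in \cref{thm:rst:main-CT} together with the fact that \qACz cannot compute \prob{parity}. No gaps; your remark about the reductions being \ACz-reductions matches exactly the point the paper relies on.
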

\begin{proof}
	By \cref{lem:clifford-slp}, any variety $\vV \sse \vCl$ admits polylogarithmic SLPs.
	By \cref{lem:SLP-npolylogtime}, if $\vV$ admits polylogarithmic SLPs, then $\dMemb[CT]{\vV}$ is in $\NPOLYLOGTIME \sse \qACz$.

	Finally, if $\vV \not\sse \vCl$, then the problem \dMemb[CT]{\vV} is \LOGSPACE-complete under \ACz-reductions by \cref{thm:rst:main-CT}. 
	Hence, this problem cannot be solved in \qACz as for example \prob{parity} can be solved in \LOGSPACE but not in \qACz \cite{FurstSS84,Hastad86}.
\end{proof}

Note that the equivalence of the first two points of \cref{cor:short-SLP-char} can also be proved directly. 
Indeed, 
\ifAnonimous Fleischer\else the first author\fi\ 
determined in his dissertation \cite{Fleischer19diss} the maximal varieties of finite (not necessarily inverse) \emph{monoids} admitting polylogarithmic SLPs, viz.\ the varieties of finite Clifford monoids and of finite commutative monoids.
However, the situation for semigroups is considerably more intricate.
Therefore, we point out the following open problem.

\begin{question}
  Which varieties of arbitrary finite semigroups admit polylogarithmic SLPs?
\end{question}

\subsection{Membership and Conjugacy in \LOGSPACE}\label{sec:CT-SL}

It is known that the membership problem for semigroups in the Cayley table model belongs to \NL \cite{JonesLL76}.
This immediately carries over to inverse semigroups.
In this section we go further and show that this can improved to \LOGSPACE for inverse semigroups.
Recall that \NL is intricately related to \emph{directed} graph accessibility, while \SL{} = \LOGSPACE by Reingold's result \cite{Reingold08} corresponds to \emph{undirected} graph accessibility.
This difference in complexity is explained by the observation that strong connected components of the (right) Cayley graph of an inverse semigroup are actually undirected graphs.
To formalize this, let us define the decision problem \dRequiv[CT]{}.

\begin{decproblem}
	\iitem An inverse semigroup $S$ as Cayley table, a subset $\Sigma \sse S$, and elements $s, t \in S$.
	\qitem Is $s \gR[U] t$ where $U = \gen{\Sigma}$?
\end{decproblem}

	Recall that $x \gR[U] y$ for $x,y \in S$ if and only if there are $r,s \in  U $ with $xr=y$ and $ys = x$.

\begin{proposition}\label{lem:conj_reachability}
	The problems \dRequiv[CT]{} and $\dConj[CT]{}$ are in \LOGSPACE.
\end{proposition}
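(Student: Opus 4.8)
The plan is to reduce both \dRequiv[CT]{} and \dConj[CT]{} to the undirected graph accessibility problem \dUGAP, which is in \LOGSPACE by Reingold's theorem. Everything rests on two elementary observations about an inverse subsemigroup $U = \gen\Sigma \leq S$ (recall $\bar\Sigma = \Sigma$). First, for $x \in S$ and $u \in \Sigma$ one has $x \gR[U] xu$ if and only if $xu\bar u = x$: the ``if'' direction holds because then $x = (xu)\bar u \in xuU^1$ while $xu \in xU^1$ trivially, and the ``only if'' direction follows from \cref{lem:green-semi_rel} (if $xu \gRge x$ then $xu\bar u = x$, taking the right multiplier of the lemma to be $u$ and the left one to be $1$). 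Second, and analogously, for $x \in S$ and $u \in \Sigma$ the elements $x$ and $\bar u x u$ are conjugate via $u$ — that is, $\bar u x u = y$ and $uy\bar u = x$ — if and only if $u\bar u\,x\,u\bar u = x$; here the ``if'' direction is the one-line identity $u(\bar u x u)\bar u = (u\bar u)x(u\bar u) = x$, and the ``only if'' direction again invokes \cref{lem:green-semi_rel} with left multiplier $\bar u$ and right multiplier $u$ applied to $x$. Since inverses and products can be read off from the Cayley table in \ACz (the inverse $\bar u$ being the unique element with $u\bar u u = u$ and $\bar u u\bar u = \bar u$), both conditions are \LOGSPACE-decidable.

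For \dRequiv[CT]{} I would build, in \LOGSPACE, the undirected graph $\Gamma$ on vertex set $S$ whose edges are the pairs $\{x, xu\}$ with $x \in S$, $u \in \Sigma$, and $xu\bar u = x$; this is well-defined as an undirected graph since $x u\bar u = x$ implies $(xu)\bar u = x$ and $(xu)\bar u\,u = xu$, so the reverse edge is present via $\bar u$. By the first observation every edge of $\Gamma$ joins $\gR[U]$-equivalent elements, hence so does any path. Conversely, if $s \gR[U] t$ write $t = su_1\cdots u_k$ with $u_i \in \Sigma$; each prefix $s_i = su_1\cdots u_i$ satisfies $s \gRle[U] t \gRle[U] s_i \gRle[U] s$ and is therefore $\gR[U]$-equivalent to $s$, so every $\{s_{i-1},s_i\}$ is an edge of $\Gamma$ and $s = s_0, s_1, \dots, s_k = t$ is a path. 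Thus $s \gR[U] t$ iff $s$ and $t$ lie in the same connected component of $\Gamma$, and the map $(S,\Sigma,s,t) \mapsto (\Gamma,s,t)$ is a \LOGSPACE many-one reduction to \dUGAP.

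For \dConj[CT]{} I would proceed in the same spirit, replacing right multiplication with conjugation: build the undirected graph $\Gamma'$ on $S$ whose edges are the pairs $\{x, \bar u x u\}$ with $x \in S$, $u \in \Sigma$, and $u\bar u\,x\,u\bar u = x$ (well-defined and undirected, since this condition for $(x,u)$ forces the corresponding condition for $(\bar u x u, \bar u)$). Each edge of $\Gamma'$ witnesses a conjugacy via a single element of $U$, and $\sim_U$ is an equivalence relation, so any two vertices in the same component of $\Gamma'$ are conjugate relative to $U$. Conversely, if $s \sim_U t$ via $u = u_1\cdots u_k \in U$, put $x_i = \overline{u_i}\cdots\overline{u_1}\,s\,u_1\cdots u_i$, so $x_0 = s$ and $x_k = \bar u s u = t$. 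Using $t = \bar u s u$ and $s = u t \bar u$ one checks $x_i \in U^1 s U^1$ and $s \in U^1 x_i U^1$, so all $x_i$ are $\gJ[U]$-equivalent; in particular $x_i = \overline{u_i}\,x_{i-1}\,u_i \gJge x_{i-1}$, whence \cref{lem:green-semi_rel} yields $u_i\overline{u_i}\,x_{i-1}\,u_i\overline{u_i} = x_{i-1}$, i.e.\ $\{x_{i-1},x_i\}$ is an edge of $\Gamma'$. Hence $s = x_0, \dots, x_k = t$ is a path, and $s \sim_U t$ iff $s$ and $t$ lie in the same component of $\Gamma'$; again this is a \LOGSPACE many-one reduction to \dUGAP.

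The main obstacle is the ``forward'' direction in each case: proving that an $\gR[U]$-equivalence (resp.\ a conjugacy) can be realised by a path each of whose vertices stays inside a single $\gR[U]$-class (resp.\ $\gJ[U]$-class), so that \cref{lem:green-semi_rel} applies at every elementary step and that step is a genuine edge of the graph. One must also pin down the edge condition for $\Gamma'$ precisely — the one-sided relation $y = \bar u x u$ alone does not imply $x \sim_U y$, which is exactly why the extra requirement $u\bar u\,x\,u\bar u = x$ (equivalently $u(\bar u x u)\bar u = x$) is needed. The remaining ingredients — extracting inverses and products from the Cayley table and invoking $\dUGAP \in \LOGSPACE$ — are routine.
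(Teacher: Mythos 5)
Your proof is correct and follows essentially the same route as the paper: you construct exactly the same two undirected graphs (edges $\{x,xu\}$ with $xu\bar u = x$, resp.\ $\{x,\bar u x u\}$ with $u\bar u\,x\,u\bar u = x$), invoke \cref{lem:green-semi_rel} in the same way, and reduce to \dUGAP via Reingold's theorem. The only difference is presentational: where the paper disposes of the forward direction by remarking that the components of $\Gamma$ are the strongly connected components of the right Cayley graph (and says "proceed in the same way" for conjugacy), you spell out the prefix/intermediate-conjugate argument explicitly, which is a fine elaboration of the same idea.
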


\begin{proof}
	Both problems are essentially reachability in an undirected graph.
	Indeed, define the undirected graph $\Gamma$ with vertex set $S$ and an edge between $x$ and $y$ for $x,y \in S$ whenever there is some $u \in \Sigma$ with $xu = y$ and $x = y \ov u$. Clearly, $\Gamma$ can be computed in \LOGSPACE.

	By \cref{lem:green-semi_rel}, if $xu \gR[\gen{\Sigma}] x$ for $u \in \Sigma$ and $x \in S$, then $x$ and $xu$ are connected by an edge in $\Gamma$ (and the converse is obviously true). In particular, the connected components of $\Gamma$ are precisely the strongly connected components of the right Cayley graph of $S$ with respect to the generating set $\Sigma$.
	Therefore, reachability in $\Gamma$ is exactly the question whether  $s \gR[\gen{\Sigma}] t$.
	
	To decide whether or not $s \sim_{\gen{\Sigma}} t$, we proceed in the same way but in the graph with edges $\{x,y\}$ whenever  there is some $u \in \Sigma$ such that $\ov u xu = y$ and $ x = uy\ov u$.
\end{proof}

\begin{proposition}\label{thm:InvSGSL}
	The problem \dMemb[CT]{} is in \LOGSPACE.
\end{proposition}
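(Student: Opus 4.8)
The plan is to realise the greedy algorithm outlined in the technical overview. A first, easy step lets me assume that the ambient inverse semigroup is a monoid $S$ with identity $1$ and that $1 \in U$: if $S$ has no identity I adjoin one (a simple transformation on Cayley tables, computable in \ACz) and add $1$ to the generating set, so that $U$ is replaced by $U \cup \{1\}$; the modified instance asks whether $t \in U \cup \{1\}$, which agrees with the original question unless $t$ is the identity of $S$, a case dispatched separately (in a finite inverse semigroup $1 \in \gen{\Sigma}$ holds precisely when $\Sigma$ contains a unit of $S$, checkable in \ACz). So from now on $1 \in U = \gen{\Sigma}$ and the task is to decide whether $t \in U$.

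The algorithm keeps a single element $x \in S$ subject to the invariant that $x \in U$ and $x\ov{x}\,t = t$ (equivalently $t \gRle[S] x$, i.e.\ $t\ov{t} \leq x\ov{x}$); it starts with $x := 1$. In each iteration I use an oracle for \dUGAP on the undirected graph $\Gamma$ associated with the right Cayley graph of $S$ with respect to $\Sigma$ --- whose connected components coincide with the strongly connected components of that Cayley graph by \cref{lem:green-semi_rel}, exactly as in the proof of \cref{lem:conj_reachability} --- to determine (using \cref{rem:compute-path}) the component $C_x$ of $x$. Every $y \in C_x$ lies in $U$, is $\gR[\gen{\Sigma}]$-equivalent to $x$, and hence also satisfies $y\ov{y}\,t = t$. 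I then look, over all $y \in C_x$ and all generators $u \in \Sigma$, for a pair with $yu \notin C_x$ and $(yu)\ov{yu}\,t = t$. If one is found I set $x := yu$ and iterate: the new $x$ still satisfies the invariant, and its $\gR$-class in $S$ has strictly decreased --- indeed $yu \gRle[\gen{\Sigma}] y$ trivially, and $yu \notin C_x$ forces $yu \gRlt[\gen{\Sigma}] y$, so if $yu\,\ov{yu} = y\ov{y}$ held then $yu \gR[S] y$ and \cref{lem:green-semi_rel} would yield $yu \gRge[\gen{\Sigma}] y$, a contradiction; hence $yu\,\ov{yu} < y\ov{y} = x\ov{x}$ in the natural order on $E(S)$, so at most $\abs{S}$ iterations occur. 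If no such pair exists, the algorithm halts and accepts if and only if $t \in C_x$ (again checked via \dUGAP). Only a bounded number of semigroup elements and counters are stored, each iteration runs in \LOGSPACE, and $\LOGSPACE^{\dUGAP} = \LOGSPACE$ by Reingold's theorem, so the whole procedure is a \LOGSPACE algorithm.

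It remains to prove correctness. For soundness: whenever we accept, $t \in C_x$, hence $t \gR[\gen{\Sigma}] x$ and so $t \in x U^1 \subseteq U^1 = U$ since $x \in U$. For completeness, assume $t \in U$; then, beyond the invariant actually maintained, the run also keeps the stronger property $t \in x U^1$ --- it holds initially ($x = 1$), is unaffected by moves inside $C_x$, and survives a move to $x := yu$ because from $(yu)\ov{yu}\,t = t$ and $y, u, t \in U^1$ one computes $t = (yu)(\ov{u}\,\ov{y}\,t)$ with $\ov{u}\,\ov{y}\,t \in U^1$. The crucial point, which guarantees the algorithm never halts too early, is that if $x \in U$, $x\ov{x}\,t = t$, $t \in x U^1$ and $t \notin C_x$, then a valid pair must exist: writing $t = x u_1 \cdots u_k$ with $u_i \in \Sigma$ and $k$ minimal (so $k \geq 1$, as $t \neq x$), I take $j$ least with $x u_1 \cdots u_j \notin C_x$, and then $y := x u_1 \cdots u_{j-1} \in C_x$ together with $u := u_j$ works, since $t = (yu)\,u_{j+1}\cdots u_k \in (yu) U^1$ forces $(yu)\ov{yu}\,t = t$. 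Hence, when the algorithm finally stops, we have $t \in x U^1$ but no valid pair, which is possible only if $t \in C_x$, so the algorithm accepts. The heart of the argument --- and the step I expect to require the most care --- is this two-layered invariant: the checkable one that the greedy genuinely preserves, and the auxiliary condition $t \in x U^1$ that cannot be tested directly but provably stays true along the run once membership is assumed, and which is exactly what makes precise the idea of "making progress towards $t$".
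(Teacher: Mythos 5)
Your proposal is correct and follows essentially the same route as the paper's own proof: the same greedy \LOGSPACE{} algorithm with \dUGAP{} oracle calls, the same invariant $x \in U$ and $x\ov x t = t$, termination via strict $\gR$-descent, and completeness by locating the first prefix of a factorization of $t$ that leaves the current component. The only differences are cosmetic: you phrase the progress test as $yu \notin C_x$ rather than $yu\ov u \neq y$ (equivalent by \cref{lem:green-semi_rel}), and your auxiliary invariant $t \in xU^1$ is, given $t \in U$, automatically equivalent to the checkable invariant $x\ov x t = t$, which is exactly how the paper's induction uses it.
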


\begin{proof}
	We are given an inverse semigroup $S$, a subset $\Sigma \sse S$ and an element $t \in S$ and we want to decide whether $t \in U = \gen{\Sigma}$.
	To this end, we describe a $\LOGSPACE$-algorithm using oracle calls to \dUGAP based on \cref{lem:conj_reachability}.

  Without loss of generality, we assume that both $S$ and $U$ contain a neutral element $1$ (by simply adjoining such an element) and that $\Sigma$ is closed under formation of inverses.
	Our algorithm then proceeds as follows.	
  \begin{samepage}
  	\begin{algorithmic}[1]
  		\State $x \gets 1$
  		
  		\While{$\exists y \in S$, $u \in \Sigma$ \textbf{with} $x \gR[U] y$ \textbf{and} $yu \ov u \neq y$ \textbf{and} $yu\ov u \ov y\, t = t$}
  		\State 	$x \gets yu$
  			\EndWhile
  		\If{$x \gR[U] t$}
  			\State \textbf{return true}
  		\EndIf
  		\State \textbf{return false}  	
  	\end{algorithmic}
  \end{samepage}

	The tests whether $x \gR[U] y$ can be done in \LOGSPACE by \cref{lem:conj_reachability} (using oracle calls to \dUGAP).
	Moreover, the tests whether there exist $y \in S$, $u \in \Sigma$ meeting the conditions in line 2, can be done by iterating over all such elements checking whether the conditions are satisfied.

	Throughout, we keep the invariant that $x\in U$.
	Therefore, if our algorithm outputs \textbf{true}, then, indeed, $t \in U$.
	On the other hand, let $t = u_1 \cdots u_n \in U$ with $u_i \in \Sigma$.
	The idea is that the algorithm finds this (or a slightly modified) sequence.
    Observe that besides $x \in U$ we maintain the invariant $x \ov x t = t$ (\ie $x \gRge t$).

  Next, observe that, by \cref{lem:green-semi_rel}, $yu \ov u \neq y$ means that $y$ is not $\gR[U]$-equivalent to $yu$ or, more specifically, that $y \gRgt[U] yu$.
	As such, the while loop can be executed only finitely (indeed, at most $\abs{U}$) many times.
	Therefore, we can proceed in the following by induction on the number of times that line 3, the body of the while loop, is being executed.
	
	If $x \in U$ with $x \ov x t = t$ and $t \in U$, then either $x \gR[U] t$ or there is some $j \in \{1, \dots, n\}$ with $y = x \ov x  u_1\cdots u_j  \gR[U] x$ but $x \ov x  u_1\cdots u_{j+1}  \gRlt[U] x$. 
  	In the former case, there are no $y$ and $u$ meeting the conditions in line 2; hence, the algorithm terminates and outputs \textbf{true}.
	In the latter case, we have $x \gR[U] y$ and $yu_{j+1} \ov u_{j+1} \neq y$ and $y u_{j+1} \ov u_{j+1}  \ov y t = t$.
	As such, the while loop will be executed at least one more time and, by induction, the algorithm will therefore answer correctly that $t \in U$.
\end{proof}

\subsection{Hardness of Membership and Conjugacy for \LOGSPACE{}}\label{sec:SL-hard}

We now turn to hardness results for the (idempotent) membership and conjugacy problem for inverse semigroups in the Cayley table model.
Recall that a variety of finite inverse semigroups $\vV$ satisfies $\vV \not\sse \vCl$ if and only if $\vBS \sse \vV$, \ie $\vV$ contains the combinatorial Brandt semigroup $B_2$; see the second item of \cref{pro:varieties}.

\begin{proposition}\label{pro:conjugacy_sl_hardness}
  Let $\vV$ be a variety of finite inverse semigroups and suppose that $\vBS \sse \vV$.
	Then the problem $\dEConjS[CT]{\vV}$ is \LOGSPACE-hard under \ACz-reductions.
\end{proposition}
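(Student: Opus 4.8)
The plan is to reduce from $\dUGAP$ restricted to trees, which is $\LOGSPACE$-hard under $\ACz$-reductions by \cite{CookM87}. Given a tree $\Gamma = (V, E)$ with distinguished vertices $s_0, t_0 \in V$, we want to build, in $\ACz$, an inverse semigroup $S$ in $\vV$ given by its Cayley table, a subset $\Sigma \sse S$ with $\gen\Sigma \in \vV$, and two idempotents $e, f \in E(S)$ such that $s_0, t_0$ lie in the same connected component of $\Gamma$ if and only if $e \sim_{\gen\Sigma} f$. The natural candidate is a (combinatorial) Brandt semigroup: take $S = B(V) = B_{\abs V} \in \vBS \sse \vV$, whose nonzero elements are the pairs $(u,v) \in V \times V$ thought of as partial bijections of $V$ sending $u \mapsto v$, with $(u,v)(v,w) = (u,w)$ and all other products $0$. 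The idempotents of $B(V)$ are exactly $0$ and the diagonal elements $(v,v)$, which we identify with the vertices $v \in V$. For the generating set we take $\Sigma = \{\,(u,v), (v,u) \mid \{u,v\} \in E\,\} \cup \{0\}$; note $\gen\Sigma$ is an inverse subsemigroup of $B(V)$ and hence lies in $\vBS \sse \vV$ (its nonzero elements are the pairs $(u,v)$ with $u,v$ in the same component of $\Gamma$). Since $\Gamma$ is a tree, there are no redundancies to worry about, but even in general this poses no problem.

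The heart of the argument is the equivalence $(s_0,s_0) \sim_{\gen\Sigma} (t_0,t_0) \iff s_0, t_0$ connected in $\Gamma$. For the forward direction: if $\bar u\, (s_0,s_0)\, u = (t_0,t_0)$ with $u \in \gen\Sigma$, then $u$ must be a nonzero element $(s_0, t_0)$ (any other nonzero pair or $0$ fails to conjugate $(s_0,s_0)$ to $(t_0,t_0)$), so $(s_0,t_0) \in \gen\Sigma$, which forces $s_0$ and $t_0$ to lie in the same component. For the converse: if $s_0 = x_0, x_1, \dots, x_k = t_0$ is a path in $\Gamma$, then each $(x_{i-1}, x_i) \in \Sigma$, so $u = (x_0,x_1)(x_1,x_2)\cdots(x_{k-1},x_k) = (s_0, t_0) \in \gen\Sigma$, and one checks directly $\bar u\,(s_0,s_0)\,u = (t_0, s_0)(s_0,s_0)(s_0,t_0) = (t_0,t_0)$ and $u\,(t_0,t_0)\,\bar u = (s_0,s_0)$, so $(s_0,s_0) \sim_{\gen\Sigma} (t_0,t_0)$. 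Finally, the restricted variant $\sharp$ requires that $S \in \vV$ (true, as $S = B(V) \in \vBS \sse \vV$) and that $s \sim_S t$; this holds because in $B(V)$ any two diagonal idempotents $(s_0,s_0)$ and $(t_0,t_0)$ are conjugate via $(s_0,t_0) \in S$.

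It remains to note that the whole construction is $\ACz$-computable: the Cayley table of $B(V)$ has entries determined by a trivial local rule on pairs (and the special element $0$), membership of a pair in $\Sigma$ is decided by one lookup in the edge relation of $\Gamma$, and the target idempotents $e = (s_0,s_0)$, $f = (t_0,t_0)$ are read off directly. I expect the main (minor) obstacle to be bookkeeping around the encoding conventions for the Cayley table and for adjoining/indexing the zero element, and making sure the $\ACz$-uniformity of the index arithmetic is spelled out; there is no deep difficulty, since the reduction is essentially "Brandt semigroups are undirected graphs with a zero attached", mirroring the intuition already stated before \cref{thm:rst:main-CT}.
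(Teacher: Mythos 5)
Your proposal is correct and follows essentially the same route as the paper's proof: a reduction from \dUGAP{} to conjugacy of vertex idempotents in the combinatorial Brandt semigroup $B(V)$, where edge elements generate exactly the partial bijections corresponding to paths, and the conjugator of $e_{s_0}$ to $e_{t_0}$ is forced to be the element $(s_0,t_0)$. The only cosmetic difference is that the paper puts the diagonal idempotents $e_x$ into $\Sigma$ rather than the zero element, which changes nothing in the argument.
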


\begin{proof}
  To show hardness for $\LOGSPACE{} = \SL{}$, we reduce from the problem \dUGAP.
	Given an undirected graph $G = (V, E)$ and vertices $s,t \in V$, we proceed as follows.
	
	Let $S = B(V)$ be the Brandt semigroup on $V$ (recall that $B_2 \in \vV$ implies $B(V) \in \vV$ by \cref{lem:b2_implies_bn}).
	Given $x,y \in V$, we denote by $u_{xy}$ the unique element of $S$ mapping $x$ to $y$.
	Note that each non-zero element of $S$ is of this form, and the non-zero idempotents are precisely the elements $e_x \coloneqq u_{xx}$.
	Crucially, for fixed idempotents $e_x$ and $e_y$, the equation $\bar u e_x u = e_y$ has exactly one solution in $S$, namely the element $u = u_{xy}$.
  	In particular, $e_x \sim_S e_y$.
	
	Let $\Sigma = \{ e_x \mid x \in V \} \cup \{ u_{xy} \mid xy \in E \}$.
	An element $u_{xy}$ is contained in $U = \langle \Sigma \rangle \leq S$ if and only if the vertices $x$ and $y$ are connected by a path in $G$.
	In particular, $e_s \sim_U e_t$ holds if and only if $s$ and $t$ are connected by a path in $G$.
	Since the Cayley table of $S$, the set $\Sigma \subseteq S$, and the idempotents $e_s, e_t \in E(S)$ can be computed in \ACz from $G = (V,E)$ and $s,t \in V$, we conclude that the (restricted) idempotent conjugacy problem for $\vV$ is \LOGSPACE-hard.
\end{proof}

This argument also shows that the membership problem for $\vV$ is \LOGSPACE-hard in the Cayley table model as we can simply ask whether or not $u_{st} \in U$.
Even more, the following holds.

\begin{proposition}\label{pro:membership_sl_hardness}
	Let $\vV$ be a variety of finite inverse semigroups and suppose that $\vBS \sse \vV$.
	Then the idempotent membership problem $\dEMembS[CT]{\vV}$ is \LOGSPACE-hard. 
\end{proposition}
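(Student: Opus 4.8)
The plan is to reduce from \dUGAP, building on the reduction of \cref{pro:conjugacy_sl_hardness}. Given an undirected graph $G = (V,E)$ and vertices $s,t \in V$, one wants to produce an inverse semigroup $S \in \vV$ (as a Cayley table), a subset $\Sigma \sse S$, and an idempotent $e \in E(S)$ such that $e \in \langle\Sigma\rangle$ if and only if $s$ and $t$ lie in the same component of $G$, everything being computable by $\ACz$-circuits (recall that \LOGSPACE-hardness is meant under $\ACz$-reductions and that \dUGAP is \LOGSPACE-hard). Since $\vBS \sse \vV$, every combinatorial Brandt semigroup $B(\Omega)$ lies in $\vV$ by \cref{lem:b2_implies_bn}, and hence so does every finite Cartesian product of such semigroups; these are the only structures needed.

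One must first notice why a single Brandt semigroup does not suffice: in $B(\Omega)$, for every $\Sigma$ a non-zero idempotent $e_x$ lies in $\langle\Sigma\rangle$ as soon as $x$ is in the domain of some $g \in \Sigma$, since then $e_x = g\bar g \in \langle\Sigma\rangle$; thus idempotent membership in $B(\Omega)$ merely records which points are untouched and cannot encode reachability. The remedy is to pass to a Cartesian product $S = B(V) \times T$ where the factor $B(V)$ carries a walk in $G$ and $T$ is a product of small Brandt semigroups acting as \emph{fuses}. In such a product, idempotent membership is no longer local: although $e \le g_1\bar g_1$ holds for the first letter $g_1$ of any expression $e = g_1 \cdots g_m$ of an idempotent $e$ (because $e = e\bar e \le g_1\bar g_1$), an idempotent with some zero components can genuinely require a long, carefully synchronised word — for instance $(e_1,e_1,0) = (u_{12},u_{12},e_1)\cdot\overline{(u_{12},u_{12},u_{12})}$ lies in the inverse subsemigroup of $(B_2)^3$ generated by $(u_{12},u_{12},u_{12})$ and $(u_{12},u_{12},e_1)$, but not in the sub-semilattice of $E((B_2)^3)$ generated by the idempotents $g\bar g$. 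Concretely, I would arrange the fuse factors so that the designated idempotent $e$ can be formed only by a product of edge generators tracing an $s$--$t$ path in $G$, and conversely so that every $s$--$t$ path yields such a product; the sizes of $S$ and $\Sigma$ would be polynomial in $|G|$.

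The hard part is exactly this design of the fuse factors. The obstruction to overcome is the \emph{immediate undo}: for every generator $g$ the idempotents $g\bar g$ and $\bar g g$, and hence the whole sub-semilattice they generate, lie automatically in $\langle\Sigma\rangle$; the target idempotent $e$ must therefore be chosen outside this sub-semilattice, and its membership in $\langle\Sigma\rangle$ must force the $B(V)$-component all the way from $s$ to $t$, with the fuses ``clashing'' and collapsing any candidate word to something $\ne e$ unless the walk performs exactly the intended transitions. Verifying the converse (every genuine $s$--$t$ path in $G$ produces a word over $\Sigma$ evaluating to $e$) is then routine. Since $\vBS \sse \vV$ holds for every $\vV$ with $\vV \not\sse \vCl$ by \cref{pro:varieties}, the reduction gives \LOGSPACE-hardness of $\dEMembS[CT]{\vV}$ for all such $\vV$, strengthening \cref{pro:membership_sl_hardness}; by \cref{lem:CT_to_PB} the bound also transfers to the partial bijection model. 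An alternative, which also seems viable, would be to reduce the already-established hard problems $\dMembS[CT]{\vV}$ or $\dEConjS[CT]{\vV}$ directly to $\dEMembS[CT]{\vV}$ (using, e.g., that $\gJ[U]$-equivalence of idempotents coincides with conjugacy, \cref{lem:idem_conj_is_green_j}), but the reduction from \dUGAP appears cleanest.
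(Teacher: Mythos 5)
Your proposal correctly identifies the right starting point (reduce from \dUGAP, reuse the Brandt-semigroup encoding of \cref{pro:conjugacy_sl_hardness}, pass to a Cartesian product so that idempotent membership is no longer decided coordinate-wise by the idempotents $g\bar g$), and your small example in $(B_2)^3$ is a correct illustration of why a product is needed. But the proof has a genuine gap at exactly the point you flag yourself: the ``design of the fuse factors'' is never given. You state as a goal that the target idempotent should be obtainable only from words tracing an $s$--$t$ walk, with the fuses ``clashing'' otherwise, but you provide neither the construction of $T$, nor the choice of the target idempotent, nor the argument that membership of that idempotent forces connectivity. Since that is the entire content of the reduction, what remains is a plan rather than a proof. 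Moreover, the requirement you set for yourself (that any word evaluating to $e$ must perform ``exactly the intended transitions'' along a path) is stronger than what is needed and is likely hard to engineer directly, because arbitrary interleavings of generators and their inverses must be controlled.

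The paper closes this gap with a much lighter device, which is essentially the ``alternative'' you mention and set aside: keep $U \leq B(V)$ and $\Sigma$ from \cref{pro:conjugacy_sl_hardness}, and adjoin a single semilattice coordinate, setting $S' = B(V) \times Y_2$ and $\Sigma' = \{(e_s,0)\} \cup \Sigma \times \{1\}$ with target $(e_t,0)$. By \cref{lem:idem_memb_from_idem_conj}, $(e_t,0) \in \langle\Sigma'\rangle$ if and only if $e_s \gJge[U] e_t$: the zero in the $Y_2$-coordinate forces at least one use of the generator $(e_s,0)$, and all but one such occurrence can be replaced by $(e_s,1)$, normalizing any witnessing word to the form $(u,1)(e_s,0)(v,1)$, i.e.\ $e_t = u e_s v$ with $u,v \in U^1$. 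Then \cref{lem:idem_conj_is_green_j} (together with \cref{lem:green-semi_rel}) converts $e_s \gJge[U] e_t$ into conjugation $e_t = \bar u e_s u$, which by the analysis in \cref{pro:conjugacy_sl_hardness} holds exactly when $s$ and $t$ are connected in $G$; and $S' \in \vV$ since $B(V) \in \vBS \sse \vV$ by \cref{lem:b2_implies_bn} and $Y_2 \in \vSl \sse \vBS$. The normalization argument is the key idea missing from your write-up; if you want to salvage your route, this one-coordinate ``flag'' is precisely the fuse you were looking for, and it avoids having to control the shape of the witnessing word beyond counting occurrences of a single generator.
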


Together with \cref{pro:conjugacy_sl_hardness}, \cref{pro:membership_sl_hardness} establishes the hardness part of \cref{thm:rst:main-CT}.
Our proof of \cref{pro:membership_sl_hardness} is a slight variation of the proof of \cref{pro:conjugacy_sl_hardness}.
It is based on the observation that certain instances of the idempotent conjugacy problem reduce to the idempotent membership problem for a slightly larger inverse semigroup.

\begin{lemma}\label{lem:idem_memb_from_idem_conj}
	Let $U$ be an inverse semigroup and $e_s,e_t \in E(U)$.
	Further, let $U' \leq U \times Y_2$ be generated by $(e_s, 0)$ and $U \times \{1\}$.
	Then $e_s \gJge e_t$ holds if and only if $(e_t, 0) \in U'$.
\end{lemma}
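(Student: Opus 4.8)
The plan is to describe the inverse subsemigroup $U'$ completely. Since $0$ is a zero of $Y_2$, any product of elements of $U \times Y_2$ that uses the generator $(e_s, 0)$ at least once has second coordinate $0$, whereas a product of elements of $U \times \{1\}$ has second coordinate $1$. As the generating set $\{(e_s,0)\} \union (U \times \{1\})$ is already closed under inverses -- $(e_s,0)$ is idempotent, hence self-inverse, and $U \times \{1\}$ is an inverse subsemigroup -- the set $U'$ is exactly the set of all finite products of these generators. Grouping maximal runs of consecutive factors from $U \times \{1\}$, every such product is either of the form $(w, 1)$ with $w \in U$, or of the form $(w_0 e_s w_1 e_s \cdots e_s w_k, 0)$ with $k \geq 1$ and $w_0, \dots, w_k \in U^1$, a missing run being read as the factor $1$.

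Next I would show that $U' = (U \times \{1\}) \union \bigl((U^1 e_s U^1) \times \{0\}\bigr)$. The set $U^1 e_s U^1$ is a subsemigroup of $U$, which is a routine consequence of $e_s$ being idempotent (so that $e_s U^1 e_s \sse U^1 e_s$); hence every product $w_0 e_s w_1 e_s \cdots e_s w_k$ as described above lies in $U^1 e_s U^1$, and with the description of the products this gives the inclusion ``$\sse$''. For ``$\supseteq$'', given $w = u e_s v$ with $u, v \in U^1$ a short case distinction shows $(w, 0) \in U'$: if $u, v \in U$ then $(w,0) = (u,1)(e_s,0)(v,1)$, while the cases $u = 1$ or $v = 1$ are covered by omitting the corresponding factor (and $(e_s,0) \in U'$ handles $u = v = 1$).

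Finally, since the second coordinates $0$ and $1$ are distinct, $(e_t, 0) \in U'$ holds if and only if $e_t \in U^1 e_s U^1$; and $e_t \in U^1 e_s U^1$ is equivalent to $U^1 e_t U^1 \sse U^1 e_s U^1$ (forward because $U^1$ is a monoid, so that $U^1 (U^1 e_s U^1) U^1 = U^1 e_s U^1$; backward because $e_t \in U^1 e_t U^1$), which is by definition $e_s \gJge[U] e_t$. I expect no real obstacle here: the only point requiring a little care is the $U$-versus-$U^1$ bookkeeping in the first two steps, since the identity is not among the generators and one must check that products $u e_s v$ with $u$ or $v$ trivial genuinely arise from generators, but this is entirely routine.
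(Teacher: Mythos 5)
Your proposal is correct and takes essentially the same approach as the paper: in both arguments the second coordinate records whether the generator $(e_s,0)$ was used, and any product using it has first coordinate in $U^1 e_s U^1$, so that $(e_t,0)\in U'$ amounts to $e_t = u e_s v$ with $u,v\in U^1$, i.e.\ $e_s \gJge e_t$. The only cosmetic difference is that you obtain this via a complete description of $U'$ as $(U\times\{1\})\cup\bigl((U^1 e_s U^1)\times\{0\}\bigr)$, while the paper collapses multiple occurrences of $(e_s,0)$ by replacing all but one of them with $(e_s,1)$.
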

\begin{proof}
	If $e_s \gJge e_t$, then $e_t = u e_s v$ for some $u,v \in U^1$; hence, $(e_t, 0) = (u, 1) (e_s, 0) (v, 1) \in U'$.
	Conversely, if $(e_t, 0) \in U'$ holds, then $(e_t, 0) = u_1u_2 \dotsc u_n$ for some $u_i \in \{(e_s, 0)\} \cup U \times \{1\}$.
	Clearly, at least one of the factors $u_i$ must be equal to $(e_s, 0)$.
	By replacing $(e_s, 0)$ with $(e_s, 1)$ for all but one of the factors $u_i$, and using the fact that $U \times \{1\}$ is a subsemigroup of $U'$, we obtain $(e_t, 0) = (u, 1) (e_s, 0) (v, 1)$ for some $u,v \in U^1$; hence, $e_s \gJge e_t$ as $e_t = u e_s v$.
\end{proof}

\begin{proof}[Proof of {\cref{pro:membership_sl_hardness}}]
	Given an undirected graph $G = (V,E)$ and $s,t \in V$, we let $U \leq S$ and $\Sigma$ be as in the proof of \cref{pro:conjugacy_sl_hardness}.
	We consider $S' = S \times Y_2$ and $U' = \langle \Sigma' \rangle \leq S'$ where $\Sigma' = \{ (e_s, 0) \} \cup \Sigma\times\{1\}$.
	Note that $U',S' \in \vV$ since $S \in \vV$ and $Y_2 \in \vV$.
	By \cref{lem:idem_memb_from_idem_conj}, the idempotent $(e_t, 0) \in S'$ is contained in $U'$ if and only if $e_s \gJge[U] e_t$.
	By \cref{lem:idem_conj_is_green_j}, this is equivalent to the existence of some $u \in U^1$ with $e_t = \bar u e_s u$.
	Since we already know that the latter holds if and only if $s$ and $t$ are connected by a path in $G$, this completes the reduction from the undirected graph reachability problem to the idempotent membership problem.
\end{proof}

\section{Inverse Semigroups in the Partial Bijection Model}\label{sec:PSPACE-hard}

In this section we finally complete the proof of our dichotomy theorem regarding the membership and conjugacy problem for inverse semigroups in the partial bijection model.

\begin{theorem}[\cref{thm:main-PB}]\label{thm:pbm-dichotomy}
	Let $\vV$ be a variety of finite inverse semigroups. 
	\begin{itemize}
		\item If $\vV \sse \vSI$, then $\dMemb[PB]{\vV} $ is in \NC and $\dConj[PB]{\vV}$ is in \NP.
		\item If $\vV \not\sse \vSI$, then $\dMemb[PB]{\vV}$ and $\dConj[PB]{\vV}$ are \PSPACE-complete.
	\end{itemize}
\end{theorem}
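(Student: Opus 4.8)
The plan is to treat the two regimes separately. Assume first that $\vV \sse \vSI$. Then $\gen{\Sigma} \in \vV$ forces $\gen{\Sigma} \in \vSI$, so every instance of $\dMemb[PB]{\vV}$ (resp.\ $\dConj[PB]{\vV}$) is also an instance of $\dMemb[PB]{\vSI}$ (resp.\ $\dConj[PB]{\vSI}$), and the first bullet is immediate from \cref{cor:gb-complexity}. For the second bullet I would start with the \PSPACE upper bound, which in fact holds for all finite inverse semigroups: since $\lvert \ISym_n \rvert \le 2^{\bigO(n \log n)}$, every element of $\ISym_n$ and every intermediate product fits in polynomial space, so a nondeterministic machine may guess a word over the generators one symbol at a time, maintain the running product together with a polynomially long step counter, and compare with the target; this is a nondeterministic polynomial-space procedure, hence in \PSPACE by Savitch's theorem, and for conjugacy one additionally guesses the conjugating word. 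It then remains to prove \PSPACE-hardness when $\vV \not\sse \vSI$.

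By \cref{lem:gb-characterization} together with part~(3) of \cref{pro:varieties}, the hypothesis $\vV \not\sse \vSI$ is equivalent to $\vBM \sse \vV$, i.e.\ to $B_2^1 \in \vV$; hence, by \cref{lem:b2_implies_bn}, $B^1(\Omega) \in \vV$ for every finite set $\Omega$, and, since a variety of finite inverse semigroups is closed under finite direct products and divisors, all finite products of such combinatorial Brandt monoids and all of their inverse subsemigroups lie in $\vV$ as well (note also $Y_2 \in \vSl \sse \vBM \sse \vV$). The reduction is from \dNCL, the configuration-to-configuration problem for nondeterministic constraint logic, proved \PSPACE-complete by Hearn and Demaine~\cite{HearnD05}. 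Given a constraint graph $\Gamma$ with two configurations $C$ and $C'$, I would associate to each local datum of the instance --- the orientation of an edge, the constraint at a vertex --- a small combinatorial Brandt monoid $B^1_{n_i}$ whose nonzero idempotents index the locally admissible states and whose zero marks a violated constraint; the ambient semigroup $S$ is the direct product of these $B^1_{n_i}$, embedded into $\ISym_n$ for a polynomial $n$. Each legal NCL move (a flip of one edge) is encoded by a generator acting coordinatewise: as the identity on coordinates not involving the flipped edge, and transitioning the relevant local state otherwise, collapsing to the zero of that coordinate exactly when the resulting local configuration would be inadmissible. A product of generators thereby faithfully tracks a sequence of NCL moves, the zero element of some coordinate absorbing the remainder of the product as soon as an illegal move is attempted, so that the conjugate can no longer match the target idempotent. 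The construction is arranged so that $C$ and $C'$ correspond to idempotents $e_C, e_{C'} \in E(U)$, where $U = \gen{\Sigma}$, with $e_C \gJge[U] e_{C'}$ precisely when $C'$ is reachable from $C$. By \cref{lem:idem_conj_is_green_j} (recall $S$ is finite) and the symmetry of NCL reachability, this is equivalent to $e_C \sim_U e_{C'}$, which yields \PSPACE-hardness of $\dConj[PB]{\vV}$, and in fact of its idempotent variant $\dEConj[PB]{\vV}$.

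Hardness of $\dMemb[PB]{\vV}$ (indeed of $\dEMemb[PB]{\vV}$) then follows via \cref{lem:idem_memb_from_idem_conj}: passing to the inverse semigroup $U' \leq U \times Y_2$ generated by $(e_C, 0)$ together with $U \times \{1\}$, which again lies in $\vV$ as $Y_2 \in \vV$, we get $(e_{C'}, 0) \in U'$ iff $e_C \gJge[U] e_{C'}$, i.e.\ iff $C'$ is reachable from $C$. The upper bounds and the translation $\vV \not\sse \vSI \Leftrightarrow B_2^1 \in \vV$ are routine. The crux --- and the step I expect to be the main obstacle --- is the NCL encoding itself: pinning down the Brandt-monoid coordinates and the generators, and then verifying in both directions that a product of generators is non-annihilating on every coordinate precisely when it spells out a legal sequence of moves, so that conjugacy of the configuration idempotents captures NCL reachability exactly. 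Care is also needed to keep $n$ polynomially bounded and to double-check that the generated subsemigroup $U$, not merely the ambient product $S$, belongs to $\vV$.
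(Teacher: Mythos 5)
Your proposal is correct and takes essentially the same route as the paper: the first bullet via \cref{cor:gb-complexity}, the trivial \PSPACE{} upper bound, and, for $\vV \not\sse \vSI$ (equivalently $B_2^1 \in \vV$), a reduction from \dNCL{} into a product of combinatorial Brandt monoids with the $Y_2$ trick of \cref{lem:idem_memb_from_idem_conj} transferring hardness from idempotent conjugacy to membership. The gadget verification you defer is exactly what \cref{thm:pbm-hardness} carries out (one factor $B^1(\cC(\Gamma,v))$ per vertex indexed by admissible local configurations, generators $u(c_1,c_2')$ flipping a single edge, and $\bar u\, e(C)\, u$ equal to $e(C')$ or absorbed into the ideal of elements with a zero coordinate), so your outline matches the paper's proof in substance.
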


As outlined in \cref{sec:results}, the case $\vV \sse \vSI$ can be further refined as follows implying actually an \ACz-vs.-\NC-vs.-\PSPACE-complete trichotomy for $\dMemb{}$.
\begin{itemize}
	\item If $\vV \sse \vSl$, then $\dMemb[PB]{\vV} $ and $\dConj[PB]{\vV}$ are in \ACz.
	\item If $\vV = \vBS$, then $\dMemb[PB]{\vV} $ and $\dConj[PB]{\vV}$ are \LOGSPACE-complete.
	\item If $\vV \not\sse \vBS$, then $\dMemb[PB]{\vV} $ is in \NC and $\dConj[PB]{\vV}$ is in \NP; both are hard for \LOGSPACE.
\end{itemize}

Indeed, $\dMemb[PB]{\vSl} $ is considered in  \cite{BeaudryMT92} and shown to be in \ACz.
Note that the conjugacy problem for semilattices is trivially in \ACz as in a semilattice $S$ two elements are conjugate if and only if they are equal (recall that $S = E(S)$ and that $S$ is $\gJ$-trivial).

If $\vV \not\sse \vSl$, then either $\vV$ contains the Brandt semigroup $B_2$ or some non-trivial group. 
In the first case, we get \LOGSPACE-hardness by \cref{thm:main-CT} as by the Preston-Wagner Theorem \cite{Preston54,Wagner52} every inverse semigroup given as Cayley table can be interpreted as an inverse semigroup in the partial bijection model.
In the second case, \LOGSPACE-hardness of $\dMemb[PB]{\vV} $ and $\dConj[PB]{\vV}$ is established in \cref{lem:nontrivial-group-hard}.
Thus, both problems are \LOGSPACE-hard whenever $\vV \not\sse \vSl$.
The fact that $\dMemb[PB]{\vBS}$ and $\dConj[PB]{\vBS}$ are contained in \LOGSPACE is the content of \cref{cor:mem-BS}.

For proving \cref{thm:pbm-dichotomy}, recall that either $\vV \sse \vSI$ or $\vBM \sse \vV$ for each variety $\vV$ of finite inverse semigroups by \cref{pro:varieties}.
With the first part of \cref{thm:pbm-dichotomy} covered by \cref{cor:gb-complexity}, it therefore suffices to show that the problems \dMemb[PB]{\vBM} and \dConj[PB]{\vBM} are \PSPACE-complete.
Moreover, since these problems are clearly contained in \PSPACE{}, it suffices to prove their hardness for it.
We will do so by a suitable reduction in \cref{sub:pbm-hardness}.

Slight variations of this reduction will then allow us to derive hardness results for the intersection non-emptiness problem as well as the subpower membership problem in \cref{sub:iai-hardness} and \cref{sub:subpower}, respectively.

\subsection{Hardness of Membership and Conjugacy for \PSPACE}\label{sub:pbm-hardness}

As mentioned in the introduction to this section, our goal is to prove \PSPACE-hardness of the membership and conjugacy problems for $\vBM$ in the partial bijection model.
In fact, we will show that this even holds for the idempotent variants of these problems.

\begin{theorem}\label{thm:pbm-hardness}
  The problems \dEMembS[PB]{\vBM} and \dEConjS[PB]{\vBM} are \PSPACE-complete.
\end{theorem}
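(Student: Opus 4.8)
Since membership and conjugacy in the partial bijection model are clearly in \PSPACE{} (guess a witness / simulate a polynomial-space search over the orbit of configurations), the content of the statement is the \PSPACE-hardness of the two \emph{idempotent, restricted} problems \dEMembS[PB]{\vBM} and \dEConjS[PB]{\vBM}. The plan is to reduce from \dNCL{}, the configuration-to-configuration problem for non-deterministic constraint logic, which Hearn and Demaine~\cite{HearnD05} showed to be \PSPACE-complete. The key structural feature of \dNCL{} that we exploit is \emph{locality}: a configuration of an NCL machine is specified by an orientation of the edges of a fixed constraint graph subject to local (per-vertex) in-flow constraints, and a legal move reverses a single edge while preserving all constraints. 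So both the state space and the transition relation decompose into small local gadgets, each of which we will encode into a small combinatorial Brandt monoid $B^1_n$ acting on a small set of points, and then take the Cartesian product of all these gadgets to obtain one inverse semigroup $S \in \vBM$ (using $B^1_n \preccurlyeq (B^1_2)^n$, so $S$ lies in the variety generated by $B^1_2$, as noted after \cref{lem:b2_implies_bn}).

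Concretely, I would proceed as follows. First, fix an NCL instance with constraint graph $\Gamma$ and two configurations $C_{\mathrm{start}}, C_{\mathrm{goal}}$. For each edge $e$ of $\Gamma$ introduce a pair of points representing the two possible orientations of $e$; the ``current orientation'' of the whole graph is then an idempotent $e_\Delta$ of $\ISym(\Omega)$ selecting one point per edge. For each vertex $v$ and each legal edge-reversal available at $v$, build a generator: a partial bijection that swaps the two points of the flipped edge and is the identity on exactly those configurations of the incident edges that witness the move being legal (i.e.\ that keep the vertex constraint satisfied both before and after). Each such generator, together with the relevant idempotents, generates a copy of a small Brandt monoid $B^1_n$ on the few points involved, because a Brandt monoid is precisely the structure ``permute within one $\mathcal D$-class, or drop to $0$''. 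Taking $\Omega$ to be the disjoint union over all gadgets and $\Sigma$ the union of all these generators (closed under inverses), the inverse subsemigroup $U=\gen{\Sigma}$ reaches the idempotent $e_{C_{\mathrm{goal}}}$ from the idempotent $e_{C_{\mathrm{start}}}$ — i.e.\ $e_{C_{\mathrm{start}}} \gJge[U] e_{C_{\mathrm{goal}}}$, equivalently $e_{C_{\mathrm{start}}} \sim_U e_{C_{\mathrm{goal}}}$ by \cref{lem:idem_conj_is_green_j} — if and only if $C_{\mathrm{goal}}$ is reachable from $C_{\mathrm{start}}$ in the NCL machine. This handles \dEConjS[PB]{\vBM}; and \dEMembS[PB]{\vBM} follows from it by the same device as in \cref{lem:idem_memb_from_idem_conj}, adjoining a $Y_2$ factor and the extra generator $(e_{C_{\mathrm{start}}},0)$ so that membership of $(e_{C_{\mathrm{goal}}},0)$ in the enlarged semigroup $U' \leq U\times Y_2$ is equivalent to $e_{C_{\mathrm{start}}} \gJge[U] e_{C_{\mathrm{goal}}}$. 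Finally one checks that $S$ (and the enlarged $S\times Y_2$) indeed lies in $\vBM$ and that $\Sigma$, $t$ (and $s$) can be produced in polynomial time (even in \ACz) from the NCL instance, which gives the restricted ``$\sharp$'' versions since by construction $s\sim_S t$ and all relevant elements lie in the chosen $S\in\vBM$.

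The main obstacle is the design and correctness of the local gadgets: one must encode each NCL vertex constraint (there are two vertex types, AND and OR, with specific minimum-in-flow requirements) and each legal single-edge reversal into a partial bijection whose domain is \emph{exactly} the set of locally-legal configurations, so that (i) a sequence of generators composes to a nonzero (in fact idempotent-reaching) element precisely when the corresponding sequence of moves is legal in NCL, and (ii) no ``illegal shortcut'' is possible — this is where the $0$ of the Brandt monoid does the work, collapsing any inconsistent composition. One must also be careful that generators attached to different vertices but sharing an edge interact correctly, which is why the shared edge-points are literally shared across gadgets while everything else is kept disjoint; verifying that the resulting $U$ still divides a product of $B^1_2$'s (hence $U\in\vBM$) and that no unintended larger inverse semigroup is generated requires a short argument that within each $\mathcal D$-class the action is ``permutational with a zero''. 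Once the gadgets are pinned down, the reduction bookkeeping and the membership-from-conjugacy step are routine given \cref{lem:b2_implies_bn}, \cref{lem:idem_conj_is_green_j}, and \cref{lem:idem_memb_from_idem_conj}.
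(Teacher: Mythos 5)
Your high-level plan is the same as the paper's: reduce \dNCL{} to \dEConjS[PB]{\vBM} by encoding the local data of an NCL machine into small Brandt monoids, take the Cartesian product, let configuration idempotents be moved by conjugation with transition generators, and then get \dEMembS[PB]{\vBM} from conjugacy via the $Y_2$-trick of \cref{lem:idem_memb_from_idem_conj}, with $B^1(\Omega) \in \vBM$ supplied by \cref{lem:b2_implies_bn}. The gap is precisely in the gadget you sketch and then defer as ``the main obstacle''. Three concrete problems with it as stated: (i) if a generator literally \emph{swaps} the two orientation points of the flipped edge (a total transposition on that $2$-point block, partial identity elsewhere), then that generator $u$ satisfies $u^3 = u \neq u^2$, so $\gen{u}$ is a group of order two; hence $U = \gen{\Sigma}$ is not aperiodic and $U \notin \vBM$, while the theorem is specifically about inputs with $U$ (and, for the $\sharp$-variants, the ambient $S$) in $\vBM$. (ii) A generator whose domain is ``exactly those configurations of the incident edges that witness the move being legal'' is not realizable: the domain of a partial bijection is a set of \emph{points}, whereas legal contexts are \emph{tuples} of orientations. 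The acceptance condition of a single generator is therefore a product (rectangle) over the edge blocks, but for an OR-type vertex (reversal legal iff at least one of the two other edges points inward) the legal set is not a rectangle, so no single partial bijection on per-edge orientation points can accept exactly the legal contexts. (iii) Indexing generators ``for each vertex $v$ and each legal edge-reversal available at $v$'' checks legality at one endpoint only; reversing an edge changes the in-flow at \emph{both} endpoints, and a generator that ignores the other endpoint lets the simulation pass through invalid configurations, breaking the reverse direction of the reduction.

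The paper avoids all three issues by a different choice of points: $\Omega_\Gamma = \bigsqcup_v \cC(\Gamma,v)$, the (at most seven) local configurations at each vertex, so the factor at $v$ is the Brandt monoid $B^1(\cC(\Gamma,v))$. One generator $u(c_1,c_2')$ is introduced for each legal reversal \emph{together with its exact before/after local configurations at both endpoints}, acting as $c_i \mapsto c_i'$ in the two affected factors and as the identity in all others; then $\bar u\, e(C)\, u$ either equals $e(C')$ for the configuration $C'$ obtained by the reversal, or falls into the ideal of elements with a zero coordinate, from which no configuration idempotent can ever be recovered. Your per-edge encoding can be salvaged by making each flip one-directional (so every projection onto an edge block stays inside $B^1_2$, giving $U \leq \prod_e B^1_2 \in \vBM$) and by indexing each generator with the full local context at both endpoints of the flipped edge; but at that point it is essentially a re-coordinatization of the paper's construction. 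The remaining steps of your plan (the $\PSPACE$ upper bound, membership from conjugacy via \cref{lem:idem_memb_from_idem_conj}, and polynomial-time computability yielding the $\sharp$-restrictions, using that the configuration idempotents are conjugate in the ambient product) do match the paper, but without a correct gadget the core of the hardness argument is missing.
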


These problems are clearly contained in \PSPACE{}.
We prove their \PSPACE-hardness by a reduction from the decision problem \dNCL of \emph{non-determinstic constraint logic} (\emph{NCL}), which was introduced by Hearn and Demaine~\cite{HearnD05} and which we briefly describe in the following.

An \emph{NCL machine} $\Gamma$ is an edge-weighted simple undirected graph, every vertex of which has degree three, and every edge of which has weight one or two.
A \emph{configuration} of the NCL machine is an orientation of all edges such that the sum of the incoming edge weights (in-flow) at every vertex is at least two.
We will denote the set of all configurations by $\cC(\Gamma)$.
Two configurations are related by a \emph{transition} if they differ in the orientation of a single edge.

The decision problem \dNCL, in the \emph{configuration-to-configuration} variant, is given as follows.

\begin{decproblem}
  \iitem An NCL machine $\Gamma$, and two configurations $C_s, C_t \in \cC(\Gamma)$.
  \qitem Are $C_s$ and $C_t$ related by a sequence of transitions?
\end{decproblem}

This problem is essentially a compressed version of the accessibility problem for undirected graphs, where accessibility is decided on the implicitly given graph of configurations and transitions between these.
The problem \dNCL{} is complete for \PSPACE{}~\cite[Theorem 5]{HearnD05}.

Crucial to our cause is the observation that the validity of any configuration can be verified locally, as the minimum in-flow constraint has to be satisfied at each vertex individually. 
Moreover, transitions are also local in the sense that each transition affects only a single edge and, therefore, only the two vertices incident with that edge.

We call an orientation of all edges incident with a fixed vertex $v$ a \emph{local configuration} at $v$ provided that the minimum in-flow constraint at $v$ is satisfied.
The set of all local configurations at $v$ will be denoted by $\cC(\Gamma, v)$.
Note that $\lvert \cC(\Gamma, v) \rvert \leq 7$ as $v$ has degree three and at least one edge needs to be oriented towards $v$ to satisfy the in-flow constraint.

Given an orientation $O$ of all edges of $\Gamma$, we denote its restriction to an orientation of the edges incident with $v$ by $O\vert_v$.
Note that an orientation $O$ is a configuration, i.e., $O \in \cC(\Gamma)$, if and only if each restriction $O\vert_v$ is a local configuration, i.e., $O\vert_v \in \cC(\Gamma, v)$ for all $v \in V(\Gamma)$.

\begin{proof}[Proof of \cref{thm:pbm-hardness}]
  We first reduce the problem \dNCL to the problem \dEConjS[PB]{\vBM}.
  This shows that the latter problem is hard for \PSPACE{} and, therefore, \PSPACE-complete.

  Given an NCL machine $\Gamma$, we associate to it the inverse semigroup $S_\Gamma = \prod_{v} B^1(\cC(\Gamma, v))$ where the direct product extends over all vertices $v \in V(\Gamma)$.
  We identify $S_\Gamma$ with an inverse subsemigroup of $\ISym(\Omega_\Gamma)$ where $\Omega_\Gamma = \bigsqcup_v \cC(\Gamma, v)$.
  Note that $\lvert \Omega_\Gamma \rvert \in \bigO(\lvert V(\Gamma) \rvert)$.

  Each configuration $C \in \cC(\Gamma)$ has an idempotent $e(C) \in E(S_\Gamma)$ canonically associated to;
  the projection of $e(C)$ onto a factor $B^1(\cC(\Gamma, v))$ is given by the idempotent at $C\vert_v \in \cC(\Gamma, v)$.
  Note that any two such idempotents are conjugate in $S_\Gamma$.
  We now describe how to encode a transition by a corresponding element of $S_\Gamma$.
  Suppose given an oriented edge $o$ from $v_1$ to $v_2$, say, and local configurations $c_i \in \cC(\Gamma, v_i)$ with $o \in c_i$ for $i = 1,2$.
  Let $o'$ denote the reversal of $o$, and let $c'_i$ denote the result of replacing $o$ by $o'$ in $c_i$.
  Provided that $c'_1 \in \cC(\Gamma, v_1)$ and $c'_2 \in \cC(\Gamma, v_2)$, we define $u(c_1, c'_2) \in S_\Gamma$ as follows.
  The projection of $u(c_1, c'_2)$ onto $B^1(\cC(\Gamma, v))$ where $v = v_i$ is the unique partial bijection in $B^1(\cC(\Gamma, v_i))$ with $c_i \mapsto c'_i$, and for $v \not\in \{ v_1, v_2 \}$ the projection of $u(c_1, c'_2)$ onto the factor $B^1(\cC(\Gamma, v))$ is the identity element.
  Note that we have $\overline{u(c_1, c'_2)} = u(c'_2, c_1)$ and that $u(c_1, c'_2)$ is defined if and only if $u(c'_2, c_1)$ is.

  Let $U_\Gamma \leq S_\Gamma$ be the inverse subsemigroup generated by the set $\Sigma_\Gamma \sse S_\Gamma \leq \ISym(\Omega_\Gamma)$ of all elements $u(c_1, c'_2)$ as above.
  Note that $\lvert \Sigma_\Gamma \rvert \in \bigO(\lvert V(\Gamma)\rvert)$.
  We claim that $e(C_s) \sim_{U_\Gamma} e(C_t)$ for two configurations $C_s, C_t \in \cC(\Gamma)$ if and only if these configurations can be transformed into one another by a sequence of transitions.
  To see this, consider an element of the form $\bar ueu$ where $e = e(C)$ for some $C \in \cC(\Gamma)$ and with $u = u(c_1, c'_2) \in U_\Gamma$ as above.
  Then either $C\vert_{v_i} = c_i$ for $i=1,2$, in which case $\bar u e u = e(C')$ with $C' \in \cC(\Gamma)$ obtained from $C$ by reversal of $o$, or else $\bar u e u \in I_\Gamma$ where $I_\Gamma \subseteq S_\Gamma$ is the ideal comprising all elements with at least one projection equal to $0 \in B^1(\cC(\Gamma, v))$.
  Since $e(C) \not\in I_\Gamma$ for all $C \in \cC(\Gamma)$, this proves the claim.

  Given an instance of the problem \dNCL{} comprising an NCL machine $\Gamma$ and $C_s, C_t \in \cC(\Gamma)$, we associate with it the instance of the problem \dEConjS[PB]{\vBM} comprising $\Sigma \coloneqq \Sigma_\Gamma \sse \ISym(\Omega_\Gamma)$ and idempotents $s \coloneqq e(C_s), t \coloneqq e(C_t) \in E(S_\Gamma) \sse E(\ISym(\Omega_\Gamma))$ as above.
  Clearly, the latter can be computed in polynomial time from the former.
  As the problem \dNCL{} is hard for \PSPACE{} under polynomial-time many-one reductions, so is the problem \dEConjS[PB]{\vBM}.

  \medskip

 Finally, to see that the problem \dEMembS[PB]{\vBM} is also hard for \PSPACE, let $S'_\Gamma = S_\Gamma \times Y_2$ and $U'_\Gamma$ be its inverse subsemigroup generated by $\Sigma_\Gamma \times \{1\}$ and the elements $(e(C_s), 0), (e(C_t), 1) \in S'_\Gamma$.
  Then $(e(C_t), 0) \in U'_\Gamma$ if and only if $e(C_s) \gJge[U_\Gamma] e(C_t)$ by \cref{lem:idem_memb_from_idem_conj}.
  Since $e(C_s) \gJ e(C_t)$ holds in $S_\Gamma$, we then have $(e(C_t), 0) \in U'_\Gamma$ if and only if $e(C_s) \gJ[U_\Gamma] e(C_t)$ if and only if $e(C_s) \sim_{U_\Gamma} e(C_t)$ by \cref{lem:green-semi_rel} and by \cref{lem:idem_conj_is_green_j}, respectively.

  We finally note that $U'_\Gamma \leq S'_\Gamma$ can be realized as an inverse subsemigroup of $\ISym(\Omega_\Gamma \sqcup \{\ast\})$ and that $U'_\Gamma \leq S'_\Gamma \in \vBM$.
  Hence, the problem \dNCL{} also admits a polynomial-time many-one reduction to the problem \dEMembS[PB]{\vBM}.
  As such, the latter is hard for \PSPACE.
\end{proof}

\subsection{The Intersection Non-Emptiness Problem}\label{sub:iai-hardness}

Next, let us explore a variant of the reduction from the proof of \cref{thm:pbm-hardness} to show further hardness results for the intersection non-emptiness problem for inverse automata.

For a definition of deterministic finite automata (DFA) we refer the reader to  standard textbooks, \eg \cite{HU}.
We denote the accepted language of a DFA $\cA$ by $\cL(\cA)$.

An \emph{inverse} automaton is a DFA $\cA = (Q,\Sigma, \delta, q_0, F)$ with a \emph{partially} defined transition function $\delta\colon Q  \times \Sigma \to Q$ such that the following conditions are satisfied.
\begin{itemize}
  \item For each $a \in \Sigma$, the partial map $\delta_a\colon Q \to Q$ with $q \mapsto \delta(q,a)$ is injective on its domain. 
  \item For each $a \in \Sigma$, there is a word $w_a \in \Sigma^\ast$ such that $\delta_a \delta_{w_a}$ is the identity on $\dom(\delta_a)$, where $\delta_{w_a} = \delta_{a_1}\delta_{a_2} \dotsc \delta_{a_n}$ is the partial map induced by $w_a = a_1 a_2 \dotsc a_n$.\footnote{We view the maps $\delta_a$ as elements of $\ISym(Q)$ and, as such, compose them as operators acting on the right.}
\end{itemize}
\begin{remark}
  We will only encounter inverse automata over some alphabet endowed with an involution $a \mapsto \ov a$ where one can take $w_a = \ov a$ in the second condition above.
\end{remark}

The \emph{intersection non-emptiness problem} for a class $\cX$ of automata (e.g., $\cX = \textsc{dfa}$ or $\cX = \textsc{ia}$), which we denote by \dIEmpty{$\cX$}, is the decision problem defined as follows.

\begin{decproblem}
  \iitem Automata $\cA_1, \dotsc, \cA_k \in \cX$.
  \qitem Is $L = \bigcap_{i=1}^k\cL(\cA_i)$ non-empty?
\end{decproblem}

The problems \dIEmpty{dfa} and \dIEmpty{ia} are \PSPACE-complete; see \cite{koz77} and \cite{BirgetMMW94}, respectively.
We will give a short proof of these results based on the reduction from \cref{sub:pbm-hardness}.

\begin{theorem}[\cref{cor:main-intersection}]\label{thm:ia-hardness}
  The problem \dIEmpty{ia} is complete for \PSPACE{}.
  Moreover, this holds under the restriction that each automaton provided as part of the input has only two states, one of which is accepting.
\end{theorem}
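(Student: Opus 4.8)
The plan is to reduce from the \PSPACE-complete problem \dNCL, reusing the machinery from the proof of \cref{thm:pbm-hardness} but now realizing the generators directly as transitions of inverse automata on a two-state alphabet. Given an NCL machine $\Gamma$ with configurations $C_s, C_t$, I would first build the inverse semigroup $U_\Gamma \leq S_\Gamma = \prod_v B^1(\cC(\Gamma, v)) \leq \ISym(\Omega_\Gamma)$ exactly as before, with generating set $\Sigma_\Gamma$ consisting of the transition elements $u(c_1, c_2')$ and idempotents $e(C_s), e(C_t)$. The key fact established there is that $e(C_s) \sim_{U_\Gamma} e(C_t)$ iff $C_s$ and $C_t$ are connected by a sequence of transitions. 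To turn this into an intersection non-emptiness instance, recall that $B_2^1$ divides $(B_2^1)^n$ (\cref{lem:b2_implies_bn}), and that each factor $B^1(\cC(\Gamma,v))$ with $\lvert\cC(\Gamma,v)\rvert \le 7$ therefore divides some fixed power of $B_2^1$; so one may assume $\Omega_\Gamma$ is a disjoint union of \emph{two-element} sets, and each generator acts on each such pair either as the identity, as a partial bijection of rank one (an ``edge'' of $B_2$), or as $0$.

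Next I would construct, for each two-element block $\{p,q\}$ of $\Omega_\Gamma$ and each ``relevant'' state of that block, an inverse automaton $\cA_i$ with state set exactly $\{p,q\}$, input alphabet $\Sigma_\Gamma \cup \ov{\Sigma_\Gamma}$ (an alphabet endowed with the involution $a \mapsto \ov a$, so the second inverse-automaton axiom holds with $w_a = \ov a$), transition function given by the restriction of the generators to that block, initial state chosen according to where $e(C_s)$ is supported on that block, and the accepting state chosen according to where $e(C_t)$ is supported. A word $w$ over $\Sigma_\Gamma \cup \ov{\Sigma_\Gamma}$ is then accepted by \emph{all} the $\cA_i$ simultaneously iff the product of the corresponding generators in $U_\Gamma$ conjugates $e(C_s)$ to $e(C_t)$ (on each block the word must carry the start point to the accept point, which is precisely the block-wise statement of $\bar w\, e(C_s)\, w = e(C_t)$, together with the symmetric condition coming from also reading $\ov w$), hence iff $e(C_s) \sim_{U_\Gamma} e(C_t)$, hence iff the \dNCL instance is positive. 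Finally, to get down to \emph{two states, one accepting}, I would merge the handful of automata sharing the same two-element block into a single automaton by a product-of-components argument, or simply observe that each $\cA_i$ already has only two states; choosing the accepting state to be the one supporting $e(C_t)$ (and discarding blocks where source and target supports coincide, which impose no constraint) makes exactly one of the two states accepting. The whole construction is clearly polynomial-time computable from $\Gamma$.

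The main obstacle I anticipate is bookkeeping: making precise how an accepted word of the \emph{intersection} corresponds to a conjugating element of $U_\Gamma$ rather than merely an element with $\bar w\,e(C_s)\,w \ge e(C_t)$ — i.e.\ ensuring both halves of the relative-conjugacy condition $\bar u\, s\, u = t$ and $u\, t\, \bar u = s$ are captured. This is handled exactly as in \cref{thm:pbm-hardness}: because $e(C_s)$ and $e(C_t)$ lie outside the ideal $I_\Gamma$ of elements with a zero projection, any word whose product pushes $e(C_s)$ into $I_\Gamma$ fails to be accepted somewhere, so accepted words correspond to genuine transition sequences, and \cref{lem:green-semi_rel} upgrades $\gJ$-above to honest conjugacy. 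The remaining routine point is to verify the two inverse-automaton axioms for each $\cA_i$, which is immediate since the alphabet carries the involution $a \mapsto \ov a$ inherited from $\Sigma_\Gamma = \ov{\Sigma_\Gamma}$ and the generators are partial bijections. Membership of \dIEmpty{ia} in \PSPACE is standard (guess a word symbol-by-symbol, tracking one state per automaton), so completeness follows.
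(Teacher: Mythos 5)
Your core construction is the paper's: once your ``disjoint union of two-element blocks'' is made explicit via the obvious one-hot encoding of each factor $B^1(\cC(\Gamma,v))$, what you obtain is exactly one two-state inverse automaton per local configuration $c \in \cC(\Gamma,v)$, with alphabet the elements $u(c_1,c_2')$, states recording whether the current configuration restricted to $v$ equals $c$, and initial/accepting states determined by $C_s$ and $C_t$ -- which is precisely the family $\cA(c)$ the paper defines directly. However, two points in your write-up need attention. First, ``one may assume $\Omega_\Gamma$ is a disjoint union of two-element sets'' does not follow from \cref{lem:b2_implies_bn}: division is an algebraic relation (a quotient of a subsemigroup) and does not transport the partial-bijection action of $U_\Gamma$ on $\Omega_\Gamma$. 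To make the step honest you must exhibit the lifted generators and lifted idempotents on the blocks and re-run the soundness argument in that representation (conjugation can only shrink supports, so once a block is missed the target idempotent, which has one support point per block, is unreachable) -- this replaces the ideal $I_\Gamma$ argument you cite and is exactly the content of the paper's explicit definition of the transitions of $\cA(c)$. This is a fixable hand-wave, but it is where the actual work sits.

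Second, and this is a genuine error: discarding the automata whose initial and accepting states coincide breaks the reduction. Such automata do impose constraints -- they enforce the preconditions of each move (on input $u(c_1,c_2')$ the automaton for $c_1$ has no transition out of $q_\bot$, and the automata for the other local configurations at $v_1,v_2$ have no transition out of $q_\top$), and collectively they maintain the invariant that at every vertex exactly one automaton is in state $q_\top$. In particular, for a vertex $v$ with $C_s\vert_v = C_t\vert_v$ every automaton at $v$ has coinciding initial and accepting state, so your rule discards all of them and the in-flow constraint at $v$ is no longer enforced; a negative \dNCL instance can then map to a positive \dIEmpty{ia} instance. The discarding is also unnecessary: each automaton has exactly one accepting state whether or not it equals the initial state, which is all the statement requires (and likewise the ``merge automata sharing a block'' remark is moot -- each block carries exactly one automaton, and taking products would violate the two-state bound). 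With the discarding removed and the block encoding spelled out, your proof coincides with the paper's; the \PSPACE{} upper bound you give is the standard one.
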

\begin{proof}
  As in the proof of \cref{thm:pbm-hardness}, we show \PSPACE{}-hardness by polynomial-time many-one reduction from \dNCL{} to \dIEmpty{ia}.
  Let $(\Gamma, C_s, C_t)$ be an instance of the former.

  The idea is to encode a configuration $C \in \cC(\Gamma)$ into the states of a collection of two-state inverse automata.
  To this end, we introduce one such automaton $\mathcal{A}(c)$ for each local configuration $c \in \cC(\Gamma, v)$ at each vertex $v \in V(\Gamma)$.
  Its states $q_{\top}$ and $q_{\bot}$ indicate whether or not $C \vert_v = c$ holds.
  In particular, the starting state of $\cA(c)$ is chosen to be $q_{\top}$ if $C_s \vert_v = c$ and $q_{\bot}$ if $C_s \vert_v \neq c$.
  Similarly, its accepting state is $q_\top$ if $C_t \vert_v = c$ and $q_\bot$ if $C_t \vert_v \neq c$.

  As an alphabet $\Sigma$ for our automata, we use the collection of all $u(c_1, c'_2)$ as in the proof of \cref{thm:pbm-hardness} (with formation of inverses acting as involution).
  On input $u = u(c_1, c'_2)$, with $c_i, c'_i \in \cC(\Gamma, v_i)$ as in the proof of \cref{thm:pbm-hardness}, the transitions are defined as follows.
  The automata $\cA(c_i)$ with $i = 1,2$ transitions from $q_\top$ to $q_\bot$ via $u$, and the automata $\cA(c'_i)$ with $i = 1,2$ transitions from $q_\bot$ to $q_\top$ via $u$.
  For the automata $\cA(c)$ with $c \in \{ c_1,c_2, c'_1,c'_2\}$ these are the only defined transitions via $u$.
  If $c \in \cC(\Gamma, v_i)$ for some $i \in \{1,2\}$ and $c \not\in \{c_i, c'_i\}$, then $\cA(c)$ has a transition from $q_\bot$ to $q_\bot$ via $u$ and no other defined transitions via $u$.
  Finally, if $c \in \cC(\Gamma, v)$ with $v \not \in \{ v_1, v_2 \}$, then $\cA(c)$ has a transition from $q_\top$ to $q_\top$ and from $q_\bot$ to $q_\bot$ via $u$, i.e., $\cA(c)$ will remain in its current state upon reading $u$.

  It is now easy to see that a word $u_1u_2 \dotsc u_n \in \Sigma^\ast$ is contained in $\bigcap_c \cL(\cA(c))$, where the intersection extends over all local configurations $c \in \cC(\Gamma, v)$ at all vertices $v \in V(\Gamma)$, if and only if $u_1, u_2, \dotsc, u_n$ describes a sequence of transitions of $\Gamma$ from $C_s$ to $C_t$.
  As, moreover, the collection of automata $\cA(c)$ can be computed in polynomial time from $\Gamma$ and $C_s, C_t$, we conclude that \dIEmpty{ia} is indeed hard for \PSPACE.
  Since \dIEmpty{ia} is clearly contained in \PSPACE, the problem is \PSPACE-complete.
\end{proof}

\begin{corollary}[Bulatov, Kozik, Mayr, Steindl \cite{BulatovKMS16}]
  The problem \dIEmpty{dfa} is complete for \PSPACE{}.
  Moreover, this holds under the restriction that each automaton provided as part of the input has only three states, one of which is accepting.
\end{corollary}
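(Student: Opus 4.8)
The plan is to mimic the reduction just given for inverse automata in \cref{thm:ia-hardness}, but replacing each two-state inverse automaton $\cA(c)$ with a \emph{total}-transition DFA on three states. The extra state is a ``dead'' sink needed because ordinary DFAs cannot have partial transition functions. Concretely, for each local configuration $c \in \cC(\Gamma,v)$ we build a DFA $\cA(c)$ with state set $\{q_\top, q_\bot, q_{\mathrm{dead}}\}$, where $q_{\mathrm{dead}}$ is an absorbing, non-accepting sink. The start state is $q_\top$ if $C_s\vert_v = c$ and $q_\bot$ otherwise; the unique accepting state is $q_\top$ if $C_t\vert_v = c$ and $q_\bot$ otherwise. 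On a letter $u = u(c_1,c_2')$, wherever the inverse automaton had a defined transition we keep exactly that transition, and wherever it was undefined we redirect into $q_{\mathrm{dead}}$. Thus $\cA(c_i)$ sends $q_\top \mapsto q_\bot$ and $q_\bot \mapsto q_{\mathrm{dead}}$ (and $q_{\mathrm{dead}} \mapsto q_{\mathrm{dead}}$), $\cA(c_i')$ sends $q_\bot \mapsto q_\top$ and $q_\top \mapsto q_{\mathrm{dead}}$, a $\cC(\Gamma,v_i)$-automaton for a configuration $c \notin \{c_i, c_i'\}$ sends $q_\bot \mapsto q_\bot$ and $q_\top \mapsto q_{\mathrm{dead}}$, and an automaton at an untouched vertex $v \notin \{v_1,v_2\}$ fixes both $q_\top$ and $q_\bot$.

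**Correctness** then proceeds exactly as before: a word $u_1 u_2 \cdots u_n$ over the alphabet $\Sigma = \{\, u(c_1,c_2') \,\}$ lies in $\bigcap_c \cL(\cA(c))$ iff, reading the word, no automaton ever falls into $q_{\mathrm{dead}}$ and every automaton ends in its accepting state. The first condition says precisely that each $u_j$ is a \emph{legal} transition applicable to the current global configuration (the ``$q_\bot \mapsto q_{\mathrm{dead}}$ from $\cA(c_i)$'' forbids flipping an edge out of a state that is not currently $c_i$, and the ``$q_\top \mapsto q_{\mathrm{dead}}$'' cases forbid two configurations at the same vertex being simultaneously active); the state of each $\cA(c)$ faithfully tracks whether $C\vert_v = c$ at the corresponding vertex. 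The terminal condition says the resulting configuration is $C_t$. Hence the intersection is non-empty iff $C_s$ and $C_t$ are related by a sequence of NCL transitions, i.e.\ iff the \dNCL{} instance is positive. Since the collection of DFAs $\cA(c)$ is clearly computable in polynomial time from $(\Gamma, C_s, C_t)$, and \dNCL{} is \PSPACE-hard \cite[Theorem~5]{HearnD05}, while \dIEmpty{dfa} is clearly in \PSPACE, this shows \dIEmpty{dfa} is \PSPACE-complete with all input automata having three states and one accepting state.

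**The main subtlety** to get right is bookkeeping on the ``third'' (dead) state: one must check that every potential way in which a spuriously-applied transition could still be consistent with all automata ending in accepting states is in fact blocked by \emph{some} automaton dumping into $q_{\mathrm{dead}}$ and staying there (absorption is what makes a single illegal move irrecoverable). In particular one should verify: if $u_j = u(c_1,c_2')$ is applied when the current configuration has $C\vert_{v_1} \neq c_1$, then among $\cA(c_1)$ (if $C\vert_{v_1} = c_1'$, it is in $q_\bot$, forced to $q_{\mathrm{dead}}$) and the automaton $\cA(C\vert_{v_1})$ (in $q_\top$, forced to $q_{\mathrm{dead}}$ since $C\vert_{v_1}\notin\{c_1,c_1'\}$) at least one dies — and symmetrically at $v_2$. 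Everything else — the fact that three states suffice, that the transition functions are genuinely total, the polynomial size bound $|\Sigma| \in \bigO(|V(\Gamma)|)$ and one DFA per local configuration with $\sum_v |\cC(\Gamma,v)| \in \bigO(|V(\Gamma)|)$ many of them — is immediate from the construction and requires no further argument. No claim of optimality of the number ``three'' is needed, though one can note that two states would not suffice for total DFAs by a standard counting/communication argument, which is why the inverse-automaton result of \cref{thm:ia-hardness} is genuinely stronger here.
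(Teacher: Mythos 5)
Your proposal is correct and matches the paper's argument: the paper proves this corollary in one line by converting each two-state inverse automaton from \cref{thm:ia-hardness} into a total DFA via an added failure (dead) state, which is exactly your construction, just spelled out in full. The detailed correctness bookkeeping you supply is already guaranteed by the correctness of the inverse-automaton reduction, so no new argument is needed.
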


Note that herein the transition functions of the automata are \emph{total} functions.
\begin{proof}
  We can convert each inverse automaton into a deterministic finite automaton with a total transition function by introducing a failure state and appropriate transitions to it.
\end{proof}

\subsection{The Subpower Membership Problem}\label{sub:subpower}

From \cref{thm:ia-hardness} we also derive a corresponding hardness result for the \emph{subpower membership problem} of an inverse semigroup $S$.
This problem is defined as follows.

\begin{decproblem}
	\iitem An integer $k$, a subset $\Sigma \sse S^k$, and an element $t \in S^k$.
	\qitem Is $t \in U$ where $U = \gen{\Sigma}$?
\end{decproblem}

Be aware that here $S$ is treated as a constant and \emph{not} part of the input.

\begin{corollary}[\cref{cor:main-subpower}]
  The subpower membership problem of an inverse semigroup~$S$ is complete for \PSPACE if and only if the Brandt monoid $B_2^1$ divides $S$; otherwise, it~is~in~\NC.
\end{corollary}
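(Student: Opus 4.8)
The plan is to treat the two directions separately, using the dichotomy from \cref{thm:pbm-dichotomy} and \cref{thm:ia-hardness}. First suppose that $B_2^1$ does not divide $S$. Then $S$ generates a variety $\vV = \gen{S}$ with $\vV \sse \vSI$ by the characterization in \cref{lem:gb-characterization}. Every subpower instance $(k, \Sigma, t)$ with $\Sigma \sse S^k$ and $t \in S^k$ can be converted in $\ACz$ (in fact in logspace) into an instance of $\dMemb[PB]{\vV}$: identify $S$ with an inverse subsemigroup of $\ISym_m$ for some fixed $m$ (via the Preston--Wagner representation, which is legitimate since $S$ is a constant), so that $S^k$ embeds into $\ISym_{mk}$ acting on $k$ disjoint copies of the point set. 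Under this embedding $\gen{\Sigma}$ as a subpower corresponds exactly to the inverse subsemigroup generated by the images of $\Sigma$, and $\gen{\Sigma} \in \vV \sse \vSI$. Hence the problem reduces to $\dMemb[PB]{\vSI}$, which is in $\NC$ by \cref{cor:gb-complexity}. (Concretely the reduction is just a relabeling of points, computable in $\ACz$.)

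Conversely, suppose $B_2^1 \preccurlyeq S$, say via a surjective homomorphism $\varphi \colon S' \to B_2^1$ from an inverse subsemigroup $S' \le S$. We reduce the $\PSPACE$-complete problem $\dIEmpty{ia}$ restricted to two-state inverse automata (\cref{thm:ia-hardness}) to the subpower membership problem for $S$. Given two-state inverse automata $\cA_1, \dots, \cA_k$ over a common alphabet, recall that each letter $a$ induces a partial bijection $\delta^{(i)}_a$ on the two-element state set of $\cA_i$; such a partial bijection is an element of $B_2^1 \cong \ISym_2$. Each automaton $\cA_i$ has a start state and an accepting state, which together with the identity idempotents give us idempotents $s_i, t_i \in \ISym_2 \cong B_2^1$ (the partial identities on $\{\text{start}_i\}$ and $\{\text{accept}_i\}$ respectively, say — or one may instead encode start/accept as the appropriate $\gR$- and $\gL$-related idempotents as in the proof of \cref{thm:pbm-hardness}). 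Choosing preimages under $\varphi$, we obtain for each letter $a$ an element $u_a \in S^k$ whose $i$-th coordinate lies in $S'$ and maps onto $\delta^{(i)}_a$, and source/target elements $\mathbf{s}, \mathbf{t} \in S^k$. Then a nonempty word exists in $\bigcap_i \cL(\cA_i)$ if and only if the corresponding product of the $u_a$ transports $\mathbf{s}$ to $\mathbf{t}$ in the appropriate sense; as in \cref{thm:pbm-hardness} this translates into membership $\mathbf{t} \in \gen{\Sigma}$ for a suitable generating set $\Sigma \sse S^k$ built from the $u_a$ together with $\mathbf{s}$ (using the trick of \cref{lem:idem_memb_from_idem_conj} to pass from conjugacy/transport to plain membership if needed). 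Since $\dIEmpty{ia}$ is $\PSPACE$-hard even for two-state automata and the reduction is polynomial-time, the subpower membership problem for $S$ is $\PSPACE$-hard; it is in $\PSPACE$ in general, so it is $\PSPACE$-complete.

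The main obstacle is bookkeeping in the hard direction: one must verify that lifting the two-state inverse-automaton dynamics through an arbitrary (not necessarily injective) divisor map $\varphi \colon S' \to B_2^1$ still faithfully records which words are accepted by all $\cA_i$ simultaneously. The point is that the relevant events — "the partial bijection is defined on the current state and moves it correctly" versus "it collapses to $0$" — are detected by idempotents and by the natural order, both of which are preserved by homomorphisms and reflected well enough through $\varphi$: if $\varphi(x)$ is a nonzero idempotent of $B_2^1$ then, after multiplying by a suitable idempotent from $S'$, $x$ itself can be taken idempotent, and the ideal $\varphi^{-1}(0) \cap S'$ plays the role of $I_\Gamma$ in \cref{thm:pbm-hardness}. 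Once this is checked coordinatewise, the equivalence "nonempty intersection $\iff$ $\mathbf{t} \in \gen{\Sigma}$" follows exactly as in \cref{thm:pbm-hardness} and \cref{thm:ia-hardness}, and the dichotomy is complete.
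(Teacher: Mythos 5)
Your upper-bound direction is fine and matches the paper: reduce the subpower problem to \dMemb[PB]{\vV} for the variety $\vV$ generated by $S$, note $\vV \sse \vSI$ when $B_2^1 \not\preccurlyeq S$, and invoke \cref{cor:gb-complexity}. The problem is in your hardness direction, and it is not mere bookkeeping. You lift the two-state automaton dynamics through the divisor map $\varphi \colon S' \to B_2^1$ by ``choosing preimages'' of the letter actions and of the source/target, and then claim the equivalence ``accepted word $\iff \mathbf{t} \in \gen{\Sigma}$'' follows as in \cref{thm:pbm-hardness}. It does not: in \cref{thm:pbm-hardness} the ambient semigroup is literally a direct product of Brandt monoids, so products are exact, whereas here $\varphi$ is not injective and arbitrary lifts do not multiply like $B_2^1$. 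Concretely, the forward direction fails: if a word $w$ is accepted by all automata, the product of your lifted generators $u_{a_1}\cdots u_{a_n}$ (together with the lifted source) is only guaranteed to lie somewhere in $\varphi^{-1}$ of the correct element of $(B_2^1)^k$ coordinatewise; it need not equal the particular element $\mathbf{t} \in S^k$ you fixed in advance, since a lift of $1$ need not act as an identity on the lift of $u$, a product of lifts of nonzero elements need not be the chosen lift of their product, and $\varphi^{-1}$ of a nonzero idempotent may contain many idempotents. Your proposed repair (replace a preimage of an idempotent by an idempotent after multiplying with a suitable idempotent of $S'$, and let $\varphi^{-1}(0)\cap S'$ play the role of $I_\Gamma$) does not address this: the issue is not detecting when the product has ``died'', but forcing the product to hit one prescribed element of a non-singleton fibre. (Minor slip in the same part: $B_2^1 \not\cong \ISym_2$; the latter also contains the transposition. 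The letter actions in \cref{thm:ia-hardness} do land in $B_2^1$, but not because the two semigroups coincide.)

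The paper avoids this trap by splitting the hardness into two steps. First, the subpower membership problem for $B_2^1$ itself is \PSPACE-hard (from \cref{thm:ia-hardness}, via the standard translation of intersection non-emptiness into membership), and one records that the resulting target never projects to $0 \in B_2^1$ in any coordinate. Second, one transfers hardness from $B_2^1$ to $S$ not by arbitrary lifting but by exhibiting a concrete partial copy of $B_2^1$ inside $S$: pick an inverse subsemigroup mapping onto $B_2^1 = \gen{u,1}$, take an idempotent $e$ in the preimage of $1$ and $s = s'e$ for a preimage $s'$ of $u$, so that $e \geq s\bar s \vee \bar s s$ and $s \bar s \neq \bar s s$; the five elements $e, s, \bar s, s\bar s, \bar s s$ then multiply \emph{exactly} as the corresponding elements of $B_2^1$ as long as the product does not project to $0$. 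Substituting these elements coordinatewise into the generators and the (zero-free) target gives an exact reduction, with the reverse direction obtained by applying $\varphi$ coordinatewise. This exact ``multiplication above zero'' of canonically chosen lifts is precisely the ingredient your argument is missing; without it the reduction you describe is not correct as stated.
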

\begin{proof}
	The subpower membership problem clearly reduces to the problem $\dMemb[PB]{\vV}$ where $\vV$ is the variety generated by $S$. 
  Hence, the subpower membership problem is in \PSPACE and, if $S \in \vSI$ (\ie $B^1_2 \not\preccurlyeq S$ by \cref{lem:characterization-clifford}), then it is in \NC by \cref{cor:gb-complexity}.
	
	From \cref{thm:ia-hardness}, it follows easily that the subpower membership for $B_2^1$ is \PSPACE-complete (\eg using the same argument as \cite[Theorem 3.2]{BirgetM08} or \cite[Theorem 4.10]{Jack23}).
	Moreover, note that the target element $t$ obtained in the reduction does not project to the zero element of $B_2^1$ in any component of the direct product.
	Now, if $B_2^1$ divides $S$, then we can reduce the subpower membership problem of $B_2^1$ to the one of $S$ in the following way.
	
	As $B_2^1$ divides $S$, we can find some element $s \in S$ with $s\ov s \neq \ov s s$ and an idempotent $e \in E(S)$ with $e \geq s\ov s \vee \ov s s$.
	To see this, consider an inverse subsemigroup of $S$ projecting onto $B_2^1 = \gen{u, 1}$ and let $e$ be an arbitrary idempotent in the preimage $1$ and let $s = s' e$ where $s'$ is any preimage of $u$.
	Note that we can multiply the elements $\{e,s,\ov s, s\ov s, \ov s s\}$ as in $B_2^1$ as long as their product does not project to $0$ in $B_2^1$.
	Now, as the target element $t$ does not project onto the zero element of $B_2^1$ in any component, we can safely replace each occurrence of $1$ by $e$, of $u$ by $s$, and so on in every component of $t$.
  Performing the same substitution on all elements of $\Sigma$ 
  completes our reduction.
\end{proof}

\section{Further Related Problems}\label{sec:applications}

Finally, let us explore the consequences of our results to two further problems, namely the minimum generating set problem and the problem of solving equations.

\subsection{The Minimum Generating Set Problem}

As outlined in \cref{sec:related-work}, the minimum generating set has been first considered by Papadimitriou and Yannahakis \cite{PapadimitriouY96} and, in the Cayley table model, recently shown to be in \Ptime by Lucchini and Thakkar \cite{LucchiniT24} and even \NC \cite{CollinsGLW24}.
Yet, the complexity for arbitrary semigroups and also for permutation groups remains wide open.
In this section, we consider the minimum generating set problem for inverse semigroups in the partial bijection model. More formally, the minimum generating set problem (\dMGS{}) is defined as follows.

In this section, we consider the minimum generating set problem for inverse semigroups in the partial bijection model. More formally, the minimum generating set problem (\dMGS{}) is defined as follows.

\begin{decproblem}
	\iitem An inverse semigroup $U$ and an integer $k$.
	\qitem Is there some $\Xi \sse U$ with $\abs{\Xi} \leq k$ and $\gen{\Xi} = U$?
\end{decproblem}

Be aware that, while in the above we assumed without loss of generality that generating sets are closed under formation of inverses, for \dMGS{} we drop this assumption.
This is because adding inverses to $\Xi$, of course, changes $\abs{\Xi}$.
The hardness result below still applies to the variant of \dMGS{} where $\Xi$ is required to be inverse-closed; however, our reduction then no longer is in \ACz because we would need to count the number of self-inverse generators.

As for the membership problem, we write $\dMGS[PB]{}$ if $U$ is given by generators of an inverse subsemigroup of $\ISym(\Omega)$ for some $\Omega$ and denote the restriction to a variety $\vV$ by $\dMGS[PB]{\vV}$.

\begin{lemma}\label{lem:mgs-reduction}
	Let $\vV$ be a variety of finite inverse semigroups.
  Then the restricted membership problem $\dMembS[PB]{\vV}$ is \ACz-many-one-reducible to the problem $\dMGS[PB]{\vV \vee \vSl}$.
\end{lemma}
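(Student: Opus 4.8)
The plan is to turn a membership instance into a minimum‑generating‑set instance by enriching the underlying action so that \emph{every} generator becomes indispensable, while $t$ being redundant in $\gen{\Sigma}$ is converted into the target element $\hat t$ being redundant \emph{as a generator}. Fix an instance of $\dMembS[PB]{\vV}$: an inverse‑closed set $\Sigma=\{\sigma_1,\dots,\sigma_m\}\sse\ISym(\Omega)$ and $t\in\ISym(\Omega)$, all contained in some $S\le\ISym(\Omega)$ with $S\in\vV$. I would introduce, for each $i$, two fresh points $p_i,p_i'$, set $P=\{p_1,\dots,p_m,p_1',\dots,p_m'\}$ and $\Omega'=\Omega\sqcup P$, and identify $\ISym(\Omega')=\ISym(\Omega)\times\ISym(P)$ via the (generator‑invariant) partition, writing $u=(u^\Omega,u^P)$. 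Define $\hat\sigma_i=(\sigma_i,e_{\{p_i\}})$, $\hat\sigma_i'=(\sigma_i,e_{\{p_i'\}})$ and $\hat t=(t,e_\emptyset)$, put $\hat\Sigma'=\{\hat\sigma_i,\hat\sigma_i':1\le i\le m\}$ (so $\lvert\hat\Sigma'\rvert=2m$), and output the $\dMGS[PB]{\vV\vee\vSl}$ instance consisting of $U=\gen{\hat\Sigma'\cup\{\hat t\}}$ together with $k=2m$. All generators lie in $S\times(Y_2)^P$, hence $U\le S\times(Y_2)^P\in\vV\vee\vSl$ and thus $U\in\vV\vee\vSl$; moreover each new partial bijection is an old one with a fixed pattern appended, so the whole map is $\ACz$‑computable, giving $\dMembS[PB]{\vV}\leq^{\ACz}_m\dMGS[PB]{\vV\vee\vSl}$.

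The key tool is the projection $\pi\colon U\to\ISym(P)$, $u\mapsto u^P$, which is a homomorphism (because $P$ is $U$‑invariant) onto the semilattice $E$ generated by the $2m$ pairwise incomparable atoms $\pi(\hat\sigma_i)=e_{\{p_i\}}$, $\pi(\hat\sigma_i')=e_{\{p_i'\}}$ together with $\pi(\hat t)=e_\emptyset$. Since the product (intersection) of two distinct atoms is $e_\emptyset$, the semilattice $E$ consists exactly of these $2m$ atoms and the bottom $e_\emptyset$; in particular every generating set of $E$ must contain all $2m$ atoms, so $E$ has a unique minimal generating set, of size $2m$. Because $\pi$ is a homomorphism, it follows that $U$ can never be generated by fewer than $2m$ elements.

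It remains to check $t\in\gen{\Sigma}\iff U$ is generated by $\le 2m$ elements. If $t\in\gen{\Sigma}$, write $t=\sigma_{b_1}\cdots\sigma_{b_\ell}$ with $\ell\ge 1$, set $b=b_\ell$ and note $\ran(t)\sse\ran(\sigma_b)$; then $w:=\hat\sigma_{b_1}\cdots\hat\sigma_{b_\ell}\cdot\ov{\hat\sigma_b'}\,\hat\sigma_b'$ satisfies $w^\Omega=t\cdot\ov{\sigma_b}\sigma_b=t\cdot e_{\ran(\sigma_b)}=t$ and $w^P\le e_{\{p_b\}}\cap e_{\{p_b'\}}=e_\emptyset$, so $w=\hat t$, whence $\hat t\in\gen{\hat\Sigma'}$ and $U=\gen{\hat\Sigma'}$ is generated by $2m$ elements. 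Conversely, if $U=\gen{\Xi}$ with $\lvert\Xi\rvert=2m$, then $\pi(\Xi)$ generates $E$, which forces $\pi$ to restrict to a bijection of $\Xi$ onto the $2m$ atoms; write $\Xi=\{\xi_i,\xi_i'\}$ with $\pi(\xi_i)=e_{\{p_i\}}$, $\pi(\xi_i')=e_{\{p_i'\}}$. The only elements of $\hat\Sigma'\cup\{\hat t\}$ and their inverses whose $P$‑component is defined at $p_i$ are $\hat\sigma_i$ and $\ov{\hat\sigma_i}$, so any word for $\xi_i$ uses only these two, giving $\xi_i^\Omega\in\gen{\sigma_i,\ov{\sigma_i}}\sse\gen{\Sigma}$, and likewise $\xi_i'^{\,\Omega}\in\gen{\Sigma}$. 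Since $\hat t\in U=\gen{\Xi}$, we may write $\hat t=h_1\cdots h_s$ with each $h_j\in\Xi\cup\ov{\Xi}$, and restricting to $\Omega$ yields $t=\hat t^{\,\Omega}=h_1^{\,\Omega}\cdots h_s^{\,\Omega}\in\gen{\Sigma}$.

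The crux is precisely the tension behind the duplicated generators: with a single private point per generator, each $\hat\sigma_i$ is indispensable but $\hat t$ is forced to vanish on every private point, so it could be assembled only from expressions of $t$ using \emph{all} of the generators — which is impossible whenever $t$ sits inside a single monogenic subsemigroup $\gen{\sigma_i,\ov{\sigma_i}}$. The duplicate $\hat\sigma_i'$ supplies a second slot carrying the same $\Omega$‑action, so the harmless right factor $\ov{\sigma_i}\sigma_i=e_{\ran(\sigma_i)}$ that fixes $t$ can be produced from a different slot, letting the $P$‑component collapse to $e_\emptyset$ without changing the $\Omega$‑component; verifying this identity and the analogous ``only $\hat\sigma_i^{\pm1}$ survives at $p_i$'' statement is the main computational step. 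The remaining care is to keep $U$ inside $\vV\vee\vSl$, which is why the enrichment adjoins only a semilattice of partial identities and never perturbs the action on $\Omega$.
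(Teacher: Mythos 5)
Your construction is exactly the paper's reduction: you adjoin two private fixed points per generator (your $\hat\sigma_i,\hat\sigma_i'$ are the paper's $(u,1),(u,2)$), add $t$ undefined on the new points, set the threshold $2\abs{\Sigma}$, bound the number of generators from below via the semilattice projection, use the duplicate point to kill the $P$-component via $\ov{\hat\sigma_b'}\hat\sigma_b'$ when $t\in\gen{\Sigma}$, and project back to $\Omega$ for the converse. The argument is correct, and your converse even spells out explicitly (via the ``only $\hat\sigma_i^{\pm1}$ is defined at $p_i$'' observation) a step the paper leaves rather terse.
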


\begin{proof}
	Consider an instance $\Sigma \sse \ISym(\Omega)$ and $t \in \ISym(\Omega)$ for \dMembS[PB]{\vV}, \ie the question is whether $t \in U = \gen{\Sigma}$.
  We define a subsemigroup $U' \leq \ISym(\Omega')$ where $\Omega' = \Omega \cup (\Sigma \times \{1,2\})$. 
  To do so, we take $\Sigma' = \{t\} \cup (\Sigma \times \{1,2\})$ as a generating set, where $t$ is viewed as a partial bijection on $\Omega'$ by leaving it undefined outside of $\Omega$.
	For $(u,i) \in (\Sigma \times \{1,2\})$ and $x \in \Omega'$, we define $x^{(u,i)} = x^u$ if $x \in \Omega$ and $x^{(u,i)} = x$ if $x = (u,i)$ and otherwise $x^{(u,i)}$ is undefined.	
  Thus, in particular, $U' = \gen{\Sigma'}$ is isomorphic to an inverse subsemigroup of $\gen{ \{t\} \cup \Sigma} \times (Y_2 \times Y_2)^\Sigma$ and, as such, we have $U' \in \vV \vee \vSl$.
  Clearly, the set $\Sigma' \sse \ISym(\Omega')$ can be computed in $\ACz$.

	Let $k = \abs{\Sigma}$.
	We claim that $U' = \gen{\Sigma'}$ is generated by $2k$ elements if and only if $t \in U$. 
  To see this, first observe that the inverse subsemigroup $\gen{\Sigma \times \{1,2\}}$ of $U'$ projects onto the semilattice $E(\ISym(\Sigma \times \{1,2\}))$; hence, it cannot be generated by less than $2k$ elements. 
	Furthermore, we can find $U = \gen\Sigma$ as a subsemigroup of $U'$ as $u = (u,1) \ov{(u,2)}(u,2)$ for $u \in \Sigma$. 
	Thus, if $t \in U$, then $U'$ is generated by exactly $2k$ elements.
  On the other hand, observe that the canonical projection $\ISym(\Omega') \to \ISym(\Omega)$ maps $\gen{\Sigma \times \{1,2\}}$ onto $U$ and $t$ to $t$; hence, if $t \not \in U$, then $t \not \in \gen{\Sigma \times \{1,2\}}$.
  This completes the proof of the claim.
\end{proof}

\begin{remark}
	This reduction applies to arbitrary transformation semigroups.
	There is only one minor adjustment to be aware of: $U$ does not necessarily embed into $U'$; however, still all products of generators in $U$ of length at least two embed into $U'$.
  This is enough for the reduction to be correct if one checks upfront whether any of the generators coincide.
\end{remark}

\begin{corollary}[First Part of \cref{cor:main-mgs-equations}]
	Let $\vV$ be a variety of finite inverse semigroups. 
  Then the problem $\dMGS[PB]{\vV}$ is in \NP if $\vV\sse \vSI$ and \PSPACE-complete otherwise.
\end{corollary}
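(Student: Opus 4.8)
The plan is to deduce both the upper and the lower bound almost mechanically from \cref{lem:mgs-reduction}, \cref{thm:pbm-hardness}, and \cref{cor:gb-complexity}, treating the cases $\vV \sse \vSI$ and $\vV \not\sse \vSI$ separately.

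For the upper bound in the case $\vV \sse \vSI$, I would give a guess-and-check $\NP$-algorithm. On input an inverse semigroup $U \leq \ISym(\Omega)$ (given by a generating set $\Sigma_U$, with $U \in \vV$) and an integer $k$, the algorithm guesses a set $\Xi \sse \ISym(\Omega)$ with $\abs{\Xi} \leq k$ (we may assume $k$ is at most the size of the given generating set, so this is a polynomial-size guess) and then verifies $\gen{\Xi} = U$. Since $U$ is an inverse subsemigroup, $\gen{\Xi} = U$ is equivalent to the conjunction of (i) $\xi \in U$ for every $\xi \in \Xi$ and (ii) $\sigma \in \gen{\Xi}$ for every $\sigma \in \Sigma_U$. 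The queries in (i) are instances of $\dMemb[PB]{\vV}$; once (i) has succeeded we know $\gen{\Xi} \leq U \in \vSI$, so the queries in (ii) are instances of $\dMemb[PB]{\vSI}$. By \cref{cor:gb-complexity} all of these can be answered in $\NC \sse \Ptime$, so the whole procedure runs in $\NP$. For the (unconditional) $\PSPACE$ upper bound needed in the other case, the same guess-and-check works with the membership oracle replaced by the $\PSPACE$ algorithm for $\dMemb[PB]{}$; as $\mathsf{NPSPACE} = \PSPACE$, this places $\dMGS[PB]{\vV}$ in $\PSPACE$ for every $\vV$.

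For the lower bound, assume $\vV \not\sse \vSI$. By the third alternative in \cref{pro:varieties} this forces $\vBM \sse \vV$, and since $\vSl \sse \vBS \sse \vBM$ we get $\vSl \sse \vV$, hence $\vV \vee \vSl = \vV$. Instantiating \cref{lem:mgs-reduction} with the variety $\vV$ therefore yields $\dMembS[PB]{\vV} \leq^{\ACz}_m \dMGS[PB]{\vV \vee \vSl} = \dMGS[PB]{\vV}$, so it remains to show that $\dMembS[PB]{\vV}$ is $\PSPACE$-hard. For this I would invoke \cref{thm:pbm-hardness}, by which $\dEMembS[PB]{\vBM}$ is $\PSPACE$-complete, and observe that every instance of $\dEMembS[PB]{\vBM}$ is also an instance of $\dMembS[PB]{\vV}$ with the same answer: idempotent membership is a special case of membership, and $\vBM \sse \vV$ means every ambient semigroup allowed in $\dEMembS[PB]{\vBM}$ is allowed in $\dMembS[PB]{\vV}$. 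Thus $\dMembS[PB]{\vV}$, and hence $\dMGS[PB]{\vV}$, is $\PSPACE$-hard, which together with the $\PSPACE$ upper bound gives $\PSPACE$-completeness.

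There is essentially no hard part here: the genuine work sits entirely in the cited results. The only points I would be careful about are the variety bookkeeping (verifying $\vV \vee \vSl = \vV$ from $\vBM \sse \vV$, and that $\gen{\Xi}$ indeed lies in $\vSI$ once $\Xi \sse U$ has been confirmed, so that the $\NC$ membership algorithm for $\vSI$ is applicable) and phrasing the test $\gen{\Xi} = U$ purely in terms of membership queries so that the respective oracle bounds transfer cleanly.
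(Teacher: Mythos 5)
Your proposal is correct and follows essentially the same route as the paper: a guess-and-check algorithm placing $\dMGS[PB]{\vV}$ in $\NP$ relative to a membership oracle (hence in $\NP$ when $\vV \sse \vSI$ via \cref{cor:gb-complexity}, and in \PSPACE{} in general), combined with \cref{lem:mgs-reduction} and \cref{thm:pbm-hardness} for \PSPACE-hardness when $\vV \not\sse \vSI$. The only cosmetic difference is that you derive $\vSl \sse \vV$ via $\vBM \sse \vV$ rather than via $\vV \not\sse \vG$ as the paper does; both are valid, and your extra bookkeeping (decomposing $\gen{\Xi} = U$ into membership queries and noting $\gen{\Xi} \in \vSI$ by closure under inverse subsemigroups) just makes explicit what the paper leaves implicit.
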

\begin{proof}
	It is clear that $\dMGS[PB]{\vV}$ is in $\NP^{\dMemb[PB]{\vV}}$ (just guess a generating set of an appropriate size and verify whether all of the original generators are in the inverse subsemigroup generated by the guessed generating set and vice versa). 
	If $\vV \not\sse \vSI$, then $\vV \not\sse \vG$ so $\vSl \sse \vV$.
  Hence, the \PSPACE-hardness of $\dMGS[PB]{\vV}$ follows from \cref{lem:mgs-reduction} and \cref{thm:pbm-hardness}.
\end{proof}

\subsection{Equations}

In this section we explore the consequences of our hardness result for the conjugacy problem to the related problem of deciding satisfiability of equations.
In particular, we are interested in the case, where the semigroup is part of the input.
We will see that in the partial bijection model, this variant of the problem is harder than deciding whether an equation has a solution in a fixed inverse semigroup.

Let $\cX$ be a set of variables.
An equation $\ell = r$ in an inverse semigroup $S$ is given as non-empty words $\ell, r \in (S \cup \cX \cup \ov \cX)^+$.
An assignment is a map $\sigma\colon \cX \to S$, which naturally extends to a homomorphism from $\sigma\colon (S \cup \cX \cup \ov \cX)^+ \to S$.
The problem $\dEqnSys{}$ of deciding whether a system of equations has a solution is defined as follows.

\begin{decproblem}
	\iitem An inverse semigroup $S$, and words $\ell_1, r_1, \ldots, \ell_k, r_k \in (S \cup \cX \cup \ov \cX)^+$.
	\qitem Is there a $\sigma\colon \cX \to S$ such that $\sigma(\ell_i) = \sigma(r_i)$ for all $1 \le i \le k$?
\end{decproblem}

We denote by \dEqnSys[CT]{} and \dEqnSys[PB]{} this problem in the Cayley table model and in the partial bijection model, respectively.
In the Cayley table model $S$ is given as a multiplication table.
In the partial bijection model the input is a set of generators $\Sigma \subseteq \ISym(\Omega)$ such that $\gen{\Sigma} = S$.
The problem of deciding whether a single equation has a solution occurs as a special case when $k = 1$.
We write \dEqn[CT]{} and \dEqn[PB]{}, respectively.

We will show below, that the $\PSPACE$-hardness of $\dEConjS[PB]{\vBM}$ can be transferred to \dEqn[PB]{}.
In addition, the problems of deciding whether a single equation or a system of equations have a solution are known to be $\NP$-hard for many fixed inverse semigroups.
We summarize these results here and give more details below.
If $\vV$ is a variety of finite inverse semigroups, then the following hold.
\begin{itemize}
  \item The problem $\dEqn[PB]{\vV}$ is $\PSPACE$-hard whenever $\vV \not\subseteq \vSI$.
  \item The problems $\dEqn[CT]{\vV}$ and $\dEqn[PB]{\vV}$ are $\NP$-hard whenever $\vV \not\subseteq \vGSol \vee \vBS$.
  \item The problems $\dEqnSys[CT]{\vV}$ and $\dEqnSys[PB]{\vV}$ are $\NP$-hard whenever $\vV \not\subseteq \vCom$.
\end{itemize}

Herein $\vGSol$ denotes the variety of finite solvable groups and $\vCom = \vAb \vee \vSl$ the variety of finite commutative inverse semigroups.
Note that these hardness results are matched by a corresponding upped bound, which is presented in \Cref{prop:eqn-alg} below.
The second and third statement are a consequence of known $\NP$-hardness results for fixed inverse semigroups.
We will briefly describe them before proving the first statement in \Cref{lem:eqn-hardness}.

Whenever $\vV \not\subseteq \vGSol \vee \vBS$, then either $\vV$ contains a non-solvable group or $B^1_2 \in \vV$.
According to Goldmann and Russell~\cite{GoldmannR02}, the problem of determining whether a single equation over a (fixed) finite group $G$ has a solution is $\NP$-complete for every non-solvable group $G$.
Likewise, the problem of determining whether a single equation over $B^1_2$ has a solution is also known to be $\NP$-complete \cite[Theorem 6]{BarringtonMMTT00}.
We thus conclude that both of the problems $\dEqn[CT]{\vV}$ and $\dEqn[PB]{\vV}$ are $\NP$-hard whenever $\vV \not\subseteq \vGSol \vee \vBS$.

The third statement is a special case of a dichotomy for regular semigroups: deciding whether a restricted system of equations, where the right-hand side is a constant, has a solution is in polynomial time if $S$ is in the variety of finite semigroups generated by abelian groups and regular bands, and $\NP$-complete otherwise \cite{KlimaTT07}.
The statement follows by restriction to inverse semigroups.

\begin{lemma}\label{lem:eqn-hardness}
  Let $\vV$ is a variety of finite inverse semigroups.
  Then the problem $\dEqn[PB]{\vV}$ is hard for $\PSPACE$ whenever $\vV \not\subseteq \vSI$.
\end{lemma}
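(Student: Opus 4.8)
Since $\vV \not\sse \vSI$, \cref{pro:varieties} gives $\vBM \sse \vV$, and $\dEqn[PB]{\vBM}$ is literally a sub-problem of $\dEqn[PB]{\vV}$; so it suffices to prove that $\dEqn[PB]{\vBM}$ is \PSPACE-hard. The plan is to reduce the (idempotent, restricted) conjugacy problem $\dEConjS[PB]{\vBM}$~-- which is \PSPACE-complete by \cref{thm:pbm-hardness}~-- to $\dEqn[PB]{\vBM}$, in such a way that a single equation $\bar x\,s\,x = t$ faithfully encodes the conjugacy $s \sim_U t$. The point that makes one equation enough is the $\sharp$-promise $s \sim_S t$: for idempotents with $s \gJ[S] t$, conjugacy $s \sim_U t$ is equivalent to the one-sided divisibility $s \gJge[U] t$ by \cref{lem:green-semi_rel} and \cref{lem:idem_conj_is_green_j}, and ``$s \gJge[U] t$'' is exactly what solvability of $\bar x s x = t$ should express.

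Concretely, from an instance of $\dEConjS[PB]{\vBM}$~-- that is, $\Sigma \sse \ISym(\Omega)$ with $U = \gen\Sigma$, idempotents $s,t \in E(\ISym(\Omega))$, under the promise that $\Sigma \sse S$, $s,t \in S$, and $s \sim_S t$ for some $S \leq \ISym(\Omega)$ with $S \in \vBM$~-- I would output the inverse semigroup $\hat S \coloneqq \gen{\Sigma \cup \{s,t\}} \leq \ISym(\Omega)$, presented by the generating set $\Sigma \cup \{s,t\}$ of partial bijections on $\Omega$, together with the single equation $\bar x\,s\,x = t$ in the variable $x$ with constants $s,t$. Since $\hat S \leq S$ and $\vBM$ is closed under inverse subsemigroups, $\hat S \in \vBM$, and $s,t \in \hat S$; the transformation is clearly computable in \ACz.

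For correctness, the easy direction is: if $s \sim_U t$, then $\bar u s u = t$ for some $u \in U^1$, and $x \coloneqq u$ (or $x \coloneqq s$ in the degenerate case $u = 1$, where then $s = t$) solves the equation in $\hat S$. For the converse, suppose $\bar x s x = t$ for some $x \in \hat S$. Write $x$ as a product of generators and let $x'$ be the sub-product of the factors lying in $\Sigma$ (so $x' \in U^1$, with $x' = 1$ if there are none). Since $s,t$ are idempotents of $\ISym(\Omega)$, hence $\leq 1$, and the natural order of $\ISym(\Omega)$ is compatible with multiplication, deleting the factors from $\{s,t\}$ only makes the product smaller: $x \leq x'$, hence $t = \bar x s x \leq \bar{x'}\,s\,x'$. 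Put $f = \bar{x'}\,s\,x'$; this is an idempotent of $S$ with $t \leq f$. One checks $t \gLle[S] f$ (from $t \leq f$), $f \gJle[S] s$ (from $f \in S^1 s S^1$), and $s \gJ[S] t$ (since $s \sim_S t$, by \cref{lem:idem_conj_is_green_j}); chaining these yields $t \gJ[S] f$. As $S$ is finite, $\gJ$-equivalent idempotents that are comparable in the natural order coincide (stability; alternatively a short computation with \cref{lem:green-semi_rel} using that idempotents commute), so $f = t$. Thus $t = \bar{x'}\,s\,x'$ with $x',\bar{x'} \in U^1$, i.e.\ $s \gJge[U] t$; together with $s \gJ[S] t$, \cref{lem:green-semi_rel} upgrades this to $s \gJ[U] t$, and \cref{lem:idem_conj_is_green_j} then gives $s \sim_U t$, completing the reduction.

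I expect the converse direction to be the only real obstacle: one must argue that an arbitrary solution $x \in \hat S$ can be ``purged'' of its auxiliary factors $s,t$ and replaced by an element of $U^1$~-- this is precisely where the $\sharp$-promise $s \sim_S t$ enters, forcing $\bar{x'} s x'$ to equal $t$ rather than merely to dominate it; everything else is routine inverse-semigroup bookkeeping. As an aside, reading $\bar x s x = t$ as a one-equation system also recovers \PSPACE-hardness of $\dEqnSys[PB]{\vV}$ for $\vV \not\sse \vSI$ (which also follows directly from \cref{thm:pbm-dichotomy}).
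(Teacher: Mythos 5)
Your proof is correct and follows essentially the same route as the paper's: both reduce from the restricted idempotent conjugacy problem (\PSPACE-hard by \cref{thm:pbm-hardness}), build the same instance~-- the inverse semigroup generated by $\Sigma \cup \{s,t\}$ together with the single equation $\bar x s x = t$~-- and use \cref{lem:green-semi_rel} and \cref{lem:idem_conj_is_green_j} plus the promise $s \sim_S t$ to pass between solvability of that one equation and $s \sim_U t$. The only difference is presentational: you spell out explicitly the step of purging a solution of its $s,t$-factors and the stability argument forcing $\bar{x'} s x' = t$, which the paper compresses into a single sentence.
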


\begin{proof}
  We reduce from $\dEConjS[PB]{\vV}$, which is $\PSPACE$-hard by~\cref{thm:pbm-hardness}.
  We are given two idempotents $e_s, e_t$ that are conjugate (hence, $\gJ$-equivalent) in $\ISym(\Omega)$.
  By definition $e_s \sim_U e_t$ if and only if there is some $X \in U$ with $\bar X e_s X = e_t$ and $X e_t \bar X = e_s$.
  As $e_s$ and $e_t$ are idempotent, this holds if and only if there is some $X\in  \gen{U \cup \{e_s,e_t\}} \in \vV$ with $\bar X e_s X = e_t$ and $X e_t \bar X = e_s$.
  Moreover, by \cref{lem:green-semi_rel} and \cref{lem:idem_conj_is_green_j}, either one of the two equations has a solution if and only if the other one does.
	Hence, $\dEqn[PB]{\vV}$ is hard for $\PSPACE$.
\end{proof} 

\begin{observation}\label{prop:eqn-alg}
  Let $\vV$ be a variety of finite inverse semigroups.
	The problem $\dEqnSys[CT]{\vV}$ is in $\NP$ and $\dEqnSys[PB]{\vV}$ is in $\NP^{\dMemb[PB]{\vV}}$, \ie solvable in $\NP$ with an oracle for $\dMemb[PB]{\vV}$.

  \smallskip
	
	Moreover, $\dEqnSys[PB]{\vV}$ remains in $\NP^{\dMemb[PB]{\vV}}$ if each variable can be constrained to some inverse subsemigroup and we allow arbitrary constants from $\ISym(\Omega)$ (not restricted to $\vV$).
\end{observation}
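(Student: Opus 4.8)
The plan is a guess-and-check argument in both input models; I expect the only genuinely delicate point to be ensuring that every call we make to the membership oracle is a \emph{legitimate} instance of $\dMemb[PB]{\vV}$, i.e., that the generating set we hand to the oracle generates an inverse semigroup belonging to $\vV$.

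\textbf{The Cayley table model.} On input of $S \in \vV$ (as a multiplication table) together with words $\ell_1, r_1, \dotsc, \ell_k, r_k \in (S \cup \cX \cup \ov\cX)^+$, I would non-deterministically guess an assignment $\sigma\colon \cX \to S$, storing each value $\sigma(x)$ as its index into the table; this certificate has size $\bigO(\abs{\cX}\log\abs{S})$, hence polynomial in the input length. One may precompute $\ov s$ for every $s \in S$ (each being the unique element with $s\ov s s = s$ and $\ov s s\ov s = \ov s$, obtained by a table search; see also \cref{lem:CT_to_PB}), and then evaluate $\sigma(\ell_i)$ and $\sigma(r_i)$ by reading off products from the table left to right and testing $\sigma(\ell_i) = \sigma(r_i)$ for every $i$. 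Accepting exactly when all $k$ equalities hold shows $\dEqnSys[CT]{\vV} \in \NP$; note that the hypothesis $S \in \vV$ is not even needed here.

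\textbf{The partial bijection model.} Now the surrounding inverse semigroup $S = \gen{\Sigma}$ can be exponentially large in $\abs{\Omega}$, so an index into $S$ is not available. Instead I would guess each $\sigma(x)$ directly as a partial bijection on $\Omega$, at a cost of $\bigO(\abs{\Omega}\log\abs{\Omega})$ bits per variable. Two checks remain. First, equation satisfaction: substitute, replace each occurrence of $\ov x$ by $\ov{\sigma(x)}$, and compose the resulting sequences of elements of $\ISym(\Omega)$ — evaluating $y^{s_1\cdots s_m}$ separately for each $y \in \Omega$ does this in polynomial time. Second, validity of the guess: for every variable $x$ we ask the oracle whether $\sigma(x) \in \gen{\Sigma}$, and since $\gen{\Sigma} = S \in \vV$ this is a well-formed query to $\dMemb[PB]{\vV}$. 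Hence $\dEqnSys[PB]{\vV} \in \NP^{\dMemb[PB]{\vV}}$.

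\textbf{The refinement.} The same algorithm covers the stronger statement after only cosmetic changes. If a variable $x$ is constrained to an inverse subsemigroup $S_x$ given by a generating set $\Sigma_x$, then $S_x$ is an inverse subsemigroup of $S \in \vV$, hence $S_x \in \vV$ as well; so the check $\sigma(x) \in S_x$ is again a legitimate query $(\Sigma_x, \sigma(x))$ to $\dMemb[PB]{\vV}$. Permitting constants from all of $\ISym(\Omega)$ — not only from $S$ — changes nothing in the evaluation step, since composing partial bijections never required the factors to lie in any particular subsemigroup; if one wishes, one can additionally verify with the same oracle that a constant declared to come from $S$ indeed does, without leaving $\NP^{\dMemb[PB]{\vV}}$. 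The one recurring subtlety, as flagged above, is that every generating set fed to the oracle must generate a member of $\vV$: this is exactly the place where closure of varieties under inverse subsemigroups enters, and the single spot where a careless formulation would break.
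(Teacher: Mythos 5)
Your proposal is correct and follows essentially the same guess-and-check argument as the paper: guess the assignment (indices in the Cayley table model, partial bijections in the partial bijection model), verify the equations by direct evaluation, and use the membership oracle to validate the guessed values, with the subsemigroup constraints and arbitrary constants handled exactly as in the paper's proof. Your extra care that every oracle query's generating set lies in $\vV$ (via closure under inverse subsemigroups) is a point the paper leaves implicit, but it is the same route.
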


In the partial bijection model we obtain that $\dEqnSys[PB]{}$ is in $\PSPACE$ with \cref{thm:pbm-hardness}, and that $\dEqnSys[PB]{\vSI}$ is in $\NP$ with \cref{thm:gb-complexity}.

\begin{proof}
	The algorithm follows the guess-and-check pattern: we guess an assignment $\sigma\colon \cX \to S$ and then verify that it satisfies the equations.
	In the Cayley table model this is straight-forward and so it only remains to consider the partial bijection model.
	There, we guess a map $\sigma\colon \cX \to \ISym(\Omega)$ and check whether $\sigma(X) \in \gen{\Sigma}$ for each $X \in \cX$ using the $\dMemb[PB]{\vV}$ oracle.
	Then we verify that $\sigma(\ell_i) = \sigma(r_i)$ for all $1 \le i \le k$.
  The evaluations $\sigma(\ell_i)$ and $\sigma(r_i)$ can clearly be computed in polynomial time (in either input model).

	Allowing inverse subsemigroup constraints does not increase the complexity since we already use the membership oracle anyway.
	Allowing arbitrary constants from $\ISym(\Omega)$ also does not increase difficulty, as evaluations are computed in $\ISym(\Omega)$ either way.
\end{proof}

Combining \cref{lem:eqn-hardness} with \cref{prop:eqn-alg}, we obtain the following corollary, which constitutes the second part of \cref{cor:main-mgs-equations}.

 \begin{corollary}\label{cor:equations}
 	Let $\vV$ be a variety of finite inverse semigroups.
 	Then the problems $\dEqn[PB]{\vV}$ and $\dEqnSys[PB]{\vV}$ are in \NP if $\vV\sse \vSI$ and \PSPACE-complete otherwise.
 \end{corollary}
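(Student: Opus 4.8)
The plan is to derive the dichotomy by combining the hardness result of \cref{lem:eqn-hardness} with the generic algorithm of \cref{prop:eqn-alg}, and then reading off the complexity of the membership oracle from \cref{thm:pbm-dichotomy}. There is essentially no new content to produce; the corollary is bookkeeping on top of the two preceding results.

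First I would handle the case $\vV \sse \vSI$. Here \cref{prop:eqn-alg} puts $\dEqnSys[PB]{\vV}$ in $\NP^{\dMemb[PB]{\vV}}$, and $\dEqn[PB]{\vV}$ is merely the special case $k = 1$ of a system of equations, hence also in $\NP^{\dMemb[PB]{\vV}}$. Since $\dMemb[PB]{\vV}$ is in $\NC \sse \Ptime$ by \cref{cor:gb-complexity}, each oracle call can be replaced by a deterministic polynomial-time subroutine, so $\NP^{\dMemb[PB]{\vV}} = \NP$ and both problems land in \NP.

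Next I would treat $\vV \not\sse \vSI$. For the upper bound, note that $\dMemb[PB]{\vV} \in \PSPACE$ by \cref{thm:pbm-dichotomy}, so the same guess-and-check algorithm from \cref{prop:eqn-alg} runs in $\NP^{\PSPACE} = \PSPACE$; thus $\dEqn[PB]{\vV}$ and $\dEqnSys[PB]{\vV}$ are both in \PSPACE. For the lower bound, \cref{lem:eqn-hardness} already gives \PSPACE-hardness of $\dEqn[PB]{\vV}$, and since a single equation is a particular instance of $\dEqnSys[PB]{\vV}$, the latter inherits this hardness. Together with the \PSPACE upper bound, both problems are \PSPACE-complete in this regime, which completes the dichotomy.

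I do not expect a genuine obstacle at this stage: the substantive work lies in \cref{lem:eqn-hardness} (a reduction from $\dEConjS[PB]{\vV}$, which in turn rests on the \dNCL-reduction of \cref{thm:pbm-hardness}) and in \cref{prop:eqn-alg} (the guess-and-check algorithm relative to a membership oracle). The only points meriting a line of justification are the oracle collapses $\NP^{\Ptime} = \NP$ and $\NP^{\PSPACE} = \PSPACE$, both of which are standard.
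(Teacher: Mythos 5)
Your proposal is correct and follows essentially the same route as the paper, which obtains the corollary by combining \cref{lem:eqn-hardness} (for \PSPACE-hardness of a single equation, inherited by systems) with the guess-and-check algorithm of \cref{prop:eqn-alg}, instantiating the membership oracle via \cref{cor:gb-complexity} in the $\vV \sse \vSI$ case and via the trivial \PSPACE bound otherwise. The oracle collapses $\NP^{\Ptime} = \NP$ and $\NP^{\PSPACE} = \PSPACE$ you flag are indeed the only (standard) justifications needed.
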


\section{Discussion and Open Problems}\label{sec:discussion}

By investigating the membership problem in inverse semigroups, we filled a gap between the rather restricted case of groups and the very general case of arbitrary semigroups.
We gave a classification of the complexity of membership and conjugacy in inverse semigroups according to their combinatorial structure.
Here the combinatorial Brandt semigroup and monoid are the critical obstructions for the membership problem being easy.
Furthermore, by applying these results, we gained new insights on the complexity of closely related problems such as the intersection non-emptiness problem for inverse automata, the minimum generating set problem, and the equation satisfiability problem.

Applying \cref{thm:main-CT} and \cref{thm:main-PB} to the case of aperiodic inverse semigroups shows that the membership and conjugacy problems are either in \ACz, or \LOGSPACE-complete, or \PSPACE-complete (with \PSPACE-complete only in the partial bijection model).
Thus, in comparison to the classification for varieties of aperiodic monoids \cite{BeaudryMT92}, we additionally have the \LOGSPACE-complete case, whereas we do not have the \Ptime-complete, \NP-complete, and \NP-hard cases.

A corresponding classification of the complexity of the membership problem for varieties of arbitrary semigroups is still a major endeavor.
While there is the classification for aperiodic monoids by Beaudry, McKenzie, and Thérien~\cite{BeaudryMT92} mentioned above, the case for semigroups is considerably more involved as there are many more varieties of finite semigroups than of finite monoids.
Moreover, it remains to integrate the case of groups into the consideration.
\begin{problem}
	Give a classification of the varieties of finite semigroups in terms of their complexity for the membership problem.
\end{problem}

Another class of structures, situated between inverse semigroups and arbitrary semigroups, that we would like to draw attention to is the class of regular $\ast$-semigroups introduced by Nordahl and Scheiblich~\cite{NordahlScheiblich78}.
These are regular semigroups with distinguished inverses $x \mapsto x^\ast$ such that $x^{\ast\ast} = x$, $(xy)^\ast = y^\ast x^\ast$, and $x x^\ast x = x$.
The difference to inverse semigroups is that inverses need not be unique or, equivalently, that idempotents need not commute. 
As such, regular $\ast$-semigroups admit a far richer combinatorial structure.

\begin{problem}
	Give a classification of the varieties of finite regular $\ast$-semigroups in terms of their complexity for the membership problem.
\end{problem}

Our algorithms for the \NC (resp.\ \NP) cases of our dichotomy result for the partial bijection model are very efficient in the sense that they provide reductions computable in \LOGSPACE and also in linear or quadratic time to \dUGAP as well as to the respective problems for groups.
Thus, the only open questions about the complexity of these problems come from the group case.
For the membership problem this leads to the following rather far-reaching question.

\begin{question}
	Is \dMemb[PB]{\vG} in \LOGSPACE?
\end{question}

We do not feel confident to make any guess about this question and want to use the present work to foster further research on this topic.
On the other hand, we believe that the answer to the following question concerning the Cayley table model is likely to be negative.

\begin{question}
	Is \dMemb[CT]{\vG} in \ACz?
\end{question}

Another question is whether the $\bigO(\log^2 n)$ bound due to Babai and Szemer\'edi~\cite{BabaiS84} on the length of straight-line programs in groups of order $n$  is asymptotically optimal.
In some special cases, like for Abelian groups, an (asymptotically optimal) $\bigO (\log n)$ bound can be obtained instead.
Thus, the question here is whether this can be extended to \emph{all} groups.

For inverse semigroups, we obtained a complete characterization of varieties admitting polylogarithmic SLPs in \cref{cor:short-SLP-char}.
A similar result for monoids was obtained by \ifAnonimous Fleischer\else the first author\fi~\cite{Fleischer19diss}.
Therefore, the natural question is to ask for a similar characterization for all semigroups. 
For some preliminary results in that direction, see also \cite{Fleischer19diss}.

\bibliography{bibexport}

\end{document}